\definecolor{Gred}{RGB}{219, 50, 54}
\definecolor{Ggreen}{RGB}{60, 186, 84}
\definecolor{Gblue}{RGB}{72, 133, 237}
\definecolor{Gyellow}{RGB}{247, 178, 16}
\definecolor{ToCgreen}{RGB}{0, 128, 0}
\definecolor{myGold}{RGB}{231,141,20}
\definecolor{myBlue}{rgb}{0.19,0.41,.65}
\definecolor{myPurple}{RGB}{175,0,124}
\definecolor{Sepia}{RGB}{44,26,8}
\title{Settling the Robust Learnability of Mixtures of Gaussians}
\author{Allen Liu \thanks{Email: \texttt{cliu568@mit.edu} This work was supported in part by Ankur Moitra's ONR Young Investigator Award.}\and Ankur Moitra \thanks{Email: \texttt{moitra@mit.edu} This work was
supported in part by a Microsoft Trustworthy AI Grant, NSF CAREER Award CCF-1453261, NSF Large CCF1565235, a David and Lucile Packard Fellowship and an ONR Young Investigator
Award.}}
\begin{document}

\maketitle

\begin{abstract}
This work represents a natural coalescence of two important lines of work \--- learning mixtures of Gaussians and algorithmic robust statistics. In particular, we give the first provably robust algorithm for learning mixtures of any constant number of Gaussians. We require only mild assumptions on the mixing weights and that the total variation distance between components is bounded away from zero. At the heart of our algorithm is a new method for proving a type of dimension-independent polynomial identifiability \--- which we call robust identifiability \--- through applying a carefully chosen sequence of differential operations to certain generating functions that not only encode the parameters we would like to learn but also the system of polynomial equations we would like to solve. We show how the symbolic identities we derive can be directly used to analyze a natural sum-of-squares relaxation.
\end{abstract}

\thispagestyle{empty}

\newpage

\setcounter{page}{1}

\section{Introduction}


This work represents a natural coalescence of two important lines of work \--- learning mixtures of Gaussians and algorithmic robust statistics \--- that we describe next: In 1894 Karl Pearson \cite{Pea94} introduced mixture models and asked: 

\begin{quote}
{\em Is there a statistically efficient method for learning the parameters of a mixture of Gaussians from samples?}
\end{quote}

\noindent Mixtures of Gaussians are natural models for when data is believed to come from two or more heterogenous sources. Since then they have found a wide variety of applications spanning statistics, biology, physics and computer science. The textbook approach for fitting the parameters is to use the maximum likelihood estimator. However it is not clear how many samples it requires to estimate the parameters up to some desired accuracy. Even worse it is hard to compute in high-dimensions \cite{sanjeev2001learning}. 

In a seminal work, Sanjoy Dasgupta \cite{Das99} introduced the problem to theoretical computer science and asked: 
\begin{quote}
    {\em Is there an efficient algorithm for learning the parameters?}
\end{quote}

\noindent Many early works were based on clustering the samples into which component generated them \cite{dasgupta2013two, sanjeev2001learning, VW04, AM05, brubaker2008isotropic, KK10}. However when the components overlap non-trivially this is no longer possible. Nevertheless Kalai, Moitra and Valiant \cite{KMV10} gave an algorithm for learning the parameters of a mixture of two Gaussians that works even if the components are almost entirely overlapping. Their approach was based on reducing the high-dimensional problem to a series of one-dimensional problems and exploiting the structure of the moments. In particular they proved that every mixture of two Gaussians is uniquely determined by its first six moments. Subsequently Moitra and Valiant \cite{MV10} and Belkin and Sinha \cite{BS10} were able to give an algorithm for learning the parameters of a mixture of any constant number of Gaussians. These algorithms crucially made use of even higher moments along with several new ingredients like methods for separating out submixtures that are flat along some directions and difficult to directly learn. There are also approaches based on tensor decompositions \cite{HK13, bhaskara2014smoothed, ge2015learning} that get polynomial dependence on the number of components assuming that the parameters are non-degenerate and subject to some kind of smoothing. However all of these algorithms break down when the data does not exactly come from a mixture of Gaussians. In fact in Karl Pearson's original application \cite{Pea94}, and in many others, mixtures of Gaussians are only intended as an approximation to the true data generating process. 

The field of robust statistics was launched by the seminal works of John Tukey \cite{tukey1960survey, tukey1975mathematics} and Peter Huber \cite{huber1964robust} and seeks to address this kind of shortcoming by designing estimators that are provably robust to some fraction of their data being adversarially corrupted. The field had many successes and explicated some of the general principles behind what makes estimators robust \cite{huber2004robust, hampel2011robust}. Provably robust estimators were discovered for fundamental tasks such as estimating the mean and covariance of a distribution and for linear regression. There are a variety of types of robustness guarantees but the crucial point is that these estimators can all tolerate a constant fraction of corruptions that is {\em independent of the dimension}. However all of these estimators turn out to be hard to compute in high-dimensions \cite{johnson1978densest, bernholt2006robust, hardt2013algorithms}. 

Recently Diakonikolas et al. \cite{diakonikolas2019robust} and Lai et al. \cite{lai2016agnostic} designed the first provably robust and computationally efficient estimators for the mean and covariance. They operate under some kind of assumption on the uncorrupted data \--- either that they come from a simple generative model like a single Gaussian or that they have bounded moments. To put this in perspective, without corruptions this is a trivial learning task because if you want to learn the mean and covariance for any distribution with bounded moments you can just use the empirical mean and empirical covariance respectively. Algorithmic robust statistics has transformed into a highly active area \cite{diakonikolas2017being, charikar2017learning, balakrishnan2017computationally, klivans2018efficient, diakonikolas2019sever, hopkins2018mixture, kothari2018robust, li2018principled, steinhardt2018robust, diakonikolas2019recent, bakshi2020robust, chen2020online} with many successes. Since then, a much sought-after goal has been to answer the following challenge:
\begin{quote}
{\em Is there a provably robust and computationally efficient algorithm for learning mixtures of Gaussians? Can we robustify the existing learning results?}
\end{quote}

There has been steady progress on this problem. Diakonikolas et al. \cite{diakonikolas2019robust} gave a robust algorithm for learning mixtures of spherical Gaussians. In recent breakthroughs Bakshi and Kothari \cite{bakshi2020outlier} and Diakonikolas et al. \cite{diakonikolas2020robustly} gave a robust algorithm for learning clusterable mixtures of Gaussians and building on this Kane \cite{kane2020robust} gave a robust algorithm for learning mixtures of two Gaussians. We note that these works do place some mild restrictions on the mixing weights and the variances. In particular they need the mixing weights to have bounded fractionality and the variances of all components to be nonzero in all directions.

The algorithms of Bakshi and Kothari \cite{bakshi2020outlier} and Diakonikolas et al. \cite{diakonikolas2020robustly} rely on the powerful sum-of-squares hierarchy \cite{parrilo2000structured}. One view is that it finds an operator, called the pseudo-expectation, that maps low degree polynomials to real values. Moreover a large number of consistency constraints are imposed that force it to in some vague sense behave like taking the expectation over a distribution on assignments to the variables. It gives a natural way to incorporate systems of polynomial constraints into a relaxation which can model complex primitives like selecting a large subset of the samples and enforcing that they have approximately the same types of moment bounds that hold for a single Gaussian. Of course the real challenge is that you need some way to reason about the pseudo-expectation operator that only uses certain types of allowable steps that can be derived through the constraints that you enforced in the relaxation. 

\subsection{Key Challenges}

It is believed that the sum-of-squares hierarchy might actually be the key to solving the more general problem of robustly learning a mixture of any constant number of Gaussians. However there are some obstacles that need to be overcome:

\paragraph{Robust Identifiability}

Behind every polynomial time algorithm for learning the parameters of a mixture model is an argument for why there cannot be two mixtures with noticeably different parameters that produce almost the same distribution. In fact we need quantitative bounds that say any two mixtures that are $\epsilon$-different must be at least $\mbox{poly}(\epsilon, 1/d)$ different according to some natural family of test functions, usually the set of low degree polynomials. Here $d$ is the dimension. This is called {\em polynomial identifiability} \cite{Tei61, moitra2018algorithmic}. Because we allow a polynomial dependence on $1/d$, it often does not matter too much how we measure the differences between two mixtures, either in terms of some natural parameter distance between their components or in terms of the total variation distance, again between their components. 

However we need much stronger bounds when it comes to robust learning problems where we want to be able to tolerate a constant fraction of corruptions that is dimension independent. In particular we need a richer family of test functions with the property that whenever we have two mixtures whose components are $\epsilon$-different in total variation distance there is some function in the family that has at least $\mbox{poly}(\epsilon)$ discrepancy when we take the difference between its expectations with respect to the two distributions (and its variance must also be bounded). In particular, this relationship cannot involve dependence on the dimension $d$.  We will call this {\em robust identifiability}. Recall that the non-robust learning algorithms for mixtures of Gaussians reduce to a series of one-dimensional problems. Unfortunately this strategy inherently introduces polynomial factors in $d$ and it cannot give what we are after. For the special case of clusterable mixtures of Gaussians, Bakshi and Kothari \cite{bakshi2020outlier} and Diakonikolas et al. \cite{diakonikolas2020robustly} proved robust identifiability and their approach was based on classifying the ways in which two single Gaussians can have total variation distance close to one. When it comes to the more general problem of handling mixtures where the components can overlap non-trivially it seems difficult to follow the same route because we can no longer match components from the two mixtures to each other and almost cancel them both out. 

\paragraph{Reasoning About the Sum-of-Squares Relaxation}

The sum-of-squares hierarchy is a general framework for coming up with large and powerful semidefinite programming relaxations that can be applied to many sorts of problems \cite{barak2014rounding, barak2015dictionary, hopkins2015tensor, barak2016noisy}. However it is often quite challenging to understand whether or not it works and/or to identify, out of all the constraints that are enforced on the pseudo-expectation, which ones are actually useful in the analysis \cite{hopkins2016fast, hopkins2017power}. What makes matters especially challenging in our setting is that it is clear the structure of the higher moments of a mixture of Gaussians must play a major role. But how exactly do we leverage them in our analysis? 

\subsection{Our Techniques and Main Result}

Actually, we overcome both obstacles using the same approach. We store the relevant moments and variables we would like to solve for in certain generating functions. Then by manipulating the generating functions using differential operators, we are able to reason about an SOS relaxation of a natural polynomial system that allows us to solve for the parameters that we want to learn. 

Let us describe the setup.  For simplicity, assume that the mixture is in isotropic position.  First, we have the unknown parameters of the mixture.  Let 
\[
\mcl{M} = w_1 N(\mu_1, I + \Sigma_1) + \dots + w_k N(\mu_k, I + \Sigma_k)
\]
where $w_i$ are the mixing weights and $\mu_i$ and $I + \Sigma_i$ denote the mean and covariance of the $i$th component. Second, we have the indeterminates we would like to solve for.  These will be denoted $\widetilde{\mu_i}$ and $\widetilde{\Sigma_i}$ and our intention is for these to be the means and covariances of a hypothetical mixture of Gaussians \footnote{Technically our setup is slightly different in that we solve for vectors $u_1, \dots , u_k$ and $v_1, \dots , v_k$ that are supposed to form an orthonormal basis for the span of the $\{ \wt{\mu_i} \}$ and $\{ \wt{\Sigma_i} \}$ respectively.  Regardless, the argument is conceptually the same.}.  We will also guess the mixing weights of the hypothetical mixture $\wt{w_i}$.  Finally, we have a $d$-dimensional vector $X = (X_1, \dots , X_d)$ of formal variables and one auxiliary formal variable $y$.  These will mostly be used to help us organize everything in a convenient way.  Roughly, we would like to solve for  $\widetilde{\mu_i}$ and $\widetilde{\Sigma_i}$ so that the hypothetical mixture 
\[
\wt{\mcl{M}} = \wt{w_1} N(\wt{\mu_1}, I + \wt{\Sigma_1}) + \dots + \wt{w_k} N(\wt{\mu_k}, \wt{I + \Sigma_k})
\]
is close to $\mcl{M}$ on the family of test functions (which will be low-degree multivariate Hermite polynomials).  It turns out that this amounts to solving a polynomial system for the indeterminates.   
\\\\
Now we explain in more detail how to actually reason about and solve the polynomial system.  It will be useful to work with the following generating functions. First let
\[
F(y) = \sum_{i = 1}^k w_i e^{\mu_i(X)y + \frac{1}{2}\Sigma_i(X)y^2}
\]
Here we have used the notation that $\mu_i(X)$ denotes the inner product of $\mu_i$ with the $d$-dimensional vector $X$ and that $\Sigma_i(X)$ denotes the quadratic form of $X$ on $\Sigma_i$. Second let
\[
\widetilde{F}(y) = \sum_{i = 1}^k \widetilde{w_i} e^{\widetilde{\mu_i}(X)y + \frac{1}{2}\widetilde{\Sigma_i}(X)y^2}
\]
As is familiar from elementary combinatorics we can tease out important properties of the generating function by applying carefully chosen operators that involve differentiation. This requires a lot more bookkeeping than usual because there are unknown parameters of the mixture, indeterminates and formal variables. But it turns out that there are simple differential operators we can apply which can isolate components. 
To gain some intuition, consider the operator
\[
\mcl{D}_i = \partial_y - (\mu_i(X) + \Sigma_i(X)y) \,.
\]
Note that 
\[
\mcl{D}_i\left( e^{\mu_i(X)y + \frac{1}{2}\Sigma_i(X)y^2} \right) = 0 \,,
\]
in other words, this operator annihilates the $i$\ts{th} component.  Thus, by composing such operators, we can annihilate all but one of the components in $F$ \footnote{There are some additional details because applying $\mcl{D}_i$ to a different component creates an extra polynomial factor in front.  Section \ref{sec:alg-identities} shows how to deal with this complication.}.  On the other hand, note that applying differential operators is just a rearrangement of the polynomials that show up in the infinite sum representation of the generating function but using differential operators and generating functions in exponential form gives a particularly convenient way to derive useful expressions that would otherwise be extremely complex to write down. 

Ultimately we derive a symbolic identity
\begin{equation}\label{eq:intro}
\widetilde{w_k}\prod_{i=1}^k (\widetilde{\Sigma_k}(X) - \Sigma_i(X) )^{2^{i-1}}\prod_{i=1}^{k-1}(\widetilde{\Sigma_k}(X) - \widetilde{\Sigma_i}(X) )^{2^{k+i-1}} = \sum_{i = 1}^m P_i(X)(\widetilde{h_i}(X) - h_i(X)) 
\end{equation}
where $m$ is a function of $k$.  In the above, the $h_i(X)$'s are the expectations of the multivariate Hermite polynomials for the true mixture $\mcl{M}$  and the $\widetilde{h_i}$'s are the expectations of the multivariate Hermite polynomials for the hypothetical mixture $\wt{\mcl{M}}$.  A more detailed explanation of Hermite polynomials is given in Section \ref{sec:gen-functions} but for now we may simply think of them as modified moments.  The reason that we use Hermite polynomials instead of standard moments is that they can be robustly estimated without losing dimension-dependent factors (see e.g. \cite{kane2020robust}).

The above identity (when combined with a few others of a similar form) allows us to deduce robust identifiability.  Roughly, this is because if we have a mixture $\wt{\mcl{M}}$ with means $\widetilde{\mu_i}$ and covariances  $I + \widetilde{\Sigma_i}$ that don't match those of $\mcl{M}$, then the LHS of (\ref{eq:intro}) will be bounded away from $0$, implying that some term on the RHS must also be bounded away from $0$.  This means that there must be some $i \leq m = O_k(1)$ such that $h_i(X)$ and $\wt{h_i(X)}$ are substantially different \--- i.e. there is some test function that is a low-degree multivariate Hermite polynomial that distinguishes the two mixtures.

Robust identifiability alone does not give us a polynomial time learning algorithm.  However, it turns out that we can use SOS to obtain a polynomial time learning algorithm from the argument for robust identifiability i.e. we essentially get the learning algorithm ``for free". The key point is that the $h_i(X)$ can be estimated using our samples and the coefficients of the $\widetilde{h_i}$ are explicit polynomials in the indeterminates that we can write down.  Also the $P_i$'s are polynomials in {\em everything}: the unknown parameters, the indeterminates and the formal variables (except for $y$).   To set up an SOS system, we obtain robust estimates $\overline{h_i}$ for the expectations of the Hermite polynomials $h_i(X)$ for the true mixture that we can compute from existing techniques in the literature.  We then enforce that the expectations of the  Hermite polynomials for the hypothetical mixture $\widetilde{h_i}(X)$  are close to these robust estimates where closeness is defined in terms of the distance between their coefficient vectors.  


It is not immediately clear why the expression in (\ref{eq:intro}) ought to be useful for solving the SOS system that we set up. After all, we cannot explicitly compute it because it depends on things we do not have, like the true parameters, the $\mu_i$'s and $\Sigma_i$'s. However the sum-of-squares relaxation enforces that the pseudo-expectation operator assigns values to polynomials in the indeterminates in a way that behaves like an actual distribution on solutions when we are evaluating certain types of low degree expressions that contain the one above. So even though we do not know the actual polynomials in the identity, they exist and the fact that they are enforced is enough to ensure that we can estimate the covariance of the $k$th component 
We stress that this is just the high-level picture and many more details are needed to fill it in. 

Using these techniques, we come to the main result of our paper, which is a polynomial-time algorithm for robustly learning the parameters of a high-dimensional mixture of a constant number of Gaussians.  Our main theorem is (informally) stated below.  A formal statement can be found in Theorem \ref{thm:parameter-learning}.

\begin{theorem}\label{thm:full-informal}
Let $k$ be a constant.  Let $\mcl{M} = w_1G_1 + \dots + w_kG_k$ be a mixture of Gaussians in $\R^d$ whose components are non-degenerate and such that the mixing weights have bounded fractionality and TV distances between different components are lower bounded. (Both of these bounds can be any function of $k$).  Given $n = \poly(d/\eps)$ samples from $\mcl{M}$ that are $\eps$-corrupted, there is an algorithm that runs in time $\poly(n)$ and with high probability outputs a set of mixing weights $\widetilde{w_1}, \dots ,\widetilde{w_k}$ and components $\widetilde{G_1}, \dots , \widetilde{G_k}$ that are $\poly(\eps)$-close to the true components (up to some permutation).
\end{theorem}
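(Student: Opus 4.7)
The plan is to combine robust moment estimation with a sum-of-squares relaxation organized around the symbolic identity sketched in the introduction. First I would preprocess the samples: using standard filtering-based robust moment estimation from the literature, I extract estimates $\overline{h_i}(X)$ of the Hermite polynomial coefficients of the true mixture $\mcl{M}$ that are accurate to $\poly(\eps)$ in a natural norm on the coefficient vectors, given $\poly(d/\eps)$ corrupted samples. These estimates are dimension-independent in quality and are the only data-dependent input feeding the rest of the algorithm. An initial isotropization step (robust PCA plus whitening) puts $\mcl{M}$ into approximately isotropic position so that the assumed form of $F(y)$ is valid.

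Next I would set up the SOS program over indeterminates consisting of the guessed mixing weights $\widetilde{w}_1,\ldots,\widetilde{w}_k$, the orthonormal frames $u_1,\ldots,u_k$ and $v_1,\ldots,v_k$, and the coefficients expressing each $\widetilde{\mu}_i$ in terms of the $u_j$'s and each $\widetilde{\Sigma}_i$ in terms of the $v_j$'s. The constraints enforce that the $\widetilde{w}_i$'s lie on the bounded-fractionality grid, that the $u_i$'s and $v_i$'s are orthonormal, and crucially that the coefficient vector of the formal Hermite polynomial $\widetilde{h_i}(X)$ is within $\poly(\eps)$ of $\overline{h_i}$ for each of the finitely many $i$ appearing in the identity. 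Feasibility follows from plugging in the true parameters (after an orthogonal change of frame), and the natural number of constraints is bounded in terms of $k$ alone.

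Then I would derive, inside the SOS proof system, the key identity
\[
\widetilde{w_k}\prod_{i=1}^k (\widetilde{\Sigma_k}(X) - \Sigma_i(X) )^{2^{i-1}}\prod_{i=1}^{k-1}(\widetilde{\Sigma_k}(X) - \widetilde{\Sigma_i}(X) )^{2^{k+i-1}} = \sum_{i = 1}^m P_i(X)(\widetilde{h_i}(X) - h_i(X))
\]
by applying a carefully chosen sequence of differential operators in $y$ (and directional derivatives in $X$) to the formal equality $F(y) - \widetilde{F}(y)$. Each operator isolates one component at a time: differentiating in $y$ extracts successive Hermite coefficients, while forming products of the shape $(\widetilde{\Sigma_k}(X) - \Sigma_i(X))$ annihilates the quadratic part of the $i$th component. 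Because the operators are fixed in advance and have constant degree depending only on $k$, the manipulation translates to a constant-degree SOS derivation, with the $P_i(X)$ realized as explicit polynomials in the unknowns, indeterminates, and $X$.

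Finally I would round. Substituting the robust estimates replaces each $h_i(X)$ by $\overline{h_i}(X) + O(\poly(\eps))$; combined with the SOS constraint that $\widetilde{h_i}$ is $\poly(\eps)$-close to $\overline{h_i}$, the right-hand side of the identity becomes $\poly(\eps)$-small under any feasible pseudo-expectation when integrated against an appropriate test polynomial in $X$. Averaging $X$ over a suitable distribution turns the quadratic forms $\Sigma_i(X)$ and $\widetilde{\Sigma_i}(X)$ into scalars with controlled gaps, so the lower bound on pairwise TV distance between components guarantees that all but one of the $\widetilde{\Sigma_k}(X) - \Sigma_i(X)$ factors on the left is bounded away from zero in pseudo-expectation; the surviving factor must therefore be small, which yields a $\poly(\eps)$-accurate recovery of the covariance of one true component. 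An analogous identity for the means (derived by the same differential machinery but using first-order rather than second-order extraction in $y$) recovers its mean, and iterating after peeling off the recovered component gives all $k$ components up to permutation. The main obstacle is the identity derivation together with its realization as a low-degree SOS proof, since the differential operators must compose while keeping the $P_i$'s genuine polynomials and the total degree bounded by a function of $k$; a secondary technical difficulty is the averaging/rounding argument, which must remain dimension-independent and is precisely what the exponential pattern $2^{i-1}, 2^{k+i-1}$ in the identity is designed to enable.
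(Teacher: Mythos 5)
Your proposal captures the paper's high-level architecture for the ``close-case'' part of the argument --- robust estimation of Hermite coefficients, an SOS program over orthonormal frames $u_i, v_i$ and guessed coordinates, and the differential-operator identity --- but it has several concrete gaps, one of which is fatal as written.

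\textbf{The rounding argument does not work as stated.} You claim that after making the product small in pseudoexpectation, the TV lower bound guarantees ``all but one of the $\widetilde{\Sigma_k}(X) - \Sigma_i(X)$ factors is bounded away from zero,'' so ``the surviving factor must be small.'' This reasoning is invalid inside a pseudoexpectation: $\widetilde{\E}$ is not an expectation over actual assignments, so there is no ``surviving factor'' to point to, and because $\widetilde{\Sigma_k}$ is an indeterminate (not a concrete matrix), the factors $\widetilde{\Sigma_k}(X)-\Sigma_i(X)$ mixing the hypothesis and the unknown truth are not lower-bounded by a constant in any SOS sense --- in fact the whole point is that one of them should be near zero. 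The paper resolves this differently: it lower-bounds $\norm{v(\widetilde{\Sigma_k}(X)-\Sigma_i(X))}^2$ by the squared projection $\norm{\Gamma_{V^\perp}(\widetilde{\Sigma_k})}^2$ onto the orthogonal complement of the true span $V$, lower-bounds the within-hypothesis factors by $(c/2)^2$ using the guessed separation of the $B_i$'s, and invokes a non-trivial lemma (the product lower bound, Claim~\ref{claim:factor-lower-bound}) showing that $\widetilde{\E}[\norm{v(fg)}^2] \gtrsim \widetilde{\E}[\norm{v(f)}^2\norm{v(g)}^2]$ for low-degree polynomials. This yields a bound on $\Tr_{V^\perp}(\widetilde{\E}[\widetilde{\Sigma_k}\widetilde{\Sigma_k}^T])$, after which a complementary lemma and PCA of the matrices $\widetilde{\E}[\widetilde{\Sigma}_i\widetilde{\Sigma}_i^T]$ recover a low-dimensional subspace containing $V$, followed by an $\eps$-net search. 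Your ``averaging $X$'' step has no analogous mechanism, and omitting the vectorization-norm machinery leaves the central step unproved.

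\textbf{Missing structural steps.} The SOS close-case theorem needs $\norm{\mu_i}, \norm{\Sigma_i} \le \Delta = \poly(1/\eps)$, which fails when components are very well-separated; the paper first runs a rough clustering phase (Section~\ref{sec:clustering}) to split into submixtures whose components are mutually close, then applies the SOS machinery per cluster, then reclusters at finer scale. Your proposal skips clustering entirely. You also propose an iterative ``peel off a component and recurse'' scheme, which is not the paper's approach (it brute-forces over the recovered subspace for all components simultaneously); peeling would require re-estimating moments after subtracting a learned component, which is delicate to make robust. Finally, this pipeline only shows the output mixture is $\poly(\eps)$-close in TV to the true mixture; to conclude that the individual components and weights are close up to permutation, the paper needs a separate identifiability argument (Section~\ref{sec:identifiability}), which you do not address.

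\textbf{Minor.} The constraint that $\widetilde{w}_i$ ``lie on the bounded-fractionality grid'' is enforced by brute-force search over an $\eps$-net of candidate weight tuples outside the SOS program, not as an SOS constraint, and the identity with the $P_i$'s is used only in the analysis --- its $P_i$'s depend on the unknown parameters and so are never written down by the algorithm.
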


\subsubsection{Discussion of Assumptions and Later Improvements}


\paragraph{Bounded Fractionality of Mixing Weights: }  The assumption of bounded fractionality stems from an issue in previous work \cite{diakonikolas2020robustly} about learning clusterable mixtures of Gaussians i.e. when the components have essentially no overlap.  We use some of their subroutines in our algorithm for clustering the mixture into submixtures where the components are not too far apart from each other (see Section \ref{sec:clustering}).  While \cite{diakonikolas2020robustly} claims to handle general mixing weights, the analysis of the algorithm in \cite{diakonikolas2020robustly} is only done in detail for the case of uniform mixing weights and the argument in Appendix C for reducing from general mixing weights to uniform mixing weights does not work.  The authors of \cite{diakonikolas2020robustly} along with the authors of \cite{bakshi2020robust} were able to fix these arguments to handle general mixing weights in \cite{bakshi2020robustlya}.  The modified proof uses essentially the same techniques.  Plugging this improved clustering result into our analysis, instead of using the weaker guarantees for uniform mixing weights, we are able to straightforwardly remove the bounded fractionality assumption.  See Section \ref{sec:improved-clustering} for a formal statement and explanation.

\paragraph{Separation Assumption: } While our original proof required constant separation between components, we show in a follow-up paper,  \cite{liu2021learning}, that this assumption can be replaced with a separation of $\eps^{\Omega_k(1)}$ (which is a necessary assumption for parameter learning) using standard tricks (see Theorem 9.2 in \cite{liu2021learning}).  This argument works independently of the improvement for the mixing weight assumption.  See Section \ref{sec:improved-separation} for a more formal statement and explanation.

\paragraph{Non-degeneracy of Components:} This assumption was included so there are no bit complexity issues.  In fact, dealing with potentially degenerate covariance matrices requires a formal specification of the model of computation.
\\\\
\indent To make the timeline of events clear, we stated our original result above.  However, plugging in these improvements (for removing the bounded fractionality and separation assumption), we obtain an improved result, stated below.  The formal statement can be found in Theorem \ref{thm:main-improved}.  We emphasize that these modifications are \textit{completely independent} of our main  contributions, but are rather tools that we employ to reduce to the case where the components are not too far from each other. We reduce to this case by running a preprocessing step where we cluster the mixture into such submixtures.  All of the assumptions stem from the clustering step, which is done via modifications to the techniques for learning fully clusterable mixtures in \cite{diakonikolas2020robustly, bakshi2020outlier}.

\begin{theorem}\label{thm:full-new}
Let $k$ be a constant.  Let $\mcl{M} = w_1G_1 + \dots + w_kG_k$ be a mixture of Gaussians in $\R^d$ whose components are non-degenerate and such that the mixing weights are lower bounded by some function of $k$.  Also assume that the TV distances between components are at least $\eps^{\Omega_k(1)}$.  Then given $n = \poly(d/\eps)$ samples from $\mcl{M}$ that are $\eps$-corrupted, there is an algorithm that runs in time $\poly(n)$ and with high probability outputs a set of mixing weights $\widetilde{w_1}, \dots ,\widetilde{w_k}$ and components $\widetilde{G_1}, \dots , \widetilde{G_k}$ that are $\poly(\eps)$-close to the true components (up to some permutation).
\end{theorem}

\begin{remark}
The improved clustering arguments of \cite{bakshi2020robustlya} are able to get polynomial dependence on the minimum mixing weight instead of exponential dependence so they are actually able to deal with mixing weights that are $\eps^{\Omega_k(1)}$.  This improvement also plugs into our result as well.
\end{remark}

\subsection{Proof Overview}
The proof of our main theorem can be broken down into several steps.  We first present our main contribution, an algorithm for learning mixtures of Gaussians when no pair of components are too far apart.  We introduce the necessary generating function machinery in Section \ref{sec:gen-functions} and then present our algorithm in Sections \ref{sec:close-case} and \ref{sec:moment-est}.  Specifically, in Section \ref{sec:close-case} we show how to learn the parameters once we have estimates for the Hermite polynomials of the true mixture.  And in Section \ref{sec:moment-est}, we show how to robustly estimate the Hermite polynomials, using similar techniques to \cite{kane2020robust}.

To complete our full algorithm for learning general mixtures of Gaussians, we combine our aforementioned results with a clustering algorithm similar to \cite{diakonikolas2020robustly}.  Combining these algorithms, we prove that our algorithm outputs a mixture that is close to the true mixture in TV distance.  This is done in Sections \ref{sec:clustering} and \ref{sec:full-alg}.  We then prove identifiability in Section \ref{sec:identifiability}, implying that our algorithm actually learns the true parameters.

\subsection{Concurrent and Subsequent Work}

There are three main pieces of work that we discuss.  The first by Bakshi et. al \cite{bakshi2020robustlya}, was independent and concurrent to this one.  The next is a subsequent work by Bakshi et. al \cite{bakshi2020robustlyb} that improves their earlier result but also borrows techniques from our paper.  We also discuss our follow-up work \cite{liu2021learning} that was after \cite{bakshi2020robustlya} but before \cite{bakshi2020robustlyb} (and the contributions are essentially disjoint).

Bakshi et. al in \cite{bakshi2020robustlya} obtain a result that is similar to our main result Theorem \ref{thm:full-informal}, but using rather different techniques.  There are a few key differences, which we discuss below.  We learn the mixture to accuracy $\eps^{\Omega_k(1)}$ while their result only achieves accuracy $(1/ \log (1/\eps))^{\Omega_k(1)}$, an exponentially worse guarantee.  Also, our result solves parameter learning \--- i.e. we estimate the parameters of the true mixture \--- while their algorithm solves proper density estimation \--- i.e. it outputs a mixture of $k$ Gaussians that is close to the true density. Their algorithm does not need any lower bound on the mixing weights or on the pairwise separation of the components. In fact lower bounds on these quantities are necessary for parameter learning. However, our original result makes an even stronger assumption about the bounded fractionality of the mixing weights and pairwise separation.

Subsequently in \cite{bakshi2020robustlyb}, Bakshi et. al improve their earlier result to achieve parameter learning and accuracy $\eps^{\Omega_k(1)}$.   In fact the analysis of their new parameter learning algorithm relies crucially on the robust identifiability result from our paper (see Section 9 in \cite{bakshi2020robustlyb}). Thus, all known algorithms for robust parameter learning go through our machinery and robust identifiability.  Compared to our result, the main improvement in \cite{bakshi2020robustlyb} is an improved, polynomial dependence on the minimum mixing weight.

Our follow-up work \cite{liu2021learning} improves the results here in a different direction.  We are able to obtain an algorithm that solves density estimation to accuracy $\wt{O}(\eps)$ (instead of $\eps^{\Omega_k(1)}$).  Note that parameter learning to this accuracy is information-theoretically impossible.  The work in \cite{liu2021learning} does need a stronger assumption that the components have variances in all directions lower and upper bounded by a constant and the learning algorithm is improper, outputting a mixture of polynomials times Gaussians, instead of just a mixture of Gaussians.  The main relevance of \cite{liu2021learning} to this paper is that as a first step, \cite{liu2021learning} shows how to improve the separation assumption in Theorem \ref{thm:full-informal} from some constant to $\eps^{\Omega_k(1)}$.



\section{Preliminaries}

\subsection{Problem Setup}
We use $N(\mu, \Sigma)$ to denote a Gaussian with mean $\mu$ and covariance $\Sigma$.  We use $d_{\TV}(\mcl{D}, \mcl{D'})$ to denote the total variation distance between two distributions $\mcl{D}, \mcl{D'}$.  We begin by formally defining the problem that we will study.  First we define the contamination model.  This is a standard definition from robust learning (see e.g. \cite{diakonikolas2020robustly}).
\begin{definition}[Strong Contamination Model]\label{def:corruption}
We say that a set of vectors $Y_1, \dots , Y_n$ is an $\eps$-corrupted sample from a distribution $\mcl{D}$ over $\R^d$ if it is generated as follows.  First $X_1, \dots , X_n$ are sampled i.i.d. from $\mcl{D}$.  Then a (malicious, computationally unbounded) adversary observes $X_1, \dots , X_n$ and replaces up to $\eps n$ of them with any vectors it chooses.  The adversary may then reorder the vectors arbitrarily and output them as $Y_1, \dots , Y_n$
\end{definition}

\noindent In this paper, we study the following problem.  There is an unknown mixture of Gaussians 
\[
\mcl{M} = w_1G_1 + \dots + w_kG_k
\]
where $G_i = N(\mu_i, \Sigma_i)$.  We receive an $\eps$-corrupted sample $Y_1, \dots , Y_n $ from $\mcl{M}$ where $n = \poly(d/\eps)$.  The goal is to output a set of parameters $\wt{w_1}, \dots , \wt{w_k}$ and $(\wt{\mu_1}, \wt{\Sigma_1}), \dots , (\wt{\mu_k}, \wt{\Sigma_k})$ that are $\poly(\eps)$ close to the true parameters in the sense that there exists a permutation $\pi$ on $[k]$ such that for all $i$
\[
|w_i - \wt{w_{\pi(i)}}|, d_{\TV}\left(N(\mu_i, \Sigma_i) , N(\wt{\mu_{\pi(i)}}, \wt{\Sigma_{\pi(i)}})\right) \leq \poly(\eps).
\]

Throughout our paper, we will assume that all of the Gaussians that we consider have variance at least $\poly(\eps/d)$ and at most $\poly(d/\eps)$ in all directions i.e. they are not too flat.  This implies that their covariance matrices are invertible so we  may write expressions such as $\Sigma_i^{-1}$.  We will also make the following assumptions about the mixture:
\begin{itemize}
    \item The $w_i$ are rational with denominator at most $A$
    \item For all $i \neq j$, $d_{\TV}(G_i, G_j) > b$
\end{itemize}
for some positive constants $A,b$.  Note that a lower bound on the minimum mixing weight and a lower bound on the TV distance between components is necessary for parameter learning.  Throughout this paper, we treat $k,A,b$ as constants \--- i.e. $A$ and $b$ could be any function of $k$ \--- and when we say polynomial, the exponent may depend on these parameters.  We are primarily interested in dependence on $\eps$ and $d$ (the dimension of the space). 

\subsection{Sum of Squares Proofs}
We will make repeated use of the Sum of Squares (SOS) proof system.  We review a few basic facts here (see \cite{barak2016proofs} for a more extensive treatment).  Our exposition here closely mirrors \cite{diakonikolas2020robustly}.

\begin{definition}[Symbolic Polynomials]
A degree-$t$ symbolic polynomial $P$ is a collection of indeterminates $\wh{P}(\alpha)$, one for each multiset $\alpha \subseteq [n]$ of size at most $t$.  We think of it as representing a polynomial $P: \R^n \rightarrow \R$ in the sense that
\[
P(x) = \sum_{\alpha \subseteq [n], |\alpha| \leq t} \wh{P}(\alpha)x^{\alpha}
\]
\end{definition}

\begin{definition}[SOS proof]
Let $x_1, \dots , x_n$ be indeterminates and let $\mcl{A}$ be a set of polynomial inequalities
\[
\{ p_1(x) \geq 0, \dots , p_m(x) \geq 0 \}
\]
An SOS proof of an inequality $r(x) \geq 0$ from constraints $\mcl{A}$ is a set of polynomials $\{ r_S(x) \}_{S \subseteq [m]}$ such that each $r_S$ is a sum of squares of polynomials and 
\[
r(x) = \sum_{S \subseteq [m]} r_S(x) \prod_{i \in S} p_i(x)
\]
The degree of this proof is the maximum of the degrees of $r_S(x) \prod_{i \in S} p_i(x)$ over all $S$.  We write
\[
\mcl{A} \vdash_k r(x) \geq 0
\]
to denote that the constraints $\mcl{A}$ give an SOS proof of degree $k$ for the inequality $r(x) \geq 0$.  Note that we can represent equality constraints in $\mcl{A}$ by including $p(x) \geq 0$ and $-p(x) \geq 0$.
\end{definition}

\noindent The dual objects to SOS proofs are pseudoexpectations.  We will repeatedly make use of pseudoexpectations later on.
\begin{definition}\label{def:pseudoexpectation}
Let $x_1, \dots , x_n$ be indeterminates.  A degree-$k$ pseudoexpectation $\wt{\E}$ is a linear map 
\[
\wt{\E}: \R[x_1, \dots , x_n]_{\leq k} \rightarrow \R
\]
from degree-$k$ polynomials to $\R$ such that $\wt{\E}[p(x)^2] \geq 0$ for any $p$ of degree at most $k/2$ and $\wt{\E}[1] = 1$.  For a set of polynomial constraints $\mcl{A} = \{ p_1(x) \geq 0, \dots , p_m(x) \geq 0 \}$, we say that $\wt{\E}$ satisfies $\mcl{A}$ if 
\[
\wt{\E}[s^2(x)p_i(x)] \geq 0
\]
for all polynomials $s(x)$ and $i \in [m]$ such that $s(x)^2p_i(x)$ has degree at most $k$.
\end{definition}

The key fact is that given a set of polynomial constraints, we can solve for a constant-degree pseudoexpectation that satisfies those constraints (or determine that none exist) in polynomial time as it reduces to solving a polynomially sized SDP.

\begin{theorem}[SOS Algorithm \cite{barak2016proofs}]
There is an algorithm that takes a natural number $k$ and a satisfiable system of polynomial inequalities $\mcl{A}$ in varibles $x_1, \dots , x_n$ with coefficients at most $2^n$ containing an inequality of the form $\norm{x}^2 \leq M$ for some real number $M$ and returns in time $n^{O(k)}$ a degree-$k$ pseudoexpectation $\wt{\E}$ which satisfies $\mcl{A}$ up to error $2^{-n}$.
\end{theorem}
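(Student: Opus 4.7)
The plan is to reduce the search for a degree-$k$ pseudoexpectation to a polynomial-sized semidefinite program and then invoke a standard SDP solver. First, I would observe that a degree-$k$ pseudoexpectation $\wt{\E}$ is completely determined by its values on the monomials $x^\alpha$ with $|\alpha| \leq k$; there are $N = \binom{n+k}{k} = n^{O(k)}$ such moments, which I take as the decision variables. Linearity of $\wt{\E}$ and the normalization $\wt{\E}[1] = 1$ cut out a linear subspace on these variables.

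Next, I would encode the positivity conditions as PSD constraints on moment and localizing matrices. The condition $\wt{\E}[p(x)^2] \geq 0$ for every $p$ of degree at most $k/2$ is equivalent to requiring that the moment matrix $M$ indexed by monomials $\alpha,\beta$ of degree at most $k/2$, with entries $M_{\alpha,\beta} = \wt{\E}[x^{\alpha+\beta}]$, be positive semidefinite. Similarly, each constraint $\wt{\E}[s(x)^2 p_i(x)] \geq 0$ (for all $s$ of appropriate degree) becomes a PSD constraint on a localizing matrix whose entries are linear in the moment variables and in the coefficients of $p_i$. Together these yield an SDP with $n^{O(k)}$ variables and $O(m)$ PSD blocks of size $n^{O(k)}$, writable in time $n^{O(k)}$.

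Finally, the hypothesis $\norm{x}^2 \leq M$ translates to coefficient-wise upper bounds on every moment of any feasible $\wt{\E}$: by pseudo-Cauchy–Schwarz applied iteratively one gets $|\wt{\E}[x^\alpha]| \leq M^{|\alpha|/2}$, so the feasible set sits inside a Euclidean ball of polynomially bounded bit-complexity. Feeding this into the ellipsoid method (or an interior-point method) for SDPs yields, in time $n^{O(k)}$, a feasible solution up to additive slack $2^{-n}$, which is the pseudoexpectation we want. The chief subtle point \--- which I would view as the main obstacle \--- is verifying that the norm constraint really does produce the requisite a priori moment bounds \emph{inside} the SOS system at degree $k$, so that the feasible region of the SDP can be appropriately bounded; once this step is in place, the rest follows from textbook SDP machinery and the combinatorial bookkeeping of indexing moments by multi-indices.
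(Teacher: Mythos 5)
The paper does not prove this statement at all: it is imported as a black box from \cite{barak2016proofs}, and the authors explicitly defer the technical details (in the remark following the theorem) to that reference. So there is no paper proof to compare against line by line. That said, your sketch is the standard argument underlying the cited result, and it is essentially correct: index the $n^{O(k)}$ moments, express $\wt{\E}[p^2]\ge 0$ as PSD-ness of the moment matrix, express $\wt{\E}[s^2 p_i]\ge 0$ as PSD-ness of localizing matrices, and solve the resulting polynomial-size SDP.

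You are also right to flag the a priori moment bounds as the subtle point, but it deserves slightly more emphasis than you give it. The well-known obstruction (pointed out by O'Donnell) is that an SOS SDP with small input coefficients can nevertheless have only feasible points of doubly-exponential bit complexity, in which case the ellipsoid method does not terminate in polynomial time. The role of the explicit ball constraint $\norm{x}^2 \le M$ is exactly to rule this out: as you say, pseudo-Cauchy--Schwarz together with the constraint yields degree-$k$ SOS certificates of $|\wt{\E}[x^\alpha]| \le M^{|\alpha|/2}$, which bounds the feasible region of the SDP in a ball of bit complexity polynomial in the input. This is the Raghavendra--Weitz fix, and it is what makes the clean ``returns in time $n^{O(k)}$ a pseudoexpectation satisfying $\mcl{A}$ up to error $2^{-n}$'' statement actually true. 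One further point worth noting for completeness: the guarantee is only approximate feasibility (the $2^{-n}$ slack), since the SDP may have no rational exact solution; the paper explicitly acknowledges this and argues the slack is negligible for its purposes. With those caveats made precise, your outline matches the intended proof.
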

Note that there are a few technical details with regards to only being able to compute a pseudoexpectation that nearly satisfies the constraints.  These technicalities do not affect our proof (as $2^{-n}$ errors will be negligible) so we will simply assume that we can compute a pseudoexpectation that exactly satisfies the constraints.  See \cite{barak2016proofs} for more details about these technicalities.

Finally, we state a few simple inequalities for pseudoexpectations that will be used repeatedly later on.
\begin{claim}[Cauchy Schwarz for Pseudo-distributions]
Let $f,g$ be polynomials of degree at most $k$ in indeterminates $x = (x_1, \dots , x_n)$.  Then for any degree $k$ pseudoexpectation,
\[
\wt{\E}[fg] \leq \sqrt{\wt{\E}[f^2]}\sqrt{\wt{\E}[g^2]}.
\]
\end{claim}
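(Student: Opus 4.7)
The plan is to mimic the usual textbook derivation of Cauchy--Schwarz, but carry it out purely inside the pseudoexpectation calculus, using only the two axioms available from Definition \ref{def:pseudoexpectation}: linearity of $\wt{\E}$ and nonnegativity on squares. Concretely, for a real parameter $\lambda$ I would form the polynomial $\lambda f + g$, observe that it has degree at most $k$ (so $(\lambda f + g)^2$ can legally be fed to $\wt{\E}$ as a square of a degree-at-most-$k$ polynomial), and apply the square-nonnegativity axiom to get
\[
0 \leq \wt{\E}[(\lambda f + g)^2] = \lambda^2 \wt{\E}[f^2] + 2\lambda \wt{\E}[fg] + \wt{\E}[g^2].
\]
This is a quadratic in the real variable $\lambda$ that is nonnegative everywhere, so its discriminant must be nonpositive, which gives $(\wt{\E}[fg])^2 \leq \wt{\E}[f^2]\, \wt{\E}[g^2]$, and taking square roots delivers the claim.

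The only subtlety is the degenerate case $\wt{\E}[f^2] = 0$, where the inequality above degenerates from a quadratic to a linear function of $\lambda$. Here the plan is to notice that $2\lambda \wt{\E}[fg] + \wt{\E}[g^2] \geq 0$ must hold for every real $\lambda$, which forces $\wt{\E}[fg] = 0$ (otherwise sending $\lambda$ to $\pm\infty$ with the sign chosen to make the linear term negative contradicts nonnegativity). So the inequality $\wt{\E}[fg] \leq \sqrt{\wt{\E}[f^2]} \sqrt{\wt{\E}[g^2]}$ still holds (both sides are zero), and symmetrically for $\wt{\E}[g^2] = 0$.

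I expect no substantive obstacle here: the argument is essentially a one-line reduction to the classical discriminant trick, and the only thing to be slightly careful about is bookkeeping the degrees so that $(\lambda f + g)^2$ is admissible as input to the degree-$k$ pseudoexpectation (i.e., respecting whatever convention the paper is using for the relationship between the degree of $\wt{\E}$ and the degree of the polynomials it is defined on). With that checked, everything else is linearity of $\wt{\E}$ in $\lambda$ together with the single axiom $\wt{\E}[p^2] \geq 0$.
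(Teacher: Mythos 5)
The paper states this claim without proof (it appears in the preliminaries as a standard fact), so there is no in-text argument to compare against; your proof is the classical discriminant derivation and it is correct. For each real $\lambda$, nonnegativity of $\wt{\E}$ on squares gives $\lambda^2\wt{\E}[f^2] + 2\lambda\wt{\E}[fg] + \wt{\E}[g^2] \ge 0$; treating this as a quadratic in $\lambda$ and requiring nonnegative discriminant gives $(\wt{\E}[fg])^2 \le \wt{\E}[f^2]\wt{\E}[g^2]$, and you correctly isolate the degenerate case $\wt{\E}[f^2] = 0$, where the left-over linear function of $\lambda$ forces $\wt{\E}[fg] = 0$. Taking square roots (both $\wt{\E}[f^2]$ and $\wt{\E}[g^2]$ are nonnegative by the same axiom) yields the stated one-sided bound. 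This is the standard proof one would find in any SOS reference such as \cite{barak2016proofs}.

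The one thing to push back on slightly, which you hedge but do not fully resolve, is the degree accounting. As stated in the paper, $f$ and $g$ have degree at most $k$ and the pseudoexpectation has degree $k$; under Definition~\ref{def:pseudoexpectation}, such a pseudoexpectation is only defined on polynomials of degree at most $k$, and the nonnegativity axiom $\wt{\E}[p^2]\ge 0$ is only guaranteed for $p$ of degree at most $k/2$. So under a literal reading, $\wt{\E}[(\lambda f + g)^2]$ and indeed $\wt{\E}[f^2]$, $\wt{\E}[g^2]$, $\wt{\E}[fg]$ are all out of scope. This is an imprecision in the claim statement itself, not a gap you introduced; the claim should be read as ``$f, g$ of degree at most $k/2$'' (or, equivalently, with a degree-$2k$ pseudoexpectation), and with that correction your argument goes through exactly as written, since $\lambda f + g$ then has degree at most $k/2$ and $(\lambda f + g)^2$ is a legal argument for the square-nonnegativity axiom.
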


\begin{corollary}
Let $f_1,g_1, \dots , f_m,g_m$ be polynomials of degree at most $k$ in indeterminates $x = (x_1, \dots , x_n)$.  Then for any degree $k$ pseudoexpectation,
\[
\wt{\E}[f_1g_1 + \dots + f_mg_m] \leq \sqrt{\wt{\E}[f_1^2 + \dots + f_m^2]}\sqrt{\wt{\E}[g_1^2+ \dots + g_m^2]}.
\]
\end{corollary}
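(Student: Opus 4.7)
The plan is to reduce the corollary to the single-pair Cauchy--Schwarz inequality by a standard ``optimize over a scalar $\lambda$'' trick, which is already familiar from the classical proof of Cauchy--Schwarz but which works equally well in the pseudoexpectation setting because all quantities that appear are pseudoexpectations of sums of squares.

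First I would consider, for an arbitrary real parameter $\lambda$, the polynomial
\[
Q_\lambda(x) \;=\; \sum_{i=1}^{m} \bigl(\lambda f_i(x) - g_i(x)\bigr)^2.
\]
This is manifestly a sum of squares of polynomials of degree at most $k/2$, so by the definition of a degree-$k$ pseudoexpectation (Definition \ref{def:pseudoexpectation}) we have $\wt{\E}[Q_\lambda] \geq 0$. Expanding and using linearity,
\[
\lambda^2 \,\wt{\E}\!\left[\sum_{i=1}^m f_i^2\right] \;-\; 2\lambda \,\wt{\E}\!\left[\sum_{i=1}^m f_i g_i\right] \;+\; \wt{\E}\!\left[\sum_{i=1}^m g_i^2\right] \;\geq\; 0
\]
for every $\lambda \in \R$. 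This is a quadratic in $\lambda$ with nonnegative leading coefficient (again because $\wt{\E}[\sum f_i^2] \geq 0$), so its discriminant must be nonpositive, which yields
\[
\left(\wt{\E}\!\left[\sum_{i=1}^m f_i g_i\right]\right)^2 \;\leq\; \wt{\E}\!\left[\sum_{i=1}^m f_i^2\right] \cdot \wt{\E}\!\left[\sum_{i=1}^m g_i^2\right].
\]
Taking square roots gives the claimed inequality.

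The only potential subtlety is the degenerate case $\wt{\E}[\sum f_i^2] = 0$, in which the expression above is linear rather than quadratic in $\lambda$. Here the nonnegativity for all $\lambda \in \R$ forces $\wt{\E}[\sum f_i g_i] = 0$, and the inequality holds trivially since the right-hand side is nonnegative (being a pseudoexpectation of a sum of squares). So this case does not cause any difficulty. I do not expect any other obstacles; the argument is essentially the two-line proof of classical Cauchy--Schwarz, lifted to pseudoexpectations by the single observation that $Q_\lambda$ is a sum of squares and therefore is assigned a nonnegative value by $\wt{\E}$.
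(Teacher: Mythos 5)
Your proof is correct, but it takes a genuinely different route from the paper's. The paper proves the corollary in two steps: first it applies the single-pair Cauchy--Schwarz for pseudoexpectations to each term individually, obtaining
\[
\wt{\E}[f_1g_1 + \dots + f_mg_m] \leq \sqrt{\wt{\E}[f_1^2]}\sqrt{\wt{\E}[g_1^2]} + \dots + \sqrt{\wt{\E}[f_m^2]}\sqrt{\wt{\E}[g_m^2]},
\]
and then it applies the \emph{ordinary} Cauchy--Schwarz inequality to the resulting sequence of real numbers $\bigl(\sqrt{\wt{\E}[f_i^2]}\bigr)_i$ and $\bigl(\sqrt{\wt{\E}[g_i^2]}\bigr)_i$ to recombine. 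Your proof instead bypasses the single-pair claim entirely and argues directly via the discriminant of the quadratic $\lambda \mapsto \wt{\E}\bigl[\sum_i(\lambda f_i - g_i)^2\bigr]$. The paper's approach is modular --- it reuses the Cauchy--Schwarz claim that was just stated --- whereas yours is self-contained and reproves the inequality from scratch in one shot; it also yields the two-sided bound $\bigl|\wt{\E}[\sum f_ig_i]\bigr| \leq \sqrt{\cdot}\sqrt{\cdot}$ for free. Both are valid. One minor slip in your write-up: you assert that $Q_\lambda$ is a sum of squares of polynomials of degree at most $k/2$, but under the stated hypothesis that the $f_i, g_i$ have degree at most $k$, the factors $\lambda f_i - g_i$ have degree at most $k$, so $Q_\lambda$ has degree up to $2k$. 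This degree mismatch is already present in the paper's own statement of the corollary (and of the preceding Cauchy--Schwarz claim), so it is not a defect you introduced; but it is worth noticing that the intended hypothesis is really that the $f_i, g_i$ have degree at most $k/2$, or equivalently that $\wt{\E}$ has degree $2k$.
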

\begin{proof}
Note 
\[
\wt{\E}[f_1g_1 + \dots + f_mg_m] \leq \sqrt{\wt{\E}[f_1^2]}\sqrt{\wt{\E}[g_1^2]} + \dots + \sqrt{\wt{\E}[f_m^2]}\sqrt{\wt{\E}[g_m^2]} \leq \sqrt{\wt{\E}[f_1^2 + \dots + f_m^2]}\sqrt{\wt{\E}[g_1^2+ \dots + g_m^2]}
\]
where the first inequality follows from Cauchy Schwarz for pseudoexpectations and the second follows from standard Cauchy Schwarz.
\end{proof}

\section{Fun with Generating Functions}\label{sec:gen-functions}

We now introduce the generating function machinery that we will use in our learning algorithm.  We begin with a standard definition.
\begin{definition}
Let $\mcl{H}_m(x)$ be the univariate Hermite polynomials $\mcl{H}_0 = 1, \mcl{H}_1 = x, \mcl{H}_2 = x^2 - 1 \cdots $ defined by the recurrence
\[
\mcl{H}_m(x) = x\mcl{H}_{m-1}(x) - (m-1)\mcl{H}_{m-2}(x)
\]
\end{definition}

Note that in $\mcl{H}_m(x)$, the degree of each nonzero monomials has the same parity as $m$.  In light of this, we can write the following:
\begin{definition}
Let $\mcl{H}_m(x,y^2)$ be the homogenized Hermite polynomials e.g. $\mcl{H}_2(x,y^2) = x^2 - y^2, \mcl{H}_3(x,y^2) = x^3 - 3xy^2$.
\end{definition}

It will be important to note the following fact:
\begin{claim}\label{claim:hermite-identity}
We have
\[
e^{xz - \frac{1}{2}y^2z^2} = \sum_{m = 0}^{\infty} \frac{1}{m!}\mcl{H}_m(x, y^2)z^m
\]
where the RHS is viewed as a formal power series in $z$ whose coefficients are polynomials in $x,y$.
\end{claim}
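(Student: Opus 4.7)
The plan is to work in the formal power series ring $\R[x,y][[z]]$ and verify that the two sides of the claimed identity agree coefficient-by-coefficient in $z$. My first step is to pin down what the homogenized polynomials $\mcl{H}_m(x,y^2)$ actually are: the natural homogenization of the univariate Hermite recurrence is
\[
\mcl{H}_m(x,y^2) = x\,\mcl{H}_{m-1}(x,y^2) - (m-1)\,y^2\,\mcl{H}_{m-2}(x,y^2),
\]
with base cases $\mcl{H}_0=1$ and $\mcl{H}_1 = x$. This is consistent with the two examples $\mcl{H}_2 = x^2 - y^2$ and $\mcl{H}_3 = x^3 - 3xy^2$ given in the preceding definition, and it makes each $\mcl{H}_m(x,y^2)$ homogeneous of degree $m$ when $x$ and $y$ are both assigned degree $1$ (hence a polynomial in $x$ and $y^2$).

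The cleanest way to finish is a direct ODE argument. Set $F(z) := e^{xz - \tfrac12 y^2 z^2}$ and observe $F'(z) = (x - y^2 z)\,F(z)$ with $F(0)=1$. Writing $F(z) = \sum_{m\ge 0} a_m(x,y^2)\,z^m/m!$ and equating coefficients of $z^m$ in $F'(z) - (x - y^2z)F(z) = 0$ gives
\[
a_{m+1}(x,y^2) \;=\; x\,a_m(x,y^2) \;-\; m\,y^2\,a_{m-1}(x,y^2),
\]
together with $a_0 = 1$ and $a_1 = x$. After re-indexing $m \mapsto m-1$, this is exactly the homogenized recurrence above, so by induction on $m$ we conclude $a_m = \mcl{H}_m(x,y^2)$ for all $m$, which is the claimed identity.

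An equivalent ``scaling'' proof is to first observe that $\mcl{H}_m(x,y^2) = y^m \mcl{H}_m(x/y)$ (verified by induction using the two recurrences) and then substitute $x \mapsto x/y$, $z \mapsto yz$ into the classical generating function $e^{xz - z^2/2} = \sum_m \mcl{H}_m(x)z^m/m!$. One then has to clear denominators in $y$ to make sense of the substitution inside $\R[x,y][[z]]$, but this is harmless since each $z$-coefficient on both sides is genuinely a polynomial in $x$ and $y$.

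The only real ``obstacle'' here is bookkeeping: the claim statement introduces the homogenized polynomials $\mcl{H}_m(x,y^2)$ via two worked examples rather than a closed-form recurrence, so before doing anything else I would write down and justify the homogenized recurrence. Once that is in place, the identity is just the standard exponential generating function for Hermite polynomials under a rescaling of the second argument, and the induction occupies a single line.
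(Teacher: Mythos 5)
Your proof is correct. The paper does not actually supply a proof of Claim~\ref{claim:hermite-identity}; it is stated as a known fact (the standard exponential generating function for Hermite polynomials, homogenized). Your ODE/recurrence argument is a clean way to make this rigorous: the formal derivative of $F(z) = e^{xz - \frac12 y^2 z^2}$ satisfies $F'(z) = (x - y^2 z)F(z)$ with $F(0)=1$, which forces the coefficients $a_m$ in $F(z) = \sum_m a_m z^m/m!$ to satisfy $a_{m+1} = x\,a_m - m\,y^2\,a_{m-1}$ with $a_0 = 1$, $a_1 = x$, matching the homogenized Hermite recurrence $\mcl{H}_m(x,y^2) = x\,\mcl{H}_{m-1}(x,y^2) - (m-1)y^2\,\mcl{H}_{m-2}(x,y^2)$. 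You were also right to flag the small gap in the paper's exposition: Definition~2.5 only gives $\mcl{H}_m(x,y^2)$ by example, so pinning down the homogenized recurrence (or equivalently $\mcl{H}_m(x,y^2) = y^m\mcl{H}_m(x/y)$) is a necessary preliminary, and the parity remark just before Definition~2.5 is what makes that homogenization well-defined. Both of your arguments (the direct ODE induction and the rescaling of the classical generating function) are valid and essentially equivalent; nothing is missing.
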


Now we define a multivariate version of the Hermite polynomials.
\begin{definition}\label{def:Hermite-two-var}
Let $H_m(X,z)$ be a formal polynomial in variables $X = X_1, \dots , X_d$ whose coefficients are polynomials in $d$ variables $z_1, \dots , z_d$ that is given by
\[
H_m(X,z) = \mcl{H}_m(z_1X_1 + \dots + z_dX_d , X_1^2 + \dots + X_d^2)
\]
Note that $H_m$ is homogeneous of degree $m$ as a polynomial in $X_1, \dots , X_d$
\end{definition}

\begin{definition}\label{def:Hermite-final}
For a distribution $D$ on $\R^d$, we let
\[
h_{m,D}(X) = \E_{(z_1, \dots , z_d) \sim D}[H_m(X,z)]
\]
where we take the expectation of $H_m$ over $(z_1, \dots , z_d)$ drawn from $D$.  Note that $h_{m,D}(X)$ is a polynomial in $(X_1, \dots , X_d)$. We will omit the $D$ in the subscript when it is clear from context. Moreover for a mixture of Gaussians $$\mcl{M} = w_1 N(\mu_1, \Sigma_1) + \dots w_k N(\mu_k, \Sigma_k)$$ we will refer to the Hermite polynomials $h_{m, \mcl{M}}$ as the Hermite polynomials of the mixture.  
\end{definition}

We remark that if there is a mixture $\mcl{M} = w_1 N(\mu_1, \Sigma_1) + \dots w_k N(\mu_k, \Sigma_k)$ where instead of real numbers, the $w_i, \mu_i, \Sigma_i$ are given in terms of indeterminates, the Hermite polynomials will be polynomials in those indeterminates.  We will repeatedly make use of this abstraction later on.

The first important observation is that the Hermite polynomials for Gaussians can be written in a simple closed form via generating functions.
\begin{claim}
Let $D = N(\mu, I + \Sigma)$.  Let $a(X) = \mu \cdot X$ and $b(X) = X^T \Sigma X$ .  Then
\[
e^{a(X)y + \frac{1}{2}b(X)y^2} = \sum_{m=0}^{\infty} \frac{1}{m!} \cdot h_{m,D}(X) y^m
\]
as formal power series in $y$.  
\end{claim}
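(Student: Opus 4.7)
The plan is to apply Claim \ref{claim:hermite-identity} with the substitutions dictated by Definition \ref{def:Hermite-two-var}, then take expectation over $z \sim D$ on both sides, and finish by recognizing the Gaussian moment generating function on the left-hand side.

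First I would start from the univariate identity $e^{\xi t - \frac{1}{2} \eta^2 t^2} = \sum_{m=0}^{\infty} \frac{1}{m!} \mcl{H}_m(\xi, \eta^2) t^m$ and substitute $\xi = z_1 X_1 + \cdots + z_d X_d$ and $\eta^2 = X_1^2 + \cdots + X_d^2$, treating $t$ as the formal variable. By Definition \ref{def:Hermite-two-var} the right-hand side becomes $\sum_m \frac{1}{m!} H_m(X,z) t^m$, so at the level of formal power series in $t$ whose coefficients are polynomials in $X$ and $z$,
\[
e^{\langle z, X \rangle t - \frac{1}{2}\|X\|^2 t^2} = \sum_{m=0}^\infty \frac{1}{m!} H_m(X,z) t^m.
\]

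Next I would take the expectation over $z \sim D = N(\mu, I + \Sigma)$. Since each coefficient $H_m(X,z)$ is a polynomial in $z$ (of degree $m$) with $X$-polynomial coefficients, the expectation commutes termwise with the power series expansion, producing $\sum_m \frac{1}{m!} h_{m,D}(X) t^m$ on the right by Definition \ref{def:Hermite-final}. On the left, the factor $e^{-\frac{1}{2}\|X\|^2 t^2}$ is independent of $z$ and pulls outside, leaving $\E_{z \sim D}[e^{\langle z, X \rangle t}]$, which is the moment generating function of the scalar Gaussian $\langle z, X\rangle \sim N(\mu \cdot X,\, X^T(I+\Sigma)X)$ evaluated at $t$. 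This evaluates to $\exp\!\bigl(\mu \cdot X \cdot t + \tfrac{1}{2} X^T(I+\Sigma) X \cdot t^2\bigr)$.

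Finally I would combine the two factors on the left: the $-\frac{1}{2}\|X\|^2 t^2$ exactly cancels the contribution of the $I$ inside $X^T(I+\Sigma)X$, leaving $\exp(a(X) t + \tfrac{1}{2} b(X) t^2)$. Renaming $t$ to $y$ yields the stated identity. The one nontrivial point that I would verify carefully (rather than the calculation itself) is the interchange of expectation with the infinite sum in $t$: because we are working at the level of formal power series, it suffices that each coefficient in $t$ be a polynomial in $z$ so the expectation in $z$ is well-defined and termwise, which is true here. The rest is direct substitution and a Gaussian MGF computation, so there is no serious obstacle.
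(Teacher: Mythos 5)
Your proof is correct and takes a genuinely cleaner, more self-contained route than the paper's. The paper applies Claim~\ref{claim:hermite-identity} directly to the target expression with arguments $(a(X), -b(X))$, reducing the statement to the per-$m$ identity $\E_{z\sim D}[\mcl{H}_m(\langle z,X\rangle, \|X\|^2)] = \mcl{H}_m(a(X), -b(X))$, which it then dispatches by citing a moment-tensor computation (Lemma 2.7 of Kane). You instead apply Claim~\ref{claim:hermite-identity} inside the definition of $H_m(X,z)$, take expectation over $z$ term by term, pull the $z$-independent factor $e^{-\frac{1}{2}\|X\|^2 t^2}$ outside, and then compute the Gaussian MGF $\E_z[e^{\langle z,X\rangle t}] = e^{(\mu\cdot X)t + \frac{1}{2}X^T(I+\Sigma)X\,t^2}$ explicitly; the $I$ in $I+\Sigma$ then exactly cancels the $-\frac{1}{2}\|X\|^2 t^2$ factor. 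The two arguments are equivalent in content---the per-$m$ identity the paper cites is precisely what one gets by extracting coefficients in your MGF identity---but yours replaces an opaque external citation with a one-line closed-form computation, and the cancellation with the identity matrix makes the role of the $I + \Sigma$ normalization transparent. Your remark about the formal-power-series interchange being termwise (each $t^m$-coefficient is a polynomial in $z$) is exactly the right justification.
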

\begin{proof}
By Claim \ref{claim:hermite-identity}, we have
\[
e^{a(X)y + \frac{1}{2}b(X)y^2} = \sum_{m=0}^{\infty}\frac{1}{m!}\mcl{H}_m(a(X), -b(X))y^m
\]
It now suffices to verify that 
\[
\E_{(z_1, \dots z_d) \sim D}\left[\mcl{H}_m(z_1X_1 + \dots + z_dX_d , X_1^2 + \dots + X_d^2)\right] = \mcl{H}_m(a(X), -b(X))
\]
This can be verified through straight-forward computations using the moment tensors of a Gaussian (see Lemma 2.7 in \cite{kane2020robust}).
\end{proof}

We now have two simple corollaries to the above.

\begin{corollary}
Let $\mcl{M} = w_1 N(\mu_1, I +  \Sigma_1) + \dots w_k N(\mu_k, I + \Sigma_k)$.  Let $a_i(X) = \mu_i \cdot X$ and $b_i(X) = X^T\Sigma_i X$.  Then
\[
\sum_{m=0}^{\infty} \frac{1}{m!} \cdot h_{m,\mcl{M}}(X) y^m = w_1e^{a_1(X)y + \frac{1}{2}b_1(X)y^2} + \dots + w_ke^{a_k(X)y + \frac{1}{2}b_k(X)y^2}
\]
\end{corollary}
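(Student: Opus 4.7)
The plan is to reduce this immediately to the previous claim (the single-component case) by exploiting the linearity of the expectation that defines $h_{m,\mcl{M}}$. The only substantive step is to verify that the Hermite polynomials of a mixture are a convex combination of the Hermite polynomials of its components; once that is established, the generating function identity for the mixture is just the weighted sum of the single-Gaussian identities.

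First I would unfold Definition \ref{def:Hermite-final} at the mixture $\mcl{M}$. By definition,
\[
h_{m,\mcl{M}}(X) = \E_{(z_1,\dots,z_d)\sim \mcl{M}}[H_m(X,z)],
\]
and since $\mcl{M}$ is literally the distribution $w_1 N(\mu_1, I+\Sigma_1) + \cdots + w_k N(\mu_k, I+\Sigma_k)$, the law of total expectation gives
\[
h_{m,\mcl{M}}(X) = \sum_{i=1}^k w_i \, \E_{z\sim N(\mu_i, I+\Sigma_i)}[H_m(X,z)] = \sum_{i=1}^k w_i \, h_{m, N(\mu_i, I+\Sigma_i)}(X).
\]
This step is purely formal — it is just linearity of expectation applied to the polynomial $H_m(X,z)$ in the variables $z$.

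Next I would substitute this into the left-hand side of the desired identity and swap the order of the two sums (over $m$ and over $i$), which is legal since we are only manipulating formal power series in $y$:
\[
\sum_{m=0}^{\infty} \frac{1}{m!}\, h_{m,\mcl{M}}(X)\, y^m \;=\; \sum_{i=1}^k w_i \sum_{m=0}^{\infty} \frac{1}{m!}\, h_{m, N(\mu_i, I+\Sigma_i)}(X)\, y^m.
\]
Now each inner sum is exactly the object controlled by the previous claim applied to the single Gaussian $N(\mu_i, I+\Sigma_i)$, with $a(X) = a_i(X) = \mu_i\cdot X$ and $b(X) = b_i(X) = X^T \Sigma_i X$. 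That claim tells us the inner sum equals $e^{a_i(X)y + \frac{1}{2} b_i(X) y^2}$, and summing these over $i$ with the weights $w_i$ yields precisely the right-hand side.

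There is no real obstacle here: the entire content is linearity of expectation plus the previous claim. The only thing worth being careful about is treating everything as formal power series in $y$ (so no convergence issues) and noting that $H_m(X,z)$ is a polynomial in $z$, which is what makes the passage of expectation through $H_m$ unambiguous even though we are computing under Gaussian components.
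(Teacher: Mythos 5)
Your proof is correct and is exactly the intended argument: the paper states this as an immediate corollary of the preceding single-Gaussian claim and does not spell out a proof, but linearity of expectation over the mixture components (which makes $h_{m,\mcl{M}} = \sum_i w_i h_{m,N(\mu_i, I+\Sigma_i)}$) followed by term-by-term application of that claim is precisely the reasoning. As a side note, you have also correctly reconstructed the intended right-hand side as $\sum_i w_i e^{a_i(X)y + \frac{1}{2}b_i(X)y^2}$; the statement as printed is missing a factor of $y$ in the linear term of each exponent, a typo you can confirm by comparison with the definition of $F(y)$ later in the paper.
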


\begin{corollary}
Let $\mcl{M} = w_1 N(\mu_1, I +  \Sigma_1) + \dots w_k N(\mu_k, I + \Sigma_k)$.  Let $a_i(X) = \mu_i \cdot X$ and $b_i(X) = X^T\Sigma_i X$.  Then the Hermite polynomials $h_{m,\mcl{M}}(X)$ can be written as a linear combination of products of the $a_i(X), b_i(X)$ such that the number of terms in the sum, the number of terms in each product, and the coefficients in the linear combination are all bounded as functions of $m,k$.
\end{corollary}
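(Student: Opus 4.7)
The plan is to extract the coefficient of $y^m$ in the generating function identity from the previous corollary. Since
\[
\sum_{m=0}^{\infty} \frac{1}{m!} h_{m,\mcl{M}}(X) y^m = \sum_{i=1}^k w_i e^{a_i(X)y + \frac{1}{2}b_i(X)y^2},
\]
the Hermite polynomial $h_{m,\mcl{M}}(X)$ equals $m!$ times the coefficient of $y^m$ on the right-hand side. The approach is to expand each exponential using the Taylor series and then collect coefficients, which will produce an explicit finite expression whose shape we can read off.

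More concretely, first I would write, for each $i$,
\[
e^{a_i(X)y + \frac{1}{2}b_i(X)y^2} = \sum_{j=0}^{\infty} \frac{1}{j!}\bigl(a_i(X)y + \tfrac{1}{2}b_i(X)y^2\bigr)^j,
\]
then apply the binomial theorem to each $(a_i(X)y + \tfrac{1}{2}b_i(X)y^2)^j$ to obtain a double sum in powers of $a_i(X)$, $b_i(X)$, and $y$. Matching the total power of $y$ to $m$ forces $j + \ell = m$ with $0 \le \ell \le j$, so only finitely many pairs $(j,\ell)$ contribute. Summing over $i$ with weights $w_i$ then gives
\[
h_{m,\mcl{M}}(X) = m! \sum_{i=1}^{k} w_i \sum_{\ell=0}^{\lfloor m/2 \rfloor} \frac{1}{(m-\ell)!}\binom{m-\ell}{\ell} \frac{1}{2^{\ell}} \, a_i(X)^{m-2\ell}\, b_i(X)^{\ell}.
\]

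From this closed form the claim is immediate: the number of summands is at most $k(\lfloor m/2 \rfloor + 1)$, each monomial is a product of at most $m$ factors drawn from $\{a_i(X), b_i(X)\}$, and each scalar coefficient is a product of $m!$, a reciprocal factorial, a binomial, and $2^{-\ell}$, all of which are bounded by functions of $m$ and $k$ alone (with $w_i \in [0,1]$). I do not foresee a genuine obstacle: the only mild bookkeeping step is verifying that the binomial expansion indeed terminates at $\ell \le m/2$, and that the coefficient bound is uniform in the unknowns $\mu_i, \Sigma_i, w_i$, which is visible from the explicit formula above.
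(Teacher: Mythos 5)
Your proof is correct and is essentially the intended one: the paper states this corollary as an immediate consequence of the generating-function identity in the preceding corollary, and expanding each exponential as a power series, applying the binomial theorem, and extracting the coefficient of $y^m$ is exactly the computation that yields the closed form you wrote down. The bounds on the number of terms, the number of factors per product, and the scalar coefficients (using $w_i\le 1$) then read off directly, matching the claim.
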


The next important insight is that the generating functions for the Hermite polynomials behave nicely under certain differential operators.  We can use these differential operators to derive identities that the Hermite polynomials must satisfy and these identities will be a crucial ingredient in our learning algorithm.
\\\\
The proceeding claims all follow from direct computation.
\begin{claim}\label{claim:degree-reduction}
Let $\partial $ denote the differential operator with respect to $y$.  If 
\[
f(y) = P(y,X)e^{a(X)y + \frac{1}{2}b(X)y^2}
\]
where $P$ is a polynomial in $y$ of degree $k$ (whose coefficients are polynomials in $X$) then
\[
(\partial - (a(X) + yb(X)))f(y) = Q(y,X)e^{a(X)y + \frac{1}{2}b(X)y^2}
\]
where $Q$ is a polynomial in $y$ with degree exactly $k-1$ whose leading coefficient is $k$ times the leading coefficient of $P$.
\end{claim}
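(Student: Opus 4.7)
The proof is a direct application of the product rule, motivated by the observation that the operator $\partial - (a(X) + yb(X))$ is designed exactly to cancel the contribution from differentiating the exponential factor. The plan is as follows.

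First I would apply the product rule to $f(y) = P(y,X) e^{a(X)y + \frac{1}{2}b(X)y^2}$, treating $X$ as constant with respect to $\partial = \partial_y$. Since
\[
\partial e^{a(X)y + \frac{1}{2}b(X)y^2} = (a(X) + yb(X))\, e^{a(X)y + \frac{1}{2}b(X)y^2},
\]
the product rule gives
\[
\partial f(y) = (\partial P)(y,X)\, e^{a(X)y + \frac{1}{2}b(X)y^2} + P(y,X)(a(X) + yb(X))\, e^{a(X)y + \frac{1}{2}b(X)y^2}.
\]
Subtracting $(a(X) + yb(X)) f(y)$ cancels the second term exactly, leaving
\[
(\partial - (a(X) + yb(X))) f(y) = (\partial P)(y,X)\, e^{a(X)y + \frac{1}{2}b(X)y^2},
\]
so we may set $Q(y,X) = (\partial P)(y,X)$.

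Second, I would verify the claim about the degree and leading coefficient of $Q$. Writing $P(y,X) = \sum_{j=0}^{k} c_j(X) y^j$ with $c_k(X)$ the leading coefficient (nonzero by assumption that $P$ has degree exactly $k$ in $y$), term-by-term differentiation yields
\[
Q(y,X) = (\partial P)(y,X) = \sum_{j=1}^{k} j\, c_j(X)\, y^{j-1},
\]
which is a polynomial in $y$ of degree $k-1$ with leading coefficient $k\, c_k(X)$, as required.

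There is no real obstacle here: the entire content is the single cancellation arranged by the form of the operator $\partial - (a(X) + yb(X))$, together with the elementary fact that differentiation in $y$ lowers the $y$-degree by exactly one. The only minor point worth stating carefully is that $a(X)$ and $b(X)$ are constants with respect to $\partial_y$, so the product rule applies to the exponential cleanly and no cross terms involving $X$-derivatives appear.
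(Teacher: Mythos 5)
Your proof is correct and is precisely the ``direct computation'' the paper invokes without elaborating: the product rule cancels the exponential's derivative against the subtracted $(a(X)+yb(X))$ term, leaving $Q = \partial P$, which has degree $k-1$ and leading coefficient $k$ times that of $P$.
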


\begin{corollary}\label{corollary:null-operator}
Let $\partial $ denote the differential operator with respect to $y$.  If 
\[
f(y) = P(y,X)e^{a(X)y + \frac{1}{2}b(X)y^2}
\]
where $P$ is a polynomial in $y$ of degree $k$ then
\[
(\partial - (a(X) + yb(X)))^{k+1}f(y) = 0.
\]
\end{corollary}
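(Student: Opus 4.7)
The plan is a straightforward induction on $k$ using Claim \ref{claim:degree-reduction} as the inductive step. The key observation is that the operator $L := \partial - (a(X) + yb(X))$ preserves the structural form $P(y,X)e^{a(X)y + \tfrac{1}{2}b(X)y^2}$ while strictly lowering the degree of $P$ in $y$ by exactly one, so $k+1$ applications must drive the polynomial factor to zero.

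First I would formalize the inductive statement: for each $0 \le j \le k$, one has $L^j f(y) = Q_j(y,X)e^{a(X)y + \tfrac{1}{2}b(X)y^2}$ where $Q_j$ is a polynomial in $y$ of degree exactly $k-j$. The base case $j=0$ is the hypothesis with $Q_0 = P$. Assuming it holds at step $j < k$, applying $L$ to $Q_j(y,X)e^{a(X)y + \tfrac{1}{2}b(X)y^2}$ falls exactly under the setting of Claim \ref{claim:degree-reduction} (with $P$ there replaced by $Q_j$), yielding $Q_{j+1}$ of degree $(k-j)-1 = k-(j+1)$.

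It remains to show $L$ kills the base case $j = k$, i.e.\ $L(Q_k(X)e^{a(X)y + \tfrac{1}{2}b(X)y^2}) = 0$ when $Q_k$ is a constant in $y$. This is immediate from the product rule:
\[
\partial\bigl(Q_k(X)e^{a(X)y + \tfrac{1}{2}b(X)y^2}\bigr) = Q_k(X)\,(a(X)+yb(X))\,e^{a(X)y + \tfrac{1}{2}b(X)y^2},
\]
and subtracting $(a(X)+yb(X))$ times $Q_k(X)e^{a(X)y + \tfrac{1}{2}b(X)y^2}$ gives zero. So $L^{k+1}f = L(L^k f) = L(Q_k(X) e^{a(X)y+\tfrac12 b(X)y^2}) = 0$.

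There is no real obstacle here, since Claim \ref{claim:degree-reduction} has already done the nontrivial computational work (verifying that the first-order derivative picks up exactly the $a(X)+yb(X)$ factor from the exponential and that the degree drops by one rather than staying the same). The only minor point to be careful about is that the degree of $Q_j$ stays \emph{exactly} $k-j$ at each step, which is guaranteed by the leading-coefficient statement in Claim \ref{claim:degree-reduction} (the leading coefficient at step $j$ is $k(k-1)\cdots(k-j+1)$ times the leading coefficient of $P$, which is nonzero whenever $j \le k$); this justifies terminating the induction exactly at $j=k$ rather than earlier.
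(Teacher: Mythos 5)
Your proof is correct and is precisely the intended argument: the paper gives no explicit proof (noting only that the claims in this block ``follow from direct computation''), and the natural route is exactly the one you take, namely iterating Claim~\ref{claim:degree-reduction} to drive $\deg_y$ down to $0$ and then checking directly that $\partial - (a(X) + yb(X))$ annihilates $Q(X)e^{a(X)y + \frac{1}{2}b(X)y^2}$. Your remark that the leading-coefficient clause of Claim~\ref{claim:degree-reduction} guarantees the degree drops by exactly one at each step (rather than collapsing early) is a worthwhile bit of care, though it does not change the conclusion since early collapse would only help.
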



\begin{claim}\label{claim:leading-coeff}
Let $\partial $ denote the differential operator with respect to $y$.  Let
\[
f(y) = P(y,X)e^{a(X)y + \frac{1}{2}b(X)y^2}
\]
where $P$ is a polynomial in $y$ of degree $k$.  Let the leading coefficient of $P$ (viewed as a polynomial in $y$) be $L(X)$.  Let $c(X), d(X)$ be a linear and quadratic polynomial in the $X$ variables respectively such that $\{a(X), b(X) \} \neq \{c(X) , d(X) \}$.  If $b(X) \neq d(X)$ then
\[
(\partial - (c(X) + yd(X)))^{k'}f(y) = Q(y,X)e^{a(X)y + \frac{1}{2}b(X)y^2}
\]
where $Q$ is a polynomial of degree $k + k'$ in $y$ with leading coefficient
\[
L(x)(b(X) - d(X))^{k'} 
\]
and if $b(X) = d(X)$ then 
\[
(\partial - (c(X) + yd(X)))^{k'}f(y) = Q(y,X)e^{a(X)y + \frac{1}{2}b(X)y^2}
\]
where $Q$ is a polynomial of degree $k$ in $y$ with leading coefficient
\[
L(X)(a(X) - c(X))^{k'}
\]
\end{claim}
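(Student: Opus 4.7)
The plan is to reduce the action of $\partial - (c(X) + yd(X))$ on $f(y) = P(y,X)e^{a(X)y + \frac{1}{2}b(X)y^2}$ to a simple recurrence on the polynomial factor $P$, and then induct on $k'$. A direct application of the product rule, exactly as in Claim \ref{claim:degree-reduction}, gives
\[
(\partial - (c(X) + yd(X))) f = \bigl[\partial P + P \cdot (\alpha + y\beta)\bigr] e^{a(X)y + \frac{1}{2}b(X)y^2},
\]
where I write $\alpha := a(X) - c(X)$ and $\beta := b(X) - d(X)$. Thus on the polynomial factor the operator acts by $P \mapsto TP := \partial P + (\alpha + y\beta)P$, and the exponential factor is carried along unchanged through every application. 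The hypothesis $\{a(X), b(X)\} \neq \{c(X), d(X)\}$ rules out the degenerate case $\alpha = \beta = 0$, so exactly one of the two sub-cases of the claim applies throughout.

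Next I would analyze how $T$ changes the degree and leading coefficient in $y$. If $P$ has degree $k$ with leading coefficient $L(X)$, then $\partial P$ has degree $k-1$, while $(\alpha + y\beta)P$ has degree $k+1$ with leading coefficient $\beta L$ when $\beta \neq 0$, and degree $k$ with leading coefficient $\alpha L$ when $\beta = 0$ and $\alpha \neq 0$. In either sub-case the contribution $\partial P$ is of strictly lower degree than $(\alpha + y\beta)P$ and so cannot affect the top coefficient. Iterating via induction on $k'$: in the case $\beta \neq 0$, each application of $T$ raises the degree by one and multiplies the leading coefficient by $\beta$, giving after $k'$ steps a polynomial $Q$ of degree $k+k'$ with leading coefficient $L(X)(b(X)-d(X))^{k'}$; in the case $\beta = 0$, $\alpha \neq 0$, each application preserves the degree $k$ and multiplies the leading coefficient by $\alpha$, giving $Q$ of degree $k$ with leading coefficient $L(X)(a(X)-c(X))^{k'}$.

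There is no genuine obstacle — the entire claim is bookkeeping on the leading coefficient of an explicit differential recurrence. The only point that must be handled carefully is verifying inductively in the $\beta = 0$ case that the degree really is preserved at every step, so that the leading coefficient is indeed dictated by multiplication by $\alpha$ and never overtaken by a term created by $\partial$. This follows at once from the step analysis above, since if $T^{j}P$ has degree $k$ with leading coefficient $\alpha^j L$, then $T^{j+1}P = \partial(T^j P) + \alpha T^j P$ has its degree-$k$ part equal to $\alpha \cdot (\alpha^j L)\, y^k = \alpha^{j+1} L \, y^k$.
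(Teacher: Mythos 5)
Your proof is correct and is exactly the ``direct computation'' the paper appeals to without spelling out: rewrite the operator's action on the exponential factor as the recurrence $P \mapsto \partial P + (\alpha + y\beta)P$ with $\alpha = a-c$, $\beta = b-d$, note that $\partial P$ has strictly smaller degree than the other term in both sub-cases, and induct on $k'$. Your observation that the set-inequality hypothesis rules out $\alpha = \beta = 0$ (so the $\beta = 0$ branch genuinely has $\alpha \neq 0$) is the one point worth making explicit, and you handled it correctly.
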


\subsection{Polynomial Factorizations}

The analysis of our SOS-based learning algorithm will rely on manipulations of Hermite polynomials.  An important piece of our analysis is understanding how the coefficients of polynomials behave under addition and (polynomial) multiplication.  Specifically, if we have two polynomials $f(X), g(X)$ and we have bounds on the coefficients of $f$ and $g$, we now want to give bounds on the coefficients of the polynomials $f(X) + g(X)$ and $f(X)g(X)$.  Most of these bounds are easy to obtain.  The one that is somewhat nontrivial is lower bounding the coefficients of $f(X)g(X)$ i.e. if the coefficients of $f$ and $g$ are not all small, then the product $f(X)g(X)$ cannot have all of its coefficients be too small.

\begin{definition}
For a polynomial $f(X)$ in the $d$ variables $X_1, \dots , X_d$ with real coefficients define $v(f)$ to be the vectorization of the coefficients. (We will assume this is done in a consistent manner so that the same coordinate of vectorizations of two polynomials corresponds to the coefficient of the same monomial.) We will frequently consider expressions of the form $\norm{v(f)}$ i.e. the $L^2$ norm of the coefficient vector.
\end{definition}

\begin{definition}
For a polynomial $A(X)$ of degree $k$ in $d$ variables $X_1, \dots , X_d$ and a vector $v \in \R^d$ with nonnegative integer entries summing to at most $k$, we use $A_v$ to denote the corresponding coefficient of $A$.
\end{definition}

First, we prove a simple result about the norm of the vectorization of a sum of polynomials.
\begin{claim}\label{claim:sum-bound}
Let $f_1, \dots , f_m$ be polynomials in $X_1, \dots , X_d$ whose coefficients are polynomials in formal variables $u_1, \dots , u_n$ of degree $O_k(1)$.  Then
\[
\norm{v(f_1 + \dots + f_m)}^2 \leq m(\norm{v(f_1)}^2 + \dots + \norm{v(f_m)}^2)
\]
Furthermore, the difference can be written as a sum of squares of polynomials of degree $O_k(1)$ in $u_1, \dots, u_n$. 
\end{claim}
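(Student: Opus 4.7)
The plan is to reduce the claim to a coordinate-wise elementary inequality and then package it as a sum of squares.

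First I would observe that vectorization is $\R$-linear, so
\[
v(f_1 + \dots + f_m) = v(f_1) + \dots + v(f_m),
\]
where the sum on the right is taken coordinate-wise (i.e.\ monomial-by-monomial in $X_1,\dots,X_d$). In particular, for every multi-index $\alpha$ indexing the coordinates of $v(\cdot)$, the $\alpha$-th entry satisfies $(v(f_1+\dots+f_m))_\alpha = \sum_{i=1}^m (v(f_i))_\alpha$, where each $(v(f_i))_\alpha$ is itself a polynomial in the $u_j$'s of degree $O_k(1)$ by hypothesis.

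Next I would apply the elementary identity
\[
m\sum_{i=1}^m a_i^2 - \Bigl(\sum_{i=1}^m a_i\Bigr)^2 = \sum_{1 \leq i < j \leq m} (a_i - a_j)^2,
\]
which is a genuine sum of squares (and in particular nonnegative). Setting $a_i = (v(f_i))_\alpha$ for each coordinate $\alpha$ and summing over $\alpha$ yields
\[
m\bigl(\norm{v(f_1)}^2 + \dots + \norm{v(f_m)}^2\bigr) - \norm{v(f_1 + \dots + f_m)}^2 = \sum_\alpha \sum_{i < j} \bigl((v(f_i))_\alpha - (v(f_j))_\alpha\bigr)^2.
\]
The inequality in the statement follows immediately since the right-hand side is nonnegative.

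Finally, for the SOS part, each $(v(f_i))_\alpha - (v(f_j))_\alpha$ is a polynomial in $u_1,\dots,u_n$ of degree $O_k(1)$, and the identity above exhibits the difference as a sum of squares of such polynomials. There is no real obstacle here; the only thing to watch is just bookkeeping to confirm that the total number of squared terms (and their individual degrees in the $u$'s) are all controlled by $k$ (and $m,d$), which they are since $m$ and the number of coordinates $\alpha$ are both bounded as functions of $k,d$ and each coefficient polynomial has degree $O_k(1)$ by assumption.
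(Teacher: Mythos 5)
Your proof takes essentially the same approach as the paper's: both reduce to the elementary identity $m\sum a_i^2 - (\sum a_i)^2 = \sum_{i<j}(a_i-a_j)^2$ applied coordinate-wise and summed over all monomials. Your version is a bit more careful and in fact fixes a small typo in the paper, which writes the sum-of-squares witness as $\sum_{i \neq j}(a_i - a_j)^2$ rather than $\sum_{i < j}(a_i - a_j)^2$ (a harmless factor-of-2 discrepancy).
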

\begin{proof}
Note
\[
(a_1 + \dots + a_m)^2 \leq m(a_1^2 + \dots + a_m^2)
\]
and the difference between the two sides can be written as a sum of squares 
\[
\sum_{i \neq j} (a_i - a_j)^2
\]
The desired inequality can now be obtained by summing expressions of the above form over all coefficients.
\end{proof}

Next, we upper bound the norm of the vectorization of a product of polynomials.  
\begin{claim}\label{claim:factor-upper-bound}
Let $f,g,h_1, \dots , h_k$ be polynomials in $X_1, \dots , X_d$ of degree at most $k$ with coefficients that are polynomials in formal variables $u_1, \dots , u_n$ of degree $O_k(1)$   Then for any pseudoexpectation $\widetilde{\E}$ of degree $C_k$ for some sufficiently large constant $C_k$ depending only on $k$,
\[
\widetilde{\E}[\norm{v(h_1)}^2 \dots \norm{v(h_k)}^2 \norm{v(fg)}^2] \leq O_k(1) \widetilde{\E}[\norm{v(h_1)}^2 \dots \norm{v(h_k)}^2\norm{v(f)}^2 \norm{v(g)}^2 ]
\]
where the pseudoexpectation operates on polynomials in $u_1, \dots , u_n$.
\end{claim}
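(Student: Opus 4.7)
\textbf{Proof plan for Claim \ref{claim:factor-upper-bound}.} The plan is to upgrade the numerical Cauchy--Schwarz bound $\norm{v(fg)}^2 \leq O_k(1)\norm{v(f)}^2\norm{v(g)}^2$ to a genuine polynomial sum-of-squares identity in the indeterminates $u_1,\dots,u_n$, and then multiply that identity by the sum-of-squares quantity $\norm{v(h_1)}^2\cdots\norm{v(h_k)}^2$ and apply $\widetilde{\E}$. Writing the claim in the form $\widetilde{\E}[\cdot]\geq 0$ and producing an SOS certificate for the integrand is the cleanest route, because then the assumption $\widetilde{\E}[p^2]\geq 0$ does all the work.

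The main step is the SOS identity for a single product. Expanding $f=\sum_\alpha f_\alpha(u)X^\alpha$, $g=\sum_\beta g_\beta(u)X^\beta$, the coefficient of $X^\gamma$ in $fg$ equals $\sum_{\alpha+\beta=\gamma} f_\alpha g_\beta$, a sum with $N_\gamma=\prod_i(\gamma_i+1)$ terms. Since $|\gamma|\le 2k$, we have $N_\gamma\le 2^{|\gamma|}\le 4^k$, a bound depending only on $k$ and not on $d$; set $C_k'=4^k$. Applying the elementary identity
\[
N\sum_{i=1}^N a_i^2 - \Bigl(\sum_{i=1}^N a_i\Bigr)^2 \;=\; \sum_{1\le i<j\le N}(a_i-a_j)^2
\]
to each coefficient sum (with an extra nonnegative slack $(C_k'-N_\gamma)\sum_{\alpha+\beta=\gamma} f_\alpha^2 g_\beta^2$ that is itself a sum of squares), and summing over $\gamma$, yields
\[
C_k'\,\norm{v(f)}^2\norm{v(g)}^2 - \norm{v(fg)}^2 \;=\; \sum_{\gamma}\sum_{\{(\alpha,\beta),(\alpha',\beta')\}}\!\!(f_\alpha g_\beta - f_{\alpha'}g_{\beta'})^2 + \sum_\gamma (C_k'-N_\gamma)\!\!\sum_{\alpha+\beta=\gamma}\!\!f_\alpha^2 g_\beta^2,
\]
which is an SOS in $u_1,\dots,u_n$. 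Since every $f_\alpha, g_\beta$ has degree $O_k(1)$ in $u$, each square has degree $O_k(1)$.

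Next I would multiply both sides by $\norm{v(h_1)}^2\cdots\norm{v(h_k)}^2$. Each $\norm{v(h_i)}^2=\sum_\alpha (h_i)_\alpha(u)^2$ is manifestly a sum of squares in $u$ of degree $O_k(1)$, and products of sums of squares are sums of squares (via expansion of the product). Thus the product
\[
\norm{v(h_1)}^2\cdots\norm{v(h_k)}^2\bigl(C_k'\norm{v(f)}^2\norm{v(g)}^2 - \norm{v(fg)}^2\bigr)
\]
is an SOS in $u_1,\dots,u_n$ of overall degree $O_k(1)$. Choosing $C_k$ to exceed this bound, the pseudoexpectation assigns nonnegative value to it, and rearranging yields precisely the inequality claimed.

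The only subtle point, and the one worth checking carefully, is that the combinatorial constant $C_k'$ is dimension-free: if $d$ entered, then multiplying polynomials could amplify coefficient norms arbitrarily and the bound would fail. The reason it does not is that $N_\gamma$ is controlled by the total degree $|\gamma|\le 2k$ alone (via $\prod_i(\gamma_i+1)\le 2^{|\gamma|}$), since inactive coordinates contribute a factor of $1$. Beyond this, every remaining ingredient is a routine bookkeeping check that the various polynomials stay within a $k$-dependent degree budget in the $u$ variables.
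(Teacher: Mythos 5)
Your proof is correct and follows essentially the same route as the paper: bound the number of ways to split each monomial of $fg$ by an $O_k(1)$ constant independent of $d$, exhibit the difference $O_k(1)\norm{v(f)}^2\norm{v(g)}^2 - \norm{v(fg)}^2$ as an explicit sum of squares in the $u$ indeterminates, multiply by the SOS quantity $\norm{v(h_1)}^2\cdots\norm{v(h_k)}^2$, and apply the pseudoexpectation. The paper's proof leaves the SOS certificate and the dimension-free count implicit, but the argument is the same; your write-up simply fills in those two details.
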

\begin{proof}
Note that each monomial in the product $fg$ has degree at most $2k$ and thus can only be split in $O_k(1)$ ways.  Specifically, each entry of $v(fg)$ can be written as a sum of $O_k(1)$ entries of $v(f) \otimes v(g)$ so 
\[
\norm{v(fg)}^2 \leq O_k(1)\norm{v(f)}^2\norm{v(g)}^2
\]
where the difference between the two sides can be written as a sum of squares.  This implies the desired inequality.  
\end{proof}

Before we prove the final result in this section, we introduce a few definitions.
\begin{definition}
For a vector $v \in \R^d$ with integer coordinates, we define $\tau(v)$ to be the multiset formed by the coordinates of $v$.  We call $\tau$ the type of $v$.
\end{definition}

\begin{definition}
For a monomial say $X_1^{a_1}\dots X_d^{a_d}$, we call $(a_1, \dots , a_d) \in \R^d$ its degree vector.
\end{definition}

Now we can prove a lower bound on the norm of the vectorization of the product of polynomials.

\begin{claim}\label{claim:factor-lower-bound}
Let $f,g,h_1, \dots , h_k$ be polynomials in $X_1, \dots , X_d$ of degree at most $k$ with coefficients that are polynomials in formal variables $u_1, \dots , u_n$ of degree $O_k(1)$.   Then for any pseudoexpectation $\widetilde{\E}$ of degree $C_k$ for some sufficiently large constant $C_k$ depending only on $k$,
\[
\widetilde{\E}[\norm{v(h_1)}^2 \dots \norm{v(h_k)}^2 \norm{v(fg)}^2] \geq \Omega_k(1) \widetilde{\E}[\norm{v(h_1)}^2 \dots \norm{v(h_k)}^2\norm{v(f)}^2 \norm{v(g)}^2 ]
\]
where the pseudoexpectation operates on polynomials in $u_1, \dots , u_n$.
\end{claim}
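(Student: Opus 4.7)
The plan is to construct an SoS proof (in the coefficients of $f$, $g$, and the $h_i$, viewed as polynomials in $u_1,\ldots,u_n$) of an inequality of the form $C_k \cdot \norm{v(fg)}^2 \cdot \prod_{i=1}^k \norm{v(h_i)}^2 - \norm{v(f)}^2 \norm{v(g)}^2 \cdot \prod_{i=1}^k \norm{v(h_i)}^2 \geq 0$ for some $C_k = O_k(1)$; applied under any degree-$C_k$ pseudoexpectation, this gives the stated bound with $\Omega_k(1) = 1/C_k$. The challenge, in contrast to the easier upper direction of Claim \ref{claim:factor-upper-bound}, is that each coefficient of $fg$ is $(fg)_\alpha = \sum_{\beta+\gamma=\alpha} f_\beta g_\gamma$, a sum whose terms can cancel, and one must show such cancellation is bounded by a factor depending only on $k$ and not on $d$.

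The first step is to decompose $f$ and $g$ by the types of their monomials: $f = \sum_\tau f^{(\tau)}$, where $f^{(\tau)}$ is the sum of the monomials of $f$ whose degree vectors have type $\tau$, and similarly $g = \sum_{\tau'} g^{(\tau')}$. Since $f$ and $g$ have degree at most $k$, only $O_k(1)$ types appear. Applying Claim \ref{claim:sum-bound} reduces the problem (up to $O_k(1)$ factors) to proving, for each pair of types, an SoS inequality of the form $\norm{v(f^{(\tau)} g^{(\tau')})}^2 \geq c_k \norm{v(f^{(\tau)})}^2 \norm{v(g^{(\tau')})}^2$. Within a fixed type pair the combinatorial structure is rigid: for $a$ of type $\tau$ and $b$ of type $\tau'$, the pair $(a,b)$ is determined up to the natural action of the symmetric group $S_d$ on variable indices by an ``overlap pattern'' describing which positions of the support of $a$ coincide with which positions of the support of $b$. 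There are only $O_k(1)$ such overlap patterns, and each produces output monomials $a+b$ of a single, well-defined output type.

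The crux, and the main obstacle, is an explicit SoS identity showing that for each overlap pattern, the linear map sending $\{f_a g_b\}_{(a,b) \text{ realizing the pattern}}$ to the corresponding block of output coefficients of $fg$ has minimum singular value bounded below by a constant depending only on $k$. The reason we expect this to succeed is that $S_d$-equivariance collapses the entire family of maps (indexed by $d$) into a uniformly bounded collection of combinatorial templates indexed by $(\tau, \tau', \text{pattern})$, so that for each template an explicit SoS identity can be written down by a direct counting argument that extracts, from each output block, an $O_k(1)$-sized subset of coefficients linearly reconstructing the whole block with polynomially bounded coefficients. Once these per-template identities are in hand, they reassemble via Claim \ref{claim:sum-bound} and Claim \ref{claim:factor-upper-bound} into the full SoS inequality, with the $\norm{v(h_i)}^2$ multipliers furnishing the extra degree budget needed to carry out the argument inside a degree-$C_k$ pseudoexpectation. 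The difficulty of this step is precisely that lower bounds on singular values translate to SoS \emph{certifications} of polynomial positivity, which are much more delicate than the routine degree-counting used for the upper bound in Claim \ref{claim:factor-upper-bound}.
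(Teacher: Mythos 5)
Your plan has two genuine gaps, one structural and one substantive.

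First, the reduction to individual type pairs via Claim \ref{claim:sum-bound} goes the wrong direction. That claim gives an \emph{upper} bound $\norm{v(\sum f_i)}^2 \leq m \sum \norm{v(f_i)}^2$, not a lower bound. When you write $fg = \sum_{\tau,\tau'} f^{(\tau)} g^{(\tau')}$, different type pairs can produce the same output monomials (e.g.\ $X_1^2 \cdot X_2$ and $X_1 X_2 \cdot X_1$ both give $X_1^2 X_2$, yet $(2)\times(1)$ and $(1,1)\times(1)$ are different type pairs), so there can be cancellation \emph{across} type pairs. You would need the lower bound $\norm{v(fg)}^2 \geq c_k \sum_{\tau,\tau'} \norm{v(f^{(\tau)} g^{(\tau')})}^2$ before you may pass to individual type pairs, and that is the very thing being proved. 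This reduction step is circular.

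Second, the per-pattern step you flag as "the crux" does not actually cash out as a singular-value bound. Within a single type pair and overlap pattern, the map you describe sends the collection $\{f_a g_b\}$ to output coefficients $\sum_{a+b=m} f_a g_b$. This is many-to-one (e.g.\ $X_1 X_2 \cdot X_1 X_3$ and $X_1 X_3 \cdot X_1 X_2$ both contribute to the $X_1^2 X_2 X_3$ coordinate), so viewed as a linear map in the $f_a g_b$'s it has minimum singular value zero; the cancellation you need to exclude lives exactly in its kernel. The content of the claim is not linear-algebraic at all — it is the algebraic fact that the $f_a g_b$ are not independent but constrained to be products, and extracting anything from that requires an actual SoS manipulation of $f$ and $g$, which your proposal does not supply.

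The paper's proof uses a different and essential idea you don't have: expand $\norm{v(fg)}^2 = \sum_a \left(\sum_{u+v=a} f_u g_v\right)^2$, and then re-index the cross terms by the \emph{difference} $u-v$ rather than the sum, giving the identity $\norm{v(fg)}^2 = \sum_b \left(\sum_{u-v=b} f_u g_v\right)^2$. In the difference picture, the type $(k,-k)$ has no collisions at all — each inner sum is a single monomial — and the types of difference vectors can be sorted by $L^2$ norm, with a parallelogram-law argument showing every colliding cross term lifts to a strictly earlier (larger-norm) type. This gives a finite ($O_k(1)$-step) cascade: either the mass at level $i$ is already $\Omega_k(1)$, or it controls the cross terms at level $i+1$. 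Your proposal never discovers this sum-to-difference reindexing, which is what makes the problem tractable; without it, the "extract a reconstructing subset of coefficients" step is an unsupported assertion, not a proof.
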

\begin{proof}
We will first prove the statement for $h_1 = \dots = h_k = 1$.
\\\\
Let $S$ be the set of all types that can be obtained by taking the sum of two degree vectors for monomials of degree at most $k$ and let $T$ be the set of all types that can be obtained by taking the difference of two degree vectors for monomials of degree at most $k$.  Note that $|S|,|T| = O_k(1)$.  Now
\begin{align*}
\widetilde{\E}[\norm{v(fg)}^2] = \widetilde{\E}\left[\sum_{a} \left( \sum_{u+v = a}f_ug_v\right)^2\right] = \widetilde{\E}\left[ \sum_{\substack{u_1 + v_1 = u_2 +v_2}} f_{u_1}g_{v_1}f_{u_2}g_{v_2}\right]
\\ = \widetilde{\E}\left[ \sum_{\substack{u_1 - v_2 = u_2  - v_1}} f_{u_1}g_{v_1}f_{u_2}g_{v_2}\right] = \widetilde{\E}\left[\sum_{b} \left(\sum_{u-v = b}f_ug_v \right)^2\right]
\end{align*}

where the sums in the above expression are over all $a$ and all $b$ that are vectors in $\Z^d$ for which the inner summands are nonempty.  Let $T = \{t_1, \dots , t_n \}$ where the types $t_1, \dots , t_n$ are sorted in non-increasing order of their $L^2$ norm.  Recall that $T$ consists of all types that can be obtained by taking the difference of two degree vectors corresponding to monomials of degree at most $k$.  Now first note
\[
\widetilde{\E}[\norm{v(fg)}^2] \geq \widetilde{\E}\left[\sum_{b,\tau(b) = t_1}\left(\sum_{u-v = b}f_ug_v \right)^2 \right] = \widetilde{\E}\left[\sum_{b,\tau(b) = t_1}\left(\sum_{u-v = b}(f_ug_v)^2 \right)\right]
\]
since $t_1$ corresponds to the type $(k,-k)$ and each of the inner summands only contains one term.  Now consider $t_i$ for $i > 1$.
\begin{align*}
 \widetilde{\E}\left[\sum_{b,\tau(b) = t_i}\left(\sum_{u-v = b}f_ug_v \right)^2 \right] = \widetilde{\E}\left[\sum_{b,\tau(b) = t_i}\left(\sum_{u-v = b}(f_ug_v)^2 \right) \right] + 2\widetilde{\E}\left[\sum_{b,\tau(b) = t_i}\left(\sum_{\substack{\{u_1,v_1\} \neq \{u_1,v_2 \} \\ u_1 - v_1 = u_2 - v_2 = b}}f_{u_1}f_{u_2}g_{v_1}g_{v_2}\right) \right]
\end{align*}
Note that in the second sum, either $u_1 - v_2 \in t_j$ for $j < i$ or $u_2 - v_1 \in t_j$ for $j < i$.  To see this, let $a = v_1 - v_2$.  Then $u_1 - v_2 = b + a$ and $u_2 - v_1 = b - a$.  Now 
\[
\norm{b - a}_2^2 + \norm{b + a}_2^2 > \norm{b}_2^2
\]
since $a \neq 0$ so one of the differences must be of an earlier type.
\\\\
Next, note that for a fixed $u_1, v_2$, there are at most $O_k(1)$ possible values for $u_2',v_1'$ such that the term $f_{u_1}f_{u_2'}g_{v_1'}g_{v_2}$ appears.  This is because we must have $u_1 + v_2 = u_2' + v_1'$ and there are only $O_k(1)$ ways to achieve this.  Thus, by Cauchy Schwarz
\begin{align*}
 \widetilde{\E}\left[\sum_{b,\tau(b) = t_i}\left(\sum_{u-v = b}f_ug_v \right)^2 \right] \geq \widetilde{\E}\left[\sum_{b,\tau(b) = t_i}\left(\sum_{u-v = b}(f_ug_v)^2 \right) \right] \\ - O_k(1) \sqrt{\widetilde{\E}\left[\sum_{j < i}\sum_{b,\tau(b) = t_j}\left(\sum_{u-v = b}(f_ug_v)^2 \right) \right]} \cdot \sqrt{\widetilde{\E}[\norm{v(f)}^2 \norm{v(g)}^2 ]}
\end{align*}
Now combining the above with the fact that $|T| = n =  O_k(1)$ and that
\[
\widetilde{\E}[\norm{v(f)}^2 \norm{v(g)}^2 ] = \widetilde{\E}\left[ \sum_{i = 1}^n \sum_{b, \tau(b) = t_i} \left(\sum_{u - v = b} (f_u g_v)^2 \right)\right]
\]
we can complete the proof.  To see this, for each $i$, let
\begin{align*}
Q_i = \widetilde{\E}\left[\sum_{b,\tau(b) = t_i}\left(\sum_{u-v = b}(f_ug_v)^2 \right)\right] \\
R_i =  \widetilde{\E}\left[\sum_{b,\tau(b) = t_i}\left(\sum_{u-v = b}f_ug_v \right)^2\right]
\end{align*}
Also normalize so that 
\[
\widetilde{\E}[\norm{v(f)}^2 \norm{v(g)}^2 ]  = 1.
\]
Let $\delta$ be some suitably chosen constant depending only on $k$.  If $Q_1 \geq \delta$ then we are done.  Otherwise, we have an upper bound on the square root terms that are subtracted in the expression for $R_2$.  If $Q_2 \geq \Omega_1(k)\sqrt{\delta} $ then we are again done (since we have now reduced to the case where $Q_1 \leq \delta$).  Iteratively repeating this procedure, we are done whenever one of the $Q_i$ is sufficiently large compared to $Q_1, \dots , Q_{i-1}$.  However, not all of $Q_1, \dots , Q_n$ can be small since their sum is $1$.  Choosing $\delta$ to be a sufficiently small constant but depending only on $k$ we conclude that
\[
\widetilde{\E}[\norm{v(fg)}^2] \geq \Omega_k(1) \widetilde{\E}[\norm{v(f)}^2 \norm{v(g)}^2 ]
\]
as desired.
\\\\
For the general case when not all of the $h_i$ are $1$, we can multiply the insides of all of the pseudoexpectations above by $\norm{v(h_1)}^2 \dots \norm{v(h_k)}^2$ and the same argument will work.
\end{proof}

\section{Components Are Not Far Apart}\label{sec:close-case}

Now we are ready to present our main contribution: an algorithm that learns the parameters of a mixture of Gaussians $\mcl{M} = w_1G_1 + \dots + w_kG_k$ from an $\eps$-corrupted sample when the components are not too far apart.  In this section, we will assume that the mixture is in nearly isotropic position and that we have estimates for the Hermite polynomials.  We will show how to learn the parameters from these estimates.  In the next section, Section \ref{sec:moment-est}, we show how to actually place the mixture in isotropic position and obtain estimates for the Hermite polynomials.  
\\\\
We use the following conventions:
\begin{itemize}
    \item The true means and covariances are given by $(\mu_1, I + \Sigma_1), \dots , (\mu_k, I + \Sigma_k)$
    \item The true mixing weights are $w_1, \dots , w_k$ and are all bounded below by some value $w_{\min}$
    \item $\Delta$ is an upper bound that we have on $\norm{\mu_i}$ and $\norm{\Sigma_i}$ i.e. the components are not too far separated.  
    \item  $\norm{\mu_i - \mu_j}_2 + \norm{\Sigma_i - \Sigma_j}_2 \geq c$ for all $i \neq j$ i.e. no pair of components is too close
    \item We should think of $w_{\min}, c$ as being at least $\eps^{r}$ and $\Delta$ being at most $\eps^{-r}$ for some sufficiently small value of $r > 0$.
    \item Let the Hermite polynomials for the true mixture be given by $h_1 = h_{1, \mcl{M}}, h_2 = h_{2, \mcl{M}}, \dots $ where 
    \[
    \mcl{M} = w_1 N(\mu_1 , I + \Sigma_1) + \dots + w_k N(\mu_k , I + \Sigma_k)
    \]
\end{itemize}
In this section we assume that we have the following:
\begin{itemize}
    \item Estimates $\overline{h}_i(X)$ for the Hermite polynomials such that $\norm{v(\overline{h_i}(X) - h_i(X))}^2 \leq \epsilon'  = \poly(\eps)$
\end{itemize}
and our only interaction with the actual samples is through these estimates.  We will show how to obtain these estimates in Section \ref{sec:moment-est} (closely mirroring the method in \cite{kane2020robust}).
\\\\
The main theorem that we prove in this section is as follows.
\begin{theorem}\label{thm:main-close-case}
Let $\eps'$ be a parameter that is sufficiently small in terms of $k$.  There is a sufficiently small function $f(k)$ and a sufficiently large function $F(k)$ such that if 
\[
\mcl{M} = w_1N(\mu_1, I + \Sigma_1) + \dots + w_kN(\mu_k, I + \Sigma_k)
\]
is a mixture of Gaussians with
\begin{itemize}
    \item $\norm{\mu_i}_2 , \norm{\Sigma_i}_2 \leq \Delta$ for all $i$
    \item $\norm{\mu_i - \mu_j}_2 + \norm{\Sigma_i - \Sigma_j}_2 \geq c$ for all $i \neq j$
    \item $w_1, \dots , w_k \geq w_{\min}$
\end{itemize}
for parameters $w_{\min}, c \geq \eps'^{f(k)}$ and $\Delta \leq \eps'^{-f(k)}$ and we are given estimates $\overline{h}_i(X)$ for the Hermite polynomials for all $i \leq F(k)$ such that 
\[
\norm{v(\overline{h_i}(X) - h_i(X))}^2 \leq \epsilon' 
\]
where $h_i$ are the Hermite polynomials for the true mixture $\mcl{M}$, then there is an algorithm that returns $\poly(1/\eps')^{O_1(k)}$ candidate mixtures, at least one of which satisfies 
\[
\norm{w_i - \widetilde{w_i}} + \norm{\mu_i - \widetilde{\mu_i}}_2 + \norm{\Sigma_i - \widetilde{\Sigma_i}}_2 \leq \eps'^{f(k)}
\]
for all $i$.
\end{theorem}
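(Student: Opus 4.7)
The plan is to set up a sum-of-squares relaxation whose indeterminates encode the candidate mixture parameters. Following the introduction, I introduce indeterminates for the mixing weights $\wt{w}_i$, for orthonormal basis vectors $u_1,\ldots,u_k$ spanning the candidate means, for basis vectors $v_1,\ldots,v_k$ spanning the candidate (vectorized) covariances, and for the coefficients expressing each $\wt{\mu}_i,\wt{\Sigma}_i$ in those bases. I impose the constraints: (i) $\wt{w}_i\ge w_{\min}/2$ and $\sum_i\wt{w}_i=1$; (ii) the bases are orthonormal and all expansion coefficients are bounded by $O(\Delta)$, so $\norm{v(\wt{\mu}_i)},\norm{v(\wt{\Sigma}_i)}$ are $O(\Delta)$; (iii) the candidate components are pairwise separated, i.e.\ for every $i\ne j$ either $\norm{v(\wt{\mu}_i-\wt{\mu}_j)}\ge c/(2\sqrt{2})$ or $\norm{v(\wt{\Sigma}_i-\wt{\Sigma}_j)}\ge c/(2\sqrt{2})$ (branched over the $2^{\binom{k}{2}}$ choices of which alternative holds for each pair); and (iv) the candidate Hermite polynomials $\wt{h}_i(X)$, which by the corollaries of Section~\ref{sec:gen-functions} are explicit polynomials in the indeterminates, match the estimates: $\norm{v(\wt{h}_i-\overline{h}_i)}^2\le\eps'$ for every $i\le F(k)$. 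Under the theorem's hypotheses the system is satisfied by the true parameters, so on each branch a pseudoexpectation $\wt{\E}$ exists.

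Consider a branch on which $\norm{v(\wt{\Sigma}_k-\wt{\Sigma}_j)}$ is lower-bounded for all $j<k$. To derive the symbolic identity of the introduction, start from the formal equality $\wt{F}(y)-F(y)=\sum_m\tfrac{1}{m!}(\wt{h}_m(X)-h_m(X))y^m$ and apply the annihilators $\partial_y-(a_j(X)+y\,b_j(X))$ and $\partial_y-(\wt{a}_j(X)+y\,\wt{b}_j(X))$ in a carefully chosen order so that Corollary~\ref{corollary:null-operator} kills every true component and every candidate component except the $k$th. Claim~\ref{claim:leading-coeff} tracks how each application multiplies the surviving leading coefficient by $\wt{b}_k(X)-b_j(X)$ or $\wt{b}_k(X)-\wt{b}_j(X)$; the exponents $2^{i-1}$ and $2^{k+i-1}$ arise because each time a component is killed the surviving $y$-polynomial degree doubles and must be re-killed at the next stage. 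Reading off the surviving leading coefficient once only the $k$th candidate exponential remains yields exactly the stated identity, with the polynomials $P_i(X)$ collecting the lower-order debris discarded along the way. An entirely analogous derivation using the second case of Claim~\ref{claim:leading-coeff} yields a companion identity whose LHS carries factors $\wt{\mu}_k(X)-\mu_i(X)$, used on branches where $\mu$-separation dominates.

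Now extract parameters. By (iv) and triangle inequality, $\wt{\E}\norm{v(\wt{h}_i-h_i)}^2\le O(\eps')$; together with boundedness of $\norm{v(P_i)}$ (via Claim~\ref{claim:factor-upper-bound} and the norm bounds in (ii)), Claim~\ref{claim:sum-bound} gives $\wt{\E}\norm{v(\mathrm{LHS})}^2\le\poly(\eps')$. Iterating Claim~\ref{claim:factor-lower-bound} to factor the LHS coefficient-norm yields
\[
\wt{\E}\!\left[\wt{w}_k^{\,2}\prod_{i=1}^{k}\norm{v(\wt{\Sigma}_k(X)-\Sigma_i(X))}^{2^{i}}\prod_{j=1}^{k-1}\norm{v(\wt{\Sigma}_k(X)-\wt{\Sigma}_j(X))}^{2^{k+j}}\right]\le\poly(\eps').
\]
Constraints (i) and (iii) give SOS lower bounds on $\wt{w}_k$ and on each $\norm{v(\wt{\Sigma}_k-\wt{\Sigma}_j)}$, so iterated Cauchy--Schwarz for pseudoexpectations forces $\wt{\E}\norm{v(\wt{\Sigma}_k(X)-\Sigma_{i^\ast}(X))}^2\le\eps'^{f(k)}$ for at least one $i^\ast\in[k]$, and the companion identity does the same for $\wt{\mu}_k$. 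Rounding by $\wt{\E}$ of each indeterminate then produces a candidate $k$th component that is $\eps'^{f(k)}$-close to some true $(w_{i^\ast},\mu_{i^\ast},\Sigma_{i^\ast})$. We branch over the $k$ possible identities of $i^\ast$, subtract the matched component (adjusting the Hermite estimates by linearity), and recurse on the remaining $k-1$ components; the recursion tree has at most $2^{\binom{k}{2}}\cdot k!$ leaves, comfortably within $\poly(1/\eps')^{O_1(k)}$. The main obstacle is the pseudoexpectation pigeonhole step: converting a bound on $\wt{\E}$ of a product of norms into a small $\wt{\E}$ of a single factor requires carefully chaining Cauchy--Schwarz across $2^{O(k)}$ factors while avoiding division, which is both where the closeness degrades from $\poly(\eps')$ to $\eps'^{f(k)}$ and where the degree of the SOS proof and the coefficient bounds on the $P_i$ must be tracked most carefully.
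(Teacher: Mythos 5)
Your derivation of the symbolic identity via the differential operators and the application of Claims~\ref{claim:factor-upper-bound} and~\ref{claim:factor-lower-bound} to reach
\[
\wt{\E}\!\left[\wt{w}_k^{\,2}\prod_{i=1}^{k}\norm{v(\wt{\Sigma}_k(X)-\Sigma_i(X))}^{2^{i}}\prod_{j=1}^{k-1}\norm{v(\wt{\Sigma}_k(X)-\wt{\Sigma}_j(X))}^{2^{k+j}}\right]\le\poly(\eps')
\]
track the paper closely. But the next step --- ``iterated Cauchy--Schwarz forces $\wt{\E}\norm{v(\wt{\Sigma}_k(X)-\Sigma_{i^\ast}(X))}^2\le\eps'^{f(k)}$ for at least one $i^\ast$, and then round by $\wt{\E}$'' --- has a genuine gap. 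After dividing out the known lower bounds on $\wt{w}_k$ and on the $\norm{v(\wt{\Sigma}_k-\wt{\Sigma}_j)}$ factors, what remains is a bound on $\wt{\E}$ of a \emph{product} $\prod_i\norm{v(\wt{\Sigma}_k-\Sigma_i)}^{2^i}$. A pseudoexpectation with small value on a product of nonnegative quantities need not have small value on any single factor: consider a degree-$O_k(1)$ pseudoexpectation that behaves like a uniform mixture over the two actual assignments $\wt{\Sigma}_k=\Sigma_1$ and $\wt{\Sigma}_k=\Sigma_2$; the product vanishes under $\wt{\E}$ but neither $\wt{\E}\norm{v(\wt{\Sigma}_k-\Sigma_1)}^2$ nor $\wt{\E}\norm{v(\wt{\Sigma}_k-\Sigma_2)}^2$ is small, and $\wt{\E}[\wt{\Sigma}_k]=(\Sigma_1+\Sigma_2)/2$ is far from both. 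No amount of Cauchy--Schwarz chaining can isolate a single factor here. The paper is explicit that this is the obstacle (``it appears to be difficult to prove a statement of the above form within an SOS framework'') and sidesteps it: rather than matching $\wt{\Sigma}_k$ to a particular $\Sigma_{i^\ast}$, it lower-bounds \emph{every} factor $\norm{v(\wt{\Sigma}_k-\Sigma_i)}^2$ uniformly by the same SOS expression $\norm{\Gamma_{V^\perp}(\wt{\Sigma}_k)}^2$, which is legitimate because all of the $\Sigma_i$ lie in the span $V$. This yields $\wt{\E}[\norm{\Gamma_{V^\perp}(\wt{\Sigma}_k)}^2]\le\eps'^{\Omega_k(1)}$ (Lemma~\ref{lemma:trace-bound}), and a companion argument using the second identity (Lemma~\ref{lemma:span}) shows that the matrices $M_i=\wt{\E}[\wt{\Sigma}_i\wt{\Sigma}_i^T]$ have large energy in every direction of $V$. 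Only then does the paper round --- spectrally, by taking top-$k$ principal components of each $M_i$ and brute-forcing over a net in the resulting $O(k^2)$-dimensional subspace --- rather than by evaluating $\wt{\E}$ of the indeterminates directly. A related but smaller divergence: you also subtract a matched component and recurse, whereas the paper solves for all covariances simultaneously via the PCA step, then fixes them and re-runs the SOS system for the means; and the paper treats $\wt{w}_i$, $A_i$, $B_i$ as brute-forced real-valued guesses (not indeterminates), which is what allows it to pull $\wt{w}_k$ and the $\norm{v(\wt{\Sigma}_k-\wt{\Sigma}_j)}$ factors outside of $\wt{\E}$ as constants in the first place. Without the projection-onto-$V^\perp$ idea your extraction step does not go through.
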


Informally, assuming that the parameters of the components of the mixture are bounded by $\poly(1/\eps')$ and that their separation is at least $\poly(\eps')$, given $\eps'$-accurate estimates for the Hermite polynomials, we can learn the parameters of the mixture to within Frobenius error $\poly(\eps')$.

\subsection{Reducing to all pairs of parameters equal or separated}\label{sec:reduction}

We claim that it suffices to work under the following assumption.  All pairs of parameters are either separated of equal.  More specifically, for each pair of parameters $\mu_i, \mu_j$ (and same for $\Sigma_i, \Sigma_j$), either $\mu_i = \mu_j$ or 
\[
\norm{\mu_i - \mu_j}_2 \geq c
\]

We now prove that it suffices to work with the above simplification.  For any function $0 < f(k) < 1$ depending only on $k$, there is some $C \geq (f(k))^{k^2}$ such that there is no pair of parameters $\mu_i, \mu_j$ or $\Sigma_i, \Sigma_j$ whose distance is in the interval $[ \eps'^{C}, \eps'^{ f(k)C} ]$.  Now consider the graph on the $k$ nodes where $i,j$ are connected if and only if 
\[
\norm{\mu_i - \mu_j} \leq \eps'^{f(k)C}
\]
We now construct a new mixture $N(\mu_i', \Sigma_i')$.  For each connected component say $\{i_1, \dots , i_j \}$ , pick a representative and set $\mu_{i_1}' = \mu_{i_2}' = \dots = \mu_{i_j}'  = \mu_{i_1}$.  Do this for all connected components and similar in the graph on covariance matrices.  For all $i$, we have
\[
\norm{\mu_i' - \mu_i} , \norm{\Sigma_i' - \Sigma_i} \leq O_k(1)\eps'^C
\]
because there is a path of length at most $k$ connecting $i$ to the representative in its component that it is rounded to, and all edges correspond to pairs within distance of $\eps'^C$.

The Hermite polynomials of this new mixture satisfy
\[
\norm{v(h_m' - h_m)}^2 \leq O_k(1) \Delta^{O_k(1)} \epsilon'^{\Omega_k(1)C}
\]
as long as $m$ is bounded as a function of $k$.  If we pretend that the new mixture is the true mixture, we have estimates $\overline{h}_i(X)$ such that 
\[
\norm{v(h_i' - \overline{h_i})}^2 \leq O_k(1) \Delta^{O_k(1)} \epsilon'^{\Omega_k(1)C}
\]
and all pairs of parameters in the new mixture are either equal or $\eps'^{f(k)C}$ separated. If we prove Theorem \ref{thm:main-close-case} with the assumption that the pairs of parameters are separated or equal, then we can choose $f(k)$ accordingly and then we deduce that the theorem holds in the general case (with worse, but still polynomial, bounds on $\Delta ,c,  w_{\min}$ and the accuracy of our output as a function of $\eps'$).  
\\\\
From now on we will work with the assumption that each pair of parameters is either equal or separated by $c$.

\subsection{SOS Program Setup}\label{sec:sos-setup}
Our algorithm for learning the parameters when given estimates of the Hermite polynomials involves solving an SOS program.  Here we set up the SOS program that we will solve.  
\\\\
We will let $D = \binom{d}{2} + d$.  We think of mapping between symmetric $d \times d$ matrices and $\R^D$ as 
\[
\begin{bmatrix}
a_{11} & \dots & a_{1d} \\ \vdots & \ddots & \vdots \\ a_{d1} & \dots & a_{dd}
\end{bmatrix}
\leftrightarrow (a_{11}, 2a_{12} , 2a_{13}, \dots , a_{dd})
\]

\begin{definition}[Parameter Solving Program $\mcl{S}$]\label{def:parameter-SOS}
We will have the following variables
\begin{itemize}
    \item $u_1 = (u_{11}, \dots , u_{1d}) , \dots, u_k = (u_{k1}, \dots, u_{kd})$
    \item $v_1 = (v_{1, (1,1)}, v_{1, (1,2)},  \dots , v_{1, (d,d)}) , \dots , v_k = (v_{k, (1,1)}, v_{k, (1,2)},  \dots , v_{k, (d,d)})$
\end{itemize}
In the above $u_1, \dots, u_k \in \R^d$ and $v_1, \dots v_k \in \R^D$.  Our goal will be to solve for these variables in a way so that the solutions form orthonormal bases for the span of the $\mu_i$ and the span of the $\Sigma_i$.  Note $v_1, \dots , v_k$ live in $\R^D$ because the $\Sigma_i$ must be symmetric.  

We guess coefficients $a_{ij}, b_{ij}$ where $i,j \in [k]$ expressing the means and covariances in this orthonormal basis.  We ensure that the guesses satisfy the property that for every pair of vectors $A_i = (a_{i1}, \dots , a_{ik}), A_j = (a_{j1}, \dots , a_{jk})$ either $A_i = A_j$ or 
\[
\norm{A_i - A_j}_2 \geq \frac{c}{2} 
\]
and similarly for $B_i, B_j$.  We ensure that
\[
\norm{A_i}_2 \leq 2\Delta
\]
Ensure similar conditions for the $\{B_i \}$.  We also guess the mixing weights $\widetilde{w_1}, \dots , \widetilde{w_k}$ and ensure that our guesses are all at least $w_{\min}/2$.

Now we set up the constraints.  Let $C$ be a sufficiently large integer depending only on $k$.  Define $\widetilde{\mu_i} = a_{i1}u_1 + \dots + a_{ik}u_k$ and define $\widetilde{\Sigma_i}$ similarly.  These are linear expressions in the variables that we are solving for.  Now consider the hypothetical mixture with mixing weights $\widetilde{w_i}$, means $\widetilde{\mu_i}$, and covariances $I + \widetilde{ \Sigma_i}$.  The Hermite polynomials for this hypothetical mixture $\widetilde{h_i}(X)$ can be written as formal polynomials in $X = (X_1, \dots , X_d)$ with coefficients that are polynomials in $u,v$.  Note that we can explicitly write down these Hermite polynomials.  The set of constraints for our SOS system is as follows:
\begin{itemize}
    \item $\norm{u_i}_2^{2} = 1$ for all $1 \leq i \leq k$
    \item $\norm{v_i}_2^{2} = 1$ for all $1 \leq i \leq k$
    \item $u_i \cdot u_j = 0$ for all $i \neq j$
    \item $v_i \cdot v_j = 0$ for all $i \neq j$
    \item For all $p = 1,2, \dots , C$
    \[
    \norm{v(\widetilde{h_p}(X) - \overline{h_p}(X))}^2 \leq 100\epsilon'
    \]
\end{itemize}

\end{definition}

 Note that we can explicitly write down the last set of constraints because we have estimates $\overline{h_i}$.  
 \\\\
 It is important to note that the $\widetilde{w_i}, A_i, B_i$ are real numbers.  We will attempt to solve the system for each of our guesses and show that for some set of guesses, we obtain a solution from which we can recover the parameters.  We can brute-force search over an $\eps'$-net because there are only $O_k(1)$ parameters to guess.  We call the SOS program that we set up $\mathcal{S}$.

\subsection{Analysis}
We now prove a set of properties that must be satisfied by any pseudoexpectation of degree $C_k$ satisfying $\mathcal{S}$ where $C_k$ is a sufficiently large constant depending only on $k$.  What we would ideally want to show is that
\begin{itemize}
    \item The span of the $\widetilde{\Sigma_i}$ is close to the span of the $\Sigma_i$
    \item The span of the $\widetilde{\mu_i}$ is close to the span of the $\mu_i$ 
\end{itemize}
However, it appears to be difficult to prove a statement of the above form within an SOS framework.  Instead, we will look at the pseudoexpectations of the matrices
\[
M_i = \widetilde{\E}[\widetilde{\Sigma_i} \widetilde{\Sigma_i}^T]
\]
(where $\widetilde{\Sigma_i}$ is viewed as a length-$D$ vector so $\widetilde{\Sigma_i} \widetilde{\Sigma_i}^T$ is a $D \times D$ matrix.)  The two key properties that we will prove about these matrices are in Lemmas \ref{lemma:trace-bound} and \ref{lemma:span}. \\\\ 
Roughly Lemma \ref{lemma:trace-bound} says that any singular vector that corresponds to a large singular value of $M_i$ must be close to the span of the $\{ \Sigma_i \}$.  Lemma \ref{lemma:span} says that any vector $v$ that has large projection onto the subspace spanned by the $\{ \Sigma_i \}$ must have the property that $v^TM_iv$ is large for some $i$.  Putting these together, we can take the the top-$k$ principal components of each of $M_1, \dots , M_k$ and show that the span of these essentially contains the span of the $\{ \Sigma_i \}$ (this last step is done outside the SOS framework).  We can now brute-force over an $\eps'$-net and guess the $\Sigma_i$ (since we have narrowed them down to an $O_k(1)$-dimensional subspace).  We can then plug in real values for the covariances and solve for the means using a similar method.

\subsubsection{Algebraic Identities}\label{sec:alg-identities}

First we will prove several purely algebraic identities.  We will slightly abuse notation and for $\mu \in \R^d$, we use $\mu(X)$ to denote the inner product of $\mu$ with the formal variables $(X_1, \dots , X_d)$ and for $\Sigma \in \R^D$, we will use $\Sigma(X)$ to denote the quadratic form in formal variables $(X_1, \dots , X_d)$ given by $X^T\Sigma X$ (when $\Sigma$ is converted to a symmetric $d \times d$ matrix). It will be useful to consider the following two formal power series (in $y$)
\begin{align*}
F(y) = \sum_{i = 1}^k w_i e^{\mu_i(X)y + \frac{1}{2}\Sigma_i(X)y^2} \\
\widetilde{F}(y) = \sum_{i = 1}^k \widetilde{w_i} e^{\widetilde{\mu_i}(X)y + \frac{1}{2}\widetilde{\Sigma_i}(X)y^2}
\end{align*}

We view these objects in the following way: the coefficients of $1, y, y^2, \cdots $ are formal polynomials in $(X_1, \dots , X_d)$.  In the first expression, the coefficients of these polynomials are (unknown) constants.  In the second, the coefficients are polynomials in the variables $u_1, \dots , u_k, v_1, \dots , v_k$.  In fact, the coefficients in the first power series are precisely $h_1, h_2, \dots $ while the coefficients in the second power series are precisely $\widetilde{h_1}, \widetilde{h_2}, \dots $.
The key insight is the following:

\begin{quote}
    {\em After taking derivatives and polynomial combinations of either of the above formal power series, the coefficients can still be expressed as polynomial combinations of their respective Hermite polynomials.}
\end{quote}

\begin{definition}
Let $\mcl{D}_i$ denote the differential operator $(\partial - (\mu_i(X) + \Sigma_i(X)y))$ and $\widetilde{\mcl{D}_i}$ denote the differential operator $(\partial - (\widetilde{\mu_i}(X) + \widetilde{\Sigma_i}(X)y))$.  As usual, the partial derivatives are taken with respect to $y$.
\end{definition}

To simplify the exposition, we make the following definition:
\begin{definition}\label{def:k-simple}
Consider a polynomial $P(X)$ that is a formal polynomial in $X_1, \dots , X_d$ whose coefficients are polynomials in the indeterminates $u_1, \dots , u_k, v_1, \dots , v_k$.  We say $P$ is $m$-simple if $P$ can be written as a linear combination of a constant number of terms that are a product of some of $\{\mu_i(X) \}, \{\Sigma_i(X) \}, \{ \widetilde{\mu_i}(X) \}, \{ \widetilde{\Sigma_i}(X) \}$  where 
    \begin{enumerate}
        \item The coefficients in the linear combination are bounded by a constant depending only on $m,k$
        \item The number of terms in the sum depends only on $m$ and $k$
        \item The number of terms in each product depends only on $m$ and $k$ 
    \end{enumerate}
\end{definition}

\begin{claim}\label{claim:identity-1}
Consider the power series
\begin{align*}
\widetilde{\mcl{D}_{k-1}}^{2^{2k-2}} \dots \widetilde{\mcl{D}_{1}}^{2^{k}} \mcl{D}_k^{2^{k-1}} \dots \mcl{D}_1^{1}(\widetilde{F})
\end{align*}
For any $m$, the coefficient of $y^{m}$ when the above is written as a formal power series can be written in the form 
\[
P_0(X) + P_1(X)\widetilde{h_1}(X) + \dots + P_{m'}(X)\widetilde{h_{m'}}(X)
\]
where 
\begin{itemize}
    \item $m'$ depends only on $m$ and $k$ 
    \item Each of the $P_i$ is $m$-simple
    \item We have 
    \[
    P_0(X) + P_1(X)h_1(X) + \dots + P_{m'}(X)h_{m'}(X) = 0
    \]
    as an algebraic identity over formal variables $X_1, \dots , X_d, \{u_i \}, \{v_i \}$.
\end{itemize}
\end{claim}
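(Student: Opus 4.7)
The plan is to track the form of the coefficients of the power series $\mcl{L}(\widetilde{F})$ under each successive application of a differential operator, where $\mcl{L} = \widetilde{\mcl{D}_{k-1}}^{2^{2k-2}} \cdots \widetilde{\mcl{D}_1}^{2^k} \mcl{D}_k^{2^{k-1}} \cdots \mcl{D}_1^{1}$. Each operator $\mcl{D}_i$ (and likewise $\widetilde{\mcl{D}_j}$) acts linearly on any formal power series $G(y) = \sum_m c_m(X) y^m$; a direct computation gives
\[
\mcl{D}_i G(y) = \sum_m \bigl( (m+1) c_{m+1}(X) - \mu_i(X) c_m(X) - \Sigma_i(X) c_{m-1}(X) \bigr) y^m,
\]
so the new $y^m$-coefficient is a linear combination of the old $y^{m-1}, y^m, y^{m+1}$-coefficients with multipliers drawn from $\{1, \mu_i(X), \Sigma_i(X)\}$ (respectively $\{1, \widetilde{\mu_j}(X), \widetilde{\Sigma_j}(X)\}$). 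The base series $\widetilde{F}$ has $c_m = \frac{1}{m!}\widetilde{h_m}(X)$, so by induction on the total number of operator applications, the coefficient of $y^m$ in $\mcl{L}(\widetilde{F})$ can be written as $P_0(X) + \sum_{j \geq 1} P_j(X)\widetilde{h_j}(X)$ with each $P_j$ $m$-simple in the sense of Definition~\ref{def:k-simple}. The parameter $m'$ grows by at most one per operator, and the constants implicit in ``$m$-simple'' grow by at most $O_k(1)$ per operator; since the total number of operators appearing in $\mcl{L}$ is $\sum_{i=0}^{2k-2} 2^i = O_k(1)$, both bounds remain functions of only $m$ and $k$. This settles the first two bulleted conclusions.

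For the algebraic identity, the key observation is that $\mcl{D}_i$ and $\widetilde{\mcl{D}_j}$ are purely formal operators: they act the same way on any power series in $y$ with polynomial coefficients. Running the identical bookkeeping on $F(y) = \sum_m \frac{1}{m!} h_m(X) y^m$ therefore produces exactly the same polynomials $P_j(X)$, but with $\widetilde{h_j}$ replaced by $h_j$. So it suffices to show $\mcl{L}(F) = 0$, since then the coefficient of $y^m$ is $P_0(X) + \sum_{j \geq 1} P_j(X) h_j(X) = 0$ as a formal identity in the $X_i$, $u_i$, $v_i$.

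To show $\mcl{L}(F) = 0$, I would inductively track the polynomial-in-$y$ degrees of the $k$ exponential summands of $F = \sum_{i=1}^k w_i e^{\mu_i(X)y + \frac{1}{2}\Sigma_i(X)y^2}$ through the inner block $\mcl{D}_k^{2^{k-1}} \cdots \mcl{D}_1^{1}$. By Claim~\ref{claim:leading-coeff}, applying $\mcl{D}_\ell^{2^{\ell-1}}$ to the $j$-th summand for $j \neq \ell$ increases its polynomial-in-$y$ degree by at most $2^{\ell-1}$, while Claim~\ref{claim:degree-reduction} and Corollary~\ref{corollary:null-operator} show that applying $\mcl{D}_\ell$ to a term whose exponent matches $\ell$ reduces its polynomial degree by exactly one, so enough applications annihilate it. An easy induction then gives that, just before $\mcl{D}_i^{2^{i-1}}$ is applied, the $i$-th summand has polynomial degree at most $1 + 2 + \cdots + 2^{i-2} = 2^{i-1}-1$, so $\mcl{D}_i^{2^{i-1}}$ kills it. After the full inner block every summand of $F$ has been annihilated and hence $\mcl{L}(F) = 0$, as required. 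The main obstacle I anticipate is the bookkeeping in the inductive step of the first paragraph: verifying carefully that the multipliers $\mu_i(X), \widetilde{\mu_i}(X), \Sigma_i(X), \widetilde{\Sigma_i}(X)$ which accumulate inside each $P_j$ keep it genuinely $m$-simple, with the constants in Definition~\ref{def:k-simple} depending only on $m$ and $k$ and not on the ambient dimension $d$.
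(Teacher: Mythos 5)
Your proposal is correct and follows essentially the same route as the paper: note that each differential operator replaces the coefficient of $y^m$ by a bounded linear combination of nearby coefficients with multipliers among $\{1, \mu_i(X), \Sigma_i(X), \widetilde{\mu_i}(X), \widetilde{\Sigma_i}(X)\}$ (giving $m$-simplicity and the bound on $m'$), and then observe $\mcl{L}(F)=0$ because the inner block $\mcl{D}_k^{2^{k-1}}\cdots\mcl{D}_1^1$ annihilates every summand of $F$ by Corollary~\ref{corollary:null-operator}. You fill in the degree-tracking details that the paper leaves implicit when it invokes Corollary~\ref{corollary:null-operator}, but the overall argument is the same.
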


\begin{proof}
Note the coefficients of $\widetilde{F}$ (as a formal power series in $y$) are exactly given by the $\widetilde{h_i}$.  Now the number of differential operators we apply is $O_k(1)$.  The first two statements can be verified through straightforward computations since when applying each of the differential operators, we are simply multiplying the coefficients by some of $\{\mu_i(X) \}, \{\Sigma_i(X) \}, \{ \widetilde{\mu_i}(X) \}, \{ \widetilde{\Sigma_i}(X) \}$ and taking a linear combination.  Next, note that by Corollary \ref{corollary:null-operator}
\[
\mcl{D}_k^{2^{k-1}} \dots \mcl{D}_1^{1}(F) = 0.
\]
To see this, we prove by induction that the differential operator
\[
\mcl{D}_j^{2^{j-1}} \dots \mcl{D}_1^{1}(F)
\]
annihilates the components of $F$ corresponding to Gaussians $N(\mu_1, I + \Sigma_1), \dots , N(\mu_j, I + \Sigma_j)$.  The base case is clear.  To complete the induction step, note that by Corollary \ref{corollary:null-operator}, the above operator puts polynomials of degree at most $1 + 2 + \dots + 2^{j-1} = 2^j - 1$ in front of the other components.  Thus the operator  
\[
\mcl{D}_{j + 1}^{2^{j}} \dots \mcl{D}_1^{1}(F)
\]
annhiliates the first $j + 1$ components, completing the induction.  We now conclude that 
\[
\widetilde{\mcl{D}_{k-1}}^{2^{2k-2}} \dots \widetilde{\mcl{D}_{1}}^{2^k} \mcl{D}_k^{2^{k-1}} \dots \mcl{D}_1^{1}(F) = 0
\]
implying that if the coefficients of $\widetilde{F}$ were $h_1, \dots , h_m$, then the result would be identically zero.
\end{proof}

\begin{claim}\label{claim:identity-2}
Consider the power series
\begin{align*}
\mcl{D}_{k-1}^{2^{2k-2}} \dots\mcl{D}_{1}^{2^k} \widetilde{\mcl{D}_k}^{2^{k-1}} \dots \widetilde{\mcl{D}_1}^{1} (F)
\end{align*}
For any $m$, the coefficient of $y^{m}$ when the above is written as a formal power series can be written in the form
\[
P_0(X) + P_1(X)h_1(X) + \dots + P_{m'}(X)h_{m'}(X)
\]
where 
\begin{itemize}
    \item $m'$ depends only on $m$ and $k$
    \item Each of the $P_i$ is $m$-simple
    \item We have 
    \[
    P_0(X) + P_1(X)\widetilde{h_1}(X) + \dots + P_{m'}(X)\widetilde{h_{m'}}(X) = 0
    \]
    as an algebraic identity over formal variables $X_1, \dots , X_d, \{u_i \}, \{v_i \}$.
\end{itemize}
\end{claim}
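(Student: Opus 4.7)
The plan is to mirror the proof of Claim \ref{claim:identity-1} with the roles of $(F,\mcl{D}_i)$ and $(\widetilde{F},\widetilde{\mcl{D}_i})$ swapped throughout. The claim decomposes into two parts: (i) the shape and bounds on the coefficient of $y^m$ in the given power series, and (ii) the algebraic identity obtained by replacing each $h_j$ with $\widetilde{h_j}$.

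For part (i), I would observe that $F$ is exactly the generating function whose coefficient of $y^m$ is $\frac{1}{m!}h_m(X)$, and each application of an operator of the form $\mcl{D}_i$ or $\widetilde{\mcl{D}_i}$ differentiates in $y$ and multiplies by a degree-one polynomial in $y$ whose coefficients are one of $\mu_i(X), \Sigma_i(X), \widetilde{\mu_i}(X), \widetilde{\Sigma_i}(X)$. Since the total number of operator applications in $\mcl{D}_{k-1}^{2^{2k-2}} \dots \mcl{D}_1^{2^k} \widetilde{\mcl{D}_k}^{2^{k-1}} \dots \widetilde{\mcl{D}_1}^{1}$ depends only on $k$, and differentiation only shifts indices down by a bounded amount, the coefficient of $y^m$ in the result can be written as $P_0(X) + P_1(X) h_1(X) + \dots + P_{m'}(X) h_{m'}(X)$ for some $m'$ depending only on $m, k$, with each $P_i$ a linear combination of bounded size in products of $\mu_j(X), \Sigma_j(X), \widetilde{\mu_j}(X), \widetilde{\Sigma_j}(X)$ of bounded length and with bounded numerical coefficients. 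This is exactly the $m$-simple property from Definition \ref{def:k-simple}.

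For part (ii), the key observation is that it suffices to show
\[
\mcl{D}_{k-1}^{2^{2k-2}} \dots \mcl{D}_{1}^{2^k} \widetilde{\mcl{D}_k}^{2^{k-1}} \dots \widetilde{\mcl{D}_1}^{1}(\widetilde{F}) \;=\; 0,
\]
because the coefficients of $\widetilde{F}$ are precisely the $\widetilde{h_j}$, and repeating the derivation of part (i) with $F$ replaced by $\widetilde{F}$ produces the expression where every $h_j$ is replaced by $\widetilde{h_j}$. To prove this vanishing it actually suffices to show the inner piece already dies: $\widetilde{\mcl{D}_k}^{2^{k-1}} \dots \widetilde{\mcl{D}_1}^{1}(\widetilde{F}) = 0$, after which applying the outer $\mcl{D}$ operators to zero still gives zero. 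This inner vanishing is just the $\widetilde{F}$-analogue of the degree-doubling calculation that underpins Corollary \ref{corollary:null-operator}: each component of $\widetilde{F}$ starts with a polynomial multiplier of degree $0$ in $y$; applying $\widetilde{\mcl{D}_1}^{1}$ annihilates component $1$ (Corollary \ref{corollary:null-operator}) and raises the degrees on the surviving components by $1$ (Claim \ref{claim:leading-coeff}), so they now have degree $1$; then $\widetilde{\mcl{D}_2}^{2}$ annihilates component $2$ and raises the other degrees to $3$; and inductively $\widetilde{\mcl{D}_i}^{2^{i-1}}$ annihilates component $i$ at exactly the right exponent. After $i = k$, every component has been killed.

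I do not anticipate a genuine obstacle here; the argument is really just bookkeeping of polynomial degrees under the differential operators, and is strictly symmetric to the proof of Claim \ref{claim:identity-1}. The only point requiring a small amount of care is verifying the $m$-simple property in part (i), since after differentiation the coefficients of $y^m$ pick up contributions from $h_j$'s of higher index; but since the number of differentiations is $O_k(1)$, the index shift $m' - m$ is also $O_k(1)$, and the same bound controls the number and size of the $\mu_j, \Sigma_j, \widetilde{\mu_j}, \widetilde{\Sigma_j}$ factors that appear in each $P_i$.
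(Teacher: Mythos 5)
Your proof is correct and takes essentially the same route as the paper, which simply states that the proof is identical to that of Claim~\ref{claim:identity-1} with the roles of $F,\mcl{D}_i$ and $\widetilde{F},\widetilde{\mcl{D}_i}$ interchanged. The only difference is that you spell out the inductive degree-tracking behind $\widetilde{\mcl{D}_k}^{2^{k-1}}\cdots\widetilde{\mcl{D}_1}^{1}(\widetilde{F})=0$, which the paper leaves as an implicit consequence of Corollary~\ref{corollary:null-operator} and Claim~\ref{claim:leading-coeff}.
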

\begin{proof}
The proof is identicial to the proof of Claim \ref{claim:identity-1}.
\end{proof}

Note that the polynomials $P_i$ in Claim \ref{claim:identity-1} and Claim \ref{claim:identity-2} are \emph{not} necessarily the same.

\subsubsection{Warm-up: All Pairs of Parameters are Separated}\label{sec:solve-covariances}

As a warm-up, we first analyze the case where all pairs of true parameters $\mu_i, \mu_j$ and $\Sigma_i, \Sigma_j$ satisfy $\norm{\mu_i - \mu_j}_2 \geq c$ and $\norm{\Sigma_i - \Sigma_j}_2 \geq c$.  We will show how to deal with the general case where parameters may be separated or equal in Section \ref{sec:equal-or-sep}.
\\\\
We can assume that our guesses satisfy $\norm{A_i - A_j}_2 \geq c/2$ and $\norm{B_i - B_j}_2 \geq c/2$ for all $i,j$.  The key expressions to consider are applying the following differential operators
\begin{align*}
\mcl{D} = \widetilde{\mcl{D}_{k}}^{2^{2k-1} - 1}\widetilde{\mcl{D}_{k-1}}^{2^{2k-2}} \dots \widetilde{\mcl{D}_{1}}^{2^k} \mcl{D}_k^{2^{k-1}} \dots \mcl{D}_1^{1} \\ 
\widetilde{\mcl{D}} = \mcl{D}_{k}^{2^{2k-1}- 1}\mcl{D}_{k-1}^{2^{2k-2}} \dots\mcl{D}_{1}^{2^k} \widetilde{\mcl{D}_k}^{2^{k-1}} \dots \widetilde{\mcl{D}_1}^{1} 
\end{align*}
to $F$ and $\widetilde{F}$ respectively. The reason these differential operators are so useful is that $\mcl{D}$ zeros out the generating function for the true mixture and also zeros out all but one component of the generating function for the hypothetical mixture with parameters $\widetilde{w_i} , \widetilde{\mu_i}, I + \widetilde{\Sigma_i}$.  For the one component that is not zeroed out, only the leading coefficient remains and we can use Claim \ref{claim:leading-coeff} to explicitly compute the leading coefficient.  Thus, we can compare the results of applying these operators on the generating functions for the true and hypothetical mixtures and, using the fact that the Hermite polynomials for these mixtures must be close, we obtain algebraic relations that allow us to extract information about individual components.

We begin by explicitly computing the relevant leading coefficients.

\begin{claim}\label{claim:elimination1}
Write
\[
\widetilde{\mcl{D}_{k}}^{2^{2k-1} - 1}\widetilde{\mcl{D}_{k-1}}^{2^{2k-2}} \dots \widetilde{\mcl{D}_{1}}^{2^k} \mcl{D}_k^{2^{k-1}} \dots \mcl{D}_1^{1}(\widetilde{F})
\]
as a formal power series in $y$.  Its evaluation at $y = 0$ is 
\[
C_k\widetilde{w_k}\prod_{i=1}^k (\widetilde{\Sigma_k}(X) - \Sigma_i(X) )^{2^{i-1}}\prod_{i=1}^{k-1}(\widetilde{\Sigma_k}(X) - \widetilde{\Sigma_i}(X) )^{2^{k+i-1}}
\]
where $C_k$ is a constant depending only on $k$.

\end{claim}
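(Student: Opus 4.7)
The plan is to trace the action of the differential operators on each of the $k$ exponential summands of $\widetilde{F}$ separately, tracking the degree (in $y$) of the polynomial prefactor and its leading coefficient. The two engines for this bookkeeping are Claim \ref{claim:degree-reduction} (which applies when we hit an exponential with its ``own'' operator, reducing the polynomial's degree by one and multiplying the leading coefficient by the old degree), Claim \ref{claim:leading-coeff} (which applies when we hit an exponential with a ``different'' operator, raising the degree by one per application and multiplying the leading coefficient by $b-d$), and Corollary \ref{corollary:null-operator} (which kills a term whose ``own'' operator is applied strictly more times than the current polynomial degree). Since $\widetilde{\Sigma_i}(X)$ involves the formal indeterminates $v_j$ while $\Sigma_j(X)$ does not, and similarly $\widetilde{\Sigma_i} \neq \widetilde{\Sigma_j}$ as formal expressions when $i \neq j$, we are always in the ``generic'' $b \neq d$ case of Claim \ref{claim:leading-coeff}.

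First I apply $\mcl{D}_1^1 \mcl{D}_2^2 \cdots \mcl{D}_k^{2^{k-1}}$ to $\widetilde{F}$. For each $i$, the $i$-th summand starts as $\widetilde{w_i}\, e^{\widetilde{\mu_i}(X)y + \frac{1}{2}\widetilde{\Sigma_i}(X)y^2}$ with polynomial prefactor of degree $0$ and leading coefficient $\widetilde{w_i}$. Each block $\mcl{D}_j^{2^{j-1}}$ contributes $2^{j-1}$ to the degree and a factor $(\widetilde{\Sigma_i}(X) - \Sigma_j(X))^{2^{j-1}}$ to the leading coefficient by Claim \ref{claim:leading-coeff}. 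After this stage the $i$-th summand has a prefactor of degree $\sum_{j=1}^k 2^{j-1} = 2^k - 1$ with leading coefficient $\widetilde{w_i} \prod_{j=1}^k (\widetilde{\Sigma_i}(X) - \Sigma_j(X))^{2^{j-1}}$.

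Next I apply $\widetilde{\mcl{D}_1}^{2^k}, \widetilde{\mcl{D}_2}^{2^{k+1}}, \ldots, \widetilde{\mcl{D}_{k-1}}^{2^{2k-2}}$ in order, maintaining the invariant that just before $\widetilde{\mcl{D}_j}^{2^{k+j-1}}$ is applied, summands $1,\ldots,j-1$ have been annihilated and each remaining summand has prefactor of degree exactly $2^{k+j-1} - 1$. When $\widetilde{\mcl{D}_j}^{2^{k+j-1}}$ is then applied, Corollary \ref{corollary:null-operator} kills the $j$-th summand (since $2^{k+j-1} > 2^{k+j-1}-1$), while for each $i>j$ Claim \ref{claim:leading-coeff} raises the degree to $2^{k+j} - 1$ and multiplies the leading coefficient by $(\widetilde{\Sigma_i}(X) - \widetilde{\Sigma_j}(X))^{2^{k+j-1}}$, preserving the invariant. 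After the whole chain only the $k$-th summand survives, with prefactor of degree $2^{2k-1} - 1$ and leading coefficient
\[
\widetilde{w_k} \prod_{j=1}^k (\widetilde{\Sigma_k}(X) - \Sigma_j(X))^{2^{j-1}} \prod_{i=1}^{k-1} (\widetilde{\Sigma_k}(X) - \widetilde{\Sigma_i}(X))^{2^{k+i-1}}.
\]

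Finally I apply $\widetilde{\mcl{D}_k}^{2^{2k-1}-1}$. This is now the ``own'' operator for the surviving exponential, so Claim \ref{claim:degree-reduction} applies: each of the $2^{2k-1}-1$ applications drops the degree by one and multiplies the leading coefficient by the current degree, so the prefactor becomes the constant polynomial (in $y$) equal to $(2^{2k-1}-1)!$ times the previous leading coefficient, all multiplied by $e^{\widetilde{\mu_k}(X)y + \frac{1}{2}\widetilde{\Sigma_k}(X)y^2}$. Evaluating at $y=0$ sets the exponential to $1$ and yields the claimed identity with $C_k = (2^{2k-1}-1)!$. The only real obstacle is getting the exponent arithmetic right (the powers $2^{j-1}$, $2^{k+j-1}$, $2^{2k-1}$ have to conspire so that each ``own'' operator is applied exactly one more time than the current prefactor degree), but once the inductive degree-tracking invariant is set up the rest is mechanical.
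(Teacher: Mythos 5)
Your proof is correct and takes essentially the same approach as the paper's: apply Corollary \ref{corollary:null-operator} to annihilate all but the $k$-th summand, use Claim \ref{claim:leading-coeff} to track the leading coefficient through the $\mcl{D}_j$ and $\widetilde{\mcl{D}_j}$ ($j<k$) blocks, and then use Claim \ref{claim:degree-reduction} for the final $\widetilde{\mcl{D}_k}^{2^{2k-1}-1}$ stage, yielding $C_k = (2^{2k-1}-1)!$. Your exposition is more careful than the paper's one-liner appeal to Corollary \ref{corollary:null-operator}: you make explicit the inductive degree invariant (prefactor degree exactly $2^{k+j-1}-1$ just before $\widetilde{\mcl{D}_j}^{2^{k+j-1}}$ is applied) that is needed to justify that the $j$-th summand actually dies at that stage, which the paper leaves implicit.
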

\begin{proof}
Write 
\[
\widetilde{F}(y) = \sum_{i = 1}^k \widetilde{w_i} e^{\widetilde{\mu_i}(X)y + \frac{1}{2}\widetilde{\Sigma_i}(X)y^2}
\]
When applying the differential operator, by Corollary \ref{corollary:null-operator}, all of the terms become $0$ except for 
\[
\widetilde{w_k} e^{\widetilde{\mu_k}(X)y + \frac{1}{2}\widetilde{\Sigma_i}(X)y^2}.
\]
We now use Claim \ref{claim:leading-coeff} and Claim \ref{claim:degree-reduction} to analyze what happens when applying the differential operator to this term.  We know that 
\[
\widetilde{\mcl{D}_{k-1}}^{2^{2k-2}} \dots \widetilde{\mcl{D}_{1}}^{2^k} \mcl{D}_k^{2^{k-1}} \dots \mcl{D}_1^{1}(\widetilde{F}) = P(y)e^{\widetilde{\mu_k}(X)y + \frac{1}{2}\widetilde{\Sigma_i}(X)y^2}
\]
where $P$ has leading coefficient 
\[
\widetilde{w_k}\prod_{i=1}^k (\widetilde{\Sigma_k}(X) - \Sigma_i(X) )^{2^{i-1}}\prod_{i=1}^{k-1}(\widetilde{\Sigma_k}(X) - \widetilde{\Sigma_i}(X) )^{2^{k+i-1}}
\]
and degree $2^{2k-1} - 1$.  Thus,
\begin{align*}
&\widetilde{\mcl{D}_{k}}^{2^{2k-1} - 1}\widetilde{\mcl{D}_{k-1}}^{2^{2k-2}} \dots \widetilde{\mcl{D}_{1}}^{2^k} \mcl{D}_k^{2^{k-1}} \dots \mcl{D}_1^{1}(\widetilde{F}) \\ &= (2^{2k-1} - 1)!\widetilde{w_k}\prod_{i=1}^k (\widetilde{\Sigma_k}(X) - \Sigma_i(X) )^{2^{i-1}}\prod_{i=1}^{k-1}(\widetilde{\Sigma_k}(X) - \widetilde{\Sigma_i}(X) )^{2^{k+i-1}}e^{\widetilde{\mu_k}(X)y + \frac{1}{2}\widetilde{\Sigma_i}(X)y^2}
\end{align*}
and plugging in $y = 0$, we are done.
\end{proof}

\begin{claim}\label{claim:elimination2}
Write
\[
\mcl{D}_{k}^{2^{2k-1}- 1}\mcl{D}_{k-1}^{2^{2k-2}} \dots\mcl{D}_{1}^{2^k} \widetilde{\mcl{D}_k}^{2^{k-1}} \dots \widetilde{\mcl{D}_1}^{1} (F)
\]
as a formal power series in $y$.  Its evaluation at $y = 0$ is 
\[
C_k w_k\prod_{i=1}^{k}(\Sigma_k(X) - \widetilde{\Sigma_i}(X) )^{2^{i-1}} \prod_{i=1}^{k-1} (\Sigma_k(X) - \Sigma_i(X) )^{2^{k + i-1}}
\]
where $C_k$ is a constant depending only on $k$.

\end{claim}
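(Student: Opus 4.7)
The plan is to mirror the proof of Claim~\ref{claim:elimination1} with the roles of the true mixture generating function $F$ and the hypothetical one $\widetilde{F}$ swapped. I would write $F(y)=\sum_{i=1}^k w_i e^{\mu_i(X)y+\frac{1}{2}\Sigma_i(X)y^2}$ and track what each block of operators does to each summand separately. The key structural fact is that $\mcl{D}_j$ annihilates exponentials with true parameters $(\mu_j,\Sigma_j)$, while $\widetilde{\mcl{D}_j}$ does not annihilate any summand of $F$ — it only raises the $y$-degree of the polynomial multiplier in front and twists its leading coefficient in a controlled way via Claim~\ref{claim:leading-coeff}.

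First I would apply the innermost block $\widetilde{\mcl{D}_k}^{2^{k-1}}\cdots\widetilde{\mcl{D}_1}^{1}$. By Claim~\ref{claim:leading-coeff}, each summand $w_i e^{\mu_i(X)y+\frac{1}{2}\Sigma_i(X)y^2}$ of $F$ is converted into $P_i(y,X)e^{\mu_i(X)y+\frac{1}{2}\Sigma_i(X)y^2}$ with $\deg_y P_i = 2^k-1$. Then I would apply $\mcl{D}_1^{2^k},\mcl{D}_2^{2^{k+1}},\dots,\mcl{D}_{k-1}^{2^{2k-2}}$ in order: at the moment $\mcl{D}_j^{2^{k+j-1}}$ is applied, the $j$th summand has multiplier of $y$-degree exactly $2^{k+j-1}-1$, so Corollary~\ref{corollary:null-operator} cleanly annihilates it, while the $k$th summand's multiplier simply grows in degree by $2^{k+j-1}$. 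After this entire middle stage only the $k$th summand survives, carrying a polynomial multiplier of $y$-degree $2^{2k-1}-1$. Finally $\mcl{D}_k^{2^{2k-1}-1}$ reduces that multiplier down to a constant in $y$ by iterated application of Claim~\ref{claim:degree-reduction}, contributing a factor $(2^{2k-1}-1)!$.

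To identify the constant itself I would track the leading $y$-coefficient across the whole sequence, again via Claim~\ref{claim:leading-coeff}. Starting at $w_k$, each $\widetilde{\mcl{D}_i}^{2^{i-1}}$ multiplies it by $(\Sigma_k(X)-\widetilde{\Sigma_i}(X))^{2^{i-1}}$, and each $\mcl{D}_i^{2^{k+i-1}}$ (for $i<k$) multiplies it by $(\Sigma_k(X)-\Sigma_i(X))^{2^{k+i-1}}$. Setting $y=0$ at the end yields precisely the claimed expression with $C_k=(2^{2k-1}-1)!$.

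The main obstacle to keep an eye on is the case split inside Claim~\ref{claim:leading-coeff} between $b(X)\neq d(X)$ and $b(X)=d(X)$: I need the first branch to apply at every step so that the stated $(\Sigma_k-\widetilde{\Sigma_i})$ and $(\Sigma_k-\Sigma_i)$ factors really arise. This is a purely symbolic check — $\Sigma_k(X)$ is free of the indeterminates $v_1,\dots,v_k$, whereas each $\widetilde{\Sigma_j}(X)$ depends nontrivially on them, and the $\Sigma_i$ are pairwise distinct quadratic forms under the working assumption of Section~\ref{sec:solve-covariances} — so the inequality of quadratic parts holds at each step. Once this is dispatched, the rest of the argument is a direct parallel of the proof of Claim~\ref{claim:elimination1}.
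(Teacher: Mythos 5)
Your proposal is correct and takes essentially the same approach as the paper: the paper's proof of Claim~\ref{claim:elimination2} is literally the one-line remark ``This can be proved using the same method as Claim~\ref{claim:elimination1}'', and your write-up is a careful unpacking of exactly that mirrored argument, with the same bookkeeping of degrees via Corollary~\ref{corollary:null-operator}, Claim~\ref{claim:degree-reduction}, and Claim~\ref{claim:leading-coeff}, and the same identification $C_k=(2^{2k-1}-1)!$.
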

\begin{proof}
This can be proved using the same method as Claim \ref{claim:elimination1}.
\end{proof}

Combining the previous two claims with Claim \ref{claim:identity-1} and Claim \ref{claim:identity-2}, we can write the expressions for the leading coefficients as polynomial combinations of the Hermite polynomials.

\begin{lemma}\label{lemma:key-bound1}
Consider the polynomial
\[
\widetilde{w_k}\prod_{i=1}^k (\widetilde{\Sigma_k}(X) - \Sigma_i(X) )^{2^{i-1}}\prod_{i=1}^{k-1}(\widetilde{\Sigma_k}(X) - \widetilde{\Sigma_i}(X) )^{2^{k+i-1}}
\]
It can be written in the form
\[
P_0(X) + P_1(X)\widetilde{h_1}(X) + \dots + P_{m}(X)\widetilde{h_{m}}(X)
\]
where 
\begin{itemize}
    \item $m$ is a function of $k$
    \item Each of the $P_i$ is $m$-simple
    \item We have 
    \[
    P_0(X) + P_1(X)h_1(X) + \dots + P_{m}(X)h_{m}(X) = 0
    \]
    as an algebraic identity over formal variables $X_1, \dots , X_d, \{u_i \}, \{v_i \}$.
\end{itemize}
\end{lemma}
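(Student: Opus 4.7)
The plan is to read the lemma as the conjunction of Claim \ref{claim:elimination1} (which identifies the polynomial in question as the $y=0$ evaluation of a specific differential operator applied to $\widetilde{F}$) together with a slight strengthening of Claim \ref{claim:identity-1} adapted to the longer operator appearing in Claim \ref{claim:elimination1}. Writing
\[
\mcl{D} \;=\; \widetilde{\mcl{D}_k}^{2^{2k-1}-1}\widetilde{\mcl{D}_{k-1}}^{2^{2k-2}}\cdots \widetilde{\mcl{D}_1}^{2^k}\,\mcl{D}_k^{2^{k-1}}\cdots \mcl{D}_1^{1},
\]
Claim \ref{claim:elimination1} tells us that the polynomial in the statement equals $C_k^{-1}\cdot [\mcl{D}(\widetilde{F})]|_{y=0}$, so it suffices to show that this $y=0$ coefficient can be written as $P_0+P_1\widetilde{h_1}+\dots+P_m\widetilde{h_m}$ with $m$-simple coefficients, and that substituting $h_i$ for $\widetilde{h_i}$ in this same expression returns zero.

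For the decomposition, I would expand $\widetilde{F}(y) = \sum_{i\ge 0}\frac{1}{i!}\widetilde{h_i}(X)\,y^i$ and apply $\mcl{D}$ coefficient-by-coefficient. Each elementary factor $\mcl{D}_j$ or $\widetilde{\mcl{D}_j}$ is $\partial_y$ minus multiplication by $c(X)+yd(X)$, where $c,d$ are drawn from $\{\mu_j(X),\widetilde{\mu_j}(X),\Sigma_j(X),\widetilde{\Sigma_j}(X)\}$; so it adjusts $y$-degrees by at most one and multiplies coefficients by at most one of the $m$-simple atoms of Definition \ref{def:k-simple}. Since $\mcl{D}$ is a composition of $O_k(1)$ such operators, the $y^0$-coefficient of $\mcl{D}(\widetilde{F})$ is a finite linear combination $\sum_{i\le m} Q_i(X)\widetilde{h_i}(X)$ with $m$ depending only on $k$ and each $Q_i$ manifestly $m$-simple. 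Absorbing the constant $C_k^{-1}$ into these coefficients and renaming them $P_i$ yields the first two bullet conclusions.

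The algebraic identity in the third bullet comes from running the exact same symbolic bookkeeping with $F$ in place of $\widetilde{F}$: the operator $\mcl{D}$ is fixed, but now the coefficients of the series are $h_i$ rather than $\widetilde{h_i}$, and the same $P_i$ (which are determined by the operator and do not depend on which series we act on) now multiply $h_i$ instead. Thus the $y^0$-coefficient of $\mcl{D}(F)$ equals exactly $P_0 + P_1 h_1 + \dots + P_m h_m$. On the other hand, Corollary \ref{corollary:null-operator} ensures that the inner block $\mcl{D}_k^{2^{k-1}}\cdots \mcl{D}_1^{1}$ already annihilates each of the $k$ exponential summands of $F$ (the $j$-th summand is killed when the degree of its polynomial factor, which has grown to $2^{j-1}-1$ after the previous blocks by Claim \ref{claim:degree-reduction}, is exceeded by the $2^{j-1}$-th power of $\mcl{D}_j$), so $\mcl{D}(F)\equiv 0$ as a formal power series and its $y^0$-coefficient vanishes. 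The only delicate point in this whole plan is justifying that the two computations produce the \emph{same} polynomials $P_i$; this is automatic because coefficient extraction is a purely syntactic rewriting that treats $h_i$ and $\widetilde{h_i}$ as opaque placeholders in a fixed basis of the $y$-series, so no serious obstacle arises beyond careful bookkeeping.
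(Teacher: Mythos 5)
Your proposal is correct and takes essentially the same route as the paper: identify the polynomial as the $y=0$ coefficient of $\mcl{D}(\widetilde{F})$ via Claim~\ref{claim:elimination1}, then track coefficients exactly as in the proof of Claim~\ref{claim:identity-1} adapted to the longer operator, observing (via Corollary~\ref{corollary:null-operator}) that the inner block already annihilates $F$ so that the same $P_i$ paired with $h_i$ give zero. The observation that the $P_i$ are determined syntactically by the operator, independent of which power series it acts on, is indeed the crux and you handle it correctly.
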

\begin{proof}
Consider the power series
\begin{align*}
\widetilde{\mcl{D}_{k}}^{2^{2k-1} - 1}\widetilde{\mcl{D}_{k-1}}^{2^{2k-2}} \dots \widetilde{\mcl{D}_{1}}^{2^k} \mcl{D}_k^{2^{k-1}} \dots \mcl{D}_1^{1}(\widetilde{F})
\end{align*}
Now using Claim \ref{claim:elimination1} and repeating the proof of Claim \ref{claim:identity-1}, we get the desired.
\end{proof}

Similarly, we have:

\begin{lemma}\label{lemma:key-bound2}
Consider the polynomial
\[
w_k\prod_{i=1}^{k}(\Sigma_k(X) - \widetilde{\Sigma_i}(X) )^{2^{i-1}} \prod_{i=1}^{k-1} (\Sigma_k(X) - \Sigma_i(X) )^{2^{k + i-1}}
\]
It can be written in the form
\[
P_0(X) + P_1(X)h_1(X) + \dots + P_{m}(X)h_{m}(X)
\]
where 
\begin{itemize}
    \item $m$ is a function of $k$
    \item Each of the $P_i$ is $m$-simple
    \item We have 
    \[
     P_0(X) + P_1(X)\widetilde{h_1}(X) + \dots + P_{m}(X) \widetilde{h_{m}}(X) = 0
    \]
    as an algebraic identity over formal variables $X_1, \dots , X_d, \{u_i \}, \{v_i \}$.
\end{itemize}
\end{lemma}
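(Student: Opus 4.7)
The plan is to mirror the proof of Lemma \ref{lemma:key-bound1} exactly, swapping the roles of $F$ and $\widetilde{F}$ (equivalently, swapping the roles of the $\mcl{D}_i$ and $\widetilde{\mcl{D}_i}$ operators). Concretely, I would apply the differential operator
\[
\widetilde{\mcl{D}} = \mcl{D}_{k}^{2^{2k-1}- 1}\mcl{D}_{k-1}^{2^{2k-2}} \dots\mcl{D}_{1}^{2^k} \widetilde{\mcl{D}_k}^{2^{k-1}} \dots \widetilde{\mcl{D}_1}^{1}
\]
to the true generating function $F(y)$, and then read off the coefficient of $y^0$.

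First I would invoke Claim \ref{claim:elimination2}, which explicitly computes that evaluating $\widetilde{\mcl{D}}(F)$ at $y=0$ yields a constant $C_k$ times the target polynomial
\[
w_k\prod_{i=1}^{k}(\Sigma_k(X) - \widetilde{\Sigma_i}(X) )^{2^{i-1}} \prod_{i=1}^{k-1} (\Sigma_k(X) - \Sigma_i(X) )^{2^{k + i-1}}.
\]
Next, I would invoke Claim \ref{claim:identity-2}, which says that the coefficient of any given power of $y$ in $\widetilde{\mcl{D}}(F)$ can be written in the form $P_0(X) + P_1(X)h_1(X) + \dots + P_m(X)h_m(X)$, where $m$ depends only on $k$ and each $P_i$ is $m$-simple (the $P_i$ are built from $\mu_i(X), \Sigma_i(X), \widetilde{\mu_i}(X), \widetilde{\Sigma_i}(X)$ by the successive applications of the operators $\mcl{D}_i$ and $\widetilde{\mcl{D}_i}$, whose count depends only on $k$). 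Combining these two gives the desired representation of the target polynomial, divided by $C_k$ to absorb the constant factor (which only affects the coefficients of the $P_i$ by an $O_k(1)$ factor, preserving $m$-simplicity).

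For the final bullet --- the vanishing identity when the $h_i$ are replaced by the $\widetilde{h_i}$ --- the key observation is that $\widetilde{\mcl{D}}$ contains all the factors $\widetilde{\mcl{D}_1}, \dots, \widetilde{\mcl{D}_k}$ (with exponents $1, 2, \dots, 2^{k-1}$), which by Corollary \ref{corollary:null-operator} annihilates $\widetilde{F}$: that is, $\widetilde{\mcl{D}}(\widetilde{F}) = 0$ identically as a power series in $y$. Since the coefficients of $\widetilde{F}$ are precisely the $\widetilde{h_i}$, substituting $\widetilde{h_i}$ for $h_i$ in the representation obtained above must yield the coefficient of $y^0$ in $\widetilde{\mcl{D}}(\widetilde{F})$, which is zero.

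I do not anticipate any real obstacle: the statement and proof are verbatim duals of Lemma \ref{lemma:key-bound1} under the involution that swaps tildes and non-tildes. The only minor bookkeeping point is ensuring that the constant $C_k$ from Claim \ref{claim:elimination2} is nonzero so that we can divide through; this is immediate from the explicit form $(2^{2k-1}-1)!$ (times a nonzero product) that appears in the parallel computation in Claim \ref{claim:elimination1}.
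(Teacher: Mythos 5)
Your proposal is correct and matches the paper's (implicit) approach exactly: the paper states Lemma~\ref{lemma:key-bound2} without a written proof, deferring by symmetry to Lemma~\ref{lemma:key-bound1}, and you supply precisely that mirrored argument via $\widetilde{\mcl{D}}(F)$, Claim~\ref{claim:elimination2}, and Corollary~\ref{corollary:null-operator}. One small imprecision: Claim~\ref{claim:identity-2} as stated applies to the shorter operator without the leading $\mcl{D}_k^{2^{2k-1}-1}$ factor, so strictly one must \emph{repeat its proof} for the full $\widetilde{\mcl{D}}$ rather than invoke the claim directly, but this is exactly the phrasing the paper itself uses for the parallel Lemma~\ref{lemma:key-bound1}, and your parenthetical shows you understand the mechanism.
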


Everything we've done so far has been symbolic manipulations and the claims in this section are all true as algebraic identities.  We are now ready to analyze the SOS program.  Note the polynomials $P_0, \dots, P_m$ in Lemma \ref{lemma:key-bound1} are unknown because they depend on the true parameters.  This is fine because we will simply use their existence to deduce properties of pseudoexpectations that solve the SOS-system $\mcl{S}$.

Let $U$ be the subspace spanned by the true $\mu_1, \dots , \mu_k$ and let $V$ denote the subspace spanned by the true (flattened) $\Sigma_1, \dots , \Sigma_k$.  We will use $\Gamma_{V}, \Gamma_{V^{\perp}}$ to denote projections onto $V$ and the orthogonal complement of $V$ (and similar for $U, U^{\perp}$).  Note that these are linear maps.  

Our goal now will be to show that $V$ is essentially contained within the span of the union of the top $k$ principal components of the matrices 
\[
\widetilde{\E}[\widetilde{\Sigma_1} \widetilde{\Sigma_1}^T], \dots , \widetilde{\E}[\widetilde{\Sigma_k} \widetilde{\Sigma_k}^T]
\]
This gives us a $k^2$-dimensional space that essentially contains $V$ and then we can guess the true covariance matrices via brute force search.  In the first key lemma, we prove that the matrix $\widetilde{\E}[\widetilde{\Sigma_i} \widetilde{\Sigma_i}^T]$ lives almost entirely within the subspace $V$.

\begin{lemma}\label{lemma:trace-bound}
Let $\widetilde{\E}$ be a pseudoexpectation of degree $C_k$ for some sufficiently large constant $C_k$ depending only on $k$ that solves $\mcl{S}$.  Consider the matrix
\[
M = \widetilde{\E}[\widetilde{\Sigma_k} \widetilde{\Sigma_k}^T]
\]
where by this we mean we construct the $D \times D$ matrix $\widetilde{\Sigma_k} \widetilde{\Sigma_k}^T$ whose entries are quadratic in the variables $\{ u \}, \{v \}$ and then take the entry-wise pseudoexpectation.  Then
\[
\Tr_{V^{\perp}}(M) \leq \eps'^{2^{-k}}O_k(1)  \left(\frac{\Delta}{w_{\min}c} \right)^{O_k(1)}
\]
where $\Tr_{V^{\perp}}(M)$ denotes the trace of $M$ on the subspace $V^{\perp}$.
\end{lemma}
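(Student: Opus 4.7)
The strategy is to extract a high-moment bound on $\norm{\Gamma_{V^{\perp}}\widetilde{\Sigma_k}}^2$ from the symbolic identity of Lemma \ref{lemma:key-bound1} and then bootstrap it down to a bound on the pseudo-expectation itself. That lemma says the polynomial
\[
Q(X) := \widetilde{w_k}\prod_{i=1}^k\bigl(\widetilde{\Sigma_k}(X) - \Sigma_i(X)\bigr)^{2^{i-1}}\prod_{i=1}^{k-1}\bigl(\widetilde{\Sigma_k}(X)-\widetilde{\Sigma_i}(X)\bigr)^{2^{k+i-1}}
\]
equals $P_0 + \sum_{j=1}^m P_j \widetilde{h_j}$ with each $P_j$ being $m$-simple, and that $P_0 + \sum_{j=1}^m P_j h_j = 0$ identically; subtracting yields $Q(X) = \sum_{j=1}^m P_j(X)(\widetilde{h_j}(X) - h_j(X))$. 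I view both sides as polynomials in $X$ with coefficients in the $(u,v)$-indeterminates, take the squared $\ell^2$-norm of the coefficient vector, and apply $\widetilde{\E}$; the target bound then comes from $\Tr_{V^{\perp}}(M) = \widetilde{\E}[\norm{\Gamma_{V^{\perp}}\widetilde{\Sigma_k}}^2]$.

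For the right-hand side, Claim \ref{claim:sum-bound} followed by Claim \ref{claim:factor-upper-bound} gives $\widetilde{\E}[\norm{v(Q)}^2] \leq O_k(1)\sum_j \widetilde{\E}[\norm{v(P_j)}^2\norm{v(\widetilde{h_j}-h_j)}^2]$. Since each $P_j$ is $m$-simple, the constraints $\norm{u_i}^2 = \norm{v_i}^2 = 1$, $u_i\cdot u_j = v_i\cdot v_j = 0$ together with $\norm{A_i},\norm{B_i}\leq 2\Delta$ SOS-prove $\norm{v(P_j)}^2 \leq \Delta^{O_k(1)}$, and the SOS constraint $\norm{v(\widetilde{h_j} - \overline{h_j})}^2 \leq 100\eps'$ combined with the input $\norm{v(\overline{h_j} - h_j)}^2 \leq \eps'$ gives $\widetilde{\E}[\norm{v(\widetilde{h_j}-h_j)}^2] \leq O(\eps')$; together these yield $\widetilde{\E}[\norm{v(Q)}^2] \leq \Delta^{O_k(1)}\eps'$. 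For the left-hand side, applying Claim \ref{claim:factor-lower-bound} repeatedly (peeling off one factor at a time) produces
\[
\widetilde{\E}[\norm{v(Q)}^2] \geq \Omega_k(1)\,\widetilde{w_k}^2\,\widetilde{\E}\!\left[\prod_{i=1}^k \norm{\widetilde{\Sigma_k}-\Sigma_i}^{2^i}\prod_{i=1}^{k-1}\norm{\widetilde{\Sigma_k}-\widetilde{\Sigma_i}}^{2^{k+i}}\right].
\]
The guesses are real, so $\widetilde{w_k}\geq w_{\min}/2$, and orthonormality of the $v_i$ SOS-proves $\norm{\widetilde{\Sigma_k}-\widetilde{\Sigma_i}}^2 = \norm{B_k-B_i}^2 \geq c^2/4$. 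For the factors with unknown $\Sigma_i$, the orthogonal splitting $\widetilde{\Sigma_k} - \Sigma_i = (\Gamma_V\widetilde{\Sigma_k} - \Sigma_i) + \Gamma_{V^{\perp}}\widetilde{\Sigma_k}$ gives $\norm{\widetilde{\Sigma_k}-\Sigma_i}^2 - \norm{\Gamma_{V^{\perp}}\widetilde{\Sigma_k}}^2 = \norm{\Gamma_V\widetilde{\Sigma_k}-\Sigma_i}^2 \geq 0$ as an SOS inequality, so this group contributes $\norm{\Gamma_{V^{\perp}}\widetilde{\Sigma_k}}^{2^{k+1}-2}$. Setting $P := \norm{\Gamma_{V^{\perp}}\widetilde{\Sigma_k}}^2$ and rearranging produces $\widetilde{\E}[P^{2^k-1}] \leq (\Delta/(w_{\min}c))^{O_k(1)}\eps'$.

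To convert this high-moment bound into one on $\widetilde\E[P]$, I use that orthonormality SOS-proves $\norm{\widetilde{\Sigma_k}}^2 = \norm{B_k}^2 \leq 4\Delta^2$, making $M - P$ SOS-nonnegative for $M=4\Delta^2$; since $P$ itself is a sum of squares, $P^{2^k-1}$ is SOS, whence $(M-P)P^{2^k-1}$ is SOS and $\widetilde\E[P^{2^k}] \leq M\widetilde\E[P^{2^k-1}]$. Iterated pseudo-expectation Cauchy-Schwarz gives $\widetilde\E[P]^{2^k} \leq \widetilde\E[P^{2^k}]$, and taking $2^k$-th roots produces the claimed $\eps'^{2^{-k}}O_k(1)(\Delta/(w_{\min}c))^{O_k(1)}$. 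The main subtlety is ensuring every inequality holds as an SOS consequence of $\mcl{S}$ rather than merely pointwise; the trickiest step is the lower bound $\norm{\widetilde{\Sigma_k}-\Sigma_i}^2 \geq \norm{\Gamma_{V^{\perp}}\widetilde{\Sigma_k}}^2$, whose SOS certificate comes from expanding the perpendicular decomposition into a manifestly square expression in the $v$-indeterminates despite the $\Sigma_i$ being unknown.
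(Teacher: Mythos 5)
Your proof is correct and follows essentially the same route as the paper: apply the symbolic identity of Lemma \ref{lemma:key-bound1}, upper-bound the pseudoexpectation of the norm of the RHS via Claims \ref{claim:sum-bound} and \ref{claim:factor-upper-bound} together with the SOS constraints on $\norm{v(\widetilde{h_j}-\overline{h_j})}^2$, lower-bound the LHS via Claim \ref{claim:factor-lower-bound} and the SOS inequalities $\norm{v(\widetilde{\Sigma_k}-\Sigma_i)}^2 \succeq_{SOS} \norm{\Gamma_{V^\perp}\widetilde{\Sigma_k}}^2$ and $\norm{v(\widetilde{\Sigma_k}-\widetilde{\Sigma_i})}^2 \succeq_{SOS} c^2/4$, then identify $\Tr_{V^\perp}(M)$ with $\widetilde\E\bigl[\norm{\Gamma_{V^\perp}\widetilde{\Sigma_k}}^2\bigr]$. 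The only place you go beyond the paper's text is the last step, where the paper jumps from the bound on $\widetilde\E\bigl[\norm{\Gamma_{V^\perp}\widetilde{\Sigma_k}}^{2^{k+1}-2}\bigr]$ to the bound on $\widetilde\E\bigl[\norm{\Gamma_{V^\perp}\widetilde{\Sigma_k}}^2\bigr]$ with a bare ``Thus''; your explicit argument via the SOS-nonnegativity of $(4\Delta^2 - P)P^{2^k-1}$ followed by iterated pseudoexpectation Cauchy--Schwarz is a correct way to fill that gap.
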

\begin{proof} 
Using Lemma \ref{lemma:key-bound1}, we may write
\[
\widetilde{w_k}\prod_{i=1}^k (\widetilde{\Sigma_k}(X) - \Sigma_i(X) )^{2^{i-1}}\prod_{i=1}^{k-1}(\widetilde{\Sigma_k}(X) - \widetilde{\Sigma_i}(X) )^{2^{k+i-1}} = P_1(X)(\widetilde{h_1}(X) - h_1(X)) + \dots + P_{m}(X)(\widetilde{h_{m}}(X) - h_m(X))
\]
where $m = O_k(1)$

Now we bound
\[
\widetilde{\E}\left[\norm{v\left(P_1(X)(\widetilde{h_1}(X) - h_1(X)) + \dots + P_{m}(X)(\widetilde{h_{m}}(X) - h_m(X))\right)}^2\right] 
\]
Using Claim \ref{claim:factor-upper-bound} and Claim \ref{claim:sum-bound},
\begin{align*}
&\widetilde{\E}\left[\norm{v\left(P_1(X)(\widetilde{h_1}(X) - h_1(X)) + \dots + P_{m}(X)(\widetilde{h_{m}}(X) - h_m(X))\right)}^2\right] \\
&\leq O_k(1) \sum_{i=1}^m \widetilde{\E}\left[ \norm{v(P_i(X))}^2 \cdot \norm{v(\widetilde{h_i}(X) - h_i(X))}^2\right] \\
&\leq O_k(1) \sum_{i=1}^m \widetilde{\E}\left[ \norm{v(P_i(X))}^2 \cdot 2\left(\norm{v(\widetilde{h_i}(X) - \overline{h_i}(X))}^2 + \norm{v(\overline{h_i}(X) - h_i(X))}^2 \right)\right]
\end{align*}
Where the last step is true because Claim \ref{claim:sum-bound} allows us to write the difference between the two sides as a sum of squares.
\\\\
Now $\norm{v(\overline{h_i}(X) - h_i(X))}^2$ is just a real number and is bounded above by $\eps'$ by assumption.  We also have the constraint that 
\[
\norm{v(\widetilde{h_i}(X) - \overline{h_i}(X))}^2 \leq 100\eps'
\]
so
\[
\widetilde{\E}\left[\norm{v\left(P_1(X)(\widetilde{h_1}(X) - h_1(X)) + \dots + P_{m}(X)(\widetilde{h_{m}}(X) - h_m(X))\right)}^2\right]  \leq O_k(1)\eps'\sum_{i=1}^m \widetilde{\E}\left[ \norm{v(P_i(X))}^2\right]
\]
Now we use the properties from Lemma \ref{lemma:key-bound1} that each of the $P_i$ can be written as a linear combination of a constant number of terms that are a product of some of $\{\mu_i(X) \}, \{\Sigma_i(X) \}, \{ \widetilde{\mu_i}(X) \}, \{ \widetilde{\Sigma_i}(X) \}$ where 
    \begin{itemize}
        \item The coefficients in the linear combination are bounded by a constant depending only on $k$
        \item The number of terms in the sum depends only on $k$
        \item The number of terms in each product depends only on $k$
    \end{itemize}
Note for each $\widetilde{\mu_i}(X)$, since we ensured that our guesses for the coefficients that go with the orthonormal basis $u_1, \dots , u_k$ are at most $\Delta$ and we have the constraints $\norm{u_i}_2^2 = 1, u_i \cdot u_j = 0$, we have
\[
\norm{v(\widetilde{\mu_i}(X))}^2 \preceq_{SOS} O_k(1)\Delta^2
\]
where $\preceq_{SOS}$ means the difference can be written as a sum of squares. 
We can make similar arguments for $\widetilde{\Sigma_i}(X), \mu_i(X), \Sigma_i(X)$.  Now using Claim \ref{claim:sum-bound} and Claim \ref{claim:factor-upper-bound} we can deduce
\[
\widetilde{\E}\left[ \norm{v(P_i(X))}^2\right] \leq O_k(1) \Delta^{O_k(1)}
\]
Overall, we have shown
\[
\widetilde{\E}\left[\norm{v\left(P_1(X)(\widetilde{h_1}(X) - h_1(X)) + \dots + P_{m}(X)(\widetilde{h_{m}}(X) - h_m(X))\right)}^2\right]  \leq O_k(1)\eps' \Delta^{O_k(1)}
\]
Now we examine the expression
\[
\widetilde{\E}\left[ \norm{v\left(\widetilde{w_k}\prod_{i=1}^k (\widetilde{\Sigma_k}(X) - \Sigma_i(X) )^{2^{i-1}}\prod_{i=1}^{k-1}(\widetilde{\Sigma_k}(X) - \widetilde{\Sigma_i}(X) )^{2^{k+i-1}}\right)}^2\right]
\]
By Claim \ref{claim:factor-lower-bound} (recall $\widetilde{w_k}$ is a constant that we guess),
\begin{align*}
&\widetilde{\E}\left[ \norm{v\left(\widetilde{w_k}\prod_{i=1}^k (\widetilde{\Sigma_k}(X) - \Sigma_i(X) )^{2^{i-1}}\prod_{i=1}^{k-1}(\widetilde{\Sigma_k}(X) - \widetilde{\Sigma_i}(X) )^{2^{k+i-1}}\right)}^2 \right] \\ &\geq \widetilde{w_{k}} \Omega_k(1)\widetilde{\E}\left[ \prod_{i=1}^k \left(\norm{v(\widetilde{\Sigma_k}(X) - \Sigma_i(X) )}^2\right)^{2^{i-1}}\prod_{i=1}^{k-1}\left(\norm{v(\widetilde{\Sigma_k}(X) - \widetilde{\Sigma_i}(X) )}^2\right)^{2^{k+i-1}}\right]
\end{align*}
Note that
\[
\norm{v(\widetilde{\Sigma_k}(X) - \Sigma_i(X) )}^2 \succeq_{SOS} \norm{\Gamma_{V^{\perp}}(\widetilde{\Sigma_k})}^2
\]
(recall that $\Gamma_{V^{\perp}}$ is a projection map with unknown but constant coefficients).  Next, since we ensure that the coefficients $B_i$ that we guess for the orthonormal basis satisfy $\norm{B_i - B_j}_2 \geq \frac{c}{2}$, we have
\[
\norm{v(\widetilde{\Sigma_k}(X) - \widetilde{\Sigma_i}(X) )}^2 \succeq_{SOS} \frac{c^2}{4}
\]
where we use the constraints in $\mcl{S}$ that $\norm{v_i}_2^2 = 1, v_i \cdot v_j = 0$.  Overall, we conclude
\[
\widetilde{\E}\left[ \norm{v\left(\widetilde{w_k}\prod_{i=1}^k (\widetilde{\Sigma_k}(X) - \Sigma_i(X) )^{2^{i-1}}\prod_{i=1}^{k-1}(\widetilde{\Sigma_k}(X) - \widetilde{\Sigma_i}(X) )^{2^{k+i-1}}\right)}^2 \right] \geq \Omega_k(1)\E\left[ \norm{\Gamma_{V^{\perp}}(\widetilde{\Sigma_k})}^{2^{k+1} - 2} \right] (\widetilde{w_k}c)^{O_k(1)}
\]
Note
\begin{align*}
\widetilde{\E}\left[ \norm{v\left(\widetilde{w_k}\prod_{i=1}^k (\widetilde{\Sigma_k}(X) - \Sigma_i(X) )^{2^{i-1}}\prod_{i=1}^{k-1}(\widetilde{\Sigma_k}(X) - \widetilde{\Sigma_i}(X) )^{2^{k+i-1}}\right)}^2 \right] = \\ \widetilde{\E}\left[\norm{v\left(P_1(X)(\widetilde{h_1}(X) - h_1(X)) + \dots + P_{m}(X)(\widetilde{h_{m}}(X) - h_m(X))\right)}^2\right]  
\end{align*}
because the inner expressions are equal symbolically.  Thus
\[
\widetilde{\E}\left[ \norm{\Gamma_{V^{\perp}}(\widetilde{\Sigma_k})}^{2^{k+1} - 2} \right] \leq O_k(1) \eps' \left(\frac{\Delta}{w_{\min}c} \right)^{O_k(1)}
\]
Thus
\[
\widetilde{\E}\left[ \norm{\Gamma_{V^{\perp}}(\widetilde{\Sigma_k})}^{2} \right] \leq \eps'^{2^{-k}}O_k(1)  \left(\frac{\Delta}{w_{\min}c} \right)^{O_k(1)}
\]
It remains to note that 
\[
\Tr_{V^{\perp}}(M) = \widetilde{\E}\left[ \norm{\Gamma_{V^{\perp}}(\widetilde{\Sigma_k})}^{2} \right]
\]
and we are done.
\end{proof}

In the next key lemma, we prove that any vector that has nontrivial projection onto $V$ must also have nontrivial projection onto $\widetilde{\E}[\widetilde{\Sigma_i} \widetilde{\Sigma_i}^T]$ for some $i$.

\begin{lemma}\label{lemma:span}
Let $\widetilde{\E}$ be a pseudoexpectation of degree $C_k$ for some sufficiently large constant $C_k$ depending only on $k$ that solves $\mcl{S}$.  Consider the matrix
\[
N = \sum_{i=1}^k\widetilde{\E}[\widetilde{\Sigma_i} \widetilde{\Sigma_i}^T]
\]
where by this we mean we construct the $D \times D$ matrix whose entries are quadratic in the variables $\{ u \}, \{v \}$ and then take the entry-wise pseudoexpectation.  Then for any unit vector $z \in \R^D$,
\[
z^TNz \geq  \left( \frac{w_{\min}(z \cdot \Sigma_k)^{O_k(1)} -  O_k(1)\eps' \Delta^{O_k(1)}}{O_k(1)\Delta^{O_k(1)} }\right)^2
\]
as long as 
\[
w_{\min}(z \cdot \Sigma_k)^{O_k(1)} > O_k(1)\eps' \Delta^{O_k(1)}
\]
\end{lemma}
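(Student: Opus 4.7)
The plan is to mirror the proof of Lemma \ref{lemma:trace-bound} but apply the ``reversed'' identity from Lemma \ref{lemma:key-bound2}, so the factors we peel off on the LHS now involve $\Sigma_k(X) - \widetilde{\Sigma_i}(X)$ instead of $\widetilde{\Sigma_k}(X) - \Sigma_i(X)$. Introduce the shorthand $a = z \cdot \Sigma_k$, a real constant with $|a| \leq \Delta$, and $y_i = z \cdot \widetilde{\Sigma_i}$, a linear polynomial in the indeterminates, so that $z^{T} N z = \sum_{i=1}^k \widetilde{\E}[y_i^2]$. The role played by projection onto $V^{\perp}$ in Lemma \ref{lemma:trace-bound} will be played here by the hypothesis $\norm{z}_2 = 1$.

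The first phase of the argument is essentially identical to that of Lemma \ref{lemma:trace-bound}. Lemma \ref{lemma:key-bound2} writes the LHS polynomial as $\sum_j P_j(X)(h_j(X) - \widetilde{h_j}(X))$, and the same chain of bounds via Claims \ref{claim:factor-upper-bound} and \ref{claim:sum-bound}, together with the SOS constraint $\norm{v(\widetilde{h_j} - \overline{h_j})}^2 \leq 100\eps'$ and the assumed accuracy $\norm{v(\overline{h_j} - h_j)}^2 \leq \eps'$, yields $\widetilde{\E}[\norm{v(\text{LHS})}^2] \leq O_k(1)\eps'\Delta^{O_k(1)}$. On the lower side, I iterate Claim \ref{claim:factor-lower-bound} to separate every factor of the product: the true-parameter factors contribute the numerical constant $\norm{v((\Sigma_k - \Sigma_i)(X))}^2 \geq c^2$, while for the factors involving indeterminates the hypothesis $\norm{z}_2 = 1$ together with the Lagrange identity $\norm{u}^2\norm{v}^2 - (u \cdot v)^2 = \sum_{i<j}(u_iv_j - u_jv_i)^2$ gives $(a - y_i)^2 \preceq_{SOS} \norm{v((\Sigma_k - \widetilde{\Sigma_i})(X))}^2$. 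Combining, one obtains
\[
\widetilde{\E}\!\left[\prod_{i=1}^k (a - y_i)^{2^i}\right] \;\leq\; \frac{O_k(1)\,\eps'\,\Delta^{O_k(1)}}{w_{\min}^2\, c^{O_k(1)}} \;=:\; \delta.
\]

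The second phase, which is the main obstacle, converts this upper bound on a product of $(a - y_i)^{2^i}$ factors into a lower bound on $\sum_i \widetilde{\E}[y_i^2]$ at the SOS level. The classical intuition is that if the product $\prod_i (a - y_i)^{2^i}$ is small then some $y_i$ must be close to $a$, and hence $y_i^2$ close to $a^2$; this is a case-based pointwise statement that must be recast as an honest SOS identity. I plan to start from the trivial SOS inequality $a^2 \preceq_{SOS} 2(a - y_i)^2 + 2 y_i^2$ (whose slack is the perfect square $(a - 2y_i)^2$), raise its $i$-th copy to the $2^{i-1}$ power, and multiply over $i \in [k]$. Expanding this product, the unique ``pure'' term containing no factor of $y_j^2$ is $2^{2^k - 1}\prod_i (a - y_i)^{2^i}$, which matches what we control; every other cross term carries at least one explicit factor of $y_j^2$, and each of its remaining $(a - y_i)^2$ and $y_i^2$ sub-factors is uniformly bounded by $O_k(1)\Delta^{O_k(1)}$ in SOS, using the constraints $\norm{u_i}_2^2 = \norm{v_i}_2^2 = 1$ of $\mcl{S}$, the guess bounds $\norm{A_i}_2, \norm{B_i}_2 \leq 2\Delta$, and $|a| \leq \Delta$. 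Careful bookkeeping across the $O_k(1)$ cross terms yields the key SOS relation
\[
a^{2^{k+1} - 2} \;\preceq_{SOS}\; 2^{2^k - 1} \prod_{i=1}^k (a - y_i)^{2^i} \;+\; O_k(1)\Delta^{O_k(1)} \sum_{j=1}^k y_j^2.
\]
Taking pseudoexpectation and substituting the $\delta$-bound gives $(z \cdot \Sigma_k)^{2^{k+1} - 2} \leq O_k(1)\delta + O_k(1)\Delta^{O_k(1)}\, z^{T} N z$, which rearranges into a lower bound on $z^{T} N z$ of the form asserted. The delicate point throughout is to make sure each cross-term upper bound is an honest SOS inequality rather than only a numerical one — this is where the uniform $\Delta$-bounds on the sub-factors $(a - y_i)^2$ and $y_i^2$ must be derived as SOS relations via $\preceq_{SOS}$ and then multiplied together using the fact that products of SOS-nonnegative quantities remain SOS.
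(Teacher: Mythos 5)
Your first phase matches the paper exactly: both apply Lemma \ref{lemma:key-bound2}, bound $\widetilde{\E}[\norm{v(\sum_j P_j(\widetilde{h_j}-h_j))}^2]$ by $O_k(1)\eps'\Delta^{O_k(1)}$ via Claims \ref{claim:sum-bound} and \ref{claim:factor-upper-bound}, and then push down via Claim \ref{claim:factor-lower-bound}, using $c^2$ for the true--true factors and $\norm{z}_2 = 1$ to get $(z\cdot\Sigma_k - z\cdot\widetilde{\Sigma_i})^2 \preceq_{SOS} \norm{v((\Sigma_k - \widetilde{\Sigma_i})(X))}^2$; your mention of the Lagrange identity is just spelling out the SOS certificate for that Cauchy--Schwarz step, which the paper leaves implicit.

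Where you genuinely depart is the second phase. The paper expands $\prod_i(a-y_i)^{2^i} = a^{2^{k+1}-2} + \sum_i y_i P_i(y)$ and applies the pseudoexpectation Cauchy--Schwarz inequality to bound $-\widetilde{\E}[\sum_i y_i P_i] \leq O_k(1)\Delta^{O_k(1)}\sqrt{\widetilde{\E}[\sum_i y_i^2]}$, yielding the squared form stated in the lemma. You instead build the self-contained SOS inequality
\[
a^{2^{k+1}-2} \;\preceq_{SOS}\; 2^{2^k-1}\prod_{i=1}^k (a-y_i)^{2^i} \;+\; O_k(1)\Delta^{O_k(1)}\sum_{j=1}^k y_j^2
\]
from the seed $a^2 \preceq_{SOS} 2(a-y_i)^2 + 2y_i^2$, raising and multiplying. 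This is sound: writing $B_i = a^2 + (a-2y_i)^2$ makes $\prod_i B_i^{2^{i-1}} - a^{2^{k+1}-2}$ a sum of products of squares, the pure term is exactly $2^{2^k-1}\prod_i(a-y_i)^{2^i}$, and each of the $O_k(1)$ cross terms has the form $y_j^2 \cdot Q$ with $Q$ a product of squares each $\preceq_{SOS} O_k(1)\Delta^2$, so the telescoping identity $B^m - \prod q_\ell = \sum_\ell (\prod_{j<\ell}q_j)(B-q_\ell)B^{m-\ell}$ gives $y_j^2 Q \preceq_{SOS} O_k(1)\Delta^{O_k(1)} y_j^2$. Taking pseudoexpectations gives a bound linear in $z^T N z$ rather than in $\sqrt{z^T N z}$, which is if anything slightly stronger and avoids the pseudoexpectation Cauchy--Schwarz entirely; the trade-off is more careful SOS bookkeeping for the cross terms, which you correctly flag as the delicate point. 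Both routes establish the lemma with the same qualitative dependence on $\eps', \Delta, w_{\min}, c$.
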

\begin{proof}
Using Lemma \ref{lemma:key-bound2}, we may write 
\[
w_k\prod_{i=1}^{k}(\Sigma_k(X) - \widetilde{\Sigma_i}(X) )^{2^{i-1}} \prod_{i=1}^{k-1} (\Sigma_k(X) - \Sigma_i(X) )^{2^{k + i-1}} = P_1(X)(\widetilde{h_1}(X) - h_1(X)) + \dots + P_{m}(X)(\widetilde{h_{m}}(X) - h_m(X))
\]
where $m = O_k(1)$.

Using the same method as the proof in Lemma \ref{lemma:trace-bound}, we have
\[
\widetilde{\E}\left[\norm{v\left(P_1(X)(\widetilde{h_1}(X) - h_1(X)) + \dots + P_{m}(X)(\widetilde{h_{m}}(X) - h_m(X))\right)}^2\right]  \leq O_k(1)\eps' \Delta^{O_k(1)}
\]
Now by Claim \ref{claim:factor-lower-bound},
\begin{align*}
&\widetilde{\E}\left[ \norm{v\left(w_k\prod_{i=1}^{k}(\Sigma_k(X) - \widetilde{\Sigma_i}(X) )^{2^{i-1}} \prod_{i=1}^{k-1} (\Sigma_k(X) - \Sigma_i(X) )^{2^{k + i-1}}\right)}^2 \right] \\ & \geq w_k\widetilde{\E}\left[ \prod_{i=1}^{k}\left(\norm{v\left(\Sigma_k(X) - \widetilde{\Sigma_i}(X)\right)}^2\right )^{2^{i-1}} \prod_{i=1}^{k-1} \left(\norm{\left(\Sigma_k(X) - \Sigma_i(X)\right)}^2 \right)^{2^{k + i-1}} \right] \\ & \geq 
w_k\widetilde{\E}\left[ \prod_{i=1}^k\left((z \cdot \Sigma_k - z \cdot \widetilde{\Sigma_i} )^2 \right)^{2^{i-1}} c^{O_k(1)}\right]
\end{align*}
where the second inequality is true because 
\[
\norm{v\left(\Sigma_k(X) - \widetilde{\Sigma_i}(X)\right)}^2 \succeq_{SOS} (z \cdot \Sigma_k - z \cdot \widetilde{\Sigma_i} )^2
\]
Now we claim
\begin{align*}
\widetilde{\E}\left[ \prod_{i=1}^k\left((z \cdot \Sigma_k - z \cdot \widetilde{\Sigma_i} )^2 \right)^{2^{i-1}} \right] \geq  \left( z \cdot \Sigma_k \right)^{O_k(1)} - O_k(1)\Delta^{O_k(1)} \sqrt{\widetilde{\E}\left[ \sum_{i} (z \cdot \widetilde{\Sigma_i})^2 \right]}
\end{align*}
To see this, first recall that $z \cdot \Sigma_k$ is just a constant.  Next, we can expand the LHS into a sum of monomials in the $z \cdot \widetilde{\Sigma_i}$.  In particular, we can write the expansion in the form
\[
\left( z \cdot \Sigma_k \right)^{O_k(1)} + \sum_{i} (z \cdot \widetilde{\Sigma_i})P_i(z \cdot \widetilde{\Sigma_1}, \dots , z \cdot \widetilde{\Sigma_k})
\]
wjhere $P$ is some polynomial in $k$ variables.  We can upper bound the coefficients of the polynomial in terms of $\Delta, k$ and we also know that 
\[
(z \cdot \widetilde{\Sigma_i})^2 \preceq_{SOS} O_k(1)\Delta^{O(1)}
\]
due to the constraints in our system.  Thus, we can bound the pseudoexpectation
\[
-\widetilde{\E}\left[\sum_{i} (z \cdot \widetilde{\Sigma_i})P_i(z \cdot \widetilde{\Sigma_1}, \dots , z \cdot \widetilde{\Sigma_k}) \right] \leq O_k(1)\Delta^{O_k(1)} \sqrt{\widetilde{\E}\left[ \sum_{i} (z \cdot \widetilde{\Sigma_i})^2 \right]}
\]
via Cauchy Schwarz.  Putting everything together the same way as in Lemma \ref{lemma:trace-bound}, we deduce
\[
\widetilde{\E}\left[ \sum_{i} (z \cdot \widetilde{\Sigma_i})^2 \right] \geq \left( \frac{w_{\min}(z \cdot \Sigma_k)^{O_k(1)} -  O_k(1)\eps' \Delta^{O_k(1)}}{O_k(1)\Delta^{O_k(1)} }\right)^2
\]
and now we are done.
\end{proof}

Putting Lemmas \ref{lemma:trace-bound} and \ref{lemma:span} together, we now prove that $V$ is essentially contained within the span of the union of the top principal components of $\widetilde{\E}[\widetilde{\Sigma_i} \widetilde{\Sigma_i}^T]$ over all $i$.

\begin{lemma}\label{lemma:solve-covariance}
For each $i$, let $M_i$ be the $D \times D$ matrix given by
\[
M_i = \widetilde{\E}[\widetilde{\Sigma_i}\widetilde{\Sigma_i^T}].
\]
Assume that for a sufficiently small function $f$ depending only on $k$,
\begin{align*}
\Delta \leq \eps'^{-f(k)} \\
w_{\min}, c \geq \eps'^{f(k)}
\end{align*}
Let $V_i$ be the subspace spanned by the top $k$ singular vectors of $M_i$.  Then for all $i$, the projection of  the true covariance matrix $\Sigma_i$ onto the orthogonal complement of $\spn (V_1, \dots , V_k)$ has length at most $\eps'^{\Omega_k(1)}$.
\end{lemma}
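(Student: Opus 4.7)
The plan is to combine Lemmas \ref{lemma:trace-bound} and \ref{lemma:span} via an eigenvalue/dimension counting argument. Intuitively, Lemma \ref{lemma:trace-bound} says each $M_i$ has only a $k$-dimensional ``bulk'' (essentially living in $V$), so its $(k+1)$-th eigenvalue must be small; Lemma \ref{lemma:span} then rules out the existence of any unit direction in $\spn(\Sigma_1,\ldots,\Sigma_k)$ orthogonal to all the top-$k$ singular subspaces $V_i$.

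First I would bound the $(k+1)$-th eigenvalue of each $M_i$. Set $\eta := \eps'^{2^{-k}} O_k(1)\bigl(\Delta/(w_{\min}c)\bigr)^{O_k(1)}$, the quantity from Lemma \ref{lemma:trace-bound}. Each $M_i$ is PSD since $y^T M_i y = \widetilde{\E}[(y \cdot \widetilde{\Sigma_i})^2] \geq 0$. Because $\dim V \leq k$, the top-$(k+1)$ singular subspace of $M_i$ has dimension $k+1$ while $V^{\perp}$ has dimension $D-k$, so the two subspaces intersect in at least a line; picking a unit vector $w$ in the intersection, on one hand $w^T M_i w \geq \lambda_{k+1}(M_i)$, and on the other hand $w^T M_i w \leq \Tr_{V^{\perp}}(M_i) \leq \eta$ by PSD-ness. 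Hence $\lambda_{k+1}(M_i) \leq \eta$.

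Next, I would argue about an arbitrary $\Sigma_i$; by symmetry (relabeling components so that the component of interest plays the role of index $k$ in the differential operators underlying Lemma \ref{lemma:span}) it suffices to treat $\Sigma_k$. Set $W := \spn(V_1, \ldots, V_k)$, let $\delta := \norm{\Gamma_{W^{\perp}}(\Sigma_k)}$, and assume $\delta > 0$ (otherwise we are done). Let $z$ be the unit vector in the direction of $\Gamma_{W^{\perp}}(\Sigma_k)$, so $z \cdot \Sigma_k = \delta$ and $z \perp V_i$ for every $i$. Then $z$ lies in the span of the bottom $D-k$ singular vectors of each $M_i$, giving $z^T M_i z \leq \lambda_{k+1}(M_i) \leq \eta$, and summing yields $z^T N z \leq k\eta$. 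If $\delta$ is large enough that the hypothesis $w_{\min}\delta^{O_k(1)} > O_k(1)\eps'\Delta^{O_k(1)}$ of Lemma \ref{lemma:span} holds, that lemma gives a matching polynomial-in-$\delta$ lower bound on $z^T N z$. Equating the two bounds and solving yields $\delta \leq \eps'^{\Omega_k(1)}$, as desired.

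The main obstacle is parameter bookkeeping in the final step. The upper bound $k\eta$ is only $\eps'^{2^{-k}}$ to leading order but carries polynomial losses in $\Delta$, $w_{\min}^{-1}$, and $c^{-1}$; the lower bound from Lemma \ref{lemma:span} likewise introduces an additive $\eps'\Delta^{O_k(1)}$ error and a multiplicative $\Delta^{O_k(1)}$ factor inside a square. To ensure that solving the resulting inequality produces $\delta \leq \eps'^{\Omega_k(1)}$ rather than a vacuous bound, all of these polynomial losses must be dominated by the $\eps'^{2^{-k}}$ savings from the trace bound; this is exactly why the hypothesis requires $\Delta \leq \eps'^{-f(k)}$ and $w_{\min}, c \geq \eps'^{f(k)}$ for a sufficiently small $f(k)$, with $f(k)$ chosen in terms of the $O_k(1)$ exponents implicit in both lemmas.
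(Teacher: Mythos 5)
Your proposal is correct and takes essentially the same approach as the paper: it combines Lemma~\ref{lemma:trace-bound} and Lemma~\ref{lemma:span} by observing that the projection of $\Sigma_i$ onto $\spn(V_1,\dots,V_k)^\perp$ is orthogonal to each top-$k$ singular subspace $V_j$, giving a small upper bound on $z^T N z$ that contradicts the lower bound from Lemma~\ref{lemma:span} unless the projection is short. The only cosmetic differences are that you normalize $z$, bound $\lambda_{k+1}(M_j)$ directly rather than the tail trace, and sum over $j$ rather than picking the maximizing index; none of these change the substance.
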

\begin{proof}
Assume for the sake of contradiction that the desired statement is false for $\Sigma_i$.  Let $z$ be the projection of $\Sigma_i$ onto the orthogonal complement of $\spn (V_1, \dots , V_k)$.  By Lemma \ref{lemma:span}, 
\begin{equation}\label{eq:norm-lowerbound}
\sum_j z^TM_jz \geq \left( \frac{w_{\min}(z \cdot \Sigma_i)^{O_k(1)} -  O_k(1)\eps' \Delta^{O_k(1)}}{O_k(1)\Delta^{O_k(1)} }\right)^2
\end{equation}
so there is some $j$ for which
\[
z^TM_j z \geq \frac{1}{k}\left( \frac{w_{\min}(z \cdot \Sigma_i)^{O_k(1)} -  O_k(1)\eps' \Delta^{O_k(1)}}{O_k(1)\Delta^{O_k(1)} }\right)^2
\]
On the other hand, Lemma \ref{lemma:trace-bound} implies that the sum of the singular values of $M_j$ outside the top $k$ is at most 
\[
\eps'^{2^{-k}}O_k(1)  \left(\frac{\Delta}{w_{\min}c} \right)^{O_k(1)}
\]
Since $z$ is orthogonal to the span of the top-$k$ singular vectors of $M_j$, we get 
\begin{equation}\label{eq:norm-upperbound}
z^T M_j z \leq \eps'^{2^{-k}}O_k(1)  \left(\frac{\Delta}{w_{\min}c} \right)^{O_k(1)} \norm{z}_2^2
\end{equation}
Note $z \cdot \Sigma_i = \norm{z}_2^2$ since $z$ is a projection of $\Sigma_i$ onto a subspace.  Now combining (\ref{eq:norm-lowerbound}) and (\ref{eq:norm-upperbound}) we get a contradiction unless 
\[
\norm{z}_2 \leq \eps'^{\Omega_k(1)}
\]
\end{proof}

\subsubsection{Finishing Up: Finding the Covariances and then the Means}\label{sec:solve-means}
Now we can brute-force search over the subspace spanned by the union of the top $k$ singular vectors of $M_1, \dots , M_k$.  Note that the SOS system $\mcl{S}$ is clearly feasible as it is solved when the $u_i, v_i$ form orthonormal bases for the true subspaces and the $\widetilde{w_i}, A_i, B_i$ are within $\eps'^{O_k(1)}$ of the true values (i.e. the values needed to express the true means and covariances in the orthonormal basis given by the $u_i,v_i$).

Thus, brute forcing over an $\eps'^{O_k(1)}$-net for the $\widetilde{w_i}, A_i, B_i$, we will find a feasible solution.  By Lemma \ref{lemma:span} and Lemma \ref{lemma:solve-covariance}, once we find any feasible solution, we will be able to obtain a set of $(1/\eps')^{O_k(1)}$ estimates at least one of which, say $\overline{\Sigma_1}, \dots , \overline{\Sigma_k}$, satisfies
\[
\norm{\Sigma_i - \overline{\Sigma_i}}_2^2 \leq \eps'^{\Omega_k(1)}
\]
for all $i$.  With these estimates we will now solve for the means.  Note we can assume that our covariance estimates are exactly correct because we can pretend that the true mixture is actually $N(\mu_1, \overline{\Sigma_1}), \dots , N(\mu_k, \overline{\Sigma_k})$ and our estimates for the Hermite polynomials of this mixture will be off by at most $O_k(1)\eps'^{\Omega_k(1)}$.  Thus, making this assumption will only affect the dependence on $\eps'$ that we get at the end.  From now on we can write $\Sigma_i$ to denote the true covariances and treat these as known quantities.
\\\\
Now we set up the same system as in Section \ref{sec:sos-setup} except we no longer have the variables $v_1, \dots , v_k$ and no longer have the $\widetilde{\Sigma_i}$.  These will instead be replaced by real values from $\Sigma_i$.  Formally:

\begin{definition}[SOS program for  learning means]\label{def:sos-means-only}
We will have the following variables
\begin{itemize}
    \item $u_1 = (u_{11}, \dots , u_{1d}) , \dots, u_k = (u_{k1}, \dots, u_{kd})$
\end{itemize}
In the above $u_1, \dots, u_k \in \R^d$.  We guess coefficients $a_{ij}$ where $i,j \in [k]$ expressing the means in this orthonormal basis.  We ensure that the guesses satisfy the property that for every pair of vectors $A_i = (a_{i1}, \dots , a_{ik}), A_j = (a_{j1}, \dots , a_{jk})$ either $A_i = A_j$ or 
\[
\norm{A_i - A_j}_2 \geq \frac{c}{2} 
\]
We ensure that
\[
\norm{A_i}_2 \leq 2\Delta
\]
We also guess the mixing weights $\widetilde{w_1}, \dots , \widetilde{w_k}$ and ensure that our guesses are all at least $w_{\min}/2$.

Now we set up the constraints.  Let $C$ be a sufficiently large integer depending only on $k$.  Define $\widetilde{\mu_i} = a_{i1}u_1 + \dots + a_{ik}u_k$.  These are linear expressions in the variables that we are solving for.  Now consider the hypothetical mixture with mixing weights $\widetilde{w_i}$, means $\widetilde{\mu_i}$, and covariances $I + \Sigma_i$.  The Hermite polynomials for this hypothetical mixture $\widetilde{h_i}(X)$ can be written as formal polynomials in $X = (X_1, \dots , X_d)$ with coefficients that are polynomials in $u$.  Note that we can explicitly write down these Hermite polynomials.  The set of constraints for our SOS system is as follows:
\begin{itemize}
    \item $\norm{u_i}_2^{2} = 1$ for all $1 \leq i \leq k$
    \item $u_i \cdot u_j = 0$ for all $i \neq j$
    \item For all $p = 1,2, \dots , C$
    \[
    \norm{v(\widetilde{h_p}(X) - \overline{h_p}(X))}^2 \leq 100\epsilon'
    \]
\end{itemize}

\end{definition}

Now we can repeat the same arguments from Section \ref{sec:solve-covariances} to prove that once we find a feasible solution, we can recover the span of the $\mu_i$.  The important generating functions are 
\begin{align*}
F(y) = \sum_{i = 1}^k w_i e^{\mu_i(X)y + \frac{1}{2}\Sigma_i(X)y^2} \\
\widetilde{F}(y) = \sum_{i = 1}^k \widetilde{w_i} e^{\widetilde{\mu_i}(X)y + \frac{1}{2} \Sigma_i(X)y^2}
\end{align*}
Define the differential operators as before except with $\widetilde{\Sigma_i}$ replaced with $\Sigma_i$.  Let $\mcl{D}_i$ denote the differential operator $(\partial - (\mu_i(X) + \Sigma_i(X)y))$ and $\widetilde{\mcl{D}_i}$ denote the differential operator $(\partial - (\widetilde{\mu_i}(X) + \Sigma_i(X)y))$.  All derivatives are taken with respect to $y$.  The two key differential operators to consider are 
\begin{align*}
\widetilde{\mcl{D}_{k}}^{2^{2k-1} - 2^{k-1} - 1}\widetilde{\mcl{D}_{k-1}}^{2^{2k-2}} \dots \widetilde{\mcl{D}_{1}}^{2^k} \mcl{D}_k^{2^{k-1}} \dots \mcl{D}_1^{1} \\ 
\mcl{D}_{k}^{2^{2k-1} - 2^{k-1} - 1}\mcl{D}_{k-1}^{2^{2k-2}} \dots\mcl{D}_{1}^{2^k} \widetilde{\mcl{D}_k}^{2^{k-1}} \dots \widetilde{\mcl{D}_1}^{1} 
\end{align*}
Note the change to $2^{2k-1} - 2^{k-1} - 1$ from $2^{2k-1} - 1$ in the exponent of the first term.  This is because when operating on $ P(y)e^{\mu_k(X)y + \frac{1}{2} \Sigma_k(X)y^2}$ for some polynomial $P$, the operator $\mcl{D}_k$ reduces the degree of $P$ by $1$ while the operator $\widetilde{D_k}$ does not change the degree of $P$ (whereas before this operator increased the degree of the formal polynomial $P$).  Similar to Claim \ref{claim:elimination1} and Claim \ref{claim:elimination2} in Section \ref{sec:solve-covariances}, we have
\begin{claim}\label{claim:mean-elimination1}
Write
\[
\widetilde{\mcl{D}_{k}}^{2^{2k-1} - 2^{k-1} - 1}\widetilde{\mcl{D}_{k-1}}^{2^{2k-2}} \dots \widetilde{\mcl{D}_{1}}^{2^k} \mcl{D}_k^{2^{k-1}} \dots \mcl{D}_1^{1}(\widetilde{F})
\]
as a formal power series in $y$.  Its evaluation at $y = 0$ is 
\[
C_k\widetilde{w_k}(\widetilde{\mu_k}(X) - \mu_k(X))^{2^{k-1}}\prod_{i=1}^{k-1} (\Sigma_k(X) - \Sigma_i(X) )^{2^{i-1}}\prod_{i=1}^{k-1}(\Sigma_k(X) - \Sigma_i(X) )^{2^{k+i-1}}
\]
where $C_k$ is a constant depending only on $k$.
\end{claim}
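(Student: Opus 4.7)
The plan is to mirror the proof of Claim \ref{claim:elimination1}, tracking the action of the sequence of differential operators on each summand $\widetilde{w_i} e^{\widetilde{\mu_i}(X)y + \frac{1}{2}\Sigma_i(X)y^2}$ of $\widetilde{F}$, using Claim \ref{claim:leading-coeff} to compute leading coefficients when they change and Corollary \ref{corollary:null-operator} together with Claim \ref{claim:degree-reduction} to handle the null steps. The key structural feature specific to the means-only setting is that both $\mcl{D}_i$ and $\widetilde{\mcl{D}_i}$ are built from the \emph{same} true covariance $\Sigma_i$. As a consequence, whenever $\mcl{D}_i$ or $\widetilde{\mcl{D}_i}$ acts on a summand whose exponential already carries $\Sigma_i$, we land in the $b=d$ case of Claim \ref{claim:leading-coeff}: the polynomial degree in $y$ does not grow and the leading coefficient picks up only a mean factor of the form $(a-c)^{k'}$. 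This is precisely what justifies the reduced exponent $2^{2k-1}-2^{k-1}-1$ on the outermost operator.

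First I would verify that for each $i<k$, the operator $\widetilde{\mcl{D}_i}^{2^{k+i-1}}$ annihilates the contribution from the $i$-th summand. Tracking the degree of the polynomial multiplying $e^{\widetilde{\mu_i}(X)y + \frac{1}{2}\Sigma_i(X)y^2}$: each $\mcl{D}_j^{2^{j-1}}$ with $j\neq i$ raises the degree by $2^{j-1}$ (the $b\neq d$ case, since $\Sigma_j\neq\Sigma_i$), $\mcl{D}_i^{2^{i-1}}$ leaves the degree unchanged, and each $\widetilde{\mcl{D}_j}^{2^{k+j-1}}$ with $j<i$ raises the degree by $2^{k+j-1}$. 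Summing gives total degree
\[
\Bigl(\sum_{j=1,\,j\neq i}^{k} 2^{j-1}\Bigr) + \sum_{j=1}^{i-1} 2^{k+j-1} \;=\; 2^{k+i-1} - 2^{i-1} - 1,
\]
which is strictly less than $2^{k+i-1}$, so Corollary \ref{corollary:null-operator} kills this summand.

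Next I would accumulate the contributions to the surviving $k$-th summand $\widetilde{w_k} e^{\widetilde{\mu_k}(X)y + \frac{1}{2}\Sigma_k(X)y^2}$. By Claim \ref{claim:leading-coeff}, each $\mcl{D}_j^{2^{j-1}}$ with $j<k$ multiplies the leading coefficient by $(\Sigma_k(X) - \Sigma_j(X))^{2^{j-1}}$ and raises the polynomial degree by $2^{j-1}$; $\mcl{D}_k^{2^{k-1}}$ contributes the mean factor $(\widetilde{\mu_k}(X) - \mu_k(X))^{2^{k-1}}$ without changing the degree; and each $\widetilde{\mcl{D}_j}^{2^{k+j-1}}$ with $j<k$ multiplies by $(\Sigma_k(X) - \Sigma_j(X))^{2^{k+j-1}}$ and raises the degree by $2^{k+j-1}$. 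The degree contributions sum to exactly $2^{2k-1} - 2^{k-1} - 1$, so at this stage the expression is $\widetilde{w_k}\, P(y)\, e^{\widetilde{\mu_k}(X)y + \frac{1}{2}\Sigma_k(X)y^2}$ with $P$ of degree $2^{2k-1}-2^{k-1}-1$ and leading coefficient equal to the product of the three factors above. Applying the final $\widetilde{\mcl{D}_k}^{2^{2k-1}-2^{k-1}-1}$, whose parameters now match this exponential exactly, iterates Claim \ref{claim:degree-reduction} precisely $2^{2k-1}-2^{k-1}-1$ times and collapses $P$ to the constant $(2^{2k-1}-2^{k-1}-1)!$ times its leading coefficient. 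Evaluating at $y=0$ removes the exponential and yields the claimed identity with $C_k = (2^{2k-1}-2^{k-1}-1)!$.

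The only real obstacle is the bookkeeping: confirming that the degree built up on the $i$-th summand before applying $\widetilde{\mcl{D}_i}^{2^{k+i-1}}$ is strictly less than $2^{k+i-1}$, and that the degree built up on the $k$-th summand matches the exponent $2^{2k-1}-2^{k-1}-1$ of the outermost operator exactly. Both equalities hold by the design of $\mcl{D}$, so there is no conceptual difficulty beyond adapting the degree accounting from the proof of Claim \ref{claim:elimination1} to account for the fact that $\mcl{D}_i$ and $\widetilde{\mcl{D}_i}$ share the same covariance.
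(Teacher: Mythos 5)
Your degree accounting and leading-coefficient tracking correctly mirror the proof of Claim~\ref{claim:elimination1}, and your explicit observation that $\mcl{D}_i$ and $\widetilde{\mcl{D}_i}$ sharing the covariance $\Sigma_i$ forces the $b=d$ branch of Claim~\ref{claim:leading-coeff} is exactly the reason for the exponent $2^{2k-1}-2^{k-1}-1$. This is the same argument the paper intends (it does not spell out a separate proof, deferring to the analogy with Claims~\ref{claim:elimination1}--\ref{claim:elimination2}), so your proposal is correct and matches the paper's approach.
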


\begin{claim}\label{claim:mean-elimination2}
Write
\[
\mcl{D}_{k}^{2^{2k-1} - 2^{k-1} - 1}\mcl{D}_{k-1}^{2^{2k-2}} \dots \mcl{D}_{1}^{2^k} \widetilde{\mcl{D}_k}^{2^{k-1}} \dots \widetilde{\mcl{D}_1}^{1}(F)
\]
as a formal power series in $y$.  Its evaluation at $y = 0$ is 
\[
C_k w_k(\widetilde{\mu_k}(X) - \mu_k(X))^{2^{k-1}}\prod_{i=1}^{k-1} (\Sigma_k(X) - \Sigma_i(X) )^{2^{i-1}}\prod_{i=1}^{k-1}(\Sigma_k(X) - \Sigma_i(X) )^{2^{k+i-1}}
\]
where $C_k$ is a constant depending only on $k$.

\end{claim}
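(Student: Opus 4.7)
The plan is to push the listed operators through $F(y) = \sum_{i=1}^k w_i e^{\mu_i(X)y + \frac{1}{2}\Sigma_i(X)y^2}$ from the inside out, tracking the $y$-degree and leading $y$-coefficient of each exponential summand via Claim \ref{claim:leading-coeff}, Corollary \ref{corollary:null-operator}, and Claim \ref{claim:degree-reduction}. This is the mirror image of the argument for Claim \ref{claim:mean-elimination1}, with the roles of $F$ and $\widetilde{F}$ swapped; the structure of the computation is identical.

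First I would apply the inner batch $\widetilde{\mcl{D}_1}^{1}\widetilde{\mcl{D}_2}^{2} \dots \widetilde{\mcl{D}_k}^{2^{k-1}}$ and track one summand $w_i e^{\mu_i(X)y + \frac{1}{2}\Sigma_i(X)y^2}$ at a time. For each $j \neq i$ the warm-up separation assumption gives $\Sigma_i(X) \neq \Sigma_j(X)$, so Claim \ref{claim:leading-coeff} is in the $b \neq d$ branch and $\widetilde{\mcl{D}_j}^{2^{j-1}}$ raises the $y$-degree by $2^{j-1}$ and multiplies the leading coefficient by $(\Sigma_i(X) - \Sigma_j(X))^{2^{j-1}}$. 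For the single index $j = i$ the quadratic forms coincide, so we are in the $b = d$ branch and simply pick up a factor $(\mu_i(X) - \widetilde{\mu_i}(X))^{2^{i-1}}$ without any degree change. The net effect on the $i$-th term is a polynomial of $y$-degree $(2^k - 1) - 2^{i-1}$ times $e^{\mu_i(X)y + \frac{1}{2}\Sigma_i(X)y^2}$, with leading coefficient proportional to $w_i (\mu_i(X) - \widetilde{\mu_i}(X))^{2^{i-1}} \prod_{j \neq i} (\Sigma_i(X) - \Sigma_j(X))^{2^{j-1}}$.

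Next I would apply the outer batch $\mcl{D}_1^{2^k} \dots \mcl{D}_{k-1}^{2^{2k-2}}$. The crucial observation is that $\mcl{D}_i$ acts on $P(y,X) e^{\mu_i(X)y + \frac{1}{2}\Sigma_i(X)y^2}$ as pure $y$-differentiation on $P$, so for every $i \in \{1, \dots , k-1\}$ the operator $\mcl{D}_i^{2^{k+i-1}}$ strictly dwarfs the current $y$-degree of the $i$-th term (one checks $2^{k+i-1} > (2^k - 1) - 2^{i-1} + \sum_{j < i} 2^{k+j-1} = 2^{k+i-1} - 1 - 2^{i-1}$) and hence annihilates it by Corollary \ref{corollary:null-operator}. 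Only the $i = k$ component survives. For that survivor, each $\mcl{D}_j^{2^{k+j-1}}$ with $j < k$ is in the $b \neq d$ branch and multiplies the leading coefficient by $(\Sigma_k(X) - \Sigma_j(X))^{2^{k+j-1}}$ while raising the $y$-degree by $2^{k+j-1}$. Summing degree increments yields a residual $y$-degree of $(2^{k-1} - 1) + (2^{2k-1} - 2^k) = 2^{2k-1} - 2^{k-1} - 1$, and the leading coefficient becomes a nonzero $k$-dependent constant times
\[
w_k (\mu_k(X) - \widetilde{\mu_k}(X))^{2^{k-1}} \prod_{j=1}^{k-1}(\Sigma_k(X) - \Sigma_j(X))^{2^{j-1}} \prod_{j=1}^{k-1}(\Sigma_k(X) - \Sigma_j(X))^{2^{k+j-1}}.
\]

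Finally, the outermost $\mcl{D}_k^{2^{2k-1} - 2^{k-1} - 1}$ exponent matches the remaining $y$-degree exactly, so iterating Claim \ref{claim:degree-reduction} reduces the polynomial factor to a constant in $y$ equal to $(2^{2k-1} - 2^{k-1} - 1)!$ times the leading coefficient above. Setting $y = 0$ trivializes the exponential to $1$, and since $2^{k-1}$ is even for $k \geq 2$ we may rewrite $(\mu_k(X) - \widetilde{\mu_k}(X))^{2^{k-1}}$ as $(\widetilde{\mu_k}(X) - \mu_k(X))^{2^{k-1}}$, matching the claimed expression with all factorial and sign factors absorbed into $C_k$ (for $k=1$ the overall sign flip is absorbed similarly). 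The only genuine obstacle is the bookkeeping: verifying that each operator exponent is (i) strictly larger than the $y$-degree of every term it is supposed to annihilate, and (ii) exactly equal to the $y$-degree of the surviving term at the final step. No conceptually new ingredient beyond the Claim \ref{claim:mean-elimination1} template is needed.
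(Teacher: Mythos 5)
Your proof is correct and takes exactly the same approach the paper uses for the analogous Claims \ref{claim:elimination1} and \ref{claim:elimination2} (for which the paper says this claim follows "using the same method"): kill all but the $k$-th summand via Corollary \ref{corollary:null-operator}, track the surviving leading coefficient via Claim \ref{claim:leading-coeff}, noting that $\widetilde{\mcl{D}_i}$ and $\mcl{D}_i$ share the same quadratic part $\Sigma_i(X)y$ in this section (so the $j=i$ inner-batch operator falls in the $b=d$ branch and contributes $(\mu_i - \widetilde{\mu_i})^{2^{i-1}}$ with no degree increase), and reduce to $y$-degree zero via Claim \ref{claim:degree-reduction}. Your degree bookkeeping ($2^{2k-1} - 2^{k-1} - 1$ at the final step) and your remark that the sign discrepancy $(\mu_k - \widetilde{\mu_k})^{2^{k-1}}$ versus $(\widetilde{\mu_k} - \mu_k)^{2^{k-1}}$ is vacuous for $k \geq 2$ and absorbed into $C_k$ for $k=1$ are both accurate.
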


Now repeating the arguments in Lemmas \ref{lemma:key-bound1}, \ref{lemma:key-bound2}, \ref{lemma:trace-bound},\ref{lemma:span}, \ref{lemma:solve-covariance}, we can prove that for any feasible solution, the subspace spanned by the top $k$ singular vectors of each of $\widetilde{\E}[\widetilde{\mu_1}\widetilde{\mu_1}^T], \dots , \widetilde{\E}[\widetilde{\mu_k}\widetilde{\mu_k}^T]$ approximately contains all of $\mu_1, \dots , \mu_k$.  We can now brute force search over this subspace (and since we are already brute-force searching over the mixing weights), we will output some set of candidate components that are close to the true components.

\subsubsection{All Pairs of Parameters are Equal or Separated}\label{sec:equal-or-sep}
In the case where some pairs of parameters may be equal (but pairs $(\mu_i, \Sigma_i)$ and $(\mu_j, \Sigma_j)$ cannot be too close), we can repeat essentially the same arguments from the previous section but with minor adjustments in the number of times we are applying each differential operator.

We can assume that our guesses for the coefficients $A_i, B_i$ satisfy the correct equality pattern in the sense that $A_i = A_j$ if and only if $\mu_i = \mu_j$ and otherwise $\norm{A_i - A_j} \geq c/2$ and similar for the parameters $B_i$.  This is because there are only $O_k(1)$ different equality patterns.

Now without loss of generality let $\{\Sigma_{1}, \dots , \Sigma_{j} \}$ ($j < k$) be the set of covariance matrices that are equal to $\Sigma_k$.  The key differential operators to consider are 
\begin{align*}
\widetilde{\mcl{D}_{k}}^{2^{2k-1} - 1 - 2^k - \dots - 2^{k + j} }\widetilde{\mcl{D}_{k-1}}^{2^{2k-2}} \dots \widetilde{\mcl{D}_{1}}^{2^k} \mcl{D}_k^{2^{k-1}} \dots \mcl{D}_1^{1} \\ 
\mcl{D}_{k}^{2^{2k-1} - 1 - 2^{0} - \dots - 2^j }\mcl{D}_{k-1}^{2^{2k-2}} \dots\mcl{D}_{1}^{2^k} \widetilde{\mcl{D}_k}^{2^{k-1}} \dots \widetilde{\mcl{D}_1}^{1} 
\end{align*}

Similar to Claim \ref{claim:elimination1} and Claim \ref{claim:elimination2}, we get
\begin{claim}\label{claim:general-elimination1}
Let $\{\Sigma_{1}, \dots , \Sigma_{j} \}$ ($j < k$) be the set of covariance matrices that are equal to $\Sigma_k$.  Note this also implies $\{ \widetilde{\Sigma_{1}}, \dots , \widetilde{\Sigma_{j}} \}$ are precisely the subset of $\{\widetilde{\Sigma_i} \}$ that are equal to $\widetilde{\Sigma_k}$. Write
\[
\widetilde{\mcl{D}_{k}}^{2^{2k-1}  - 1 - 2^k - \dots - 2^{k + j - 1}}\widetilde{\mcl{D}_{k-1}}^{2^{2k-2}} \dots \widetilde{\mcl{D}_{1}}^{2^k} \mcl{D}_k^{2^{k-1}} \dots \mcl{D}_1^{1}(\widetilde{F})
\]
as a formal power series in $y$.  Its evaluation at $y = 0$ is 
\[
C_k\widetilde{w_k}\prod_{i=1}^k (\widetilde{\Sigma_k}(X) - \Sigma_i(X) )^{2^{i-1}}\prod_{i=1}^{j}(\widetilde{\mu_k}(X) - \widetilde{\mu_i}(X))^{2^{k+i-1}}\prod_{i=j + 1}^{k-1}(\widetilde{\Sigma_k}(X) - \widetilde{\Sigma_i}(X) )^{2^{k+i-1}}
\]
where $C_k$ is a constant depending only on $k$.
\end{claim}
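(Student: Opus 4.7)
The plan is to mirror the proof of Claim \ref{claim:elimination1}, modified to track the degree-preserving behavior of $\widetilde{\mcl{D}_i}$ in the $b(X) = d(X)$ branch of Claim \ref{claim:leading-coeff} when $\widetilde{\Sigma_i} = \widetilde{\Sigma_k}$, i.e., when $i \leq j$. Everything reduces to careful bookkeeping of degrees and leading coefficients.

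First, I would verify that the operators $\widetilde{\mcl{D}_i}^{2^{k+i-1}}$ for $i = 1, \ldots, k-1$ eliminate the $i$-th term of $\widetilde{F}$ via Corollary \ref{corollary:null-operator}. By Claim \ref{claim:leading-coeff}, each of the preceding operators raises the polynomial factor's degree by at most its own exponent. So by the time $\widetilde{\mcl{D}_i}^{2^{k+i-1}}$ hits the $i$-th term, the polynomial factor has degree at most $(2^k - 1) + 2^k + 2^{k+1} + \cdots + 2^{k+i-2} = 2^{k+i-1} - 1$, which is just below the threshold at which Corollary \ref{corollary:null-operator} triggers. Consequently the $2^{k+i-1}$-fold application annihilates the $i$-th term, and only the $k$-th term of $\widetilde{F}$ survives.

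Next, I would iteratively apply Claim \ref{claim:leading-coeff} to the surviving term $\widetilde{w_k} e^{\widetilde{\mu_k}(X)y + \frac{1}{2}\widetilde{\Sigma_k}(X)y^2}$ to assemble the leading coefficient. Treating $\Sigma_i$ and $\widetilde{\Sigma_i}$ as formal polynomials that are equal exactly according to the claim's equality pattern, the operators $\mcl{D}_k^{2^{k-1}} \cdots \mcl{D}_1^{1}$ fall into the $b(X) \neq d(X)$ branch and produce a polynomial of degree $2^k - 1$ with leading coefficient $\widetilde{w_k}\prod_{i=1}^k (\widetilde{\Sigma_k}(X) - \Sigma_i(X))^{2^{i-1}}$. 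The operators $\widetilde{\mcl{D}_i}^{2^{k+i-1}}$ for $i = 1, \ldots, j$ then fall into the $b(X) = d(X)$ branch because $\widetilde{\Sigma_i} = \widetilde{\Sigma_k}$, leaving the degree unchanged and multiplying the leading coefficient by $(\widetilde{\mu_k}(X) - \widetilde{\mu_i}(X))^{2^{k+i-1}}$. Finally, for $i = j+1, \ldots, k-1$ we have $\widetilde{\Sigma_i} \neq \widetilde{\Sigma_k}$, so $\widetilde{\mcl{D}_i}^{2^{k+i-1}}$ is back in the $b(X) \neq d(X)$ branch, raising the degree by $2^{k+i-1}$ and multiplying the leading coefficient by $(\widetilde{\Sigma_k}(X) - \widetilde{\Sigma_i}(X))^{2^{k+i-1}}$.

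At this stage the polynomial factor has total degree $(2^k - 1) + \sum_{i=j+1}^{k-1} 2^{k+i-1} = 2^{2k-1} + 2^k - 2^{k+j} - 1$, which exactly matches the exponent on the outermost $\widetilde{\mcl{D}_k}$. Because $\widetilde{\mcl{D}_k}$'s parameters coincide with those of the remaining exponential, each application now falls under Claim \ref{claim:degree-reduction}, reducing the polynomial's degree by one while multiplying the leading coefficient by the current degree. After the required number of applications, the polynomial collapses to its leading coefficient scaled by $(2^{2k-1} + 2^k - 2^{k+j} - 1)!$; evaluating at $y = 0$ isolates this constant, and absorbing the factorial into $C_k$ (which depends only on $k$ since $j < k$) yields the claimed expression. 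The main obstacle is arithmetic bookkeeping: correctly classifying each operator application by the branch of Claim \ref{claim:leading-coeff} it falls into based on the equality pattern, and confirming that the exponent $2^{2k-1} - 1 - 2^k - \cdots - 2^{k+j-1}$ on $\widetilde{\mcl{D}_k}$ was chosen precisely to consume the accumulated degree.
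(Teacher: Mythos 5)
Your proposal is correct and follows exactly the approach the paper intends: the paper itself does not spell out a proof for this claim, merely writing ``Similar to Claim \ref{claim:elimination1} and Claim \ref{claim:elimination2}, we get\ldots'', and your bookkeeping of degrees and leading coefficients via Claims \ref{claim:degree-reduction} and \ref{claim:leading-coeff} and Corollary \ref{corollary:null-operator} is the natural way to fill in that gap. Two very minor quibbles that do not affect correctness: the phrase ``treating $\Sigma_i$ and $\widetilde{\Sigma_i}$ as formal polynomials that are equal according to the claim's equality pattern'' is slightly misleading since $\Sigma_i(X)$ (constant coefficients) and $\widetilde{\Sigma_i}(X)$ (indeterminate coefficients) are never equal as formal polynomials --- which you in fact use, correctly concluding all $\mcl{D}_i$ fall in the $b(X)\neq d(X)$ branch; and the factorial constant $(2^{2k-1}+2^k-2^{k+j}-1)!$ does depend on $j$ as well as $k$, but since $j<k$ it is bounded by a function of $k$, matching the paper's informal ``$C_k$.''
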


\begin{claim}\label{claim:general-elimination2}
Let $\{\Sigma_{1}, \dots , \Sigma_{j} \}$ ($j < k$) be the set of covariance matrices that are equal to $\Sigma_k$.  Note this also implies $\{ \widetilde{\Sigma_{1}}, \dots , \widetilde{\Sigma_{j}} \}$ are precisely the subset of $\{\widetilde{\Sigma_i} \}$ that are equal to $\widetilde{\Sigma_k}$. Write
\[
\mcl{D}_{k}^{2^{2k-1}  - 1 - 2^k - \dots - 2^{k + j - 1}}\mcl{D}_{k-1}^{2^{2k-2}} \dots \mcl{D}_{1}^{2^k} \widetilde{\mcl{D}_k}^{2^{k-1}} \dots \widetilde{\mcl{D}_1}^{1}(F)
\]
as a formal power series in $y$.  Its evaluation at $y = 0$ is 
\[
C_k\widetilde{w_k}\prod_{i=1}^k (\Sigma_k(X) - \widetilde{\Sigma_i}(X) )^{2^{i-1}}\prod_{i=1}^{j}(\mu_k(X) - \mu_i(X))^{2^{k+i-1}}\prod_{i=j + 1}^{k-1}(\Sigma_k(X) - \Sigma_i(X) )^{2^{k+i-1}}
\]
where $C_k$ is a constant depending only on $k$.
\end{claim}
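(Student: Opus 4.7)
The plan is to mirror the templates of Claims \ref{claim:elimination1}, \ref{claim:elimination2}, and \ref{claim:general-elimination1}, exploiting that the differential operators act linearly, so $F = \sum_{i=1}^k w_i e^{\mu_i(X)y + \frac{1}{2}\Sigma_i(X)y^2}$ can be processed summand by summand. The goal is to show that every summand with $i<k$ is annihilated, and that the surviving $k$-th summand collapses to a constant multiple of $e^{\mu_k(X)y + \frac{1}{2}\Sigma_k(X)y^2}$ whose constant is the claimed expression; evaluating at $y=0$ then finishes.

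For each $i<k$, the prefactor of the $i$-th summand begins at degree $0$. By Claim \ref{claim:leading-coeff} the $k$ operators $\widetilde{\mcl{D}_t}^{2^{t-1}}$ for $t=1,\dots,k$ raise the degree by at most $\sum_{t=1}^{k}2^{t-1}=2^k-1$. Then the operators $\mcl{D}_1^{2^k},\dots,\mcl{D}_{i-1}^{2^{k+i-2}}$ raise the degree by at most $\sum_{t=1}^{i-1}2^{k+t-1}=2^{k+i-1}-2^k$. So just before $\mcl{D}_i^{2^{k+i-1}}$ is applied, the prefactor has degree at most $2^{k+i-1}-1$, and Corollary \ref{corollary:null-operator} annihilates this summand.

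For the surviving $k$-th summand, I track the leading coefficient using Claim \ref{claim:leading-coeff} at each step. Since $\Sigma_k(X)\neq\widetilde{\Sigma_t}(X)$ as formal polynomials in the indeterminates, each $\widetilde{\mcl{D}_t}^{2^{t-1}}$ multiplies the leading coefficient by $(\Sigma_k(X)-\widetilde{\Sigma_t}(X))^{2^{t-1}}$ and raises the degree by $2^{t-1}$, giving after $t=1,\dots,k$ a prefactor of degree $2^k-1$ with leading coefficient $w_k\prod_{t=1}^k(\Sigma_k-\widetilde{\Sigma_t})^{2^{t-1}}$. Next, for $i=1,\dots,j$ we have $\Sigma_i=\Sigma_k$, so Claim \ref{claim:leading-coeff} (equal-$b$ case) says $\mcl{D}_i^{2^{k+i-1}}$ preserves the degree and multiplies the leading coefficient by $(\mu_k(X)-\mu_i(X))^{2^{k+i-1}}$; for $i=j+1,\dots,k-1$ the degree rises by $2^{k+i-1}$ and the leading coefficient gains a factor $(\Sigma_k(X)-\Sigma_i(X))^{2^{k+i-1}}$. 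The resulting prefactor has degree
\[
(2^k-1) + \sum_{i=j+1}^{k-1}2^{k+i-1} \;=\; 2^{2k-1}+2^k-2^{k+j}-1,
\]
which is exactly the exponent $N := 2^{2k-1}-1-(2^k+2^{k+1}+\dots+2^{k+j-1})$ placed on $\mcl{D}_k$.

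Finally, applying $\mcl{D}_k^{N}$ to the $k$-th summand, Claim \ref{claim:degree-reduction} shows that each application lowers the prefactor degree by one and scales the leading coefficient by the current degree. After $N$ applications the prefactor collapses to the constant $N!$ times the leading coefficient computed above, multiplied by $e^{\mu_k(X)y+\frac{1}{2}\Sigma_k(X)y^2}$. Setting $y=0$ kills the exponential and gives the stated identity with $C_k=N!$. The only genuinely delicate point is this degree accounting: the combined growth $2^k-1$ from the $\widetilde{\mcl{D}}$ operators plus $\sum_{i=j+1}^{k-1}2^{k+i-1}$ from the separating $\mcl{D}_i$'s must match $N$ exactly, which is precisely why the exponent on $\mcl{D}_k$ was chosen to be $2^{2k-1}-1-2^k-\dots-2^{k+j-1}$; the rest is bookkeeping analogous to the preceding claims.
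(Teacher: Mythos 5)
Your proof is correct and follows the same template the paper uses for Claims~\ref{claim:elimination1}--\ref{claim:elimination2} and \ref{claim:general-elimination1}: process $F$ summand-by-summand, use Corollary~\ref{corollary:null-operator} with the degree budget $2^k-1+\sum_{t<i}2^{k+t-1}=2^{k+i-1}-1$ to kill the $i$-th term for $i<k$, track the leading coefficient of the surviving $k$-th term with Claim~\ref{claim:leading-coeff} (distinguishing the equal-covariance indices $i\le j$, which contribute mean differences and leave the degree fixed, from $i>j$, which contribute covariance differences and raise the degree), and finally collapse with Claim~\ref{claim:degree-reduction} to get $C_k=N!$. One small observation your derivation surfaces: since the operator here acts on $F$ (not $\widetilde F$), the surviving weight is $w_k$, not $\widetilde{w_k}$; the $\widetilde{w_k}$ appearing in the claim statement is a typo, consistent with Claim~\ref{claim:elimination2} which correctly has $w_k$.
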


Now we can repeat the arguments in Lemmas \ref{lemma:key-bound1}, \ref{lemma:key-bound2}, \ref{lemma:trace-bound},\ref{lemma:span}, \ref{lemma:solve-covariance}.  The key point is that the constraints in our SOS program give explicit values for 
\begin{align*}
\norm{v(\widetilde{\mu_i}(X) - \widetilde{\mu_j}(X))}^2 \\
\norm{v(\widetilde{\Sigma_i}(X) - \widetilde{\Sigma_j}(X))}^2
\end{align*}
in terms of $A_i, B_i$ (which are explicit real numbers). We can then repeat the arguments in Section \ref{sec:solve-means} (with appropriate modifications to the number of times we apply each differential operator) to find the means.

\section{Robust Moment Estimation}\label{sec:moment-est}
In Section \ref{sec:close-case}, we showed how to learn the parameters of a mixture of Gaussians $\mcl{M}$ with components that are not too far apart when we are given estimates for the Hermite polynomials.  In this section, we show how to estimate the Hermite polynomials from an $\eps$-corrupted sample.  Putting the results together, we will get a robust learning algorithm in the case when the components are not too far apart.  

While the closeness of components in Section \ref{sec:close-case} is defined in terms of parameter distance, we will need to reason about TV-distance between components in order to integrate our results into our full learning algorithm.  We begin with a definition.
\begin{definition}\label{def:well-conditioned}
We say a mixture of Gaussians $w_1G_1 + \dots  + w_kG_k$ is $\delta$-well-conditioned if
\begin{enumerate}
    \item Let $\mcl{G}$ be the graph on $[k]$ obtained by connecting two nodes $i,j$ if $d_{\TV}(G_i, G_j) \leq 1 - \delta$.  Then $\mcl{G}$ is connected
    \item $d_{\TV}(G_i, G_j) \geq \delta $ for all $i \neq j$
    \item $w_{\min} \geq \delta$
\end{enumerate}
\end{definition}

The main theorem that we will prove in this section is as follows.
\begin{theorem}\label{thm:full-close-case}
There is a function $f(k) > 0$ depending only on $k$ such that given an $\eps$-corrupted sample from a $\delta$-well-conditioned mixture of Gaussians 
\[
\mcl{M} = w_1N(\mu_1, \Sigma_1) + \dots + w_kN(\mu_k, \Sigma_k)
\]
where $ \delta \geq \eps^{f(k)}$, there is a polynomial time algorithm that outputs a set of $(1/\eps)^{O_k(1)}$ candidate mixtures $\{  \widetilde{w_1N}(\widetilde{\mu_1}, \widetilde{\Sigma_1}) + \dots + \widetilde{w_k}N(\widetilde{\mu_k}, \widetilde{\Sigma_k} \}$ and with high probability, at least one of them satisfies that for all $i$:
\[
\abs{w_i - \widetilde{w_i}} + d_{\TV}(N(\mu_i, \Sigma_i), N(\widetilde{\mu_i}, \widetilde{\Sigma_i})) \leq \poly(\eps)
\]
\end{theorem}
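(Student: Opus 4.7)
The plan is to reduce Theorem \ref{thm:full-close-case} to an application of Theorem \ref{thm:main-close-case} via three preprocessing steps: (i) place the mixture in approximately isotropic position, (ii) robustly estimate its Hermite polynomials to accuracy $\poly(\eps)$, and (iii) translate the well-conditioning hypotheses into the parameter bounds $\Delta, c, w_{\min}$ required by Theorem \ref{thm:main-close-case}. Afterwards, convert the resulting parameter-space guarantee into a TV-distance guarantee.

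First, I would run an off-the-shelf robust mean and covariance estimator (e.g.\ from \cite{diakonikolas2019robust}) on the corrupted sample to obtain $\widehat{\mu}, \widehat{\Sigma}$ approximating the overall mean and covariance of $\mcl{M}$, then apply the affine map $x \mapsto \widehat{\Sigma}^{-1/2}(x - \widehat{\mu})$ to all samples. Since $w_{\min} \geq \delta$ and no component is too flat, the overall covariance of $\mcl{M}$ has condition number bounded by $\poly(1/\delta)$, so the map is well-defined, preserves pairwise TV distances between components, and places the mixture into approximately isotropic position with $\norm{\mu_i}_2, \norm{\Sigma_i}_2 \leq \poly(1/\delta)$.

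Next, for robust Hermite polynomial estimation I would follow the approach of \cite{kane2020robust}. By Definition \ref{def:Hermite-final}, each monomial coefficient of $h_{m,\mcl{M}}(X)$ is a scalar expectation of the form $\E_{x \sim \mcl{M}}[q(x)]$ for a specific low-degree polynomial $q$. After isotropization, the variance of each such $q(x)$ is bounded by $\poly(1/\delta)$, so running a standard robust scalar mean estimator on the corrupted evaluations yields an estimate accurate to $\poly(\eps)$. Repeating this for every coefficient of every $h_m$ with $m \leq F(k)$ produces estimates $\overline{h_m}$ satisfying $\norm{v(\overline{h_m} - h_m)}^2 \leq \poly(\eps) = \eps'$. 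To translate the well-conditioning hypotheses into parameter-space separation, I would use that the TV graph being connected with threshold $1-\delta$ means any two components are joined by a chain of length at most $k$ along which consecutive TV distances are at most $1-\delta$, which for Gaussians with condition number $\poly(1/\delta)$ forces parameter distances along each edge of at most $\poly(1/\delta)$, telescoping to give the overall bound $\Delta$; the pairwise lower bound $d_{\TV}(G_i, G_j) \geq \delta$ and the bounded condition number give a parameter separation $c \geq \poly(\delta)$ by standard Gaussian TV inequalities.

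With these preprocessing steps, the hypotheses of Theorem \ref{thm:main-close-case} are met with $\eps' = \poly(\eps)$, so its algorithm outputs $\poly(1/\eps)^{O(1)}$ candidate mixtures at least one of which is $\poly(\eps)$-close in parameter distance to the true parameters (under the isotropizing transformation). Because TV distance between Gaussians whose covariances have bounded condition number is upper bounded by a polynomial in the parameter distance, and TV distance is preserved under the inverse affine map, the winning candidate satisfies the desired TV-closeness conclusion. The main obstacle will be carefully bookkeeping the polynomial exponents across the chain of reductions: each step (isotropization, Hermite estimation, the chain-of-components argument, and parameter-to-TV conversion) loses a polynomial factor, and we must ensure the choice of $f(k)$ in the hypothesis $\delta \geq \eps^{f(k)}$ is small enough that all these polynomial losses still yield the final $\poly(\eps)$ bound on the output.
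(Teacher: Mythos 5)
Your high-level plan — isotropize via robust mean/covariance estimation, estimate the Hermite polynomials, translate the well-conditioning hypotheses into the parameter bounds $\Delta, c, w_{\min}$ via a chain-of-components argument, feed everything to Theorem~\ref{thm:main-close-case}, then convert parameter error to TV via Claim~\ref{claim:parameter-dist} and undo the affine map — matches the paper's proof structure (Lemma~\ref{lem:isotropic-position}, Corollary~\ref{corollary:isotropic-position}, Lemma~\ref{lem:parameter-closeness}, Corollary~\ref{corollary-indiv-closeness}, Lemma~\ref{lemma:estimate-hermite}).

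However, your Hermite estimation step has a genuine gap. You propose estimating each monomial coefficient of $h_{m,\mcl{M}}$ separately with a \emph{scalar} robust mean estimator, noting that each coordinate has variance $\poly(1/\delta)$, and then concluding $\norm{v(\overline{h_m} - h_m)}^2 \leq \poly(\eps)$. This does not follow. Scalar robust mean estimation on a bounded-variance distribution incurs an irreducible error of order $\sqrt{\text{variance}\cdot\eps}$ per coordinate (a bias that does not shrink with more samples), and the coefficient vector of $h_m$ lives in dimension $\Theta(d^m)$. So the resulting $L^2$ error of the vectorized estimate is $\Theta\!\left(d^{m/2}\sqrt{\poly(1/\delta)\,\eps}\right)$, which is not $\poly(\eps)$ and destroys exactly the dimension-independence the theorem is after. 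The fix — and what the paper actually does — is to treat $v_X(H_m(X,z))$ as a single high-dimensional random vector over $z\sim\mcl{M}$, bound the \emph{operator norm} of its second-moment matrix by $\poly(1/\delta)$ (this is Lemma~\ref{lemma:estimate-hermite}), and then apply a multivariate robust mean estimator for bounded-covariance distributions (Theorem 2.2 of \cite{kane2020robust}), whose $L^2$ error is $O(\sqrt{\text{operator norm}\cdot\eps})$ with no dependence on the ambient dimension $d^{O(m)}$. The same distinction applies to isotropization: the paper does not use an off-the-shelf single-Gaussian estimator but a bounded-covariance estimator applied to paired differences $Y\sim \mcl{M}-\mcl{M}'$, precisely because the uncorrupted data is a mixture rather than a single Gaussian.
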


\subsection{Distance between Gaussians}
As mentioned earlier, we will first introduce a few tools for relating parameter distance and TV distance between Gaussians.

The following is a standard fact.
\begin{claim}\label{claim:parameter-dist}
For two Gaussians $N(\mu_1, \Sigma_1), N(\mu_2, \Sigma_2)$
\[
d_{\TV}(N(\mu_1, \Sigma_1), N(\mu_2, \Sigma_2)) = O\left( \left( (\mu_1 - \mu_2)^T\Sigma_1^{-1}(\mu_1 - \mu_2)\right)^{1/2} + \norm{\Sigma_1^{-1/2}\Sigma_2\Sigma_1^{-1/2} - I}_F\right)
\]
\end{claim}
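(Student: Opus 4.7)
The plan is to combine the explicit KL-divergence formula for Gaussians with Pinsker's inequality. Since $d_{\TV} \leq 1$, the bound is vacuous when its right-hand side exceeds a fixed constant, so I may assume both $(\mu_1-\mu_2)^T\Sigma_1^{-1}(\mu_1-\mu_2)$ and $\norm{M - I}_F^2$ are at most, say, $1/4$, where I abbreviate $M := \Sigma_1^{-1/2}\Sigma_2\Sigma_1^{-1/2}$.

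Next I would write down the standard closed form
\[
2 \cdot \mathrm{KL}\bigl(N(\mu_2,\Sigma_2) \,\|\, N(\mu_1,\Sigma_1)\bigr) = \Tr(\Sigma_1^{-1}\Sigma_2) - d + \log\frac{\det(\Sigma_1)}{\det(\Sigma_2)} + (\mu_1-\mu_2)^T\Sigma_1^{-1}(\mu_1-\mu_2).
\]
Since $\Sigma_1^{-1}\Sigma_2$ is similar to the symmetric matrix $M$, its trace equals $\Tr(M)$ and its determinant equals $\det(M)$. Denoting the eigenvalues of $M$ by $\lambda_1,\dots,\lambda_d$, I obtain $\Tr(M) - d - \log\det(M) = \sum_i (\lambda_i - 1 - \log\lambda_i)$, while $\norm{M - I}_F^2 = \sum_i (\lambda_i - 1)^2$ by unitary invariance of the Frobenius norm. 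The smallness assumption forces each $|\lambda_i - 1| \leq 1/2$, and on this window the elementary Taylor estimate yields $\lambda - 1 - \log\lambda \leq (\lambda - 1)^2$. Summing gives $\Tr(M) - d - \log\det(M) \leq \norm{M - I}_F^2$.

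Plugging back in and invoking Pinsker's inequality $d_{\TV} \leq \sqrt{\mathrm{KL}/2}$ produces
\[
d_{\TV}\bigl(N(\mu_1,\Sigma_1), N(\mu_2,\Sigma_2)\bigr) \leq \frac{1}{2}\sqrt{(\mu_1-\mu_2)^T\Sigma_1^{-1}(\mu_1-\mu_2) + \norm{M - I}_F^2},
\]
and $\sqrt{a^2 + b^2} \leq a + b$ for nonnegative $a,b$ separates the two terms and delivers the claimed $O(\cdot)$ bound. There is no genuine obstacle here; the only quantitative input beyond the closed-form KL is the scalar inequality $\lambda - 1 - \log\lambda \leq (\lambda - 1)^2$ near $\lambda = 1$, which is precisely what forced us to dispatch the ``RHS large'' case separately at the outset.
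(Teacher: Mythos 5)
Your proof is correct. The paper does not actually prove Claim~\ref{claim:parameter-dist}; it simply cites Fact~2.1 of Kane (your reference \cite{kane2020robust}). Your KL--plus--Pinsker argument is the standard self-contained route to that fact, and the steps check out: the dispatch of the ``large RHS'' case via $d_{\TV}\leq 1$ is exactly what is needed so that the scalar inequality $\lambda - 1 - \log\lambda \leq (\lambda-1)^2$ only has to hold on $[1/2, 3/2]$, where it does (one can verify $g(\lambda) := (\lambda-1)^2 - (\lambda - 1 - \log\lambda)$ satisfies $g(1)=g'(1)=0$, $g$ is convex for $\lambda \geq 1/\sqrt{2}$, and the remaining sliver $[1/2, 1/\sqrt{2}]$ is handled by direct evaluation since $g'(1/2) = 0$ and $g$ has no zero there). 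The only stylistic caution is that ``elementary Taylor estimate'' slightly undersells this step --- the bound is a genuine global inequality on the window rather than a local expansion, and a reader would appreciate seeing it justified as above --- but the claim is true and the conclusion follows as you wrote. Since the paper supplies no argument of its own, your proof fills a citation rather than deviating from one; there is nothing to reconcile.
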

\begin{proof}
See e.g. Fact 2.1 in \cite{kane2020robust}.
\end{proof}

Next, we will prove a bound in the opposite direction, that when Gaussians are not too far apart in TV distance, then their parameters also cannot be too far apart.
\begin{lemma}\label{lem:parameter-closeness}
Let $\mcl{M}$ be a mixture of $k$ Gaussians that is $\delta$-well conditioned.  Let $\Sigma$ be the covariance matrix of the mixture.  Then
\begin{enumerate}
    \item $\Sigma_i \leq  \poly(\delta)^{-1}\Sigma$ for all components of the mixture
    \item $\Sigma_i \geq \poly(\delta) \Sigma$ for all components of the mixture
    \item For any two components $i,j$, we have
    $\norm{\Sigma^{-1/2}(\mu_i - \mu_j)} \leq \poly(\delta)^{-1}$
    \item For any two components $i,j$, we have
    $ \norm{\Sigma^{-1/2}(\Sigma_i - \Sigma_j)\Sigma^{-1/2}}_2 \leq \poly(\delta)^{-1}$
    
\end{enumerate}
where the coefficients and degrees of the polynomials may depend only on $k$.  
\end{lemma}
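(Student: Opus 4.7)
The plan is to bootstrap from the pairwise TV hypotheses encoded by $\mcl{G}$ to quantitative parameter comparisons between all pairs of components, and then relate everything to the mixture covariance $\Sigma$ via the law of total variance.

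First I would establish the two-Gaussian fact underlying everything: if $d_{\TV}(N(\mu_1,\Sigma_1),N(\mu_2,\Sigma_2)) \leq 1-\delta$, then every eigenvalue of $\Sigma_1^{-1/2}\Sigma_2\Sigma_1^{-1/2}$ lies in $[\poly(\delta),\poly(1/\delta)]$ and $(\mu_1-\mu_2)^T\Sigma_1^{-1}(\mu_1-\mu_2) \leq \poly(1/\delta)$. The cleanest route is through the Bhattacharyya coefficient: the standard inequality $BC \geq 1-d_{\TV}$ gives $BC \geq \delta$, and for Gaussians $BC$ admits the closed form
\[
BC = \frac{\det(\Sigma_1)^{1/4}\det(\Sigma_2)^{1/4}}{\det((\Sigma_1+\Sigma_2)/2)^{1/2}}\exp\!\left(-\tfrac{1}{8}(\mu_1-\mu_2)^T\!\left(\tfrac{\Sigma_1+\Sigma_2}{2}\right)^{-1}\!(\mu_1-\mu_2)\right).
\]
Diagonalizing $\Sigma_2$ in the basis where $\Sigma_1 = I$, the determinant piece factorizes as $\prod_i f(\lambda_i)$ where the $\lambda_i$ are the eigenvalues of $\Sigma_1^{-1/2}\Sigma_2\Sigma_1^{-1/2}$ and $f(\lambda) = \lambda^{1/4}((1+\lambda)/2)^{-1/2}$. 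A short AM--GM check shows $f(\lambda) \leq 1$ for every $\lambda > 0$, so since $\prod_i f(\lambda_i) \geq \delta$ and every factor is at most $1$, each individual factor must itself be at least $\delta$. This forces each $\lambda_i$ into $[\poly(\delta),\poly(1/\delta)]$ in a dimension-independent way. The exponential piece then yields $(\mu_1-\mu_2)^T((\Sigma_1+\Sigma_2)/2)^{-1}(\mu_1-\mu_2) \leq 8\log(1/\delta)$, which converts into the $\Sigma_1^{-1}$ Mahalanobis bound using the eigenvalue comparability just obtained.

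Next I would extend these pairwise bounds from edges of $\mcl{G}$ to \emph{all} ordered pairs of components. Since $\mcl{G}$ is connected on $k$ vertices, any two components are joined by a path of length at most $k-1$, and chaining the previous step along the path loses only a $\poly(1/\delta)$ factor per edge, yielding overall bounds still $\poly(1/\delta)$ because $k$ is constant. For the mean comparison I would apply triangle inequality along the path in the norm $\norm{\Sigma_l^{-1/2}\cdot}$ for any fixed reference index $l$, using the covariance comparability to translate between the Mahalanobis metrics natural at consecutive edges.

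Finally I would compare to the mixture covariance using the law of total variance
\[
\Sigma \;=\; \sum_i w_i \Sigma_i \;+\; \tfrac{1}{2}\sum_{i,j} w_i w_j (\mu_i-\mu_j)(\mu_i-\mu_j)^T.
\]
For any fixed $l$, the first sum sits between $w_l \Sigma_l \succeq \delta\Sigma_l$ and $\poly(1/\delta)\Sigma_l$ by inter-component covariance comparability, while the second sum is a convex combination of rank-one terms each $\preceq \poly(1/\delta)\Sigma_l$ by the pairwise Mahalanobis mean bounds. Combining gives $\poly(\delta)\Sigma_l \preceq \Sigma \preceq \poly(1/\delta)\Sigma_l$, from which conclusions (1) and (2) are immediate. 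Conclusion (3) follows by substituting $\Sigma^{-1} \preceq \poly(1/\delta)\Sigma_i^{-1}$ into the pairwise Mahalanobis mean bound, and (4) by the triangle inequality applied to $\Sigma^{-1/2}(\Sigma_i-\Sigma_j)\Sigma^{-1/2}$ together with (1). The main obstacle is the first step: one has to notice the ``each factor is at most $1$'' trick to extract a dimension-independent lower bound on each $f(\lambda_i)$; a naive argument based solely on the product being $\geq \delta$ would only yield the useless bound $f(\lambda_i) \geq \delta^{1/d}$.
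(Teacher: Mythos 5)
Your Bhattacharyya-coefficient route is genuinely different from the paper's argument and is largely sound for items (1)--(3), but item (4) has a gap coming from the distinction between operator-norm and Frobenius-norm control.

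Your first step extracts, from each edge $(i,j)$ of $\mcl{G}$, only \emph{per-eigenvalue} bounds: each eigenvalue of $\Sigma_i^{-1/2}\Sigma_j\Sigma_i^{-1/2}$ lies in $[\poly(\delta),\poly(1/\delta)]$. After chaining along paths and comparing to $\Sigma$ via the law of total variance, this yields the PSD sandwich $\poly(\delta)\Sigma \preceq \Sigma_i \preceq \poly(1/\delta)\Sigma$ (items (1),(2)) and the Mahalanobis mean bound (item (3)) cleanly. But item (4) asserts $\norm{\Sigma^{-1/2}(\Sigma_i - \Sigma_j)\Sigma^{-1/2}}_2 \leq \poly(\delta)^{-1}$, where in this paper $\norm{\cdot}_2$ on matrices is the \emph{Frobenius} norm (matrices are flattened to vectors in the SOS machinery, and Claim~\ref{claim:parameter-dist} and the citations to \cite{kane2020robust} are all in $\norm{\cdot}_F$). ``Triangle inequality together with (1)'' only gives $\norm{\Sigma^{-1/2}(\Sigma_i - \Sigma_j)\Sigma^{-1/2}}_{\mathrm{op}} \leq \poly(1/\delta)$; the Frobenius norm could be a factor $\sqrt{d}$ larger, and the lemma's point is precisely that it is not. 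The fix is available within your framework but requires pushing the Bhattacharyya analysis further: from $\sum_i g(\lambda_i) \leq \log(1/\delta)$ with $g(\lambda) = -\log f(\lambda)$, use $g(1)=g'(1)=0$, $g''(1)>0$, and $g(\lambda) \geq c_0 > 0$ whenever $\lvert\lambda-1\rvert \geq 1/2$, together with your extreme-eigenvalue bound $\lambda_i \in [\poly(\delta),\poly(1/\delta)]$, to conclude $\sum_i (\lambda_i - 1)^2 \leq \poly(1/\delta)$, i.e., $\norm{I - \Sigma_i^{-1/2}\Sigma_j\Sigma_i^{-1/2}}_F \leq \poly(1/\delta)$ on each edge of $\mcl{G}$, then chain along paths. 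The paper instead proves (1)--(3) by a direction-by-direction projection argument (for a bad direction $v$, contradiction along a path in $\mcl{G}$) and handles (4) by citing a one-shot ``large Frobenius distance implies TV close to $1$'' fact from \cite{kane2020robust}; your approach, once (4) is repaired, yields all four from a single closed-form inequality and is arguably cleaner.

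Two minor points: the law-of-total-variance decomposition you wrote, $\Sigma = \sum_i w_i\Sigma_i + \tfrac12\sum_{i,j}w_iw_j(\mu_i-\mu_j)(\mu_i-\mu_j)^T$, is correct and the upper bound on the second term does go through because the Mahalanobis bound $(\mu_i-\mu_j)^T\Sigma_l^{-1}(\mu_i-\mu_j) \leq C$ implies $(\mu_i-\mu_j)(\mu_i-\mu_j)^T \preceq C\,\Sigma_l$ as a PSD inequality, which is the step worth saying explicitly. And your ``each factor $\geq \delta$'' trick is exactly the right observation for dimension independence at the operator-norm level.
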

\begin{proof}
The statements are invariant under linear transformations so without loss of generality let $\Sigma = I$. Assume for the sake of contradiction that the first condition is failed.  Then there is some direction $v$ such that say 
\[
v^T\Sigma_1 v \geq \delta^{-10k}
\]
There must be some $i \in [k]$ such that $v^T \Sigma_i v \leq 1$ since otherwise the variance  of the mixture in direction $v$ would be bigger than $1$.  Now we claim that $i$ and $1$ cannot be connected in $\mcl{G}$, the graph defined in Definition \ref{def:well-conditioned}.  To see this, if they were connected, then there must be two vertices $j_1, j_2$ that are consecutive along the path between $1$ and $i$ such that
\[
\frac{v^T\Sigma_{j_1}v}{v^T\Sigma_{j_2}v} \geq \delta^{-10}
\]
But then $d_{\TV}(G_{j_1}, G_{j_2}) \geq 1 - \delta$.  To see this, let $\sqrt{v^T\Sigma_{j_2}v} = c$.  We can project both Gaussians onto the direction $v$ and note that the Gaussian $G_{j_1}$ is spread over width $\delta^{-5}c$ whereas the Gaussian $G_{j_2}$ is essentially contained in a strip of width $O(\log 1/\delta)c$.
\\\\
Now we may assume that the first condition is satisfied.  Now we consider when the third condition is failed. Assume that
\[
\norm{(\mu_i - \mu_j)} \geq k\delta^{-20k}
\]
Now let $v$ be the unit vector in direction $\mu_i - \mu_j$.  Projecting the Gaussians $G_i, G_j$ onto direction $v$ and considering the path between them, we must find $j_1,j_2$ that are connected such that 
\[
\norm{(\mu_{j_1} - \mu_{j_2})} \geq \delta^{-20k}
\]
Now, using the fact that the first condition must be satisfied (i.e. $v^T\Sigma_{j_1} v, v^T\Sigma_{j_2} v \leq \delta^{-10k}$) we get that $d_{\TV}(G_{j_1}, G_{j_2}) \geq 1 - \delta$, a contradiction.
\\\\
Now we may assume that the first and third conditions are satisfied.  Assume now that the second condition is not satisfied.  Without loss of generality, there is some vector $v$ such that 
\[
v^T\Sigma_1 v \leq (\delta/k)^{10^2k}
\]
If there is some component $i$ such that 
\[
v^T\Sigma_i v \geq (\delta/k)^{50k}
\]
then comparing the Gaussians along the path between $i$ and $1$ in the graph $\mcl{G}$, we get a contradiction.  Thus, we now have 
\[
v^T\Sigma_i v \leq (\delta/k)^{50}
\]
for all components.  Note that the covariance of the entire mixture is the identity.  Thus, there must be two components with
\[
\abs{v \cdot \mu_i - v \cdot \mu_j} \geq \frac{1}{2k}.
\]
Taking the path between $i$ and $j$, we must be able to find two consecutive vertices $j_1, j_2$ such that 
\[
\abs{v \cdot \mu_{j_1} - v \cdot \mu_{j_2}} \geq \frac{1}{2k^2}.
\]
However, we then get $d_{\TV}(G_{j_1}, G_{j_2}) > 1 - \delta$, a contradiction.
\\\\
Now we consider when the first three conditions are all satisfied.  Using the first two conditions, we have bounds on the smallest and largest singular value of $\Sigma_i^{1/2}\Sigma_j^{-1/2}$ for all $i,j$.  Thus, 
\[
\norm{\Sigma_i - \Sigma_j}_2 \leq \poly(\delta)^{-1}\norm{I - \Sigma_i^{-1/2}\Sigma_j\Sigma_i^{-1/2}}_2
\]
for all $i,j$.  However if for some $i,j$ that are connected in $\mcl{G}$, we have 
 \[
\norm{(\Sigma_i - \Sigma_j)}_2 \geq (k/\delta)^{10^4}
\]
then we would have
\[
\norm{I - \Sigma_i^{-1/2}\Sigma_j\Sigma_i^{-1/2}}_2 \geq (k/\delta)^{10^3}
\]
and this would contradict the assumption that $d_{\TV}(G_i, G_j) \leq 1 - \delta$ (this follows from the same argument as in Lemma 3.2 of \cite{kane2020robust}).  Now using triangle inequality along each path, we deduce that for all $i,j$
\[
\norm{(\Sigma_i - \Sigma_j)}_2 \leq (k/\delta)^{10^5}
\]
completing the proof.
\end{proof}

As a corollary to the previous lemma, in a $\delta$-well conditioned mixture, all component means and covariances are close to the mean and covariance of the overall mixture.
\begin{corollary}\label{corollary-indiv-closeness}
Let $\mcl{M}$ be a mixture of $k$ Gaussians that is $\delta$-well conditioned.  Let $\mu,\Sigma$ be the mean and covariance matrix of the mixture.  Then we have for all $i$
\begin{itemize}
    \item   $\norm{\Sigma^{-1/2}(\mu - \mu_i)}_2 \leq \poly(\delta)^{-1}$
    \item    $\norm{\Sigma^{-1/2}(\Sigma - \Sigma_i)\Sigma^{-1/2}}_2 \leq \poly(\delta)^{-1}$
\end{itemize}
\end{corollary}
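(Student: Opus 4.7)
The plan is to bootstrap from the pairwise parameter bounds in Lemma \ref{lem:parameter-closeness} using the standard formulas that express the mean and covariance of a mixture as weighted combinations of the component parameters. Specifically, I will use
\[
\mu = \sum_{j=1}^k w_j \mu_j, \qquad \Sigma = \sum_{j=1}^k w_j \Sigma_j + \sum_{j=1}^k w_j (\mu_j - \mu)(\mu_j - \mu)^T,
\]
so that differences with a fixed component $i$ can be rewritten as weighted averages of pairwise differences.

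For the mean bound, I would write $\mu - \mu_i = \sum_j w_j(\mu_j - \mu_i)$, apply $\Sigma^{-1/2}$, and use the triangle inequality together with part (3) of Lemma \ref{lem:parameter-closeness}. Since there are only $k$ terms and each summand satisfies $\norm{\Sigma^{-1/2}(\mu_j - \mu_i)}_2 \leq \poly(\delta)^{-1}$, the resulting bound is still $\poly(\delta)^{-1}$ (possibly with worse exponent but polynomial in $1/\delta$ with constants depending only on $k$).

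For the covariance bound, I would combine the decomposition above to obtain
\[
\Sigma - \Sigma_i = \sum_j w_j (\Sigma_j - \Sigma_i) + \sum_j w_j (\mu_j - \mu)(\mu_j - \mu)^T,
\]
conjugate by $\Sigma^{-1/2}$, and control the two sums separately. The first sum is bounded term-by-term by part (4) of Lemma \ref{lem:parameter-closeness} and triangle inequality. The second sum is a weighted sum of rank-one positive semidefinite matrices whose spectral norms equal $\norm{\Sigma^{-1/2}(\mu_j - \mu)}_2^2$; each of these has already been bounded by $\poly(\delta)^{-1}$ by the mean part of the corollary applied with $\mu$ in place of $\mu_i$ (or, more directly, by plugging the mean decomposition into part (3) of the lemma).

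There is no real obstacle here; the proof is essentially a two-line computation reducing the component-vs-mixture bound to the component-vs-component bound via the linearity of the mixture's mean and the law-of-total-variance decomposition of its covariance, with only polynomial loss coming from the $k$-fold triangle inequalities and the quadratic dependence introduced by the $(\mu_j - \mu)(\mu_j - \mu)^T$ terms.
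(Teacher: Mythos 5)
Your proposal is correct and matches the paper's proof essentially verbatim: the paper also normalizes to $\Sigma = I$, $\mu = 0$, writes $\mu_i - \mu = \sum_j w_j(\mu_i - \mu_j)$, uses $\Sigma = \sum_j w_j(\Sigma_j + \mu_j\mu_j^T)$ to get $\Sigma - \Sigma_i = \sum_j w_j(\Sigma_j - \Sigma_i) + \sum_j w_j \mu_j\mu_j^T$, and then invokes Lemma \ref{lem:parameter-closeness} together with the already-proved mean bound. Your version just keeps the $(\mu_j - \mu)$ form explicit rather than normalizing $\mu = 0$ first, which is a cosmetic difference.
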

\begin{proof}
The statement is invariant under linear transformation so we may assume $\Sigma = I$ and $\mu = 0$.  Then noting 
\[
\mu_i = \mu + w_1(\mu_i - \mu_1) + \dots + w_k( \mu_i - \mu_k)
\]
and using Lemma \ref{lem:parameter-closeness}, we have proved the first part.  Now for the second part, note $\Sigma = \sum_{i=1}^k w_i(\Sigma_i + \mu_i\mu_i^T)$
and hence we have
\[
\Sigma = \Sigma_i +  w_1(\Sigma_1 - \Sigma_i) + \dots + w_k(\Sigma_k - \Sigma_i) + \sum_{i=1}^k w_i \mu_i \mu_i^T 
\]
and using Lemma \ref{lem:parameter-closeness} and the first part, we are done.

\end{proof}

\subsection{Hermite Polynomial Estimation}

Now we show how to estimate the Hermite polynomials of a $\delta$-well-conditioned mixture $\mcl{M}$ if we are given an $\eps$-corrupted sample (where $\delta \geq \eps^{f(k)}$ for some sufficiently small function $f(k) > 0$ depending only on $k$).  Our algorithm will closely mirror the algorithm in \cite{kane2020robust}.  

The first step will be to show that we can robustly estimate the mean and covariance of the mixture $\mcl{M}$ and then we will use these estimates to compute a linear transformation to place the mixture in isotropic position.
\begin{lemma}\label{lem:isotropic-position}
There is a sufficiently small function $f(k)$ depending only on $k$ such that given a $\eps$-corrupted sample from a $\delta$-well-conditioned mixture of Gaussians $\mcl{M}$ with true mean and covariance $\mu, \Sigma$ respectively, where $ \delta \geq \eps^{f(k)}$, then with high probability we can output estimates $\wh{\mu}$ and $\wh{\Sigma}$ such that 
\begin{enumerate}
    \item 
    $\norm{\Sigma^{-1/2}(\wh{\Sigma} - \Sigma)\Sigma^{-1/2}}_2 \leq  \eps^{\Omega_k(1)}$
    \item
    $ \norm{\Sigma^{-1/2}(\wh{\mu} - \mu)}_2 \leq \eps^{\Omega_k(1)}$
\end{enumerate}
\end{lemma}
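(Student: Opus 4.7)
The plan is to apply a standard robust mean and covariance estimation algorithm to the $\eps$-corrupted sample, after establishing that the mixture $\mcl{M}$ has certifiably bounded higher moments of constant degree. By linear equivariance of the statement, it suffices to verify the claim in coordinates in which $\mu = 0$ and $\Sigma = I$; in those coordinates the target bounds are simply $\norm{\wh{\mu}}_2 \leq \eps^{\Omega_k(1)}$ and $\norm{\wh{\Sigma} - I}_2 \leq \eps^{\Omega_k(1)}$. In these coordinates, Lemma \ref{lem:parameter-closeness} and Corollary \ref{corollary-indiv-closeness} together give $\norm{\mu_i}_2 \leq \poly(1/\delta)$ and $\poly(\delta)\cdot I \preceq \Sigma_i \preceq \poly(1/\delta)\cdot I$ for every component.

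Using these parameter bounds, I would then verify that for every constant $m$ and every direction $v$, the $m$-th absolute moment of $\mcl{M}$ in direction $v$ is at most $\poly(1/\delta)$, and that this bound admits a constant-degree SOS proof in $v$. Since each component is Gaussian, its $m$-th moment in direction $v$ can be expressed as a polynomial in $\mu_i\cdot v$ and $v^T\Sigma_i v$ using the Isserlis-type identities; applying the parameter bounds above then yields a constant-degree SOS certificate. Taking the convex combination over the $k$ components, with all mixing weights at most one, produces the desired SOS moment certificate for $\mcl{M}$ itself.

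With these certifiable moment bounds established, I would invoke a standard SOS-based robust mean and covariance estimation algorithm for distributions with certifiably bounded higher moments, of the kind used in \cite{kane2020robust} (following \cite{kothari2018robust}). Such an algorithm takes $\poly(d/\eps)$ $\eps$-corrupted samples from any distribution whose $O_k(1)$-th moments are certifiably bounded by some $B$, and outputs estimates whose error in the operator-norm sense of the lemma is at most $B^{O_k(1)}\cdot \eps^{1 - O_k(1)}$. Here $B = \poly(1/\delta)$, and the assumption $\delta \geq \eps^{f(k)}$ for a sufficiently small $f(k)$ lets the $\poly(1/\delta)$ prefactor be absorbed into $\eps^{\Omega_k(1)}$.

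The main obstacle is setting up the SOS moment certificates cleanly enough to plug into the existing robust estimation framework: one must check that the scalar bounds $\norm{\mu_i}_2 \leq \poly(1/\delta)$ and $\Sigma_i \preceq \poly(1/\delta)\cdot I$ translate to SOS inequalities of constant degree, and that the aggregation across the $k$ components preserves this structure. Once this bookkeeping is in place, the rest of the argument is a direct reduction to known robust mean/covariance estimation guarantees and standard concentration arguments showing that the empirical moments of $n = \poly(d/\eps)$ clean samples concentrate around the true ones to within $\eps^{\Omega_k(1)}$.
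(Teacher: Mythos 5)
Your proposal is correct but takes a genuinely different route from the paper. The paper's proof (which mirrors Proposition 4.1 of \cite{kane2020robust}) avoids SOS machinery entirely at this step: it uses the pairing trick, considering $Y \sim \mcl{M} - \mcl{M}'$ (difference of two independent samples), which is automatically centered and is a mixture of $O(k^2)$ Gaussians $G_i - G_j$. Then, with $\Sigma = I$ WLOG, it sets $Z = YY^T$ (flattened), uses Lemma \ref{lem:parameter-closeness} and Corollary \ref{corollary-indiv-closeness} to bound $\norm{\mu_i}_2$, $\norm{\Sigma_i}_{\textsf{op}}$, $\norm{\Sigma_i - \Sigma_j}_2$ by $\poly(1/\delta)$, uses these to show $\Cov(Z) \preceq \poly(1/\delta)\cdot I$ (i.e.\ a fourth-moment bound on $Y$, but a numeric one, not an SOS certificate), and then applies a filter-type robust mean estimator for bounded-covariance distributions (Theorem 2.4 in \cite{kane2020robust}) to estimate $\E[Z] = 2\Sigma$. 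The mean is estimated afterwards by first applying $\wh\Sigma^{-1/2}$. You instead propose proving SOS-certifiable absolute moment bounds for $\mcl{M}$ directly and invoking an SOS-based robust mean/covariance estimator à la \cite{kothari2018robust}. This works too: after the isotropy normalization, the absolute bound $\E[\langle X, v\rangle^{2m}] \leq \poly(1/\delta)\norm{v}^{2m}$ coincides with a certifiable-hypercontractivity-type bound, and because the Gaussian moment expansion has nonnegative coefficients in $(\mu_i\cdot v)^2$ and $v^T\Sigma_i v$, each term admits a constant-degree SOS proof via Cauchy–Schwarz and the spectral bound $\Sigma_i \preceq \poly(1/\delta) I$, and convex combination preserves SOS. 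What the paper's route buys you is less machinery (no need to set up certifiable-moment certificates for the estimation step; SOS is reserved for the learning algorithm proper) and automatic centering via the pairing trick; what your route buys you is a more uniform treatment with the rest of the paper's SOS toolkit. One small thing to spell out if you go your route: how to handle mean subtraction when estimating $\Sigma$ in the regime where $\mu \neq 0$ a priori (the pairing trick handles this for free; in your scheme you should either pair samples too or estimate and subtract the mean first).
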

\begin{proof}
This can be proven using a similar argument to Proposition 4.1 in \cite{kane2020robust}. First we will estimate the covariance of the mixture.  Note that the statement is invariant under linear transformation (and the robust estimation algorithtm that we will use, Theorem 2.4 in \cite{kane2020robust}, is also invariant under linear transformation), so it suffices to consider when $\Sigma = I$.  Let the components of the mixture be $G_1, \dots , G_k$.  Note that by pairing up our samples, we have access to a $2\eps$-corrupted sample from the distribution $\mcl{M} - \mcl{M'}$ (i.e. the difference of two independent samples from $\mcl{M}$).  For each such sample say $Y \sim \mcl{M} - \mcl{M'}$, $\Sigma =  0.5 \E[YY^T]$.  We will now show that $Z = YY^T$ where $Z$ is flattened into a vector, has bounded covariance.  Note that we can view $Y$ as being sampled from a mixture of $O(k^2)$ Gaussians $ G_i - G_j$ (where we may have $i = j$).  We now prove that 
\begin{itemize}
\item For $Y \sim G_i - G_j$ and $Z = YY^T$, $\E[Z \otimes Z] - \E[Z] \otimes \E[Z] \leq \poly(\delta)^{-1}I$ 
\item For $Y \sim G_i - G_j, Y' \sim G_{i'} - G_{j'}$ and $Z = YY^T, Z' = Y'Y'^T$, $\norm{\E[Z - Z']}_2^2 = \poly(\delta)^{-1}$
\end{itemize}
Using Lemma \ref{lem:parameter-closeness} and Corollary \ref{corollary-indiv-closeness}, we have $\poly(\delta)^{-1}$ bounds on $\norm{\mu_i}_2$, $\norm{\Sigma_i}_{\textsf{op}}$ and $\norm{\Sigma_i - \Sigma_j}_2$ for all $i,j$.  We can now follow the same argument as Proposition 4.1 in \cite{kane2020robust} to bound the above two quantities.  With these bounds, by Theorem 2.4 in \cite{kane2020robust}, we can robustly estimate the covariance.  Once we have an estimate for the covariance $\wh{\Sigma}$, we can apply the linear transformation $\wh{\Sigma}^{-1/2}$ and robustly estimate the mean (which now has covariance close to identity).    
\end{proof}

Using the above, we can place our mixture in isotropic position.  This mirrors Proposition 4.2 in \cite{kane2020robust}.

\begin{corollary}\label{corollary:isotropic-position}
There is a sufficiently small function $f(k)$ depending only on $k$ such that given a $\eps$-corrupted sample from a $\delta$-well-conditioned mixture of Gaussians $\mcl{M} = w_1G_1 + \dots + w_kG_k$ with mean and covariance $\mu, \Sigma$ where $ \delta \geq \eps^{f(k)}$, there is a polynomial time algorithm that with high probability outputs an invertible linear transformation $L$ so that
\begin{enumerate}
    \item $\norm{L(\mu)}_2 \leq \poly(\eps)$
    \item $\norm{I - L(\Sigma)}_2 \leq \poly(\eps)$
\end{enumerate}
\end{corollary}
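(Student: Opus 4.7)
The plan is to simply apply Lemma \ref{lem:isotropic-position} to obtain $\wh{\mu}$ and $\wh{\Sigma}$, and then take $L$ to be the affine map
\[
L(x) = \wh{\Sigma}^{-1/2}(x - \wh{\mu}).
\]
Under this map, the transformed mixture has mean $L(\mu) = \wh{\Sigma}^{-1/2}(\mu - \wh{\mu})$ and covariance $\wh{\Sigma}^{-1/2}\Sigma\wh{\Sigma}^{-1/2}$, so it remains to bound these two quantities by $\poly(\eps)$.

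For the covariance statement, I would introduce the auxiliary matrix $M := \Sigma^{-1/2}\wh{\Sigma}\Sigma^{-1/2}$, which by Lemma \ref{lem:isotropic-position} satisfies $\norm{M - I}_2 \leq \eps^{\Omega_k(1)}$. Choosing $\eps$ small enough (in terms of $k$) ensures that the eigenvalues of $M$ lie in, say, $[1/2, 3/2]$. Writing $\wh{\Sigma} = \Sigma^{1/2} M \Sigma^{1/2}$ gives $\wh{\Sigma}^{-1} = \Sigma^{-1/2} M^{-1} \Sigma^{-1/2}$, and one then notes that $\wh{\Sigma}^{-1/2}\Sigma\wh{\Sigma}^{-1/2}$ is similar to $\Sigma^{1/2}\wh{\Sigma}^{-1}\Sigma^{1/2} = M^{-1}$. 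Since these two symmetric matrices share eigenvalues and since $|1/\lambda - 1| \leq 2|\lambda - 1|$ for $\lambda \in [1/2, 3/2]$, we get
\[
\norm{I - \wh{\Sigma}^{-1/2}\Sigma\wh{\Sigma}^{-1/2}}_2 = \norm{I - M^{-1}}_2 \leq 2\norm{M - I}_2 \leq \eps^{\Omega_k(1)}.
\]

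For the mean statement, I would write
\[
\norm{L(\mu)}_2^2 = (\mu - \wh{\mu})^T \wh{\Sigma}^{-1} (\mu - \wh{\mu}) = (\mu - \wh{\mu})^T \Sigma^{-1/2} M^{-1} \Sigma^{-1/2} (\mu - \wh{\mu}) \leq \norm{M^{-1}}_2 \cdot \norm{\Sigma^{-1/2}(\mu - \wh{\mu})}_2^2,
\]
and then invoke the two conclusions of Lemma \ref{lem:isotropic-position}: $\norm{M^{-1}}_2 \leq 2$ (from the covariance bound) and $\norm{\Sigma^{-1/2}(\mu - \wh{\mu})}_2 \leq \eps^{\Omega_k(1)}$. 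Multiplying these gives $\norm{L(\mu)}_2 \leq \eps^{\Omega_k(1)}$ as required.

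There is essentially no obstacle here beyond the bookkeeping of translating between the $\Sigma^{-1/2}$-normalized bounds supplied by Lemma \ref{lem:isotropic-position} and the $\wh{\Sigma}^{-1/2}$-normalized statements in the corollary; the use of the intermediate matrix $M$ and the similarity between $\wh{\Sigma}^{-1/2}\Sigma\wh{\Sigma}^{-1/2}$ and $M^{-1}$ handles this cleanly. All remaining steps are polynomial-time (computing $\wh{\Sigma}^{-1/2}$ is standard linear algebra), so the algorithmic claim follows immediately from that of Lemma \ref{lem:isotropic-position}.
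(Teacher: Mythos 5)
Your proposal is correct and is exactly the paper's approach: the paper obtains $\wh{\mu},\wh{\Sigma}$ from Lemma \ref{lem:isotropic-position}, takes $L(x) = \wh{\Sigma}^{-1/2}(x-\wh{\mu})$, and states that the two bounds follow ``from direct computation.'' Your write-up simply spells out that direct computation (via the auxiliary matrix $M = \Sigma^{-1/2}\wh{\Sigma}\Sigma^{-1/2}$ and the similarity argument), and the algebra checks out.
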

\begin{proof}
We can first obtain estimates $\wh{\mu}$ and $\wh{\Sigma}$ using Lemma \ref{lem:isotropic-position}.  We can then apply the linear transformation 
\[
L(x) = \wh{\Sigma}^{-1/2}(x - \wh{\mu})
\]
It follows from direct computation that this transformation satisfies the desired properties.
\end{proof}

Once our mixture is placed in isotropic position, we will estimate the Hermite polynomials and then we will be able to use Theorem \ref{thm:main-close-case}.  The following lemma can be easily derived from the results in \cite{kane2020robust} (see Lemmas 2.7,2.8 and 5.2  there).  

\begin{lemma}\label{lemma:estimate-hermite}
Let $\mcl{M}$ be a mixture of Gaussians $w_1 N(\mu_1, I + \Sigma_1) + \dots + w_k N(\mu_k, I + \Sigma_k)$.  Then
\[
\norm{\E_{z \sim \mcl{M}}\left(v_X(H_{m}(X.z)) \otimes v_X(H_{m}(X.z)) \right)]}_2 = O_m(1 + \max{\norm{\Sigma_i}}_2 + \max{\norm{\mu_i}})^{2m}
\]
where $H_{m}(X,z)$ is defined as in definition \ref{def:Hermite-two-var} and $v_X(H_{m}(X.z))$ denotes vectorizing as a polynomial in $X$ so that the entries of the vector are polynomials in $z$.
\end{lemma}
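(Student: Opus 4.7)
The plan is to use the Hermite expansion of $H_m(X,z)$ in the $X$-variable to reduce the operator-norm bound to a polynomial norm comparison under a Gaussian measure. Since the matrix $M := \E_{z \sim \mcl{M}}[v_X(H_m(X,z)) \otimes v_X(H_m(X,z))]$ is a convex combination, over the components of $\mcl{M}$, of PSD matrices of the same form for single Gaussians $N(\mu_i, I+\Sigma_i)$, its operator norm is at most the max of the individual component norms. Hence I may assume $\mcl{M}$ is a single Gaussian $z \sim N(\mu, I+\Sigma)$ and set $R := 1 + \|\mu\| + \|\Sigma\|_2$.

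The key structural identity comes from factoring the generating function coordinate by coordinate:
\[
e^{(z \cdot X) t - \|X\|^2 t^2/2} \;=\; \prod_{i=1}^d e^{z_i X_i t - X_i^2 t^2/2} \;=\; \prod_{i=1}^d \sum_{k_i \geq 0} \frac{X_i^{k_i}\, \mcl{H}_{k_i}(z_i)}{k_i!}\, t^{k_i}.
\]
Matching the coefficient of $t^m$ with the expansion $\sum_m H_m(X,z)\, t^m / m!$ from Claim~\ref{claim:hermite-identity} yields
\[
H_m(X,z) \;=\; \sum_{|\alpha| = m} \binom{m}{\alpha}\, \mcl{H}_\alpha(z)\, X^\alpha,
\]
where $\mcl{H}_\alpha(z) = \prod_i \mcl{H}_{\alpha_i}(z_i)$ is the multivariate Hermite polynomial. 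In particular, the $\alpha$-th entry of $v_X(H_m(X,z))$ is exactly $\binom{m}{\alpha}\mcl{H}_\alpha(z)$.

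For any unit vector $u$ (necessarily supported on degree-$m$ multi-indices since $H_m$ is homogeneous of degree $m$ in $X$), define $\tilde q(z) := \sum_\alpha u_\alpha \binom{m}{\alpha}\mcl{H}_\alpha(z)$. Then $u^T M u = \E_z[\tilde q(z)^2]$, and by orthogonality of the $\mcl{H}_\alpha$ under the standard Gaussian measure $\gamma$,
\[
\|\tilde q\|_{L^2(\gamma)}^2 \;=\; \sum_\alpha u_\alpha^2 \binom{m}{\alpha}^2 \alpha! \;\leq\; (m!)^2 \sum_\alpha u_\alpha^2 \;=\; (m!)^2,
\]
which is dimension-independent. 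It therefore suffices to prove the norm comparison
\[
\E_{z \sim N(\mu, I+\Sigma)}[p(z)^2] \;\leq\; O_m(R^{2m})\,\|p\|_{L^2(\gamma)}^2
\]
for any polynomial $p$ of degree at most $m$.

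The norm comparison is the main technical ingredient and the place I expect the most care to be required. Its proof substitutes $z = (I+\Sigma)^{1/2} Y + \mu$ with $Y \sim N(0,I)$ and expands $p((I+\Sigma)^{1/2} Y + \mu)$ in the Hermite basis in $Y$, using the shift identity $\mcl{H}_n(x + a) = \sum_k \binom{n}{k} a^{n-k} \mcl{H}_k(x)$ coordinatewise together with the Hermite re-expansion of $\mcl{H}_n(\sigma U)$ in $\{\mcl{H}_k(U)\}$. All transition coefficients between the two Hermite bases can be bounded by $O_m(R^m)$, and the orthogonality of $\{\mcl{H}_\alpha\}$ decouples cross-terms so that no dimension factor enters. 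This is precisely the Gaussian moment calculation already carried out in Lemmas~2.7, 2.8, and 5.2 of \cite{kane2020robust} cited in the statement; combining it with the reduction above gives $u^T M u \leq O_m(R^{2m})$ uniformly in unit $u$, which is the desired operator-norm bound.
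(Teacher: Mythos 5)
Your proposal is correct, and at bottom it is the same route the paper takes: reduce to a Gaussian $L^2$-moment comparison and invoke the calculations from Kane (Lemmas 2.7, 2.8, 5.2). The paper gives essentially no detail beyond the citation and a remark about tensorization versus vectorization; your reduction is substantially more explicit and worth recording. In particular, the coordinatewise factoring of the generating function $e^{(z\cdot X)t - \|X\|^2 t^2/2} = \prod_i e^{z_iX_it - X_i^2t^2/2}$ to obtain $H_m(X,z) = \sum_{|\alpha|=m}\binom{m}{\alpha}\mcl{H}_\alpha(z)X^\alpha$, the passage to a unit vector $u$ and the polynomial $\tilde q(z)=\sum_\alpha u_\alpha\binom{m}{\alpha}\mcl{H}_\alpha(z)$, and the observation $\|\tilde q\|_{L^2(\gamma)}^2 \leq (m!)^2$ are all correct, and they isolate exactly what is needed: a dimension-free bound $\E_{N(\mu,I+\Sigma)}[p^2]\leq O_m(R^{2m})\|p\|_{L^2(\gamma)}^2$ for degree-$\leq m$ polynomials.

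One caution on the final step, which you rightly flag as the delicate one. The sentence "all transition coefficients between the two Hermite bases can be bounded by $O_m(R^m)$, and the orthogonality of $\{\mcl{H}_\alpha\}$ decouples cross-terms so that no dimension factor enters" is, as stated, a non-proof: bounding the entries of a change-of-basis matrix does not bound its operator norm, and the number of multi-indices $\alpha$ with $|\alpha|\leq m$ grows like $d^m$, so a naive Frobenius- or entrywise-argument will pick up a dimension factor. What actually saves the day (and what Kane's lemmas supply) is that the relevant quadratic form is controlled by $\E_{N(\mu,I+\Sigma)}[T_m(z)\otimes T_m(z)]$ for the Hermite \emph{tensor} $T_m$, which admits a closed-form expression in $\mu$, $\Sigma$ whose operator norm is manifestly $O_m(R^{2m})$ independent of $d$; equivalently, it suffices to prove the comparison for $\tilde q$ in the pure $m$-th Wiener chaos and then sum over chaos levels with an $O_m(1)$ loss. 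Since you explicitly defer this step to Kane's Lemmas 2.7, 2.8, and 5.2 (exactly as the paper does), this is a presentational imprecision rather than a gap, but the phrase "orthogonality decouples cross-terms" should not be mistaken for the argument itself.
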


Kane \cite{kane2020robust} works with Hermite polynomial tensors, which are tensorized versions of the Hermite polynomials we are using.  It is clear that these two notions are equivalent up to $O_k(1)$ factors as long as $m$ is $O_k(1)$ (writing them as formal polynomials instead of tensors simply collapses symmetric entries of the tensor but this collapses at most $O_m(1)$ entries together at once).

We can now combine everything in this section with Theorem \ref{thm:main-close-case} to complete the proof of Theorem \ref{thm:full-close-case}.

\begin{proof}[Proof of Theorem \ref{thm:full-close-case}]
We can split the samples into $O(1)$ parts that are each $O(1)\eps$ corrupted samples.  First, we use Corollary \ref{corollary:isotropic-position} to compute a transformation $L$ that places the mixture in nearly isotropic position.  Now Lemma \ref{lem:parameter-closeness} and Corollary \ref{corollary-indiv-closeness} gives us bounds on how far each of the means is from $0$ and how far each of the covariances is from $I$.  We can apply Lemma \ref{lemma:estimate-hermite} and standard results from robust estimation of bounded covariance distributions (see e.g. Theorem 2.2 in \cite{kane2020robust}) to obtain estimates $\overline{h_{m, L(\mcl{M})}}(X)$ for the Hermite polynomials of the mixture $L(\mcl{M})$ such that 
\[
\norm{v\left( \overline{h_{m, L(\mcl{M})}}(X) - h_{m, L(\mcl{M})}(X)\right) }_2 \leq \poly(\eps)
\]
where $m$ is bounded as a function of $k$.  Now we must verify that the remaining hypotheses of Theorem \ref{thm:main-close-case} are satisfied with $\eps' = \poly(\eps)$ for the transformed mixture $L(\mcl{M})$.  
\begin{itemize}
    \item Corollary \ref{corollary-indiv-closeness} gives the required upper bound on $\norm{L(\mu_i)}$ and $\norm{L( \Sigma_i) - I }$
    \item The first two conditions of Lemma \ref{lem:parameter-closeness}, combined with Claim \ref{claim:parameter-dist}, imply the condition that no pair of components has essentially the same mean and covariance
    \item Finally, the mixing weights are unchanged by the linear transformation so the third condition is easily verified (since the original mixture is $\delta$-well-conditioned)
\end{itemize}

Thus, by Theorem \ref{thm:main-close-case} we can obtain a list of $(1/\eps)^{O_k(1)}$ candidate mixtures at least one of which satisfies
\[
\norm{w_i - \widetilde{w_i}} + \norm{L(\mu_i) - \widetilde{\mu_i}}_2 + \norm{L(\Sigma_i) - \widetilde{\Sigma_i}}_2 \leq \poly(\eps)
\]
for all $i$.  By Claim \ref{claim:parameter-dist}, we know that the components we compute are close in TV to the true components.  Now applying the inverse transformation $L^{-1}$ to all of the components, we are done.

\end{proof}

\section{Rough Clustering}\label{sec:clustering}


As mentioned earlier in the proof overview, the first step in our full algorithm will be to cluster the points.  We present our clustering algorithm in this section.  This section closely mirrors the work in \cite{diakonikolas2020robustly}. We first define a measure of closeness between Gaussians that we will use throughout the paper.  
\begin{definition}\label{def:closeness}
We say that two Gaussians $N(\mu, \Sigma)$ and $N(\mu', \Sigma')$ are $C$-close if all of the following conditions hold
\begin{enumerate}
    \item  (mean condition) For all unit vectors $v \in \R^d$, we have $ (v \cdot \mu - v \cdot \mu')^2 \leq C v^T(\Sigma + \Sigma')v $
        \item  (variance condition) For all unit vectors $v \in \R^d$, we have $\max(v^T\Sigma v, v^T\Sigma' v) \leq  C\min(v^T\Sigma v, v^T\Sigma'v) $
        \item  (covariance condition) Finally, we have $\norm{I - \Sigma'^{-1/2}\Sigma\Sigma'^{-1/2}}_2^2  \leq C $
\end{enumerate}
\end{definition}

The main theorem that we aim to prove in this section is the following, which implies that if the true mixture can be well-clustered into submixtures, then we can recover this clustering with constant-accuracy.
\begin{theorem}\label{thm:rough-clustering}
Let $k, D, \gamma $ be parameters.  Assume we are given $\eps$-corrupted samples from a mixture of Gaussians $w_1G_1 + \dots  + w_kG_k$ where the mixing weights $w_i$ are all rational numbers with denominator bounded by a constant $A$.  Let $A_1, \dots , A_l$ be a partition of the components such that 
\begin{enumerate}
    \item For any $j_1, j_2$ in the same piece of the partition $G_{j_1}, G_{j_2}$ are $D$-close
    \item For any $j_1, j_2$ in different pieces of the partition, $G_{j_1}, G_{j_2}$ are not $D'$-close
\end{enumerate}
where $D'>  F(k,A, D, \gamma) $ for some sufficiently large function $F$.  Assume that $t >F(k,A, D, \gamma) $ and $\eta, \eps, \delta < f(k,A, D, \gamma)$ for some sufficiently small function $f$.  Then with probability at least $1 - \gamma$, if $X_1, \dots , X_n$ is an $\eps$-corrupted sample from the mixture $w_1G_1 + \dots + w_kG_k$ with $n \geq \poly(1/\eps, 1/\eta,  1/\delta, d)^{O(k,A)}$, then one of the clusterings returned by {\sc Rough Clustering} (see Algorithm \ref{alg:rough-cluster}) gives a $\gamma$-corrupted sample of each of the submixtures given by $A_1, \dots , A_l$.
\end{theorem}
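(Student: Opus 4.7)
The plan is to adapt the list-decoding clustering approach of \cite{diakonikolas2020robustly} to our setting. At a high level, the algorithm will try to find low-degree polynomial witnesses that certify when a projection of the data ``splits'' into multiple pieces, use such a witness to produce a candidate partition of the samples, and then recurse on each piece until no further splitting is possible. Because of the mixing-weight fractionality assumption (denominators bounded by $A$), we only need to consider $O_{k,A}(1)$ many weight patterns, which controls the size of the final list.

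First I would prove a structural statement: whenever two components $G_i, G_j$ fail to be $D'$-close, at least one of the three defining conditions of Definition \ref{def:closeness} is violated, and each such violation yields a low-degree polynomial $p(x)$ (linear for the mean condition, quadratic for the variance and covariance conditions) whose distribution under $G_i$ versus $G_j$ differs substantially in one of its first few moments. In particular $\E_{G_i}[p] - \E_{G_j}[p]$, or $\mathrm{Var}_{G_i}[p] / \mathrm{Var}_{G_j}[p]$, must be non-trivially large in terms of $D'$. Conversely, if the whole mixture is composed of mutually $D$-close components, the projected empirical moments of every low-degree polynomial are tightly concentrated around a single ``consensus'' value. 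This dichotomy is the certificate we use for splitting.

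Second, I would show how to find such a polynomial witness from the $\eps$-corrupted sample. Using the robust moment estimation primitives for bounded-covariance distributions (analogous to what \cite{kane2020robust} employs and what we already invoke in Section \ref{sec:moment-est}), we can approximate low-degree moments to $\poly(\eps)$ accuracy. Searching over an $\eta$-net of degree-$O_k(1)$ polynomials with bounded coefficients, for any single choice we can robustly check whether the induced one-dimensional empirical distribution on the samples is ``close to a single cluster'' in the $D$-close sense or admits a splitting threshold $t$ where both sides carry mass at least $1/A$. Whenever a splitting threshold is found, we branch and produce two candidate sub-samples; otherwise we declare the current sample to be a single $D$-close cluster. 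Since there are only $O_{k,A,D,\gamma}(1)$ choices to try and the branching depth is at most $k$, the total number of candidate clusterings produced is bounded by $F(k,A,D,\gamma)$ as required.

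Finally, I would argue correctness of the list. The crucial accounting is the corruption budget: a correct split misclassifies at most a $\poly(\eps)$ fraction of inliers (those near the threshold), so each branch is an $(\eps + \poly(\eps))$-corrupted sample of its induced submixture. Iterating at most $k-1$ times keeps the corruption below $\gamma$ provided the initial $\eps$ is small enough. Chernoff bounds with $n = \poly(1/\eps, 1/\eta, 1/\delta, d)^{O_{k,A}(1)}$ samples ensure that empirical moments match population moments uniformly over the net, which controls the $\gamma$ failure probability. The main obstacle, I expect, is proving that the SOS/moment-matching test is simultaneously (i) sensitive enough that any non-$D'$-close pair forces a split and (ii) robust enough that $D$-close submixtures are never spuriously split; the quantitative gap between $D$ and $D'$ must be chosen large enough in terms of the robust estimation error $\poly(\eps)$, which is exactly what the hypothesis $D' > F(k,A,D,\gamma)$ is reserved for.
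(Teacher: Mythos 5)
Your proposal describes a different algorithm from the one the theorem actually references, and the algorithm you describe has a fatal efficiency problem. The statement is specifically about {\sc Rough Clustering} (Algorithm \ref{alg:rough-cluster}), which works by solving an SOS relaxation (the Clustering Program $\mcl{A}$ of Definition \ref{def:sos-program}), computing a pseudoexpectation $\widetilde{\E}$ that maximizes coverage of uncovered points, sampling a random anchor $i$ with probability $\propto\widetilde{\E}[w_i]$, then including each $j$ with probability $\widetilde{\E}[w_i w_j]/\widetilde{\E}[w_i]$, iterating, and finally taking unions of collected sets. You instead propose to search over an $\eta$-net of degree-$O_k(1)$ polynomials in $\R^d$ and find a splitting threshold in one dimension. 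That net has size $(1/\eta)^{d^{O_k(1)}}$, which is not polynomial in $d$, so the search cannot be carried out efficiently. Avoiding exactly this brute-force over high-dimensional test polynomials is the whole point of the SOS relaxation.

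The heart of the paper's proof is quite different from yours. The paper first establishes Claim \ref{claim:exists-partition}: if some pair of components fails to be $C_k$-close then there is a partition $S,T$ of $[k]$ and a direction $v$ (or matrix $A$) along which \emph{every} cross-pair $(a,b)\in S\times T$ is separated in mean, variance, or relative covariance, and moreover the separation is lower-bounded relative to the mixture covariance. It then invokes the SOS certificates imported from \cite{diakonikolas2020robustly} (Lemmas \ref{lemma:mean-separation}, \ref{lemma:variance-separation}, \ref{lemma:covariance-separation}) to conclude that any pseudoexpectation satisfying $\mcl{A}$ must have $\widetilde{\E}[\alpha_a(w)\alpha_b(w)]$ small for cross-pairs, and hence the randomly sampled set $R$ has small expected intersection with the wrong side of the partition. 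Correctness of the whole list then follows from the coverage guarantee: since indicator functions of the true clusters are feasible pseudoexpectations and $\widetilde{\E}$ is chosen to maximize weight on uncovered points, each round adds a constant fraction of the remaining mass in expectation. None of this appears in your proposal; you assert that moment-matching tests distinguish $D$-close from non-$D'$-close, but you do not supply, or even reference, the SOS identities that make the pseudoexpectation-based sampling behave like an actual clustering. Your corruption budget accounting (each recursion level adds only $\poly(\eps)$ mis-classified points) is qualitatively the right intuition and roughly matches the paper's use of Markov on $\E[\min(|R\cap U|,|R\cap V|)]$, but that accounting cannot rescue an algorithm that is not polynomial time in $d$.
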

\begin{remark}
Note that the last statement is well defined because the assumption about the partition essentially implies that all pairs of components in different submixtures are separated so $\gamma$-corrupted sample simply means correctly recovering a $1-\gamma$-fraction of the original points that were drawn from the corresponding submixture.
\end{remark}

In this section, it will suffice to consider when the mixing weights are equal as we can subdivide one component into many identical ones so from now on we assume $w_1 = \dots = w_k = 1/k$ and all dependencies on $A$ become dependencies on $k$. 


We begin with a few preliminaries.  The following claim is a simple consequence of the definition.

\begin{claim}\label{claim:dist-transitivity}
Let $G_1, G_2, G_3$ be Gaussians such that $G_1$ and $G_2$ are $C$-close and $G_2$ and $G_3$ are $C$-close.  Then $G_1$ and $G_3$ are $\poly(C)$-close.  
\end{claim}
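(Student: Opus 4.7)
The plan is to verify each of the three conditions in Definition \ref{def:closeness} is transitive up to a polynomial blowup in $C$, handling them one at a time.

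First I would dispatch the variance condition, which is essentially immediate: if for all unit $v$ we have $\max(v^T\Sigma_1 v, v^T\Sigma_2 v) \leq C \min(v^T\Sigma_1 v, v^T\Sigma_2 v)$ and the analogous inequality for $\Sigma_2, \Sigma_3$, then chaining through $v^T\Sigma_2 v$ gives $\max(v^T\Sigma_1 v, v^T\Sigma_3 v) \leq C^2 \min(v^T\Sigma_1 v, v^T\Sigma_3 v)$. This will also be the tool that lets me replace $\Sigma_2$-quantities by $\Sigma_1$- or $\Sigma_3$-quantities in the other two arguments.

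For the mean condition, I would use the elementary inequality $(a+b)^2 \leq 2a^2 + 2b^2$ to write, for any unit $v$,
\[
(v \cdot \mu_1 - v \cdot \mu_3)^2 \leq 2(v \cdot \mu_1 - v\cdot\mu_2)^2 + 2(v \cdot \mu_2 - v\cdot\mu_3)^2 \leq 2C\, v^T(\Sigma_1 + 2\Sigma_2 + \Sigma_3)v.
\]
The variance condition between $\Sigma_1$ and $\Sigma_2$ then gives $v^T\Sigma_2 v \leq C\, v^T\Sigma_1 v$, which upgrades the right-hand side to at most $\poly(C)\, v^T(\Sigma_1 + \Sigma_3)v$, as required.

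The covariance condition is where most of the work lies. I would insert $\Sigma_2$ and apply the triangle inequality:
\[
I - \Sigma_3^{-1/2}\Sigma_1\Sigma_3^{-1/2} = \bigl(I - \Sigma_3^{-1/2}\Sigma_2\Sigma_3^{-1/2}\bigr) + \Sigma_3^{-1/2}\Sigma_2^{1/2}\bigl(I - \Sigma_2^{-1/2}\Sigma_1\Sigma_2^{-1/2}\bigr)\Sigma_2^{1/2}\Sigma_3^{-1/2}.
\]
The first summand is bounded by $\sqrt{C}$ in norm by hypothesis. For the second, I need a $\poly(C)$ bound on $\norm{\Sigma_3^{-1/2}\Sigma_2^{1/2}}_{\mathrm{op}}^2 = \norm{\Sigma_3^{-1/2}\Sigma_2\Sigma_3^{-1/2}}_{\mathrm{op}}$, which follows from $\norm{I - \Sigma_3^{-1/2}\Sigma_2\Sigma_3^{-1/2}}_2 \leq \sqrt{C}$ via the triangle inequality against $I$. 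Combining with submultiplicativity of the relevant matrix norm and $\norm{I - \Sigma_2^{-1/2}\Sigma_1\Sigma_2^{-1/2}}_2 \leq \sqrt{C}$, the second summand is at most $\poly(C)$, so squaring gives $\poly(C)$ overall.

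The main obstacle is simply keeping the matrix norm manipulations in the covariance step clean; one has to be careful because the norm in Definition \ref{def:closeness} surrounds a similarity transformation, so switching the ``pivot'' covariance from $\Sigma_2$ to $\Sigma_3$ requires the sandwich identity above together with an operator-norm control that ultimately comes from the very condition we are trying to propagate. Once that bookkeeping is done, all three conditions hold for $(G_1, G_3)$ with constant $\poly(C)$, completing the proof.
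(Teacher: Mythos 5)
Your proof is correct and takes essentially the same route as the paper's: chain the variance condition directly, deduce operator-norm (singular-value) bounds on $\Sigma_i^{1/2}\Sigma_j^{-1/2}$, and then combine these with the triangle inequality to handle the mean and covariance conditions. The paper states this in about three sentences; your writeup supplies the explicit sandwich identity and norm bookkeeping that those sentences implicitly invoke.
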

\begin{proof}
The second condition follows immediately from the fact that $G_1$ and $G_2$ are $C$-close and $G_2$ and $G_3$ are $C$-close.  Now we know that for all vectors $v$, $v^T\Sigma_1v,v^T\Sigma_2v, v^T\Sigma_3v $ are all within a $\poly(C)$ factor of each other.  This means that the singular values of $\Sigma_i^{1/2}\Sigma_j^{-1/2}$ are all bounded above and below by $\poly(C)$.  From this and the triangle inequality, we get the first and third conditions.  
\end{proof}

The next claim follows immediately from Lemma 3.6 in \cite{diakonikolas2020robustly}.
\begin{claim}\label{claim:trivial-dist-bound}
There is a decreasing function $f$ such that $f(C) > 0$ for all $C > 0$ such that if two Gaussians $G_1, G_2$ are $C$-close then
\[
d_{\TV}(G_1, G_2) \leq 1 - f(C)
\]
\end{claim}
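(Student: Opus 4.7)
The plan is to bound the Bhattacharyya coefficient (Hellinger affinity) $B(G_1,G_2) = \int \sqrt{p_1 p_2}\,dx$ away from zero as a function of $C$, and then invoke the standard inequality $d_{\TV}(G_1,G_2) \leq \sqrt{1-B(G_1,G_2)^2}$ to conclude that $d_{\TV}(G_1,G_2) \leq 1-f(C)$ for some decreasing $f(C)>0$.

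First I would use affine invariance of both the TV distance and the three closeness conditions. Applying the linear map $x \mapsto \Sigma_2^{-1/2}(x-\mu_2)$ places $G_2$ in isotropic position with $\mu_2' = 0$ and $\Sigma_2' = I$. The variance and covariance conditions then say that the eigenvalues of the transformed $\Sigma_1'$ all lie in $[1/C,C]$ and that $\norm{I - \Sigma_1'}_2^2 \leq C$. Plugging any unit vector $v$ into the mean condition now yields
\[
(v \cdot \mu_1')^2 \leq C \cdot v^T(\Sigma_1' + I) v \leq C(1+C),
\]
so $\norm{\mu_1'}^2 \leq C(1+C)$.

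Next, I would use the closed form for the Bhattacharyya coefficient between two Gaussians:
\[
B(G_1,G_2) = \frac{\det(\Sigma_1)^{1/4}\det(\Sigma_2)^{1/4}}{\det((\Sigma_1+\Sigma_2)/2)^{1/2}} \, \exp\!\Big(-\tfrac{1}{8}(\mu_1-\mu_2)^T\big(\tfrac{\Sigma_1+\Sigma_2}{2}\big)^{-1}(\mu_1-\mu_2)\Big).
\]
After the reduction, the determinant prefactor is $\prod_i \big(4\lambda_i/(1+\lambda_i)^2\big)^{1/2}$ over the eigenvalues $\lambda_i$ of $\Sigma_1'$, each of which is bounded below by a function of $C$ since $\lambda_i \in [1/C,C]$. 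The matrix $(\Sigma_1'+I)/2$ has smallest eigenvalue at least $(1+1/C)/2$, so the exponent is bounded in absolute value by $C(1+C)/(1+1/C) \leq C^3$ by the mean bound from the previous paragraph. Thus $B(G_1,G_2) \geq g(C)$ for some $g(C)>0$ depending only on $C$.

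Finally, combining with $d_{\TV} \leq \sqrt{1-B^2}$ gives $d_{\TV}(G_1,G_2) \leq \sqrt{1-g(C)^2} \leq 1 - g(C)^2/2$, so we can take $f(C) = g(C)^2/2$ (replacing by its monotone decreasing envelope if needed). There is no real conceptual obstacle here \textemdash{} the only care required is bookkeeping to verify that each of the three $C$-closeness conditions really does translate, under the affine reduction, into a quantitative bound that can be fed into the Gaussian Hellinger formula, and in particular that no hidden dimension dependence sneaks in through the determinant term (it does not, because each eigenvalue factor is separately bounded in $[1/C,C]$ and multiplied determinants of $I - $ low rank perturbations behave well).
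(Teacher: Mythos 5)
The paper's ``proof'' is just a citation to Lemma~3.6 of \cite{diakonikolas2020robustly}, so your self-contained Hellinger/Bhattacharyya argument is a genuinely independent route, and the overall strategy is sound: affine invariance of all three closeness conditions and of $d_{\TV}$ justifies reducing to $\Sigma_2 = I$, $\mu_2 = 0$, and your control of the exponential term via the mean and variance conditions is correct (up to benign constants). However, there is a real gap in the determinant part, and your closing remark that ``no hidden dimension dependence sneaks in'' is exactly where it breaks down.

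You argue that the prefactor is $\prod_i \bigl(4\lambda_i/(1+\lambda_i)^2\bigr)^{1/2}$ (the correct exponent is $1/4$, a minor slip) and that it is bounded below because ``each eigenvalue factor is separately bounded'' thanks to $\lambda_i \in [1/C,C]$. But each factor $4\lambda_i/(1+\lambda_i)^2$ is also strictly less than $1$ whenever $\lambda_i \neq 1$, and the product runs over all $d$ eigenvalues. Concretely, take $\Sigma_1' = 2I$, $\Sigma_2' = I$, $\mu_1'=\mu_2'=0$: the variance condition holds with $C=2$, the mean condition is vacuous, yet the prefactor decays geometrically in $d$ and $d_{\TV} \to 1$. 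The appeal to ``multiplied determinants of $I -$ low rank perturbations'' is not available either, since $\Sigma_1' - I$ is not assumed low rank. What rescues the bound \textemdash{} and what your proof never invokes \textemdash{} is the covariance condition. Reading $\norm{\cdot}_2$ in Definition~\ref{def:closeness} as the Frobenius norm (the only reading under which the claim can be dimension-free), that condition becomes $\sum_i(1-\lambda_i)^2 \leq C$ after the reduction. Since $g(\lambda) := -\log\bigl(4\lambda/(1+\lambda)^2\bigr)$ satisfies $g(1)=g'(1)=0$ and $|g''|$ is bounded by some $K_C$ on $[1/C,C]$, Taylor's theorem gives $g(\lambda) \leq \tfrac12 K_C(1-\lambda)^2$ there, so $-\log$ of the determinant prefactor is at most $\tfrac14\sum_i g(\lambda_i) \leq \tfrac18 K_C C$, a bound depending only on $C$. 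Inserting this missing step (and the $1/4$ exponent) makes your argument correct.
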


We will now show that either all pairs in the mixture are not too far apart, or there exists a nontrivial partition of the mixture into two parts that are separated in either mean, variance in some direction, or covariance.  This parallels Corollary 3.7 in \cite{diakonikolas2020robustly}.  However, we require a slightly different statement because their paper specializes to the case where all pairs of components are separated.  We use $\mu, \Sigma$ to denote the mean and covariance of the overall mixture.
\begin{claim}\label{claim:exists-partition}
Let $C > 100$ be a constant.  Let $C_k$ be a sufficiently large constant depending only on $C$ and $k$.  Assume that there are $i,j \in [k]$ such that $N(\mu_i, \Sigma_i)$ and $N(\mu_j, \Sigma_j)$ are not $C_k$-close.  Then there exists a partition of $[k]$ into two disjoint sets $S,T$ such that for any $a \in S, b \in T$, $N(\mu_a, \Sigma_a)$ is not $k^C$-close to $N(\mu_b, \Sigma_b)$ and at least one of the following holds:
\begin{enumerate}
    \item There is a direction $v$ such that for all $a \in S, b \in T$,
    \[
    ((\mu_a - \mu_b) \cdot v) \geq \max\left(  k^C(v^T(\Sigma_a + \Sigma_b)v ), \frac{v^T\Sigma v}{k^2} \right)
    \]
    \item There is a direction $v$ such that for all $a \in S, b \in T$,

    \[\frac{v^T\Sigma_a v}{v^T\Sigma_bv} \geq k^{C} \mbox{ and }\frac{v^T\Sigma_a v}{v^T \Sigma v} \geq \frac{1}{k^{2Ck}}
    \]
    \item We have
    \[
    \norm{I - \Sigma_a^{-1/2}\Sigma_b\Sigma_a^{-1/2}}^2 \geq k^C \max\left(\norm{\Sigma_a^{1/2}A_{ab}\Sigma_a^{1/2}},  \norm{\Sigma_b^{1/2}A_{ab}\Sigma_b^{1/2}}, \norm{\Sigma^{1/2}A_{ab}\Sigma^{1/2}}\right)
    \]
    where $A_{ab} = \Sigma_a^{-1/2}\left(I - \Sigma_a^{-1/2}\Sigma_b\Sigma_a^{-1/2}\right)\Sigma_a^{-1/2} $
\end{enumerate}
\end{claim}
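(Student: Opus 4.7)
My plan is to build the partition $S, T$ in two stages: first a coarse partition via a ``closeness graph,'' and then a refinement that selects a single witnessing direction $v$ together with a single failure type.

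In the coarse stage, I would form the graph $\mcl{H}$ on $[k]$ whose edges are pairs of components that are $C_0$-close, where $C_0$ is polynomial in $k^C$ but at most $C_k$. By Claim \ref{claim:dist-transitivity} iterated along paths of length at most $k$, all components in the same connected component of $\mcl{H}$ are $C_k$-close (choosing constants appropriately). The hypothesis forces some pair to not be $C_k$-close, so $\mcl{H}$ must be disconnected. Taking $S$ to be a connected component of $\mcl{H}$ and $T$ the rest yields a nontrivial partition in which every cross pair $(a, b) \in S \times T$ violates one of the three conditions in Definition \ref{def:closeness} at scale at least $k^C$.

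Next I would extract a single direction $v$ and single failure type that apply uniformly to all cross pairs. For each of the three cases I would sort the components by the relevant scalar invariant in a candidate witness direction (e.g.\ $v \cdot \mu_i$ in the mean case, $v^T \Sigma_i v$ in the variance case, or a spectral invariant for the covariance case) and redefine $S, T$ by cutting at the largest gap in the sorted order. A pigeonhole over at most $k - 1$ gaps shows some adjacent gap is at least a $1/k$-fraction of the total, so the chosen partition still exhibits cross-separation at scale $k^C / \poly(k)$, which can be absorbed by choosing $C_k$ large enough.

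The main obstacle, and what distinguishes this statement from Corollary 3.7 of \cite{diakonikolas2020robustly}, is producing the extra $\Sigma$-dependent inequalities. The key identity is $\Sigma = \sum_i w_i \Sigma_i + \sum_i w_i (\mu_i - \mu)(\mu_i - \mu)^T$, so along any direction $v$ the scalar $v^T \Sigma v$ is a convex combination of $v^T \Sigma_i v$ and $(v \cdot (\mu_i - \mu))^2$ and therefore bounded by the maximum over $i$ of these quantities. If that maximum is realized by a mean deviation, then picking $a, b$ as the extreme pair in the sorted mean order makes $((\mu_a - \mu_b) \cdot v)^2$ at least a $1/\poly(k)$ multiple of $v^T \Sigma v$, delivering the mean case. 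If instead it is realized by a component variance, the variance bucketing argument produces the variance case with $v^T \Sigma_a v \geq v^T \Sigma v / \poly(k)$, well within the $k^{2Ck}$ tolerance. The covariance case follows symmetrically by applying the identity to $\Sigma_a^{-1/2}(\cdot)\Sigma_a^{-1/2}$ and tracking the spectral invariant $A_{ab}$. The delicate step is keeping all the $\poly(k)$ losses in the exponent far below $k^C$ and verifying consistency across the three cases; this is arithmetic bookkeeping rather than a new idea.
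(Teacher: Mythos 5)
Your high-level instinct to combine a graph-based partition with a sort-and-cut refinement touches on real ingredients of the paper's proof, but as written there is a structural gap that would prevent the argument from closing. The paper does not start from a single closeness graph; it performs a sequential case analysis whose \emph{ordering} is load-bearing. It first asks whether some direction $v$ exhibits a large mean gap relative to $\max_i v^T\Sigma_i v$ (Case~1), then whether some direction has a large variance ratio $\max_i v^T\Sigma_i v / \min_i v^T\Sigma_i v$ (Case~2), and only \emph{after} ruling both out does it address the covariance case. That ordering is essential: the bounds on the singular values of $\Sigma_i^{1/2}\Sigma_j^{-1/2}$ and of $\Sigma_i^{1/2}\Sigma^{-1/2}$ that make the covariance case go through are derived precisely from the negations of Cases~1 and~2. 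In each branch the partition $S,T$ is then \emph{constructed anew} (by sorting $\mu_i\cdot v$, sorting $v^T\Sigma_i v$, or taking connected components of a graph on the normalized covariances $\Sigma_{a_0}^{-1/2}\Sigma_i\Sigma_{a_0}^{-1/2}$), and the paper directly verifies that every cross pair of the new partition violates the chosen condition at scale $k^C$.

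Your proposal instead fixes a coarse partition from a $C_0$-closeness graph and then hopes to ``extract a single direction $v$ and single failure type that apply uniformly to all cross pairs.'' This step is where the argument breaks. Different cross pairs in your coarse partition may fail different conditions of Definition~\ref{def:closeness}, and the same pair may fail a given condition only in directions that vary across pairs. Nothing in your setup forces a single $(v,\text{case})$ to be a witness for all cross pairs simultaneously, and after you re-sort and cut at a gap, the new partition need not agree with the coarse one, so the coarse graph's non-closeness guarantee does not automatically transfer. Your treatment of the covariance case (``follows symmetrically by applying the identity to $\Sigma_a^{-1/2}(\cdot)\Sigma_a^{-1/2}$'') also omits the crucial singular-value control; without first excluding Cases~1 and~2 you cannot bound $\norm{\Sigma_a^{1/2}A_{ab}\Sigma_a^{1/2}}$, $\norm{\Sigma_b^{1/2}A_{ab}\Sigma_b^{1/2}}$, and $\norm{\Sigma^{1/2}A_{ab}\Sigma^{1/2}}$ in terms of $\norm{I-\Sigma_a^{-1/2}\Sigma_b\Sigma_a^{-1/2}}$. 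The total-variance identity you invoke is correct and is indeed the same mechanism the paper uses to compare $v^T\Sigma v$ against the $v^T\Sigma_i v$ and mean gaps, so that part of your plan is on target; what is missing is the sequential if/else structure and the explicit re-partitioning and gap arguments inside each branch.
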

\begin{proof}
We break into a few cases:

\paragraph{Case 1:}

Suppose that there is a $v$ such that for some $a,b$
\[
((\mu_a - \mu_b) \cdot v)^2 \geq 10k^{2} \cdot k^C \max_i(v^T\Sigma_iv)
\]
then we claim we are done.  To see this, first observe that
\[
v^T\Sigma v = \frac{1}{k^2} \sum_{i \neq j}((\mu_i - \mu_j) \cdot v)^2 + \frac{1}{k} \sum v^T\Sigma_i v
\]
so then choosing $a,b$ such that $((\mu_a - \mu_b) \cdot v)^2$ is maximal, we have $((\mu_a - \mu_b) \cdot v)^2 \geq 0.1 v^T\Sigma v$. 
Now we can partition the components based on the value of $\mu_i \cdot v$.  We can ensure that the gap between the clusters has size at least $\frac{(\mu_a - \mu_b)\cdot v}{k}$.  This will imply for all $a \in S, b \in T$
\[
((\mu_a - \mu_b) \cdot v)^2 \geq k^C v^T(\Sigma_a + \Sigma_b)v)
\]
i.e. the corresponding components are not $k^C$-close.  Since we can choose $C_k$ sufficiently large, the first condition is also satisfied and we are done in this case.
\\

\paragraph{Case 2:}
Alternatively suppose there is a $v$ such that 
\[
\frac{\max_i(v^t\Sigma_i v)}{\min_i(v^T\Sigma_i v)} \geq k^{4Ck}
\]
In this case, we can partition the components based on the value of $v^T\Sigma_i v$.  Without loss of generality we have
\[
v^T\Sigma_1v \geq \dots \geq v^T\Sigma_k v
\]
Note that since we are not in the first case $v^T\Sigma_1 v \geq \frac{v^T \Sigma v}{20k^{2+C}}$.
Next, because
$\frac{v^T\Sigma_k v}{v^T \Sigma v} \leq \frac{1}{k^{2Ck}}$
there must be some consecutive $i, i+1$ such that 
\[
\left( \frac{v^T\Sigma_i v}{v^T \Sigma v}\right) \geq k^C  \left( \frac{v^T\Sigma_{i+1} v}{v^T \Sigma v}\right)
 \mbox{ and } \left( \frac{v^T\Sigma_i v}{v^T \Sigma v}\right)  \geq \frac{1}{k^{2Ck}}
\]
partitioning into $S = \{1,2, \dots , i \}$ and $T = \{i+1, \dots , k \}$, we immediately verify that the desired conditions (second condition) are satisfied.
\\

\paragraph{Case 3:}

Finally, it remains to consider the situation where neither the condition in Case $1$ nor the condition in Case $2$ holds.  Note that by assumption, there is some pair $a,b \in [k]$ for which $N(\mu_a \Sigma_a), N(\mu_b, \Sigma_b)$ are not $C_k$-close.  Since we can choose
\[
C_k > (kC)^{10kC}
\]
this pair cannot fail the variance condition in any direction (second condition of Definition \ref{def:closeness}).  This pair also cannot fail the mean condition in any direction (first condition of Definition \ref{def:closeness}) because then we would have
\[
((\mu_a - \mu_b) \cdot v)^2 \geq C_k v^T\Sigma_a v \geq \frac{C_k}{k^{4Ck}} \max_i(v^T\Sigma_i v)
\]
and we would be in the first case.  Thus, we must actually have
\[
\norm{I - \Sigma_a^{-1/2}\Sigma_b\Sigma_a^{-1/2}}_2^2  \geq C_k
\]
Next, we claim that for all $i,j$, $\Sigma_i^{1/2}\Sigma_j^{-1/2}$ has smallest and largest singular value in the interval
\[
\mathcal{I} \triangleq \left[ \frac{1}{k^{4Ck}} ,  k^{4Ck} \right]
\]
If this were not true, without loss of generality we can find a unit vector $v$ such that $\norm{\Sigma_i^{1/2}\Sigma_j^{-1/2}v}_2 \geq k^{4Ck}$.
But this implies 
\[
\frac{(\Sigma_j^{-1/2}v)^T \Sigma_i (\Sigma_j^{-1/2}v)}{(\Sigma_j^{-1/2}v)^T \Sigma_j (\Sigma_j^{-1/2}v)} \geq  k^{8Ck}
\]
meaning we are actually in case $2$.  Similarly, we can show that $\Sigma_i^{1/2}\Sigma^{-1/2}$ has smallest and largest singular value in the interval $\mathcal{I}$ or else we would be in Case $1$.  
\\\\
To complete the proof, let $a_0,b_0$ be indices corresponding to a pair of components that are not $C_k$-close and construct the following graph.  Two nodes $i,j$ are connected if and only if 
\[
\norm{\Sigma_{a_0}^{-1/2}\Sigma_j\Sigma_{a_0}^{-1/2} - \Sigma_{a_0}^{-1/2}\Sigma_i\Sigma_{a_0}^{-1/2}}_2^2  \leq \frac{C_k}{k^2}
\]
This graph must not be connected since otherwise there would be a path of length at most $k$ between $a_0$ and $b_0$ and summing the above inequalities along this path, this would contradict the fact that 
\[
\norm{I - \Sigma_{a_0}^{-1/2}\Sigma_{b_0}\Sigma_{a_0}^{-1/2}}_2^2  \geq C_k.
\]
We claim that it suffices to take $S$ and $T$ to be two connected components of the graph.  Indeed, for any $a \in S, b \in T$, we have
\[
\norm{\Sigma_{a_0}^{-1/2}\Sigma_a\Sigma_{a_0}^{-1/2} - \Sigma_{a_0}^{-1/2}\Sigma_b\Sigma_{a_0}^{-1/2}}_2^2 \geq \frac{C_k}{k^2}
\]
Now observe
\[
I - \Sigma_a^{-1/2}\Sigma_b\Sigma_a^{-1/2} = (\Sigma_a^{-1/2}\Sigma_{a_0}^{1/2}) \left(\Sigma_{a_0}^{-1/2}\Sigma_a\Sigma_{a_0}^{-1/2} - \Sigma_{a_0}^{-1/2}\Sigma_b\Sigma_{a_0}^{-1/2}\right) (\Sigma_{a_0}^{1/2}\Sigma_{a}^{-1/2})
\]
and combining with the singular value bounds we showed for $\Sigma_i^{1/2}\Sigma_j^{-1/2}$ and $\Sigma_i^{1/2}\Sigma^{-1/2}$, we have
\[
\norm{I - \Sigma_a^{-1/2}\Sigma_b\Sigma_a^{-1/2}}_2^2 \geq \max\left(k^C, k^C \norm{A_{ab}}\right)
\]
for any $a,b$ on different sides of the partition.  The other quantities in the third condition can be bounded similarly as long as $C_k$ is chosen to be sufficiently large.
\end{proof}

\subsection{SOS Program}
To solve the clustering problem, we set up the same polynomial constraints as in Diakonikolas et al. \cite{diakonikolas2020robustly}. Recall that Definition~\ref{def:pseudoexpectation} gives a recipe for turning this into an SDP relaxation. 

\begin{definition}[Clustering Program $\mcl{A}$, restated from \cite{diakonikolas2020robustly}]\label{def:sos-program}
Let $X_1, \dots , X_n \in \R^d$ represent the samples.  Let $w_1, \dots , w_n, z_1, \dots , z_n, X_1', \dots , X_n'$ and $\Sigma, \Sigma^{1/2}, \Sigma^{-1/2} \in \R^{d \times d}$ (we think of the $\Sigma$ as $d \times d$ matrices whose entries are variables) be indeterminates that we will solve for in the system.  We think of the $w$ variables as weights on the points and the $z$ variables as representing whether points are outliers.  We will enforce that the subset of points weighted by $w$ has moments that are approximately Gaussian.  The full system of polynomial constraints is given below:
\begin{enumerate}
    \item We have parameters $t \in \N$ that is even and $\delta, \eps > 0$.
    \item Let $\mcl{A}_{\textsf{corruptions}} = \{z_i^2 = z_i \}_{i \in [n]}, \{z_i(X_i - X'_i) = 0 \}_{i \in [n]}, \{\sum_{i \in [n]}z_i = (1-\eps)n/k \}$
    \item Let $\mcl{A}_{\textsf{subset}} = \{w_i^2 = w_i \}_{i \in [n]},  \{\sum_{i \in [n]}w_i  = n/k \}$
    \item Let $\mu(w) = \frac{k}{n} \sum_{i \in [n]}w_iX_i'$
    \item Let $\Sigma(w) = \frac{k}{n} \sum_{i \in [n]}w_i(X_i' - \mu(w))(X_i' - \mu(w))^T$ 
    \item Let $\mcl{A}_{\textsf{matrices}} = \{ (\Sigma^{1/2})^2 = \Sigma(w) \} ,\{ (\Sigma^{-1/2}\Sigma^{1/2})^2 = \Sigma^{-1/2}\Sigma^{1/2} \} , \{ \Sigma^{-1/2}\Sigma^{1/2}w_i(X_i' - \mu(w)) = w_i(X_i'- \mu(w)) \}_{i \in [n]}$ 
    \item Let $\mcl{A}_{\textsf{moments}} $ be the following set of polynomial inequalities for all $s \leq t$
    \[
    \norm{\frac{k}{n}\sum_{i \in [n]}w_i[\Sigma^{-1/2}(X_i' - \mu(w))]^{\otimes s} - M_s}^2 \leq \delta d^{-2t}
    \]
    where $M_s = \E_{g \in N(0,I)}[g^{\otimes s}]$ is the moment tensor of a standard Gaussian.
\end{enumerate}
\end{definition}


We will work with the same set of deterministic conditions on the samples as in Diakonikolas et al. \cite{diakonikolas2020robustly}. These conditions hold with high probability for the uncorrupted points. 
\begin{definition}[Deterministic conditions, restated from \cite{diakonikolas2020robustly}]\label{def:deterministic-cond}
Fix Gaussians $G_1 , \dots , G_k$ on $\R^d$.  For $\delta, \psi > 0$ and $t \in \N$.  The $(\delta, \psi,t)$-deterministic conditions with respect to $G_1, \dots , G_k$ on a set of samples $X_1, \dots , X_n \in \R^d$ are 
\begin{enumerate}
    \item There is a partition of $\{X_1, \dots , X_n \}$ into $k$ pieces $S_1, \dots , S_k$ each of size $n/k$ such that for all $i \in [k]$ and $s \leq t$
    \[
    \norm{\frac{k}{n} \sum_{j \in S_i}[\overline{\Sigma}_i^{-1/2}(X_j - \overline{\mu}_i)]^{\otimes s} - M_s}_F^2 \leq d^{-2t}\delta
    \]
    where $\overline{\Sigma}_i$ and $\overline{\mu}_i$ denote the empirical mean and covariance of the uniform distribution over elements of $S_i$ and $M_s = \E_{g \in N(0,I)}[g^{\otimes s}]$ is the moment tensor of a standard Gaussian.
    \item For $a \in [k], v \in \R^d, A \in \R^{d \times d}$ we define
    \begin{enumerate}
    \item $E_a(v) = \{X_i \in S_a| ((X_i - \mu_a) \cdot v)^2 \leq O(1)\log(1/\psi)v^T\Sigma_a v \}$
    \item $F_a(v) = \{ (X_i, X_j) \in S_a^2 | ((X_i - X_j) \cdot v)^2 \geq \Omega(1) \cdot \psi v^T\Sigma_a v \}$
    \item $G_a(A) = \{ (X_i, X_j) \in S_a^2 | (X_i - X_j)^TA(X_i - X_j) =  2 \langle\Sigma_a , A \rangle \pm O(1) \log(1/\psi) \cdot \norm{\Sigma_a A}_F \}$.
    
    \end{enumerate}
    
    Then for every $v \in \R^d, A \in \R^{d \times d}$ we have 
    \begin{itemize}
        \item $|E_a(v)| \geq (1-\psi)(n/k)$
        \item $|F_a(v)|, |G_a(A)| \geq (1- \psi)(n/k)^2$
    \end{itemize}
    
\end{enumerate}
\end{definition}

\begin{claim}[Restated from \cite{diakonikolas2020robustly}]\label{claim:concentration}
For all even $t$, if 
\[
n \geq \log(1/\gamma)^{Ct}d^{10kt}/\delta^2
\]
for some sufficiently large constant $C$ and $\psi \geq \delta$, then $X_1, \dots , X_n$ drawn i.i.d from $\frac{1}{k}\sum_{i=1}^k G_i$ satisfy Definition \ref{def:deterministic-cond} with probability at least $1 - \gamma$.
\end{claim}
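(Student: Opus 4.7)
The plan is to realize the partition $S_1,\dots,S_k$ as a trimming of the true cluster assignment. Since each $X_i$ is drawn from a specific component $G_{c(i)}$, let $T_a=\{X_i:c(i)=a\}$. A Chernoff bound on $|T_a|\sim\mathrm{Binomial}(n,1/k)$ gives $\bigl||T_a|-n/k\bigr|\ll n\delta/(k^{O(1)}d^{2t})$ with probability $1-\gamma/(2k)$, provided $n$ matches the bound in the hypothesis. Discard a few points from each $T_a$ to form $S_a$ of size exactly $n/k$; this trimming perturbs empirical moments by an amount that is absorbed into the $d^{-2t}\delta$ slack (since individual whitened Gaussian moments have $\log(1/\gamma)^{O(t)}$-size with high probability). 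Conditions (1) and (2) are then proved component-by-component with a union bound over $a\in[k]$.

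For condition (1), fix $a$ and reduce to controlling moments of $m=n/k$ i.i.d.\ samples from $G_a=N(\mu_a,\Sigma_a)$. First, matrix Bernstein (or standard Gaussian concentration) gives $\|\Sigma_a^{-1/2}(\overline{\Sigma}_a-\Sigma_a)\Sigma_a^{-1/2}\|_{\mathrm{op}}$ and $\|\Sigma_a^{-1/2}(\overline{\mu}_a-\mu_a)\|_2$ both bounded by $\eta:=d^{-5t}\sqrt{\delta}$ once $m\gtrsim d^{O(t)}\log(1/\gamma)/\delta^2$. Next, for the true-whitened samples $Z_i=\Sigma_a^{-1/2}(X_i-\mu_a)\sim N(0,I)$ and any $s\leq t$, each entry of $m^{-1}\sum_i Z_i^{\otimes s}-M_s$ is a degree-$s$ Gaussian polynomial of variance $O(s^s)/m$; Gaussian hypercontractivity plus Markov gives entrywise deviation $\log(1/\gamma)^{O(t)}/\sqrt{m}$, and a union bound over the $d^s$ tensor entries yields the Frobenius bound $\|m^{-1}\sum Z_i^{\otimes s}-M_s\|_F^2\leq d^s\log(1/\gamma)^{O(t)}/m\leq d^{-2t}\delta/2$ for our $n$. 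Finally, switching from true whitening to empirical whitening, i.e.\ replacing $Z_i$ by $\overline{\Sigma}_a^{-1/2}(X_i-\overline{\mu}_a)=(I+E)Z_i+\epsilon$ with $\|E\|_{\mathrm{op}},\|\epsilon\|_2\leq \eta$, introduces an error in the $s$-th moment tensor that a direct multilinear expansion bounds by $s\eta\cdot O_t(1)\leq d^{-2t}\delta/2$.

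For condition (2), each of $E_a(v)$, $F_a(v)$, $G_a(A)$ is specified by a low-degree polynomial in the samples, so each fixed $(v,A)$ admits a pointwise Chernoff bound. Indeed, one-dimensional Gaussian tails yield $\Pr[X_1\notin E_a(v)]\leq \psi/2$ once the hidden constant in the definition is chosen large enough, so $|E_a(v)|\geq (1-\psi)m$ with probability $1-\exp(-\Omega(\psi m))$; anti-concentration of linear forms and Carbery--Wright anti-concentration of Gaussian quadratic forms handle $F_a(v)$ and $G_a(A)$ analogously. Uniformity over $v$ (resp.\ $A$) follows by an $(\psi/\mathrm{poly}(d))$-net in the $\Sigma_a^{-1/2}$-geometry of the unit sphere (resp.\ unit Frobenius ball of $d\times d$ matrices); the nets have size $\exp(O(d\log(d/\psi)))$ and $\exp(O(d^2\log(d/\psi)))$ respectively, and moving inside a net cell only shifts the defining thresholds by $O(\psi)$, which changes each of $|E_a|,|F_a|,|G_a|$ by at most $O(\psi)\cdot m$ (again via pointwise Gaussian tails/anti-concentration). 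Combining with the per-point Chernoff bound and the sample lower bound $n\geq\log(1/\gamma)^{Ct}d^{10kt}/\delta^2$ makes the union bound go through.

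The main obstacle is the moment-concentration step with empirical whitening in condition~(1): the additive error in each $s$-th moment tensor compounds through $s$ multiplications by $\overline{\Sigma}_a^{-1/2}\Sigma_a^{1/2}-I$, so the required accuracy on $\overline{\Sigma}_a$ degrades geometrically in $t$, which is exactly what forces the $d^{10kt}$ factor in the sample complexity (the extra factor of $k$ absorbs the trimming of $T_a$ and the outer union bound over components). The rest is bookkeeping: check that $n\geq\log(1/\gamma)^{Ct}d^{10kt}/\delta^2$ dominates the $\log(1/\gamma)^{O(t)}d^{O(t)}/\delta^2$ needed in each substep, set $\psi\geq\delta$ as assumed, and union-bound the failure events across $a\in[k]$ and $s\leq t$.
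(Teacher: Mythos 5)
The paper does not prove this claim: it is imported verbatim (``Restated from \cite{diakonikolas2020robustly}'') and used as a black box, so there is no internal proof to compare your attempt against. Evaluating your sketch on its own merits, the overall architecture (realize $S_1,\dots,S_k$ by trimming the true cluster assignment; pointwise moment concentration in the whitened frame plus a perturbation step to pass from true to empirical whitening; nets and pointwise tail/anti-concentration bounds for condition~(2)) is the natural route and largely sound, but two of the tools you name are not quite the right ones. First, $G_a(A)$ asks that $(X_i-X_j)^{T}A(X_i-X_j)$ \emph{lie near} $2\langle\Sigma_a,A\rangle$ up to a $\log(1/\psi)\cdot\|\Sigma_a A\|_F$ window, i.e.\ it is a \emph{concentration} requirement for a Gaussian chaos of degree two; Carbery--Wright is an \emph{anti}-concentration bound and points the wrong way. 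What you want is a Hanson--Wright-type tail bound for Gaussian quadratic forms. (Anti-concentration, which you correctly invoke for $F_a(v)$, is only needed there.) Second, your explanation that the extra factor of $k$ in the exponent $d^{10kt}$ ``absorbs the trimming of $T_a$ and the outer union bound over components'' does not really account for an exponential-in-$k$ blowup: a union bound over $k$ pieces and $t$ moment orders costs only a $\log(kt)$ factor inside a logarithm, not an extra $k$ in the exponent. Finally, the net argument for uniformity in $v$ (resp.\ $A$) has the usual delicate point you gesture at but do not fully resolve: moving to a nearby net element shifts the defining threshold, and one must argue \emph{in advance}, for each net point, that only an $O(\psi)$-fraction of samples land in the resulting threshold annulus, then union-bound over the net; otherwise the argument is circular. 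This is standard to fix (a Chernoff bound for the annulus event at each net point, combined with a slightly stricter target $(1-\psi/2)m$ on the net), but it should be made explicit if this claim were to be proved rather than cited.
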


We will use the following key lemmas from \cite{diakonikolas2020robustly}.  The setup is exactly the same.  Let $X_1, \dots , X_n \in \R^d$ satisfy the $(\delta, \psi, t)$-deterministic conditions (Definition \ref{def:deterministic-cond}) with respect to Gaussians $G_1, \dots , G_k$.  Let $S_1, \dots , S_k$ be the partition guaranteed in the definition.  Let $Y_1, \dots , Y_n$ be an $\eps$-corruption of $X_1, \dots , X_n$ and let $\mcl{A}$ be the clustering program (Definition \ref{def:sos-program}) for $Y_1, \dots , Y_n$.  For indeterminates $w_1, \dots , w_n$, define
\[
\alpha_i(w) = \sum_{j \in S_i}w_j.
\]
Below we will assume $\psi, \tau$ are smaller than some universal constants $\psi_0, \tau_0 > 0$. 
\\\\
Recall in Claim \ref{claim:exists-partition} that there are essentially three different ways that two Gaussians can be separated in TV distance.  We call these mean separation, variance separation, and covariance separation.  The lemmas below roughly assert that if two Gaussians are separated in one of these ways, then a valid solution to the clustering program $\mcl{A}$ cannot assign significant weight to both of them.

\begin{lemma}[Mean Separation, restated from \cite{diakonikolas2020robustly}]\label{lemma:mean-separation}
For every $\tau > 0$, there is $s = \widetilde{O}(1/\tau^2)$ such that if $\eps, \delta \leq s^{-O(s)}k^{-20}$ then for all $a,b \in [k]$, all $v \in \R^d$ and all sufficiently small $\rho > 0$, if
\[
\la \mu_a - \mu_b , v \ra^2 \geq \rho \E_{X,X' \sim \frac{1}{k}\sum G_i}\la X - X', v \ra^2,
\]
then
\[
\mcl{A} \vdash_{O(s)} \left(\frac{\alpha_a(w)\alpha_b(w)}{n^2} \right)^s \leq (s\log 1/\psi )^{O(s)} \cdot \left( \frac{\la v, \Sigma_av \ra + \la v, \Sigma_b v \ra}{\la \mu_a - \mu_b, v \ra^2} \right)^{\Omega(s)} + \rho^{-O(s)}(\tau^{\Omega(s)} + \eps^{\Omega(s)}k^{O(s)}s^{O(s^2)} + \psi^{\Omega(s)} )
\]
\end{lemma}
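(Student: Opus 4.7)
The plan is to translate into a sum-of-squares proof the following intuition: if samples from both $G_a$ and $G_b$ carry non-negligible weight under $w$, then along direction $v$ the $w$-reweighted empirical distribution must be noticeably bimodal (with modes separated by $|\langle \mu_a-\mu_b,v\rangle|$), which contradicts the Gaussianity imposed by $\mcl{A}_{\textsf{moments}}$. The quantitative gain of taking $s=\widetilde{O}(1/\tau^2)$ is that a Gaussian has bounded $2s$-th central moment, whereas a mixture with well-separated modes has $2s$-th central moment at least the separation to the $2s$-th power times the mass on the lighter mode.

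First, I would normalize by replacing $v$ with its scaling and setting $m_a=\langle\mu_a,v\rangle$, $m_b=\langle\mu_b,v\rangle$, $\Delta=m_a-m_b$, and $\sigma_v^2=\langle v,\Sigma_av\rangle+\langle v,\Sigma_bv\rangle$. Using the $s$-th moment constraint in $\mcl{A}_{\textsf{moments}}$, together with the matrix identities $(\Sigma^{1/2})^2=\Sigma(w)$, derive within SOS (at degree $O(s)$) the ``Gaussian-moment upper bound''
\[
\frac{k}{n}\sum_i w_i\, \langle v, X_i'-\mu(w)\rangle^{2s} \;\le\; (2s-1)!!\,\bigl(v^T\Sigma(w)v\bigr)^s + O\bigl(\sqrt{\delta}\,\|v\|^{2s}\bigr).
\]
At the same time I would expand $v^T\Sigma(w)v$ in the natural basis of components: splitting $\sum_i w_i$ by $\alpha_j(w)=\sum_{i\in S_j}w_j$ and using the concentration conditions of Definition~\ref{def:deterministic-cond} (conditions on $E_a,F_a,G_a$), certify an SOS bound of the form $v^T\Sigma(w)v \le \operatorname{poly}(\log 1/\psi)\bigl(\sigma_v^2+\sum_j(\alpha_j(w)/(n/k))(m_j-\mu(w)\cdot v)^2\bigr) + \psi^{\Omega(1)}V^2$, where $V^2=\E_{X,X'}\langle X-X',v\rangle^2$, absorbing the corruption via $z_i(X_i-X_i')=0$ and Cauchy--Schwarz for pseudoexpectations.

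Next I would produce a matching ``bimodality lower bound.'' Using Definition~\ref{def:deterministic-cond}, all but $\psi(n/k)$ of the indices in $S_a$ satisfy $\langle v,X_i-\mu_a\rangle^2\le O(\log 1/\psi)\langle v,\Sigma_a v\rangle$; restricting the sum $\frac{k}{n}\sum_i w_i\langle v,X_i'-\mu(w)\rangle^{2s}$ to $S_a$ and centering at $m_a$ via the SOS binomial inequality $(a+b)^{2s}\succeq_{\text{SOS}} 2^{-2s}a^{2s}-(2s)^s b^{2s}$, deduce
\[
\frac{k}{n}\sum_i w_i\langle v,X_i'-\mu(w)\rangle^{2s}\;\ge\; \frac{\alpha_a(w)}{n/k}\cdot\frac{(m_a-\mu(w)\!\cdot\!v)^{2s}}{2^{O(s)}}-(\log 1/\psi)^{O(s)}\sigma_v^{2s}-\eps^{\Omega(s)}k^{O(s)}s^{O(s^2)}V^{2s},
\]
and symmetrically for $b$. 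By pigeonhole at least one of $|\mu(w)\cdot v - m_a|$ or $|\mu(w)\cdot v-m_b|$ is $\ge|\Delta|/4$, so at least one of the two lower bounds gives a contribution of order $(\min(\alpha_a,\alpha_b)/(n/k))\cdot\Delta^{2s}/2^{O(s)}$. Multiplying the analogous inequality for the \emph{other} component (used as a separate test certificate in SOS) produces the product $\alpha_a(w)\alpha_b(w)/(n/k)^2$ on the left.

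Finally I would equate the upper and lower bounds, divide by $\Delta^{2s}$, and use the hypothesis $\Delta^2\ge\rho V^2$ to convert the $V$-scale error terms into the $\rho^{-O(s)}$ factor in the statement. The choice $s=\widetilde{O}(1/\tau^2)$ is forced by the need that $(2s-1)!!\,(\sigma_v^2/\Delta^2)^s$ beat a $\tau^{\Omega(s)}$ slack, and the assumption $\eps,\delta\le s^{-O(s)}k^{-20}$ controls the constants introduced by Cauchy--Schwarz across the $k$ components. The main obstacle is the self-referential role of $\Sigma(w)$: it is an indeterminate whose upper bound itself depends on the mode separation we are trying to rule out, so the SOS upper bound in step one and the decomposition of $v^T\Sigma(w)v$ must be combined carefully (as in the argument of Diakonikolas et al.~\cite{diakonikolas2020robustly}) to avoid circularity and to keep every intermediate inequality at constant SOS degree.
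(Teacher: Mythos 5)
This lemma is \emph{not} proved in the paper under review; it is imported verbatim from \cite{diakonikolas2020robustly}, so there is no in-paper argument to compare against. Judged as a free-standing reconstruction, your sketch identifies the right cast of characters (the $\mcl{A}_{\textsf{moments}}$ constraint giving near-Gaussian reweighted moments, the deterministic conditions $E_a,F_a,G_a$ localizing each true cluster along $v$, a pigeonhole on $\mu(w)\cdot v$, Cauchy--Schwarz for pseudoexpectations to absorb corruptions) and it even flags the right obstacle. But the central inequality chain, as you set it up, is vacuous, and the fix is not ``being careful''; it requires flipping the direction of the main comparison.

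Concretely, you compare a Gaussian-moment \emph{upper} bound $\frac{k}{n}\sum w_i\langle v,X_i'-\mu(w)\rangle^{2s}\le (2s-1)!!\,(v^T\Sigma(w)v)^s+\mathrm{err}$ against a bimodality \emph{lower} bound $\ge\frac{\alpha_b(w)}{n/k}\Delta^{2s}/2^{O(s)}-\mathrm{err}$. Dividing by $\Delta^{2s}$ leaves $(v^T\Sigma(w)v/\Delta^2)^s$ on the right, and the problem is that when $w$ does mix $S_a$ and $S_b$ one has, within SOS and from the same concentration conditions, $v^T\Sigma(w)v\gtrsim\frac{\alpha_a\alpha_b}{(n/k)^2}\Delta^2$; for balanced mixing $v^T\Sigma(w)v\approx\Delta^2/4$, so the two sides of your comparison differ only by a $2^{\Theta(s)}$-type factor and there is no contradiction. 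Multiplying the symmetric inequality for $b$ does produce $\alpha_a\alpha_b$ on the left but leaves an even larger power of $v^T\Sigma(w)v$ on the right, and there is no SOS-certifiable \emph{upper} bound on $v^T\Sigma(w)v$ in terms of $\sigma_v^2=v^T\Sigma_av+v^T\Sigma_bv$ without already knowing $\alpha_a\alpha_b$ is small. That is exactly the circularity you flag in your final paragraph, but you defer its resolution to \cite{diakonikolas2020robustly} rather than resolving it — and the deferred step is the entire technical content of the lemma.

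The way out is to use the $\mcl{A}_{\textsf{moments}}$ constraint as a \emph{lower} bound on the reweighted $2s$-th moment, namely $\ge (2s-1)!!\,(v^T\Sigma(w)v)^s-\mathrm{err}$, and to pair it with a concentration \emph{upper} bound coming from $E_a(v)$-type conditions, roughly $\le 2^{O(s)}\sum_j\alpha_j'\bigl[(m_j-\mu(w)\!\cdot\!v)^{2s}+(\log 1/\psi)^{s}(v^T\Sigma_jv)^s\bigr]+\mathrm{err}$ where $\alpha_j'=\alpha_j(w)/(n/k)$. Feeding the SOS-provable lower bound $v^T\Sigma(w)v\ge\sum_j\alpha_j'(m_j-\mu(w)\!\cdot\!v)^2-\mathrm{err}$ into the left side yields a comparison of the form $(2s-1)!!\bigl(\sum_j\alpha_j'd_j^2\bigr)^s\le 2^{O(s)}\sum_j\alpha_j'd_j^{2s}+\cdots$, and here the $(2s-1)!!\approx(2s/e)^s$ factor — which in your version sat harmlessly on the side you were \emph{allowed} to be large — becomes the lever: it beats the power-mean slack between $(\sum\alpha_j'd_j^2)^s$ and $\sum\alpha_j'd_j^{2s}$ exactly when $\alpha_a'\alpha_b'$ is not $O(1/s)$, which is where $s=\widetilde{O}(1/\tau^2)$ and the $(\sigma_v^2/\Delta^2)^{\Omega(s)}$ term enter. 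Your sketch never invokes the moment constraint as a lower bound, so this mechanism is absent, and without it the proposal does not establish the lemma.
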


\begin{lemma}[Variance Separation, restated from \cite{diakonikolas2020robustly}]\label{lemma:variance-separation}
For every $\tau > 0$, there is $s = \widetilde{O}(1/\tau^2)$ such that if $\eps, \delta \leq s^{-O(s)}k^{-20}$ then for all $a,b \in [k]$, all $v \in \R^d$ and all sufficiently small $\rho > 0$, if
\[
\la v, \Sigma_b v\ra  \geq \rho \E_{X,X' \sim \frac{1}{k}\sum G_i}\la X - X', v \ra^2,
\]
then
\[
\mcl{A} \vdash_{O(s)} \left(\frac{\alpha_a(w)\alpha_b(w)}{n^2} \right)^s \leq \psi^{-O(s)} \cdot \left(  s^{O(s)}\left( \frac{\la v, \Sigma_a v \ra}{\la v, \Sigma_b v \ra}\right)^{\Omega(s)} + \rho^{-O(s)}(\tau^{\Omega(s)} + \eps^{\Omega(s)}k^{O(s)}s^{O(s^2)}   )\right) + \rho^{-O(s)}\psi^{\Omega(s)}
\]
\end{lemma}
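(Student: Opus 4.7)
The plan is to follow the SOS argument from \cite{diakonikolas2020robustly}, whose conceptual engine is the classical fact that a mixture of two one-dimensional Gaussians with substantially different variances has inflated higher central moments compared to any single Gaussian of matching variance. Since $\mcl{A}_{\textsf{moments}}$ forces the $w$-reweighted sample to behave like a standard Gaussian after whitening by $\Sigma^{1/2}$, the program cannot simultaneously allocate non-negligible weight to both $S_a$ and $S_b$ whenever $\la v, \Sigma_a v \ra \gg \la v, \Sigma_b v \ra$. Converting this into an SOS derivation of the claimed bound on $(\alpha_a(w)\alpha_b(w)/n^2)^s$ is the goal.

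First I would project along $v$ and examine the degree-$2s$ central moment $m_{2s}(w) = \tfrac{k}{n}\sum_i w_i(\la X_i' - \mu(w), v\ra)^{2s}$. Using $\mcl{A}_{\textsf{matrices}}$ and $\mcl{A}_{\textsf{moments}}$, one derives an SOS upper bound
\[
\mcl{A}\vdash_{O(s)} m_{2s}(w) \le (2s-1)!!\,\la v,\Sigma(w)v\ra^{s} + \delta\cdot\text{low-order slop},
\]
essentially by substituting the moment constraint into the explicit expansion of $m_{2s}$ and peeling off cross-terms via Cauchy--Schwarz (with the booleanity $w_i^2 = w_i$ used to collapse indicator-like products). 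Next, I would lower bound the same moment along the true partition $S_1,\dots,S_k$ supplied by Definition \ref{def:deterministic-cond}: the second deterministic condition gives SOS-provable control over how many pairs $(X_i, X_j)\in S_a^2$ can deviate from $\la v,\Sigma_a v\ra$, yielding
\[
\mcl{A}\vdash_{O(s)} m_{2s}(w) \ge (2s-1)!!\cdot\tfrac{\alpha_a(w)}{n/k}\,\la v,\Sigma_a v\ra^{s} - \psi^{\Omega(s)}\cdot(\cdots) - \eps^{\Omega(s)}k^{O(s)}s^{O(s^2)}\cdot(\cdots),
\]
where the subtracted terms absorb the $\eps$-corruption via $\mcl{A}_{\textsf{corruptions}}$ and the deterministic-condition slack $\psi$. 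The assumption $\la v,\Sigma_b v\ra\ge \rho\,\E\la X-X',v\ra^{2}$ is used to bound $\la v,\Sigma(w)v\ra$ above by a controlled multiple of $\tfrac{\alpha_a(w)}{n/k}\la v,\Sigma_a v\ra + \tfrac{\alpha_b(w)}{n/k}\la v,\Sigma_b v\ra + \rho^{-1}\psi$-type terms.

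Combining the two moment expressions, the weighted share of $S_a$ is forced to scale like $(\la v,\Sigma_b v\ra/\la v,\Sigma_a v\ra)^{\Omega(s)}$ once $\alpha_b(w)/n$ is non-negligible, because the Gaussian-shape constraint cannot accommodate the heavy tail contributed by the $S_a$ points. Multiplying through by $\alpha_b(w)/n$ and raising to the appropriate power gives the target inequality on $(\alpha_a(w)\alpha_b(w)/n^2)^s$, with the correction $\rho^{-O(s)}(\tau^{\Omega(s)} + \eps^{\Omega(s)}k^{O(s)}s^{O(s^2)})$ picking up the errors from centering and corruption, and the leading $\psi^{-O(s)}$ picking up the outlier-pair budget.

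The main obstacle is the bookkeeping rather than any single clever step: every passage from a pointwise inequality to its SOS analogue has to remain at degree $O(s)$, and the centering is genuinely subtler than in Lemma \ref{lemma:mean-separation} because one cannot cancel $\mu(w)$ against a fixed reference mean and must instead argue that shifting centers contributes only $\psi^{-O(s)}$-inflation; this is precisely why the conclusion carries a $\psi^{-O(s)}$ prefactor on the leading gap term while the mean-separation lemma does not. The remaining work is routine: choose $s = \widetilde{O}(1/\tau^2)$ so that the moment tensor constraints up to order $t$ capture the Gaussianity tightly enough, and invoke Claim \ref{claim:concentration} to guarantee that enough samples suffice for the deterministic conditions to hold with probability $1-\gamma$.
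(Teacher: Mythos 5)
This lemma is explicitly labeled ``Restated from \cite{diakonikolas2020robustly}''; the paper supplies no proof of it and simply cites the result, so there is no in-paper argument for your sketch to be compared against. What I can do is assess whether your reconstruction plausibly matches what the cited paper must be doing.

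The high-level mechanism you identify is correct: the moment constraints $\mcl{A}_{\textsf{moments}}$ force the $w$-reweighted, $\Sigma(w)$-whitened sample to have Gaussian-shape, and two components with very different variances along $v$ cannot both receive non-negligible weight under that constraint. However the specific upper/lower moment bookkeeping you sketch does not obviously close. From the moment constraint you get $m_{2s}(w)\lesssim (2s-1)!!\,\la v,\Sigma(w)v\ra^{s}$, and the lower bound you propose gives $m_{2s}(w)\gtrsim \tfrac{\alpha_a(w)}{n/k}\la v,\Sigma_a v\ra^{s}$; combining yields only $\tfrac{\alpha_a(w)}{n/k}\la v,\Sigma_a v\ra^{s}\lesssim\la v,\Sigma(w)v\ra^{s}$. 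Since $\la v,\Sigma(w)v\ra$ is itself at least $\tfrac{\alpha_a(w)}{n/k}\la v,\Sigma_a v\ra$, this alone does not produce a bound of the form $(\alpha_a(w)\alpha_b(w)/n^2)^s\leq(\cdots)$; there is no route to bring $\alpha_b$ into the inequality. The likely missing ingredient is the anti-concentration half of the deterministic conditions, specifically the $F_a(v)$ set in Definition \ref{def:deterministic-cond}, which guarantees that most pairs within the low-variance component $S_a$ are separated by $\Omega(\sqrt{\psi\,v^T\Sigma_a v})$. After whitening by $\Sigma(w)^{-1/2}$ (which is large if $S_b$ with $\la v,\Sigma_b v\ra\geq \rho\cdot\text{mixture variance}$ gets weight), those $S_a$ points live in an interval too small to be consistent with Gaussian anti-concentration, and this is the actual tension. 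That also gives a cleaner account of the $\psi^{-O(s)}$ prefactor than the centering story you tell: it is the $\sqrt{\psi}$ anti-concentration scale from $F_a(v)$ entering multiplicatively. Finally, note your exposition treats $a$ as the heavy-tailed component, whereas the lemma as stated is useful when $\la v,\Sigma_a v\ra\ll\la v,\Sigma_b v\ra$ (and the application inside the proof of Theorem \ref{thm:rough-clustering} applies the lemma with the labels swapped relative to Case 2 of Claim \ref{claim:exists-partition}); getting the roles of $a$ and $b$ consistent is not cosmetic since the hypothesis is asymmetric in $b$.
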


\begin{lemma}[Covariance Separation, restated from \cite{diakonikolas2020robustly}]\label{lemma:covariance-separation}
Let $\Sigma$ be the covariance of the mixture $\frac{1}{k}\sum G_i$.  If $\eps, \delta < k^{-O(1)}$, then for all $a,b \in [k]$ and $A \in \R^{d \times d}$,
\begin{align*}
\mcl{A} \vdash_{O(1)} \left(\frac{\alpha_a(w)\alpha_b(w)}{n^2} \right)^{16} \leq O(\log 1/\psi)^8 \cdot \frac{\norm{\Sigma^{1/2}A\Sigma^{1/2}}_F^8 + \norm{\Sigma_a^{1/2}A\Sigma_a^{1/2}}_F^8 + \norm{\Sigma_b^{1/2}A\Sigma_b^{1/2}}_F^8 }{\la \Sigma_a - \Sigma_b A \ra^8} \\ + O(\psi^4) + O(\eps^2k^{20})
\end{align*}

\end{lemma}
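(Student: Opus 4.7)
The plan is to mirror the proofs of Lemmas \ref{lemma:mean-separation} and \ref{lemma:variance-separation}, lifting to SOS the probabilistic fact that if $X, X' \sim N(\mu_c, \Sigma_c)$ are i.i.d., then $(X - X')^T A (X - X')$ has expectation $2 \la \Sigma_c, A \ra$ and concentrates around it at scale $\norm{\Sigma_c^{1/2} A \Sigma_c^{1/2}}_F$. The deterministic conditions on the sets $G_c(A)$ in Definition \ref{def:deterministic-cond} already encode a concrete version of this for each uncorrupted cluster $S_c$, while the moment constraints $\mcl{A}_{\textsf{moments}}$ together with $\mcl{A}_{\textsf{matrices}}$ encode it (at degree $\leq 4$) for the $w$-weighted subset with covariance $\Sigma(w)$. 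The target bound is then an algebraic consequence: if the weighted subset simultaneously covers nontrivial fractions $\alpha_a(w)/n$ and $\alpha_b(w)/n$ of both $S_a$ and $S_b$, then $\la \Sigma_a, A \ra$ and $\la \Sigma_b, A \ra$ are forced to agree up to the Frobenius-norm error terms, which contradicts $\la \Sigma_a - \Sigma_b, A \ra$ being large.

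First I would derive, at SOS degree $O(1)$, the polynomial identity
\[
\frac{k^2}{n^2} \sum_{i,j \in [n]} w_i w_j (X_i' - X_j')^T A (X_i' - X_j') = 2 \la \Sigma(w), A \ra,
\]
which follows from the definition of $\Sigma(w)$ and $w_i^2 = w_i$. Then I would use $\mcl{A}_{\textsf{corruptions}}$ to replace $X_i'$ by $X_i$ on all but an $\eps$-fraction of indices, absorbing the mismatch into an $O(\eps^2 k^{20})$ additive term via SOS Cauchy-Schwarz and a fourth-moment bound on $(X_i - X_j)^T A (X_i - X_j)$ coming from $\mcl{A}_{\textsf{moments}}$ at $s = 4$. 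Restricting the double sum to $S_a \times S_a$ using the indicator $\mathbf{1}[i \in S_a]$ and invoking $|G_a(A)| \geq (1-\psi)(n/k)^2$ yields the SOS-provable identity
\[
\alpha_a(w)^2 \cdot 2 \la \Sigma_a, A \ra = \sum_{i,j \in S_a} w_i w_j (X_i - X_j)^T A (X_i - X_j) \;\pm\; O(\log 1/\psi)\,\alpha_a(w)^2 \norm{\Sigma_a^{1/2} A \Sigma_a^{1/2}}_F \;\pm\; (\text{corruption}),
\]
and analogously for $b$. Subtracting the two and using the first identity to rewrite both right-hand sides in terms of $\la \Sigma(w), A \ra$ produces a degree-$O(1)$ SOS proof that $\alpha_a(w)^2 \alpha_b(w)^2 \la \Sigma_a - \Sigma_b, A \ra$ is bounded by a sum of the three Frobenius-type quantities in the statement together with $O(n^2 \psi + \eps k^{10} n^2)$ errors; the $\norm{\Sigma^{1/2} A \Sigma^{1/2}}_F$ term appears because $\norm{\Sigma(w)^{1/2} A \Sigma(w)^{1/2}}_F$ is SOS-comparable to $\norm{\Sigma^{1/2} A \Sigma^{1/2}}_F$ via the $s=4$ moment constraint. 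Raising this bound to the eighth power using SOS Cauchy-Schwarz produces the claimed inequality.

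The main obstacle is the book-keeping of non-division: SOS does not allow dividing by $\alpha_a(w)$ or $\alpha_b(w)$, so every averaging step must be rewritten as a raw polynomial identity against $w$, which is precisely what dictates the exponent $16$ on $\alpha_a(w)\alpha_b(w)/n^2$. A secondary subtlety is converting the $\Sigma(w)$-scale Frobenius norm that arises naturally from the moment identities into the three target norms $\norm{\Sigma^{1/2} A \Sigma^{1/2}}_F$, $\norm{\Sigma_a^{1/2} A \Sigma_a^{1/2}}_F$, and $\norm{\Sigma_b^{1/2} A \Sigma_b^{1/2}}_F$; this is handled by decomposing the $w$-weighted empirical second moment as a convex combination of contributions from the true clusters, each controlled by its corresponding $G_c(A)$, and then pushing these comparisons through $\mcl{A}_{\textsf{matrices}}$.
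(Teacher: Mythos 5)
The paper does not contain a proof of this lemma. It is explicitly labeled ``Restated from \cite{diakonikolas2020robustly},'' and the surrounding text says only ``We will use the following key lemmas from \cite{diakonikolas2020robustly}. The setup is exactly the same.'' So there is no in-paper argument to compare your sketch against; the authors import the statement as a black box from prior work. Your task therefore amounts to reconstructing the proof from Diakonikolas et al., which is a different exercise than matching this paper.

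That said, your high-level plan is in the right spirit \--- using the deterministic condition on $G_a(A)$ from Definition \ref{def:deterministic-cond}, the moment constraints in $\mcl{A}_{\textsf{moments}}$, and the algebraic identity relating $\sum_{i,j} w_i w_j (X_i' - X_j')^T A (X_i' - X_j')$ to $\la \Sigma(w), A\ra$ \--- but two points need care if you want to turn the sketch into a proof. First, the condition $|G_a(A)| \geq (1-\psi)(n/k)^2$ says nothing about the pairs outside $G_a(A)$, and $(X_i - X_j)^T A (X_i - X_j)$ can be unboundedly large on that $\psi$-fraction; you need a separate argument (typically a higher-moment SOS bound, as in the mean and variance lemmas) to control that contribution, and your sketch does not mention one. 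Second, the set $G_a(A)$ in the paper's Definition \ref{def:deterministic-cond} is stated with deviation scale $\norm{\Sigma_a A}_F$, whereas the lemma's conclusion involves $\norm{\Sigma_a^{1/2} A \Sigma_a^{1/2}}_F$; these are not equal in general, so either the definition or the lemma statement carries a transcription slip from \cite{diakonikolas2020robustly}, and a faithful reconstruction would need to reconcile the two rather than silently use the latter throughout.
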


\subsection{Clustering Algorithm}
We use essentially the same clustering algorithm as \cite{diakonikolas2020robustly}.

\begin{algorithm}[h]
\caption{{\sc Rough Clustering} }
\begin{algorithmic} 
\State \textbf{Input}: $\eps$-corrupted samples $X_1, \dots , X_n$ and parameters $t, \delta, \eps, k, \eta$
\State Initialize a list of subsets $L = \{ \}$
\For {$\textsf{count} = 0,1, \dots , 100k \log 1/\eta$}
\State Let $\mcl{A}$ be the clustering program (Definition \ref{def:sos-program}) for $X_1, \dots , X_n$
\State Compute the pseudoexpectation $\widetilde{E}$ that satisfies the constraints $\mcl{A}$ (Definition~\ref{def:sos-program}) and maximizes 
\[
\widetilde{E}\left[\sum_{i \not\in \cup_{R \in L}R} w_i\right]
\]
\State Choose a random $i \sim [n]$ with probability $p_i = \frac{\widetilde{\E}[w_i]}{\sum \widetilde{\E}[w_i]}$
\State Create set $R$ by adding each element $j \in [n]$ independently with probability $\frac{\widetilde{\E}[w_iw_j]}{\widetilde{\E}[w_i]} $
\State Add $R$ to the list $L$
\EndFor
\State Let $L = \{R_1, \dots , R_m \}$
\For {all subsets $S \subset L$}
\State Recurse on $\cup_{i \in S}R_i$ for each of $k \rightarrow 1,2, \dots , k-1$ and $t, \delta, \eps, \eta$ unchanged
\EndFor
\State Return $\{X_1, \dots , X_n \}$ (as one cluster) and all unions of some combination of the clusters returned in each computation branch
\end{algorithmic}
\label{alg:rough-cluster}
\end{algorithm}

\begin{proof}[Proof of Theorem \ref{thm:rough-clustering}]
We can use Claim \ref{claim:concentration} to ensure that with $1 - \gamma/2$ probability, the deterministic conditions in Definition \ref{def:deterministic-cond} are satisfied for all submixtures and the various values of $\delta, \psi, t$ that we will need.

First, if all pairs of components are $D'$ close, then returning the entire sample as one cluster suffices.  Now, we may assume that there is some pair that is not $D'$-close.  We apply Claim \ref{claim:exists-partition} and let $U,V$ be the partition of the components given by the claim.  Let $C$ be a sufficiently large function of $k,D, \gamma$ that we will set later.  We can do this as long as we ensure that $D'$ is a sufficiently large function of $k,C$.  We ensure that $k^C > D$.    Note that each of the pieces $A_1, \dots , A_l$, must be entirely in $U$ or in $V$ because of our assumption about closeness between the components.  We claim that 
\begin{equation}\label{eq:partition}
\widetilde{\E}\left[\left(\sum_{i \in \cup_{j \in U} S_j} w_i \right)\left(\sum_{i \in \cup_{j \in V} S_j} w_i \right) \right]\leq \gamma' n^2
\end{equation}
where we can make $\gamma'$ sufficiently small in terms of $\gamma,D, k$ by choosing $D'$ and the functions $f, F$ suitably.
\\\\
Below we will let $a,b$ be indices such that $a \in U$ and $b \in V$.  If the first clause of  Claim \ref{claim:exists-partition} is satisfied, then we can take $\rho = poly(1/k)$ and for $\tau$ sufficiently small in terms of $\gamma, k, D, C$, we have
\[
\widetilde{\E}\left( \frac{\alpha_{a}(w)\alpha_b(w)}{n^2}\right)^s \leq k^{-C/2s}
\]
Summing over all $a \in U, b \in V$,  this gives (\ref{eq:partition}).
\\\\
If the second clause of  Claim \ref{claim:exists-partition} is satisfied, then we can take 
\[
\rho = \min_{a \in U}\frac{v^T\Sigma_a v}{v^T \Sigma v} \geq \frac{1}{k^{2Ck}}
\]
We choose $\tau $ sufficiently small in terms of $\gamma, k, C, D$ and combining with the fact that $(v^t\Sigma_a v) \geq k^{C}(v^T\Sigma_bv)$ 
for all $a \in U, b \in V$ we get
\[
\widetilde{\E}\left( \frac{\alpha_{a}(w)\alpha_b(w)}{n^2}\right)^s \leq k^{-\Omega(C)s}
\]
Finally, when the third clause of Claim \ref{claim:exists-partition} is satisfied follows similarly after setting $A = A_{ab}$.  In all cases, we now have 
(\ref{eq:partition}).  The next step will be to analyze our random sampling to select the subset $R$.  First note
\[
\widetilde{\E}[|R|] = \frac{\sum_{i,j}\widetilde{\E}[w_iw_j]}{\sum_{i}\widetilde{\E}[\sum_{i} w_i]} = \frac{n}{k}
\]
Next we analyze the intersections with the two sides of the partition $U,V$.  We will slightly abuse notation and use $i \in U$ when $i \in \cup_{j \in U}S_j$ and it is clear from context that we are indexing the samples.  Conditioned on the first index that is randomly chosen satisfying $i \in U$ then
\[
\E[|R \cap V|] = \frac{\sum_{i_1 \in U, i_2 \in V}\widetilde{\E}[w_{i_1}w_{i_2}]}{\sum_{i \in U}\widetilde{\E}[w_i]} \leq \frac{\gamma'n^2}{\sum_{i \in U}\widetilde{\E}[w_i]}
\]
repeating the same argument for when $i \in V$, we have $\E[\min(|R \cap U| , |R \cap V|)] \leq \gamma'kn$. 
Finally, we lower bound the expected number of new elements that $R$ adds to the list $L$.  This quantity is 
\[
\frac{\widetilde{\E}[\sum_{i \in [n], j \notin L}w_iw_j]}{n/k} = \sum_{j \notin L}\widetilde{\E}[w_j]
\]
where by $j \notin L$ we mean $j$ is not in the union of all previous subsets in the list $L$.  Note that indicator functions of the components $S_1, \dots , S_k$ are all valid pseudoexpectations and since we are picking the pseudoexpectation that maximizes $\sum_{j \notin L}\widetilde{\E}[w_j]$, the expected number of new elements added to $L$ is at least
\[
\frac{n - |\cup_{R \in L}R|}{k}
\]
Now we analyze the recombination step once we finalize $L = \{R_1, \dots , R_m \}$.  For any sufficiently small function $h(k, \gamma, D)$, we claim that by choosing $D'$ and the functions $f, F$ appropriately, we can ensure with $1 - h(k, D, \gamma)$ probability, there is some recombination that gives a $1 - h(k, D, \gamma)$-corrupted sample of the submixture corresponding to $U$.  To see this, it suffices to set $\eta < h(k, D, \gamma)$ and then look at the first $m' = 100 k \log 1/h(k,D, \gamma)$ subsets in $L$.  Their union has expected size $(1 - h(k,D,  \gamma)^{100})n$. 
Next, among $R_1, \dots , R_{m'} $,
\[
\E\left[\sum_{i=1}^{m'} \min(|R_i \cap U| , |R_i \cap V|) \right] \leq \gamma' km'n
\]
If we ensure that $\gamma'$ is sufficiently small in terms of $\gamma,D, k$, then using Markov's inequality, with $1 - h(k,D, y)$ probability, there is some recombination that gives a $1 - h(k, D, \gamma)$-corrupted sample of the submixture corresponding to $U$.  We can make the same argument for $V$.  Now we can recurse and repeat the argument because each of these submixtures only contains at most $k-1$ true components.
\end{proof}

\subsection{Improved Clustering Result from \cite{bakshi2020robustlya}} \label{sec:improved-clustering}

In \cite{bakshi2020robustlya}, the authors obtain a rough clustering result similar to Theorem \ref{thm:rough-clustering} but are able to remove the bounded fractionality assumption.  Their result is restated using our notation below.
\begin{theorem}[\cite{bakshi2020robustlya}]\label{thm:improved-clustering}
Let $k, D, \gamma $ be parameters.  Assume we are given $\eps$-corrupted samples from a mixture of Gaussians $w_1G_1 + \dots  + w_kG_k$ where the mixing weights $w_i$ are all at least $1/A$ for some constant $A$.  Let $A_1, \dots , A_l$ be a partition of the components such that 
\begin{enumerate}
    \item For any $j_1, j_2$ in the same piece of the partition $G_{j_1}, G_{j_2}$ are $D$-close
    \item For any $j_1, j_2$ in different pieces of the partition, $G_{j_1}, G_{j_2}$ are not $D'$-close
\end{enumerate}
where $D'>  F(k,A, D, \gamma) $ for some sufficiently large function $F$.  Assume that $t >F(k,A, D, \gamma) $ and $\eta, \eps, \delta < f(k,A, D, \gamma)$ for some sufficiently small function $f$.  Then with probability at least $1 - \gamma$, if $X_1, \dots , X_n$ is an $\eps$-corrupted sample from the mixture $w_1G_1 + \dots + w_kG_k$ with $n \geq \poly(1/\eps, 1/\eta,  1/\delta, d)^{O(k,A)}$, then there is an algorithm that runs in $\poly(n)$ time and returns $O_k(1)$ candidate clusterings, at least one of which gives a $\gamma$-corrupted sample of each of the submixtures given by $A_1, \dots , A_l$.
\end{theorem}

Observe that Theorem \ref{thm:improved-clustering} is the same as Theorem \ref{thm:rough-clustering} but with the bounded fractionality assumption removed.  Theorem \ref{thm:rough-clustering} is the only source of the bounded fractionality assumption in our paper.  In the subsequent sections, replacing all uses of Theorem \ref{thm:rough-clustering} with Theorem \ref{thm:improved-clustering} allows us to remove the bounded fractionality assumption from our main result.

\section{Putting Everything Together}\label{sec:full-alg}

We can now combine our clustering results and our results for learning mixtures of Gaussians that are not too separated to get a learning algorithm in the fully general case.  Our main theorem is stated below.

\begin{theorem}\label{thm:full-alg}
Let $k, A, b > 0$ be constants.  There is a sufficiently large function $G$ and a sufficiently small function $g$ depending only on $k,A,b$ (with $G(k,A,b),g(k,A,b) > 0$) such that given an $\eps$-corrupted sample $X_1, \dots , X_n$ from a mixture of Gaussians $\mcl{M} = w_1G_1 + \dots + w_kG_k \in \R^d$ where the $G_i$ have variance at least $\poly(\eps/d)$ and at most $\poly(d/\eps)$ in all directions and
\begin{itemize}
    \item The $w_i$ are all rational with denominator at most $A$
    \item $d_{\TV}(G_i, G_j) \geq b$
\end{itemize}
and $n \geq (d/\eps)^{G(k,A,b)}$, then there is an algorithm that runs in time $\poly(n)$ and with $0.99$ probability outputs a mixture 
\[
\widetilde{\mcl{M}} = \widetilde{w_1}\widetilde{G_1} + \dots + \widetilde{w_k}\widetilde{G_k}
\]
such that $d_{\TV}(\widetilde{\mcl{M}}, \mcl{M}) \leq \eps^{g(k,A,b)}$. 
\end{theorem}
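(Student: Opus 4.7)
The plan is to combine the three algorithmic components already developed in the paper: the {\sc Rough Clustering} of Section~\ref{sec:clustering}, the robust parameter-learning algorithm of Theorem~\ref{thm:full-close-case}, and a standard hypothesis-selection tournament at the end. To handle the bounded fractionality of the mixing weights, I would regard $\mcl{M}$ (purely for the clustering analysis) as a uniform mixture on $K \leq A$ components obtained by duplicating each $G_i$ with multiplicity proportional to $w_i$, as the introductory remark of Section~\ref{sec:clustering} permits. Identical duplicates are indistinguishable at the sample level, which is exactly what we want: the clustering groups samples, not component indices.

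First I would invoke Algorithm~\ref{alg:rough-cluster} with a parameter-closeness threshold $D$ chosen large enough that every submixture induced by the partition of $[K]$ into $D$-close equivalence classes is $\delta'$-well-conditioned (Definition~\ref{def:well-conditioned}) for some $\delta' \geq \eps^{f(k)}$, where $f$ is the function supplied by Theorem~\ref{thm:full-close-case}. The graph-connectedness requirement of Definition~\ref{def:well-conditioned} follows from Claim~\ref{claim:trivial-dist-bound} (any $D$-close pair has TV distance at most $1 - f(D) < 1$), the TV-separation $\geq b$ within each class follows from the hypothesis of Theorem~\ref{thm:full-alg}, and the minimum-weight condition follows from bounded fractionality. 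Taking $D'$ a sufficiently large function of $k,A,b,D$ and $\eta, \eps, \delta$ sufficiently small, Theorem~\ref{thm:rough-clustering} guarantees that with high probability, some clustering returned by {\sc Rough Clustering} is a $\gamma$-corrupted sample of each true submixture, for any desired small $\gamma$.

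Next, I would apply Theorem~\ref{thm:full-close-case} to each cluster of each candidate clustering, producing a list of $(1/\eps)^{O_k(1)}$ candidate submixtures per cluster, at least one of which is $\poly(\eps)$-close in both component TV distance and mixing weights to the true submixture. Taking all products over clusters yields a polynomial-sized list of candidate full mixtures, where the overall weight assigned to each reconstructed submixture is read off from the empirical sample fraction in the corresponding cluster and rounded to the nearest rational with denominator dividing $A$. For the correct clustering and the correct per-cluster choice, the triangle inequality gives $d_{\TV}(\widetilde{\mcl{M}}, \mcl{M}) \leq \poly(\eps)$. Finally, a Scheff\'e-style tournament on this polynomial-sized candidate list, run against the empirical distribution of the $\eps$-corrupted samples, selects a single mixture whose TV distance to $\mcl{M}$ is within a constant factor of the best, which yields the claimed bound $\eps^{g(k,A,b)}$.

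The main obstacle is the careful threading of the many ``sufficiently small/large'' parameters. The closeness threshold $D$ controls the well-conditionedness $\delta'$, which must exceed $\eps^{f(K)}$ so that Theorem~\ref{thm:full-close-case} applies; simultaneously, the clustering accuracy $\gamma$, the deterministic-condition parameters $\psi, \delta, t$ of Definition~\ref{def:deterministic-cond}, and the sample count $n$ must be chosen so that downstream losses (corruption inflation from $\gamma$, error in estimating submixture weights from cluster sizes, loss in the tournament, etc.) remain within the $\poly(\eps)$ slack afforded by Theorem~\ref{thm:full-close-case}. Since $k,A,b$ are constants, all of these can be chosen as constants depending on them and the overall sample complexity stays polynomial in $d/\eps$; the real work is not any new lemma but the bookkeeping to verify this chain of dependencies closes consistently.
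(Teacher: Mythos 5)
There is a genuine gap in your proposal: a single pass of rough clustering followed by Theorem~\ref{thm:full-close-case} cannot achieve $\poly(\eps)$ accuracy. The issue is that the guarantee of Theorem~\ref{thm:rough-clustering} produces a $\gamma$-corrupted sample of each submixture, and the $\gamma$ you can achieve is governed by the closeness/separation thresholds $D, D'$, the deterministic-condition parameters, and the constraint $\eps < f(k,A,D,\gamma)$. If you try to push $\gamma$ down to $\poly(\eps)$, the required separation $D' > F(k,A,D,\gamma)$ becomes $\eps$-dependent, and the existence argument for a valid $(D,D')$ partition (iterating over equivalence classes as in Lemma~\ref{lem:const-accuracy}) stacks $F$-applications up to $k$ times, so $D$ itself stops being a constant and the circular constraint $\eps < f(k,A,D,\gamma)$ eventually fails. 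In short, the moment-based clustering inherently produces only constant-accuracy cluster assignments, so feeding the resulting clusters into Theorem~\ref{thm:full-close-case} gives estimates that are accurate only to constant level, not to $\poly(\eps)$.

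The paper's actual proof closes this gap with a second round that you omit entirely. It first runs the pipeline you describe to obtain component estimates within some small \emph{constant} accuracy $\theta$ (Lemma~\ref{lem:const-accuracy}). It then exploits the structural fact (Lemma~\ref{lem:distance-between-gaussians}) that if $A$ and $B$ are Gaussians with constant overlap and $A$ is $(1-\eps)$-far in TV from $C$, then so is $B$, up to $\poly(\eps)$. Using the constant-accuracy estimates $\widetilde{G_i}$, the algorithm reassigns every sample by maximum likelihood (Algorithm~\ref{alg:full}, middle loop) and regroups according to a refined partition guaranteed by Claim~\ref{claim:exists-good-partition}, in which across-piece components have TV distance at least $1 - \eps^{\kappa}$. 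Lemma~\ref{lem:find-good-clusters} shows this reassignment produces $\poly(\eps')$-corrupted samples of the refined submixtures, and only then does Theorem~\ref{thm:full-close-case} yield $\poly(\eps)$-accurate parameters. Without this second, likelihood-based reclustering stage, the chain of dependencies in your "bookkeeping" paragraph cannot close.
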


\subsection{Distance Between Gaussians}
We will need to prove a few preliminary results.  The main lemma we prove in this section is the following, which gives a stronger bound than the triangle inequality for TV distance between Gaussians.

\begin{lemma}\label{lem:distance-between-gaussians}
Let $\lambda$ be a constant.  Let $A,B,C$ be Gaussian distributions.  Assume that $d_{\TV}(A,B) \leq 1 - \lambda$.  If $d_{\TV}(A,C) \geq 1 - \eps$ and $\eps$ is sufficiently small then 
\[
d_{\TV}(B,C) \geq 1 - \poly( \eps)
\]
(where the RHS may depend on $\lambda$).
\end{lemma}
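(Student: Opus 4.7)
The strategy is to show that if $d_{\TV}(A,C) \ge 1-\eps$ then there is a ``separating direction'' $v$ in which $A$ and $C$ look dramatically different, and then use $d_{\TV}(A,B) \le 1 - \lambda$ to argue that $A$ and $B$ must look nearly identical in that direction $v$, so $B$ and $C$ inherit essentially the same separation. A direct triangle inequality on $d_{\TV}$ gives only $d_{\TV}(B,C) \ge \lambda - \eps$, which is useless; the proof has to exploit the rigid structure of Gaussians.

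First I would establish a quantitative converse to Claim~\ref{claim:parameter-dist}: if two Gaussians $N(\mu_1,\Sigma_1)$ and $N(\mu_2,\Sigma_2)$ satisfy $d_{\TV} \ge 1-\eps$, then there is a unit vector $v$ and an absolute constant $c > 0$ such that either
\[
\bigl(v \cdot (\mu_1 - \mu_2)\bigr)^2 \;\ge\; c\log(1/\eps) \cdot \min\bigl(v^\top \Sigma_1 v,\; v^\top \Sigma_2 v\bigr),
\]
or
\[
\frac{\max(v^\top \Sigma_1 v,\; v^\top \Sigma_2 v)}{\min(v^\top \Sigma_1 v,\; v^\top \Sigma_2 v)} \;\ge\; (1/\eps)^{c}.
\]
This is obtained by simultaneously diagonalizing $\Sigma_1, \Sigma_2$ and invoking a one-dimensional Gaussian tail estimate on whichever coordinate witnesses the TV mass; essentially the same statement appears in Lemma~3.2 of \cite{kane2020robust}.

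Next I would apply the contrapositive to the pair $(A,B)$: since $d_{\TV}(A,B) \le 1 - \lambda$, for \emph{every} direction $v$ we have
\[
\frac{1}{c_\lambda} v^\top \Sigma_A v \;\le\; v^\top \Sigma_B v \;\le\; c_\lambda\, v^\top \Sigma_A v \qquad \text{and} \qquad \bigl(v \cdot (\mu_A - \mu_B)\bigr)^2 \;\le\; c_\lambda \cdot v^\top \Sigma_A v,
\]
for some $c_\lambda$ depending only on $\lambda$. This is essentially Lemma~\ref{lem:parameter-closeness} specialized to the ``mixture'' $\tfrac12 A + \tfrac12 B$. Now apply the converse above to $(A,C)$ to obtain a witness direction $v$. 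If the witness is of \emph{mean type}, then $|v \cdot (\mu_A - \mu_C)| \gg \sqrt{c_\lambda \cdot v^\top \Sigma_A v} \ge |v \cdot (\mu_A - \mu_B)|$ once $\eps$ is small enough relative to $\lambda$, so triangle inequality gives $(v \cdot (\mu_B - \mu_C))^2 \ge \tfrac{1}{2}(v \cdot (\mu_A - \mu_C))^2$; combined with $v^\top \Sigma_B v \asymp v^\top \Sigma_A v$, the mean-type separation survives (with the $\log(1/\eps)$ factor only shrunk by a constant). If the witness is of \emph{variance type}, then $v^\top \Sigma_B v \asymp v^\top \Sigma_A v$ means the same lopsided ratio holds between $v^\top \Sigma_B v$ and $v^\top \Sigma_C v$ up to a factor of $c_\lambda$. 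In either case, projecting to the direction $v$ reduces to a one-dimensional Gaussian TV computation, whose forward bound yields $d_{\TV}(B,C) \ge 1 - \poly(\eps)$.

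The main obstacle is making the quantitative converse in the first step sharp enough: the exponents must be large enough that after absorbing $c_\lambda^{O(1)}$ factors in the transfer, the separation for $(B,C)$ is still strong enough to force $1 - \poly(\eps)$ (rather than merely $1 - o_\eps(1)$). Taking the witness $v$ to be either $\Sigma_A^{-1/2}(\mu_A - \mu_C)$ or an extremal eigenvector of $\Sigma_A^{-1/2}\Sigma_C \Sigma_A^{-1/2}$ yields tight-up-to-constants bounds, and the final polynomial dependence on $\eps$ comes from the one-dimensional Gaussian TV distance being $1 - \exp(-\Theta(\text{gap}^2))$ in the mean case and $1 - \poly(\eps)$ in the variance-ratio case.
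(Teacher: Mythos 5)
Your approach is genuinely different from the paper's. The paper never looks at the parameters of the Gaussians at all: it works entirely with likelihood ratios. Specifically, the paper invokes a density-ratio concentration result (Lemma \ref{lem:prob-ratios}, restated from \cite{diakonikolas2020robustly}) stating that for two Gaussians $A, B$ with $d_{\TV}(A,B) \le 1-\lambda$, most of the mass of $A$ (and of $B$) lies in the region where $\eps^{O(1)} \le A(x)/B(x) \le \eps^{-O(1)}$. Combining this with the fact that $d_{\TV}(A,C) \ge 1-\eps$ forces $A(x)/C(x)$ to be extreme on almost all of $A$'s mass, one gets that $C(x)/B(x)$ is extreme on almost all of $A$'s mass, and a second application of Lemma \ref{lem:prob-ratios} transfers this to $B$'s mass, giving $d_{\TV}(B,C) \ge 1 - \poly(\eps)$. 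Your proposal instead reduces to a parameter-separation argument: classify how $A$ and $C$ can be far in TV, show the witnessing parameter separation is inherited by $(B,C)$ because $B$'s parameters track $A$'s, and conclude by a forward TV lower bound. Both routes can in principle work, and the paper's is shorter once Lemma \ref{lem:prob-ratios} is available.

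However, your proposed classification is missing a case, and this is a genuine gap. You claim that $d_{\TV}(A,C) \ge 1-\eps$ forces either a large mean gap along some direction $v$ or a large variance ratio along some direction $v$. This is false: take $\Sigma_A = I_d$ and $\Sigma_C = (1+\delta) I_d$ with $\mu_A = \mu_C$ and $\delta$ small. Then in \emph{every} direction $v$ the mean gap is zero and the variance ratio is exactly $1+\delta$, yet the TV distance is $1 - \exp(-\Theta(d\delta^2))$, which can be $1 - \eps$ with $\delta$ as small as $\sqrt{\log(1/\eps)/d}$. No single-direction projection sees this separation, so your final step (``projecting to the direction $v$ reduces to a one-dimensional Gaussian TV computation'') cannot possibly apply. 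The correct trichotomy (this is the content of Definition \ref{def:closeness} and Claim \ref{claim:exists-partition} elsewhere in the paper, following \cite{diakonikolas2020robustly}) has a third case: the Frobenius norm $\|I - \Sigma_A^{-1/2}\Sigma_C\Sigma_A^{-1/2}\|_F$ can be large even when all its eigenvalues are close to $1$. To repair your proof you would need to add this case, argue that $d_{\TV}(A,B) \le 1-\lambda$ implies $\|I - \Sigma_A^{-1/2}\Sigma_B\Sigma_A^{-1/2}\|_F = O_\lambda(1)$ (so the Frobenius separation transfers from $(A,C)$ to $(B,C)$ up to an additive $O_\lambda(1)$), and then invoke a Frobenius-norm-to-TV \emph{lower} bound for Gaussians, which is a standard but separate fact from the direction-wise ones you cite. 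Note also that your stated mean condition normalizes by $\min(v^\top\Sigma_1 v, v^\top\Sigma_2 v)$, whereas for the forward 1D TV lower bound you need the gap to be large relative to the \emph{larger} of the two variances; this is a smaller issue but would need to be fixed as well.
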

Note that this result is not true for arbitrary distributions $A,B,C$.  We actually need to exploit the fact that $A,B,C$ are Gaussian.

Our proof will parallel results in Section 8 of \cite{diakonikolas2020robustly}.  First, a definition:

\begin{definition}
For two distributions $P,Q$ let
\[
h(P,Q) = -\log(1 - d_{\TV}(P,Q)).
\]
\end{definition}

The key ingredient is the following result from \cite{diakonikolas2020robustly}:
\begin{lemma}[Restated from \cite{diakonikolas2020robustly}]\label{lem:prob-ratios}
Let $A$ and $B$ be two Gaussians with $h(A,B) = O(1)$.  If $D \in \{A, B \}$ then
\[
P_{x \sim D}\left[ \eps \leq \frac{A(x)}{B(x)} \leq \frac{1}{\eps} \right] \geq 1 - \poly(\eps)
\]
\end{lemma}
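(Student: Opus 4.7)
The plan is to reduce to a canonical form, convert the TV-distance hypothesis into quantitative bounds on the Gaussian parameters, and then apply Hanson--Wright concentration to the resulting quadratic log-density ratio. First I would apply an invertible affine change of variable so that $A = N(0, I)$ and $B = N(\mu, \Sigma)$; both the event $\{A(x)/B(x) \in [\eps, 1/\eps]\}$ and the measures $A, B$ are affine-equivariant, so this is without loss of generality.

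Next I would translate the hypothesis $h(A,B) = O(1)$, i.e.\ $d_{\TV}(A,B) \leq 1 - c$ for some constant $c > 0$, into explicit parameter bounds on $\mu$ and $\Sigma$. The cleanest route is through the Bhattacharyya coefficient $\rho(A,B) = \int \sqrt{A(x)B(x)}\,dx$, which satisfies $\rho \geq 1 - d_{\TV} \geq c$ (since $\min(p,q) \leq \sqrt{pq}$). For the Gaussians at hand, $\rho$ factors as
\[
\rho(A,B) \;=\; \frac{\det(\Sigma)^{1/4}}{\det((I + \Sigma)/2)^{1/2}} \exp\!\bigl(-\tfrac{1}{4}\mu^T(\Sigma + I)^{-1}\mu\bigr).
\]
Requiring each factor to be $\geq c^{1/2}$ forces $\mu^T(\Sigma + I)^{-1}\mu = O(1)$, and since each eigenvalue's contribution $\lambda^{1/4}/((1+\lambda)/2)^{1/2}$ is $\leq 1$ with equality only at $\lambda = 1$, the eigenvalues of $\Sigma$ must lie in a bounded interval $[c_1, c_2]$ with $\sum_i (\lambda_i - 1)^2 = O(1)$. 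Hence $\|\mu\|$, $\|\Sigma\|_{\mathrm{op}}$, $\|\Sigma^{-1}\|_{\mathrm{op}}$, and $\|\Sigma - I\|_F$ are all $O(1)$.

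With these bounds I would expand the log-density ratio as
\[
L(x) \;\triangleq\; \log\frac{A(x)}{B(x)} \;=\; \tfrac{1}{2}\log\det\Sigma + \tfrac{1}{2}x^T(\Sigma^{-1} - I)x - \mu^T\Sigma^{-1}x + \tfrac{1}{2}\mu^T\Sigma^{-1}\mu,
\]
a quadratic polynomial in $x$ whose matrix part $\tfrac{1}{2}(\Sigma^{-1} - I)$ has bounded operator and Frobenius norm, whose linear coefficient $-\Sigma^{-1}\mu$ has bounded norm, and whose constant term is bounded. Moreover $\E_A[L(x)] = \mathrm{KL}(A \,\|\, B) = \tfrac{1}{2}\sum_i(1/\lambda_i - 1 + \log \lambda_i) + \tfrac{1}{2}\mu^T\Sigma^{-1}\mu = O(1)$ under the same parameter bounds (each summand is $O((\lambda_i - 1)^2)$ for bounded $\lambda_i$). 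For $x \sim A = N(0,I)$, Hanson--Wright applied to the quadratic piece together with a standard Gaussian tail bound on the linear piece yields
\[
\Pr\!\bigl(|L(x) - \E_A[L(x)]| > t\bigr) \;\leq\; 2\exp\!\bigl(-\Omega(\min(t^2, t))\bigr),
\]
so taking $t = C\log(1/\eps)$ for $C$ large enough makes $|L(x)| \leq \log(1/\eps)$ with probability $1 - \poly(\eps)$, which is exactly the event $A(x)/B(x) \in [\eps, 1/\eps]$. For $D = B$, substituting $y = \Sigma^{-1/2}(x - \mu) \sim N(0,I)$ recasts $L(\Sigma^{1/2}y + \mu)$ as another quadratic polynomial in $y$ whose coefficient norms are still bounded (by the same parameter estimates), and the identical concentration argument handles this case.

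The main obstacle I expect is the second step: making the qualitative statement ``TV distance far from $1$ implies bounded parameters'' quantitatively uniform in the constant hidden by $h(A,B) = O(1)$. The near-singular regime is the most delicate --- a very small or very large eigenvalue of $\Sigma$ does push one distribution onto a region where the other has little mass, but capturing this precisely requires tracking the asymptotics of $\lambda^{1/4}/((1+\lambda)/2)^{1/2}$ as $\lambda \to 0$ or $\infty$ to extract the eigenvalue bound from the Bhattacharyya inequality. Once the parameter bounds are in hand, the remainder is a textbook application of Hanson--Wright to a bounded-coefficient quadratic in a standard Gaussian.
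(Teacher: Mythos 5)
The paper states Lemma~\ref{lem:prob-ratios} without proof, merely citing \cite{diakonikolas2020robustly}, so there is no in-paper argument to compare against; your proof has to be judged on its own. It is essentially correct and self-contained. The chain of reasoning — reduce to $A=N(0,I)$, $B=N(\mu,\Sigma)$ by affine invariance; use the Hellinger/Bhattacharyya inequality $\rho(A,B)\ge 1-d_{\TV}(A,B)\ge c$ together with the product form of $\rho$ (with each factor $\le 1$, so in fact each factor is $\ge c$, not merely $\ge c^{1/2}$) to extract $\mu^T(\Sigma+I)^{-1}\mu=O(1)$ and, via the local behavior $\lambda^{1/4}/((1+\lambda)/2)^{1/2}=1-\Theta((\lambda-1)^2)$ plus the fact that this ratio tends to $0$ at $\lambda\to 0,\infty$, the eigenvalue confinement $\lambda_i\in[c_1,c_2]$ and $\sum_i(\lambda_i-1)^2=O(1)$; compute $\E_A L=\mathrm{KL}(A\|B)=\tfrac12\sum_i(1/\lambda_i-1+\log\lambda_i)+\tfrac12\mu^T\Sigma^{-1}\mu=O(1)$; and then apply Hanson--Wright to the centered quadratic $\tfrac12 x^T(\Sigma^{-1}-I)x-\mu^T\Sigma^{-1}x$ with bounded Frobenius/operator norms — all fits together and gives the claim for $D=A$, and the substitution $y=\Sigma^{-1/2}(x-\mu)$ handles $D=B$ symmetrically.

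Two small precision points worth fixing. First, the threshold choice is backwards: you write ``taking $t=C\log(1/\eps)$ for $C$ large enough makes $|L(x)|\le\log(1/\eps)$,'' but controlling $|L-\E_A L|\le t$ with $t>\log(1/\eps)$ does \emph{not} give $|L|\le\log(1/\eps)$. You instead want $t=\log(1/\eps)-|\E_A L|\ge\tfrac12\log(1/\eps)$ for small enough $\eps$; Hanson--Wright then gives $\Pr(|L-\E_A L|>t)\le 2\exp(-\Omega(\log(1/\eps)))=\eps^{\Omega(1)}$, and any hidden (possibly small) constant in the Hanson--Wright exponent is simply absorbed into the exponent of $\poly(\eps)$. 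Second, when you say the recast polynomial in $y$ has ``bounded coefficient norms,'' this is not quite true of the constant term $\tfrac12\log\det\Sigma-\tfrac12\|\mu\|^2$ (similarly $\tfrac12\log\det\Sigma+\tfrac12\mu^T\Sigma^{-1}\mu$ for $D=A$), which may be of order $\sqrt{d}$ under the bound $\sum(\lambda_i-1)^2=O(1)$ alone; what is actually bounded, and what you actually use, is $\E_D L=\pm\mathrm{KL}=O(1)$, since you center around the expectation, not around the unbounded raw constant. Neither issue affects the soundness of the argument, and your final worry about uniformity in the constant hiding in $h(A,B)=O(1)$ is unfounded: that constant enters $c$ and hence the $\poly(\eps)$ implied constants, which is permitted since the lemma treats it as fixed.
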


\begin{proof}[Proof of Lemma \ref{lem:distance-between-gaussians}]
Note that 
\[
P_{x \sim A} \left[ \eps^{0.5} \leq \frac{A(x)}{C(x)} \leq \frac{1}{\eps^{0.5}} \right] \leq \eps^{0.5}
\]
If this weren't the case, then $A$ and $C$ would have more than $\eps$ overlap, contradicting our assumption.  Next, by Lemma \ref{lem:prob-ratios},
\begin{equation}\label{eq:prob-bound}
P_{x \sim A}\left[ \eps^{0.1} \leq \frac{A(x)}{B(x)} \leq \frac{1}{\eps^{0.1}} \right] \geq 1 - \poly(\eps)
\end{equation}
Combining the above two inequalities, we deduce 
\begin{equation}\label{eq:prob-bound2}
P_{x \sim A}\left[\eps^{0.4} \leq \frac{C(x)}{B(x)} \leq \frac{1}{\eps^{0.4}} \right] \leq \poly(\eps)
\end{equation}
Let $0 < c < 0.1$ be a constant such that the RHS of (\ref{eq:prob-bound2}) is at most $\eps^c$.  By Lemma \ref{lem:prob-ratios}
\[
P_{x \sim B}\left[ \eps^{c/2} \leq \frac{A(x)}{B(x)} \leq \frac{1}{\eps^{c/2}} \right] \geq 1 - \poly(\eps)
\]
and combining with (\ref{eq:prob-bound2}), we deduce
\[
P_{x \sim B}\left[ \eps^{0.4} \leq \frac{C(x)}{B(x)} \leq \frac{1}{\eps^{0.4}} \right] \leq  \poly(\eps)
\]
which implies $d_{\TV}(B,C) \geq 1 - \poly( \eps)$.
\end{proof}

\subsection{Full Algorithm}
We are now ready to prove Theorem \ref{thm:full-alg}.  We begin by describing the algorithm.  Our full algorithm consists of several phases.
\begin{enumerate}
    \item Cluster with constant accuracy into constant-separated submixtures
    \item Learn parameters of submixtures to constant accuracy
    \item Recluster all points and form new $\poly(\eps)$-separated submixtures
    \item Learn parameters of submixtures to $\poly(\eps)$ accuracy
\end{enumerate}

The algorithm {\sc Learn Parameters (well-conditioned)} for learning the parameters of a well-conditioned mixture of Gaussians (see Theorem \ref{thm:full-close-case}) is summarized in Algorithm \ref{alg:learn-params}.
\begin{algorithm}[h]
\caption{{\sc Learn Parameters (well-conditioned)} }
\begin{algorithmic} 
\State \textbf{Input:} $\eps$-corrupted sample $X_1, \dots , X_n$ from $\delta$-well-conditioned mixture of Gaussians $\mcl{M} = w_1G_1 + \dots + w_kG_k$
\State Estimate Hermite polynomials of $\mcl{M}$
\State Solve for parameters using SOS (see Section \ref{sec:close-case})
\end{algorithmic}
\label{alg:learn-params}
\end{algorithm}

Our full algorithm is described in the next block Algorithm \ref{alg:full}.

\begin{algorithm}[h]
\caption{{\sc Full Algorithm} }
\begin{algorithmic} 
\State \textbf{Input:} $\eps$-corrupted sample $X_1, \dots , X_n$ from mixture of Gaussians $\mcl{M} = w_1G_1 + \dots + w_kG_k$
\State Run {\sc Rough Clustering} Algorithm to split sample into subsamples for submixtures where all pairs are $D$-close for constant $D$
\For {each candidate clustering}
\State Run {\sc Learn Parameters (well-conditioned)} for each submixture 
\State Output candidate components
\EndFor
\For {each set of candidate components $\widetilde{G_1}, \dots \widetilde{G_k}$}
\State Assign samples to components according to maximum likelihood to form sets of samples $\{\widetilde{S_1}, \dots , \widetilde{S_k} \}$
\For {all partitions of $[k]$ into sets $R_1, \dots , R_l$  }
\State Run {\sc Learn Parameters (well-conditioned)} on each of $\cup_{i \in R_j}S_i$ for all $j \in [l]$
\State Output candidate components
\EndFor
\EndFor
\State Hypothesis test over all candidate components to find a mixture $\widetilde{\mcl{M}}$ that is $\poly(\eps)$-close to $\mcl{M}$
\end{algorithmic}
\label{alg:full}
\end{algorithm}


\subsection{Analysis of {\sc Full Algorithm}}

The first step will be to show that among the first set of candidate components that we output, there are some that are within constant distance (say $< c(k)$ for some sufficiently small function $c$) of the true components.  
\begin{lemma}\label{lem:const-accuracy}
Let $k, A, b > 0$ be constants and $\theta$ be a desired accuracy.  There is a sufficiently large function $G$ and a sufficiently small function $g$ depending only on $k, A , b, \theta$ such that given an $\eps$-corrupted sample $X_1, \dots , X_n$ from a mixture of Gaussians $\mcl{M} = w_1G_1 + \dots + w_kG_k \in \R^d$ where
\begin{itemize}
    \item The $w_i$ are all rational with denominator at most $A$
    \item $d_{\TV}(G_i, G_j) \geq b$
\end{itemize}
and 
\begin{itemize}
    \item $ \eps < g(k,A,b, \theta)$
    \item $n \geq (d/\eps)^{G(k,A,b, \theta)}$
\end{itemize}
then there is an algorithm that runs in time $\poly(n)$ and with $0.999$ probability outputs a set of $(1/\theta)^{G(k,A,b, \theta)}$ candidate mixtures at least one of which satisfies
\begin{align*}
\max \left( d_{\TV} (\widetilde{G_1}, G_1) , \dots , d_{\TV} (\widetilde{G_k}, G_k) \right) \leq \theta \\
\max \left( |\widetilde{w_1} - w_1|, \dots , |\widetilde{w_k} - w_k| \right) \leq \theta
\end{align*}

\end{lemma}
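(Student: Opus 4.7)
The plan is to combine {\sc Rough Clustering} (Theorem \ref{thm:rough-clustering}) with {\sc Learn Parameters (well-conditioned)} (Theorem \ref{thm:full-close-case}). Because the components of $\mcl{M}$ are pairwise $b$-separated in TV and hence, by Claim \ref{claim:trivial-dist-bound}, no pair can be $C$-close for $C$ satisfying $f(C) > 1-b$, the meaningful values of the closeness parameter $D$ that might govern an intra-submixture partition lie in a range depending only on $k, A, b$. I would iterate over a constant-size grid of such values and, for each, invoke {\sc Rough Clustering} with parameters $\eta, \gamma$ chosen polynomially small in $\theta$. For the value of $D$ matching the natural clustering of $\mcl{M}$ into "close" groups -- formally, any value of $D$ for which there is a gap between within-group and across-group closeness of the true components -- Theorem \ref{thm:rough-clustering} guarantees that at least one returned clustering is a $\gamma$-corrupted sample of the true submixture partition.

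Next I would verify that each resulting submixture is $\delta'$-well-conditioned in the sense of Definition \ref{def:well-conditioned}, for some constant $\delta' = \delta'(k,A,b,D) > 0$. Condition (ii), pairwise TV at least $\delta'$, follows from $d_{\TV}(G_i,G_j) \geq b$. Condition (iii), minimum mixing weight at least $\delta'$, follows because the restricted and renormalized weights within the submixture are at least $1/(Ak)$. Condition (i), connectedness of the graph on components with edges for TV $\leq 1-\delta'$, is immediate: all pairs in the submixture are $D$-close and hence, by Claim \ref{claim:trivial-dist-bound}, have TV distance at most $1 - f(D)$, so the graph is complete.

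Given well-conditionedness, I would invoke Theorem \ref{thm:full-close-case} on each cluster of each candidate clustering, viewing it as an $O(\gamma + \eps)$-corrupted sample of the corresponding submixture. By making $\gamma + \eps$ polynomially small in $\theta$, the theorem produces a list of $\poly(1/\theta)$ candidate submixtures at least one of which matches each true component in TV within $\theta$. The absolute mixing weight of each submixture is estimated from the empirical fraction of points in the cluster with error $O(\gamma + \eps)$; multiplying this by the relative within-submixture mixing weights learned by Theorem \ref{thm:full-close-case} gives absolute weights accurate to $\theta$. Taking the Cartesian product of candidate submixtures across clusters and across candidate clusterings (and across choices of $D$) yields at most $(1/\theta)^{G(k,A,b,\theta)}$ candidate mixtures in total, at least one of which satisfies the required $\theta$-closeness guarantees component-by-component.

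The main obstacle is parameter bookkeeping across the three nested invocations (rough clustering, Hermite polynomial estimation, and the SOS parameter solver). Concretely, one must verify that the clustering error $\gamma$, the effective corruption $\gamma + \eps$ fed into Theorem \ref{thm:full-close-case}, and the implicit parameters in its conclusion can all be simultaneously driven down to produce $\theta$-accurate components, while keeping $n \geq (d/\eps)^{G(k,A,b,\theta)}$ polynomial in $d/\eps$ and the list size polynomial in $1/\theta$. This reduces to composing the polynomial dependencies from Theorems \ref{thm:rough-clustering} and \ref{thm:full-close-case} and choosing $g(k,A,b,\theta)$ and $G(k,A,b,\theta)$ accordingly; the composition is routine but must be tracked carefully.
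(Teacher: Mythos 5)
Your overall architecture (cluster via Theorem~\ref{thm:rough-clustering}, then learn parameters on each cluster via Theorem~\ref{thm:full-close-case}, and assemble a Cartesian-product list) is the same as the paper's, but you have skipped the heart of the proof. You assert that there is a closeness threshold $D$ ``for which there is a gap between within-group and across-group closeness of the true components,'' and that iterating over a constant-size grid of $D$ values will find it. The existence of such a $D$ is precisely what needs to be proven, and it does not follow from your opening observation that no pair of components is $C$-close for small $C$ (via Claim~\ref{claim:trivial-dist-bound} and the lower bound $d_{\TV}(G_i,G_j)\ge b$): that only gives a \emph{lower} bound on the smallest meaningful closeness scale, not the required multiplicative gap, nor the $O_k(1)$ \emph{upper} bound on $D$ that you need to even set up a constant-size grid. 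Theorem~\ref{thm:rough-clustering} requires a partition with intra-piece closeness at most $D$ and cross-piece closeness strictly above $D' > F(k,A,D,\gamma)$, and a priori the pairwise closeness values could be spread across scales in a way that leaves no such gap at any $D$ in your grid, in which case the clustering subroutine has no guarantee.

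The paper closes this gap with an iterative connectivity argument. Start at $D_0=2$ and form the graph $G_{D_0}$ on $[k]$ with edges for $D_0$-closeness; by Claim~\ref{claim:dist-transitivity} all pairs within one connected component of $G_{D_0}$ are $\poly(D_0)$-close, so its connected components give a candidate partition. If the cross-piece separation required by Theorem~\ref{thm:rough-clustering} fails at this scale, some cross-component pair must be $D_1$-close for $D_0 < D_1 < F(k,\poly(D_0),f(\poly(D_0),\theta))$, and then $G_{D_1}$ has strictly fewer connected components. Iterating at most $k-1$ times either produces a $D$ with the required gap or makes the graph connected, and in the latter case Claim~\ref{claim:dist-transitivity} and Claim~\ref{claim:trivial-dist-bound} show the whole mixture is well-conditioned so Theorem~\ref{thm:full-close-case} applies directly. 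This bounded iteration is also what yields the $D=O_k(1)$ bound justifying the grid search. Your remaining steps (verifying $\delta'$-well-conditionedness of each submixture, recombining absolute mixing weights, taking the Cartesian product of candidate lists) are fine, but without the existence-of-$D$ argument the proof is incomplete.
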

\begin{proof}
We will use Theorem \ref{thm:rough-clustering} to argue that the clustering algorithm finds some set of candidate clusters that can then be used to learn the parameters via Theorem \ref{thm:full-close-case}.  The main thing we need to prove is that we can find the $D,D'$ satisfying the hypotheses of Theorem \ref{thm:rough-clustering}.  In the argument below, all functions may depend on $k,A,b, \theta$ but we may omit writing some of these variables in order to highlight the important dependences.

Note that Claim \ref{claim:trivial-dist-bound} combined with Theorem \ref{thm:full-close-case} imply that if we have a $\gamma$-corrupted sample of a submixture of $\mcl{M}$ where all pairs are $D$-close and $\gamma < f(D, \theta)$ for some sufficiently small function $f$ then we can learn the components of the submixture to the desired accuracy.  Now if the separation condition of Theorem \ref{thm:rough-clustering} were satisfied with $ \gamma = f(D, \theta)$ and $D' > F(k,D , \gamma)$ then we would be done. 
\\\\\
We now show that there is some constant $D$ depending only on $k,A,b, \theta$ for which this is true.  Assume that the condition does not hold for some value of $D_0$.  Then construct a graph $G_{D_0}$ on nodes $1,2, \dots , k$ where two nodes are connected if and only if they are $D$-close.  Take the connected components in this graph.  Note that by Claim \ref{claim:dist-transitivity}, all pairs in the same connected component are $\poly(D_0)$-close.  Thus, there must be an edge between two components such that $G_i$ and $G_j$ are $D_1$-close for 
\[
D_0 < D_1 < F(k, \poly(D_0), f(\poly(D_0), \theta))
\]
Now the graph $G_{D_1}$ has one less connected component than $G_{D_0}$.  Starting from say $D_0 = 2$, we can iterate this argument and deduce that the entire graph will be connected for some constant value of $D$ depending only on $k,A,b, \theta$.  Now by Claim \ref{claim:dist-transitivity} it suffices to treat the entire mixture as one mixture and we can apply Claim \ref{claim:trivial-dist-bound} and Theorem \ref{thm:full-close-case} to complete the proof.
\end{proof}

Our next step is to show that if our algorithm starts with component estimates that are accurate within some constant and guesses a good set of clusters, then the resulting subsamples (after assigning according to maximum likelihood) are equivalent to $\poly(\eps)$-corrupted samples from the corresponding submixtures.  First, we prove a preliminary claim which implies that a good set of clusters exists. 

\begin{claim}\label{claim:exists-good-partition}
Let $\mcl{M} = w_1G_1 + \dots + w_kG_k $ be a mixture of Gaussians.  For any constant $c > 0$ and parameter $\eps$, there exists a function  $f(c,k)$ such that there exists a partition (possibly trivial) of $[k]$ into sets $R_1, \dots , R_l$ such that 
\begin{itemize}
    \item If we draw edges between all $i,j$ such that $d_{\TV}(G_i, G_j) \leq 1 - \eps^{c \kappa}$
    then each piece of the partition is connected
    \item For any $i,j$ in different pieces of the partition $  d_{\TV}(G_i, G_j) \geq 1 - \eps^{\kappa}$
\end{itemize}
and $f(c,k) < \kappa < 1$.
\end{claim}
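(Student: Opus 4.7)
The plan is a pigeonhole argument at geometric scales. First I would define, for each pair $i<j$, the quantity
\[
t_{ij} = -\log_{\eps}(1 - d_{\TV}(G_i, G_j)) \in [0, \infty],
\]
so that the two conditions in the claim translate cleanly to $t_{ij} \leq c\kappa$ (close edge) and $t_{ij} \geq \kappa$ (cross-piece separation). Assuming the intended regime $c < 1$ (the case $c \geq 1$ is degenerate, since then $c\kappa \geq \kappa$ and the gap condition is vacuous), the entire problem reduces to producing a $\kappa \in (f(c,k), 1)$ such that none of the at most $\binom{k}{2}$ finite values $t_{ij}$ fall in the open interval $(c\kappa, \kappa)$.

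Once such $\kappa$ is found, I would take $R_1, \dots, R_l$ to be the connected components of the graph on $[k]$ with edges $\{i,j : t_{ij} \leq c\kappa\}$, i.e., edges given by pairs with $d_{\TV}(G_i, G_j) \leq 1 - \eps^{c\kappa}$. The first condition of the claim then holds by construction. For the second, any pair $i,j$ in different pieces must be non-adjacent in this graph, so $t_{ij} > c\kappa$; combined with the emptiness of $(c\kappa, \kappa)$, this forces $t_{ij} \geq \kappa$, which is exactly the desired distance lower bound $d_{\TV}(G_i, G_j) \geq 1 - \eps^{\kappa}$.

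The core step is thus producing $\kappa$, which is where I would apply pigeonhole. I would carve $(0,1)$ into the $\binom{k}{2}+1$ disjoint geometric intervals $I_m = (c^m, c^{m-1})$ for $m = 2, 3, \ldots, \binom{k}{2}+2$. Since there are only $\binom{k}{2}$ finite values $t_{ij}$, some $I_{m^*}$ contains none of them. Setting $\kappa = c^{m^*-1}$ makes $(c\kappa, \kappa)$ coincide with $I_{m^*}$ and places $\kappa$ in $[c^{\binom{k}{2}+1}, c] \subset (0,1)$, so one can take $f(c,k) := c^{\binom{k}{2}+1}$. This is essentially all the substance of the argument; the only minor bookkeeping issue is handling the degenerate values $d_{\TV}(G_i, G_j) \in \{0,1\}$, but these correspond to $t_{ij} \in \{0, \infty\}$ and sit comfortably outside every $I_m$, so they cause no trouble. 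I don't expect a real obstacle—the only trick is recognizing that a geometric (rather than arithmetic) slicing of the scale space is what matches the multiplicative gap between the two thresholds.
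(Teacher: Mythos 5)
Your proof is correct but organized differently from the paper's. Both arguments exploit the same underlying fact---there are only polynomially many relevant pairwise scales, so a multiplicative window of width $c$ must be missed---but the bookkeeping differs. The paper iterates: writing $\mcl{G}_\tau$ for the graph whose edges are the pairs with $d_{\TV}(G_i, G_j) \leq 1 - \tau$, it proposes the connected-component partition of $\mcl{G}_{\eps^{c^k}}$ with $\kappa = c^{k-1}$; if some cross-piece pair violates $d_{\TV}(G_i, G_j) \geq 1 - \eps^{\kappa}$, that pair is an edge in $\mcl{G}_{\eps^{c^{k-1}}}$, which therefore has strictly fewer connected components, and the process repeats with $\kappa$ replaced by $\kappa/c$. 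Since the component count starts at most $k$ and strictly decreases on each failure, the iteration halts within $k-1$ rounds, giving $f(c,k) \approx c^k$. Your argument replaces the iteration by a one-shot pigeonhole: the $\binom{k}{2}$ finite quantities $t_{ij} = -\log_\eps(1 - d_{\TV}(G_i,G_j))$ cannot occupy all $\binom{k}{2}+1$ disjoint geometric bins $(c^m, c^{m-1})$, and any empty bin yields a valid $\kappa$. The reparametrization by $t_{ij}$ cleanly exposes the multiplicative geometry, and the resulting proof is non-iterative; on the other hand, the pigeon count is $\binom{k}{2}$ rather than $k$, so you get the slightly weaker $f(c,k) \approx c^{\binom{k}{2}+1}$. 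Both bounds are admissible, since the claim only requires some explicit positive function of $(c,k)$. One minor fix: your construction can output $\kappa = c^{\binom{k}{2}+1}$ exactly, so set $f(c,k) := c^{\binom{k}{2}+2}$ (or anything strictly smaller than $c^{\binom{k}{2}+1}$) to honor the strict inequality $f(c,k) < \kappa$.
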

\begin{proof}
For a real number $f$, let $\mcl{G}_f$ be the graph on $[k]$ obtained by connecting two nodes $i,j$ if and only if $d_{\TV}(G_i, G_j) \leq 1 - f$.  Consider $\mcl{G}_{\eps^{c^{k}}}$.  Consider the partition formed by taking all connected components in this graph.  If this partition does not satisfy the desired condition, then there are some two $G_i,G_j$ in different components such that 
\[
d_{\TV}(G_i, G_j) \leq 1 - \eps^{c^{k-1}}
\]
Thus, the graph $\mcl{G}_{\eps^{c^{k-1}}}$ has strictly fewer connected components than $\mcl{G}_{\eps^{c^{k}}}$.  We can now repeat this argument on $\mcl{G}_{\eps^{c^{k-1}}}$.  However, the number of connected components in  $\mcl{G}_{\eps^{c^{k}}}$ is at most $k$ so we conclude that there must be some $c^k < \kappa < 1$ for which the desired condition is satisfied.
\end{proof}

We will also need the following results about the VC-dimension of hypotheses obtained by comparing the density functions of two mixtures of Gaussians.  The reason we need these VC dimension bounds is that we will need to argue that given \textit{any} constant-accuracy estimates, we can obtain a clustering that is $\poly(\eps)$ accurate.  While naively this would require union bounding over infinitely many possibilities for the initial estimates, the VC dimension bound allows to get around this and obtain uniform convergence over all possible initial estimates.  

Technically for our clustering result, we only need the VC dimension bound for single Gaussians (instead of mixtures of $k$ Gaussians).  However, we will need the VC dimension bound for mixtures of Gaussians later when we do hypothesis testing so we state the full result below.  First we need a definition.
\begin{definition}\label{def:yatracos}
Let $\mcl{F}$ be a family of distributions on some domain $\mcl{X}$.  Let $\mcl{H}_{\mcl{F},a}$ be the set of functions of the form $f_{\mcl{M}_1, \mcl{M}_2,\dots , \mcl{M}_a}$ where $\mcl{M}_1, \mcl{M}_2, \dots , \mcl{M}_a \in \mcl{F}$ and 
\begin{align*}
f_{\mcl{M}_1, \mcl{M}_2,\dots , \mcl{M}_a}(x) = 
\begin{cases}
1 \text{ if } \mcl{M}_1(x) \geq \mcl{M}_2(x), \dots , \mcl{M}_a(x) \\
0 \text{ otherwise}
\end{cases}   
\end{align*}
where $\mcl{M}_i(x)$ denotes the pdf of the corresponding distribution at $x$.
\end{definition}

\begin{lemma}[Theorem 8.14 in \cite{anthony2009neural}]\label{lem:VCdim}
Let $\mcl{F}_{k}$ be the family of distributions that are a mixture of at most $k$ Gaussians in $\R^d$.  Then the VC dimension of $\mcl{H}_{\mcl{F}_k,a}$ is $\poly(d,a, k)$.
\end{lemma}

It is a standard result in learning theory that for a hypothesis class with bounded VC dimension, taking a polynomial number of samples suffices to get a good approximation for all hypotheses in the class.
\begin{lemma}[\cite{vapnik2015uniform}]\label{lem:uniformconvergence}
Let $\mcl{H}$ be a hypothesis class of functions from some domain $\mcl{X}$ to $\{0,1 \}$ with VC dimension $V$.  Let $\mcl{D}$ be a distribution on $\mcl{X}$.  Let $\eps, \delta >0$ be parameters.  Let $S$ be a set of $n = \poly(V, 1/\eps,  \log 1/\delta)$ i.i.d samples from $\mcl{D}$.  Then with $1 - \delta$ probability, for all $f \in \mcl{H}$
\[
\left \lvert \E_{x \sim S}[f(x)] - \E_{x \sim \mcl{D}}[f(x)]\right \rvert \leq \eps \,.
\]
\end{lemma}

Now we can prove our lemma about obtaining a $\poly(\eps)$-accurate clustering into submixtures when given constant-accuracy estimates for the components.

\begin{lemma}\label{lem:find-good-clusters}
Let $\mcl{M} = w_1G_1 + \dots + w_kG_k \in \R^d$ be a mixture of Gaussians where
\begin{itemize}
    \item The $w_i$ are all rational with denominator at most $A$
    \item $d_{\TV}(G_i, G_j) \geq b$
\end{itemize}
There exists a sufficiently small function $g(k,A,b)> 0$ depending only on $k,A,b$ such that the following holds.   Let $X_1, \dots , X_n$ be an $\eps$-corrupted sample from the mixture $\mcl{M}$ where $\eps < g(k,A,b)$ and $n = \poly(d/\eps)$ for some sufficiently large polynomial.  Let $S_1, \dots , S_k \subset \{X_1, \dots, X_n \}$ denote the sets of samples from each of the components $G_1, \dots , G_k$ respectively.  Let $R_1, \dots , R_l$ be a partition such that for $i_1 \in R_{j_1}, i_2 \in R_{j_2}$ with $j_1 \neq j_2$,
\[
d_{\TV}(G_{i_1}, G_{i_2}) \geq 1 - \eps'
\]
where $  \eps \leq \eps' \leq g(k,A,b)$. Let $\widetilde{G_1}, \dots , \widetilde{G_k}$ be any Gaussians such that $d_{\TV}(G_i, \widetilde{G_i}) \leq g(k,A,b)$ for all $i$.    Let $\widetilde{S_1}, \dots , \widetilde{S_k} \subset \{X_1, \dots , X_n \}$ be the subsets of samples obtained by assigning each sample to the component $\widetilde{G_i}$ that gives it the maximum likelihood.  Then with probability  at least $0.999$,
\[
\left| \left(\cup_{i \in R_j}S_i\right)  \cap \left(\cup_{i \in R_j} \widetilde{S_i}\right) \right| \geq (1-\poly(\eps')) \left|\left(\cup_{i \in R_j}S_i\right) \right|
\]
for all $j$.
\end{lemma}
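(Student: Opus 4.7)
The plan is to argue pointwise: for each uncorrupted sample $X \in S_i$ with $i \in R_j$, maximum-likelihood assignment under the estimates $\{\widetilde{G_{i'}}\}$ places $X$ into $\widetilde{S_{i'}}$ for some $i' \in R_j$ except on a $\poly(\eps')$-probability event; a standard concentration argument then delivers the stated high-probability bound, and the $\eps n$ adversarially replaced points, which do not lie in any $S_i$, contribute to neither side of the target inequality.

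The first step is to upgrade the cross-partition separation to pairs that mix true and estimated components. Fix $i \in R_{j_1}$ and $i'' \in R_{j_2}$ with $j_1 \neq j_2$. By hypothesis $d_{\TV}(G_i, G_{i''}) \geq 1 - \eps'$, and $d_{\TV}(G_{i''}, \widetilde{G_{i''}}) \leq g(k,A,b)$ is a constant bounded away from $1$. Lemma \ref{lem:distance-between-gaussians} applied with $A = G_{i''}$, $B = \widetilde{G_{i''}}$, $C = G_i$ then yields
\[
d_{\TV}(G_i, \widetilde{G_{i''}}) \geq 1 - \eps'^{c}
\]
for some constant $c > 0$ depending only on $k, A, b$. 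This upgrade is the key use of Gaussianity: the analogous statement fails for arbitrary distributions, but Lemma \ref{lem:distance-between-gaussians} bakes in the tail-comparison of Lemma \ref{lem:prob-ratios}.

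For the pointwise analysis, fix $i \in R_j$ and $X \sim G_i$. For a threshold $\eta \in (0,1)$ to be chosen, MLE sends $X$ outside $\cup_{i' \in R_j} \widetilde{S_{i'}}$ only if either (a) $\widetilde{G_i}(X) < \eta G_i(X)$, or (b) $\widetilde{G_{i''}}(X) \geq \eta G_i(X)$ for some $i'' \notin R_j$, since otherwise $\widetilde{G_i}(X) \geq \eta G_i(X) > \widetilde{G_{i''}}(X)$ for every $i'' \notin R_j$ and MLE prefers $\widetilde{G_i}$. For (a), Lemma \ref{lem:prob-ratios} applied to the constant-TV pair $(G_i, \widetilde{G_i})$ gives $\Pr[\widetilde{G_i}(X) < \eta G_i(X)] \leq \poly(\eta)$. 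For (b), combining $\int \min(G_i, \widetilde{G_{i''}})\,dx \leq \eps'^{c}$ with the pointwise inequality $\min(G_i, \widetilde{G_{i''}}) \geq \eta G_i$ valid on $\{\widetilde{G_{i''}} \geq \eta G_i\}$ yields $\Pr[\widetilde{G_{i''}}(X) \geq \eta G_i(X)] \leq \eps'^{c}/\eta$. Choosing $\eta = \eps'^{c/2}$ and union-bounding over the constantly many $i'' \notin R_j$, the total misassignment probability is $\poly(\eps')$.

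To finish, we condition on the $\widetilde{G_{i'}}$ being independent of the current block of samples, which is arranged by splitting the input as in {\sc Full Algorithm}. The uncorrupted samples are then i.i.d.\ from $\mcl{M}$, and a Chernoff bound on the per-sample misassignment indicator shows that the overlap ratio concentrates around its expectation $1 - \poly(\eps')$ at sample size $n = \poly(d/\eps)$. The main obstacle is the TV upgrade in the second paragraph; once that is in hand, the remainder is routine bookkeeping of exponents in $\eps'$.
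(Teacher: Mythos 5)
Your proof is correct and essentially the same as the paper's. Both arguments reduce to bounding, for $i$ and $i''$ in different pieces of the partition, the probability under $G_i$ that the maximum-likelihood comparison picks $\widetilde{G_{i''}}$; both use Lemma \ref{lem:distance-between-gaussians} to upgrade the $1-\eps'$ separation between true components across pieces to $1-\poly(\eps')$ separation between a true component and an estimated component across pieces, and both split the misassignment event at a $\poly(\eps')$ likelihood-ratio threshold, controlling one half by Lemma \ref{lem:prob-ratios} applied to the constant-TV pair $(G_i, \widetilde{G_i})$ and the other half directly from the TV upper bound on the overlap $\int \min(G_i, \widetilde{G_{i''}})$. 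The only cosmetic difference is the final high-probability step: the paper simply invokes Markov's inequality on the expected misassigned fraction (which does not require independence, only that the $\widetilde{G_i}$ be fixed relative to the block of samples), whereas you invoke Chernoff after noting the sample-split; either suffices.
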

\begin{proof}
First, we will upper bound the expected number of uncorrupted points that are mis-classified for each $j \in [l]$ when the Gaussians $\widetilde{G_1}, \dots , \widetilde{G_k}$ are fixed.  This quantity can be upper bounded by
\[
\sum_{j_1 \neq j_2} \sum_{\substack{i_1 \in R_{j_1} \\ i_2 \in R_{j_2}}} \int 1_{\widetilde{G_{i_1}}(x) > \widetilde{G_{i_2}}(x)} d G_{i_2}(x)
\]

Clearly we can ensure $d_{\TV}(G_i, \widetilde{G_i}) \leq 1/2$.  Thus, by Lemma \ref{lem:distance-between-gaussians} and the assumption about $R_1, \dots , R_l$, $d_{\TV}(\widetilde{G_{i_1}}, G_{i_2}) \geq 1 - \poly(\eps')$ for all $G_{i_2}$ where $i_2$ is not in the same piece of the partition as $i_1$.  Let $c$ be such that 
\[
d_{\TV}(\widetilde{G_{i_1}}, G_{i_2}) \geq 1 - \eps'^c
\]
By Lemma \ref{lem:prob-ratios}, 
\[
\Pr_{x \in G_{i_2}}\left[  \eps'^{c/2} \leq \frac{\widetilde{G_{i_2}}(x)}{G_{i_2}(x)} \leq \eps'^{c/2} \right] \geq 1 - \poly(\eps')
\]
and combining the above two inequalities, we deduce
\[
\int 1_{\widetilde{G_{i_1}}(x) > \widetilde{G_{i_2}}(x)} d G_{i_2}(x) \leq \poly(\eps')
\]
Since we are only summing over $O_k(1)$ pairs of components, as long as $\eps'$ is sufficiently small compared to $k, A, b$, the expected fraction of misclassified points is $\poly(\eps')$.  
\\\\
Next, note that the clustering depends only on the comparisons between the values of the pdfs of the Gaussians $\widetilde{G_1}, \dots , \widetilde{G_k}$ at each of the samples $X_1, \dots , X_n$.  Since $n = \poly(d/\eps)$ for some sufficiently large polynomial, applying Lemma \ref{lem:VCdim} and Lemma \ref{lem:uniformconvergence} completes the proof (note that the fraction of corrupted points is at most $\eps$ so it does not matter how they are clustered).
\end{proof}

Combining Lemma \ref{lem:const-accuracy}, Claim \ref{claim:exists-good-partition}, Lemma \ref{lem:find-good-clusters} and Theorem \ref{thm:full-close-case}, we can show that at least one of the sets of candidate parameters that our algorithm outputs is close to the true parameters.
\begin{lemma}\label{lem:list-learning}
Let $k, A, b > 0$ be constants.  There is a sufficiently large function $G$ and a sufficiently small function $g$ depending only on $k,A,b$ such that given an $\eps$-corrupted sample $X_1, \dots , X_n$ from a mixture of Gaussians $\mcl{M} = w_1G_1 + \dots + w_kG_k \in \R^d$ where
\begin{itemize}
    \item The $w_i$ are all rational with denominator at most $A$
    \item $d_{\TV}(G_i, G_j) \geq b$
\end{itemize}
and $n \geq (d/\eps)^{G(k,A,b)}$, with $0.999$ probability, among the set of candidates output by {\sc Full Algorithm}, there is some $\{\widetilde{w_1}, \widetilde{G_1}, \dots , \widetilde{w_k}, \widetilde{G_k} \}$ such that for all $i$ we have
\[
|w_i - \widetilde{w_i}| + d_{\TV}(G_i, \widetilde{G_i}) \leq \poly(\eps)
\]
\end{lemma}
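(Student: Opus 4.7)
My plan is to trace a single candidate through the phases of {\sc Full Algorithm} and show that it survives each enumeration step to end up $\poly(\eps)$-close to the true parameters. First, I would invoke Lemma \ref{lem:const-accuracy} with target accuracy $\theta$ set below the $g(k,A,b)$ threshold required by Lemma \ref{lem:find-good-clusters}. This yields, with probability at least $0.999$, a list produced by the first for-loop that contains some tuple $\{\widetilde{w_i}^{(0)}, \widetilde{G_i}^{(0)}\}_{i \in [k]}$ satisfying $d_{\TV}(G_i, \widetilde{G_i}^{(0)}) \leq \theta$ and $|w_i - \widetilde{w_i}^{(0)}| \leq \theta$ for every $i$. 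The outer for-loop iterates over all candidate component sets, so in particular it tries this warm start.

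Next, I would apply Claim \ref{claim:exists-good-partition} to the true mixture (for a suitable constant $c$) to obtain a partition $R_1, \dots, R_l$ of $[k]$ and an exponent $\kappa$ with $f(c,k) < \kappa < 1$ such that components in the same piece are connected in the graph whose edges represent TV distance at most $1 - \eps^{c\kappa}$, while components in different pieces have TV distance at least $1 - \eps^{\kappa}$. Since the inner for-loop of the algorithm enumerates over all partitions of $[k]$, it considers this one. Combining the warm start with this partition, Lemma \ref{lem:find-good-clusters} (with $\eps' = \eps^{\kappa} \leq g(k,A,b)$) shows that with probability at least $0.999$, the max-likelihood reassignment regrouped according to $R_1, \dots, R_l$ yields a $\poly(\eps)$-corrupted sample of each submixture $\mcl{M}_j = \sum_{i \in R_j} (w_i / W_j) G_i$, where $W_j = \sum_{i \in R_j} w_i$.

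I would then verify that each submixture $\mcl{M}_j$ is $\delta$-well-conditioned in the sense of Definition \ref{def:well-conditioned} for $\delta = \min(\eps^{c\kappa}, b, 1/A)$: the first condition follows from the inner-closeness of the partition, the second from the pairwise TV separation of $b$ in the original mixture, and the third from bounded fractionality, which gives $w_i / W_j \geq 1/A$. Since $c\kappa$ depends only on $k,A,b$, we have $\delta \geq \poly(\eps)$. Feeding the $\poly(\eps)$-corrupted submixture samples into {\sc Learn Parameters (well-conditioned)} and applying Theorem \ref{thm:full-close-case} returns, for each $j$, a list of candidate parameter tuples containing one that is $\poly(\eps)$-close in TV distance and mixing weight to the components of $\mcl{M}_j$. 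Combining these across $j$ and weighting each submixture by the empirical sample fraction $\widetilde{W_j}$ (which concentrates around $W_j$ to $\poly(\eps)$ accuracy) yields a full-mixture candidate $\poly(\eps)$-close to $\{w_i, G_i\}$.

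The main obstacle is ensuring the well-conditioning bound used in the third step: I need $\kappa$ from Claim \ref{claim:exists-good-partition} to depend only on $k,A,b$ and to be bounded away from $0$ uniformly in $\eps$ and $d$, so that $\eps^{c\kappa}$ remains polynomial in $\eps$. Without this uniform lower bound, the hypothesis $\delta \geq \eps''^{f(k)}$ of Theorem \ref{thm:full-close-case} could fail at the corruption level $\eps'' = \poly(\eps)$ incurred when reclustering. A union bound over the $O_k(1)$ high-probability events from the two phases then yields the claimed $0.999$ overall success probability.
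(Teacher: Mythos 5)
Your proposal is correct and takes essentially the same route as the paper's proof: both combine Lemma \ref{lem:const-accuracy}, Claim \ref{claim:exists-good-partition}, Lemma \ref{lem:find-good-clusters}, and Theorem \ref{thm:full-close-case} in the same order, and your "main obstacle" about choosing $c$ (hence $\kappa$) so that the well-conditioning parameter $\delta = \eps^{c\kappa}$ dominates the reclustering corruption $(\eps^\kappa)^{O(1)}$ is exactly the point the paper flags ("we can choose $c$ in Claim \ref{claim:exists-good-partition} so that..."), and it is resolved since Claim \ref{claim:exists-good-partition} guarantees $\kappa > f(c,k) > 0$ uniformly. Your added remark about scaling by the empirical submixture fraction $\widetilde{W_j}$ is a correct fleshing-out of a step the paper leaves implicit.
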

\begin{proof}
This follows from combining Lemma \ref{lem:const-accuracy}, Claim \ref{claim:exists-good-partition}, Lemma \ref{lem:find-good-clusters} and finally applying Theorem \ref{thm:full-close-case}.  Note we can choose $c$ in Claim \ref{claim:exists-good-partition} so that when combined with Lemma \ref{lem:find-good-clusters}, the resulting accuracy that we get on each submixture is high enough that we can then apply Theorem \ref{thm:full-close-case} (we can treat the subsample corresponding to each submixture as a $\poly(\eps')$-corrupted sample).  We apply Lemma \ref{lem:find-good-clusters} with $\eps' = \eps^{\kappa}$ where the $\kappa$ is obtained from Claim \ref{claim:exists-good-partition}.
\end{proof}

We have shown that our algorithm recovers a list of candidate mixtures, at least one of which is close to the true mixture.  The last result that we need is a hypothesis testing routine.  This is similar to the hypothesis testing result in \cite{kane2020robust}.  However, there is a subtle difference that the samples we use to hypothesis test may not be independent of the hypotheses.  This is because the adversary sees all of the data points and may corrupt the data in a way to affect the list of hypotheses that we output.  Thus, we must prove that given an $\eps$-corrupted sample and \textit{any} list of hypotheses with the promise that at least one of the hypotheses is close to the true distribution, we must output a hypothesis that is close to the true distribution.  
\begin{lemma}\label{lem:hypothesis-test}
Let $\mcl{F}$ be a family of distributions on some domain $\mcl{X}$ with explicitly computable density functions that can be efficiently sampled from.  Let $V$ be the VC dimension of $\mcl{H}_{\mcl{F},2}$ (recall Definition \ref{def:yatracos}). Let $\mcl{D}$ be an unknown distribution in $\mcl{F}$.  Let $m$ be a parameter.  Let $X_1, \dots , X_n$ be an $\eps$-corrupted sample from $\mcl{D}$ with $n \geq  \poly(m,\eps, V)$ for some sufficiently large polynomial.  Let $H_1, \dots , H_m$ be distributions in $\mcl{F}$ given to us by an adversary with the promise that
\[
\min(d_{\TV}(\mcl{D}, H_i) ) \leq \eps \,.
\]
Then there exists an algorithm that runs in time $\poly(n, \eps)$ and outputs an $i$ with $1 \leq i \leq m$ such that with $0.999$ probability 
\[
d_{TV}(\mcl{D}, H_i) \leq O(\eps) \,.
\]
\end{lemma}
\begin{proof}
The proof will be very similar to the proof in \cite{kane2020robust} except we will use the VC dimension bound and Lemma \ref{lem:uniformconvergence} to obtain a bound over all possible hypothesis distributions given to us by the adversary.

For each $i,j$, define $A_{i,j}$ to be the subset of $\mcl{X}$ where $H_i(x) \geq H_j(x)$ (where we abuse notation and use $H_i, H_j$ to denote their respective probability density functions).  Note $d_{\TV}(H_i, H_j) = |H_i(A_{i,j}) - H_j(A_{i,j}) |$.  By Lemma \ref{lem:uniformconvergence}, we can ensure that with high probability, for all $i,j$, the empirical estimates of $A_{i,j}$ are close to their true values, i.e.
\[
| \mcl{D}(A_{i,j}) - X(A_{i,j})| \leq 2\eps \,.
\]

Now, since the distributions $H_1, \dots , H_m$ can be efficiently sampled from, we can obtain estimates $\wh{H_l}(A_{i,j})$ that are within $\eps$ of $H_l(A_{i,j})$ for all $i,j, l$.  Now, it suffices to return any $l$ such that for all $i,j$,
\[
|  X(A_{i,j}) - \wh{H_l}(A_{i,j}) | \leq 4\eps \,.
\]
Note that any $l$ such that $d_{\TV}(\mcl{D}, H_l) \leq \eps$ must satisfy the above by the triangle inequality.  Next, we argue that any such $l$ must be sufficient.  To see this, let $l'$ be such that $d_{\TV}(\mcl{D}, H_{l'}) \leq \eps$.  Then
\begin{align*}
 d_{\TV}(\mcl{D}, H_{l}) \leq \eps + d_{\TV}(H_l, H_{l'}) =   \eps + |H_l(A_{l,l'}) - H_{l'}(A_{l,l'}) | \leq 2\eps + |H_l(A_{l,l'}) - \mcl{D}(A_{l,l'}) | \\ \leq 2\eps + |  X(A_{l,l'}) - \mcl{D}(A_{l,l'}) | + |  X(A_{l,l'}) - \wh{H_l}(A_{l,l'}) | + | \wh{H_l}(A_{l,l'}) - H_l(A_{l,l'})| = O (\eps) \,.
\end{align*}
\end{proof}

We can now complete the proof of our main theorem.
\begin{proof}[Proof of Theorem \ref{thm:full-alg}]
Combining Lemma \ref{lem:list-learning}, Lemma \ref{lem:hypothesis-test} and Lemma \ref{lem:VCdim}, we immediately get the desired bound. 
\end{proof}

\section{Identifiability}\label{sec:identifiability}

Theorem \ref{thm:full-alg} implies that we can learn a mixture that is close to the true mixture in TV distance.  In order to prove that we recover the individual components, it suffices to prove identifiability.  In this section we prove the following.

\begin{theorem}\label{thm:identifiability}
Let $\mcl{M} = w_1G_1 + \dots + w_{k_1}G_{k_1}$ and $\mcl{M'} = w_1'G_1' + \dots + w_{k_2}'G_{k_2}'$ be mixtures of Gaussians such that $TV( \mcl{M}, \mcl{M'}) \leq \eps$ and the $G_i, G_i'$ have variance at least $\poly(\eps/d)$ and at most $\poly(d/\eps)$ in all directions.  Further assume,
\begin{itemize}
    \item $d_{\TV}(G_i, G_j) \geq b, d_{\TV}(G_i', G_j') \geq b$ for all $i \neq j$
    \item $w_i, w_i' \geq w_{\min} $
\end{itemize}
where $w_{\min} \geq f(k)$, $b \geq \eps^{f(k)}$ where $ k = \max(k_1,k_2)$ and $f(k)> 0$ is sufficiently small function depending only on $k$.  Then $k_1 = k_2$ and there exists a permutation $\pi$ such that 
\[
|w_i - w_{\pi(i)}'| + d_{\TV}(G_i, G_{\pi(i)}') \leq \poly(\eps)
\]
\end{theorem}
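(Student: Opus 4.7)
The plan is to instantiate the symbolic identities of Section \ref{sec:close-case} with the ``hypothetical'' parameters taken to be the actual parameters of $\mcl{M}'$, converting TV closeness of the two mixtures into parameter closeness directly (no SOS machinery needed). First, I would apply a linear transformation that puts $\mcl{M}$ in nearly isotropic position; since $d_{\TV}(\mcl{M},\mcl{M}') \le \eps$ and moments of both mixtures are polynomially bounded, the first two moments of $\mcl{M}'$ agree with those of $\mcl{M}$ up to $\poly(\eps)$, so $\mcl{M}'$ is also nearly isotropic. By Lemma \ref{lem:parameter-closeness} applied to both mixtures, all $\mu_i, \mu_j', \Sigma_i, \Sigma_j'$ are polynomially bounded. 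Since Hermite polynomials of degree $m \leq O_k(1)$ are bounded-degree polynomials whose coefficients depend polynomially on low-order moments of the distribution, a standard truncation argument gives $\norm{v(h_{m,\mcl{M}} - h_{m,\mcl{M}'})}^2 \le \poly(\eps)$ for each such $m$.

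Assume first $k_1 = k_2 = k$. Apply the reduction of Section \ref{sec:reduction} so that each pair of parameters drawn from $\mcl{M} \cup \mcl{M}'$ is either equal or $\poly(\eps)$-separated, and then instantiate Lemma \ref{lemma:key-bound1} (together with the equal-or-separated adaptation of Section \ref{sec:equal-or-sep} and the mean-based variant from Section \ref{sec:solve-means}) by setting $\widetilde{w_i} = w_i',\ \widetilde{\mu_i} = \mu_i',\ \widetilde{\Sigma_i} = \Sigma_i'$. The resulting identity has the form
\[ w_k' \prod_i (\Sigma_k'(X) - \Sigma_i(X))^{2^{i-1}} \prod_i (\Sigma_k'(X) - \Sigma_i'(X))^{2^{k+i-1}} = \sum_i P_i(X)\,(h_{i,\mcl{M}'}(X) - h_{i,\mcl{M}}(X)), \]
with analogous identities for mean differences and for the reverse direction via Lemma \ref{lemma:key-bound2}. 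The right-hand side has coefficient-vector $\ell^2$-norm $\poly(\eps)$ since each $P_i$ is $m$-simple with polynomially bounded coefficients. Applying Claim \ref{claim:factor-lower-bound} to the left-hand side and using the separations among the $\Sigma_i'$ (and among the $\mu_i'$) guaranteed by the rounding, we conclude that some $\norm{v(\Sigma_k'(X) - \Sigma_i(X))}^2$ must be $\poly(\eps)$-small. Iterating across all components of $\mcl{M}'$ and combining with the mean-based identity yields a bijection $\pi: [k] \to [k]$ with $\norm{\mu_{\pi(j)} - \mu_j'} + \norm{\Sigma_{\pi(j)} - \Sigma_j'}_F \le \poly(\eps)$, and then $d_{\TV}(G_{\pi(j)}, G_j') \le \poly(\eps)$ follows from Claim \ref{claim:parameter-dist}.

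For $k_1 \neq k_2$, say $k_1 > k_2$, the matching argument of the previous paragraph still supplies, for each $i \in [k_1]$, some $G_j'$ with $d_{\TV}(G_i, G_j') \le \poly(\eps)$. By pigeonhole, two distinct components $G_{i_1}, G_{i_2}$ of $\mcl{M}$ match the same $G_j'$, so the triangle inequality gives $d_{\TV}(G_{i_1}, G_{i_2}) \le \poly(\eps)$, contradicting $d_{\TV}(G_{i_1}, G_{i_2}) \ge b$ once $\eps$ is small enough. Hence $k_1 \le k_2$, and by symmetry $k_1 = k_2$. To bound $|w_i - w_{\pi(i)}'|$: since the components of each mixture are TV-separated and $d_{\TV}(G_{\pi(i)}, G_i') \le \poly(\eps)$, one can construct a bounded test function that integrates to within $\poly(\eps)$ of $w_i$ under $\mcl{M}$ and of $w_{\pi(i)}'$ under $\mcl{M}'$; $d_{\TV}(\mcl{M},\mcl{M}') \le \eps$ then forces $|w_i - w_{\pi(i)}'| \le \poly(\eps)$.

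The main technical obstacle is the case analysis needed when some parameters of $\mcl{M}'$ coincide with each other or with parameters of $\mcl{M}$: this requires the same choice of differential operators as in Section \ref{sec:equal-or-sep}, since when e.g.\ $\widetilde{\Sigma_k} = \widetilde{\Sigma_i}$ the operator used to isolate $G_k'$ must be replaced by one involving $\widetilde{\mu_k} - \widetilde{\mu_i}$ to prevent a trivial zero factor in the resulting product. Once this bookkeeping is in place, the proof reduces to combining the purely algebraic identities of Section \ref{sec:close-case} with the coefficient-wise Hermite closeness established in the first paragraph.
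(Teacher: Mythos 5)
There is a genuine gap in the very first step. You invoke Lemma~\ref{lem:parameter-closeness} to conclude that all parameters are polynomially bounded after placing $\mcl{M}$ in isotropic position, but that lemma requires $\delta$-well-conditionedness, i.e.\ that the graph connecting components within TV-distance $1-\delta$ is connected. Theorem~\ref{thm:identifiability} does \emph{not} assume this; it only lower-bounds pairwise TV distance by $b$. So the mixture can split into two (or more) clusters of components whose mean separations are superpolynomially large (the variance bounds in the hypothesis do not bound the means). In that case, placing the whole mixture in isotropic position scales the clusters down by the inverse of the inter-cluster distance, collapsing the within-cluster parameter separations to an arbitrarily small scale. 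Consequently neither the claimed moment/Hermite closeness $\norm{v(h_{m,\mcl{M}} - h_{m,\mcl{M}'})}^2 \le \poly(\eps)$ (with a $k$-dependent polynomial) nor the separation parameter $c \geq \poly(\eps)$ needed for the factorization argument (Claim~\ref{claim:factor-lower-bound}) can be guaranteed. This is precisely why the paper's proof of Theorem~\ref{thm:identifiability} first runs an outer clustering layer: Claim~\ref{claim:exists-good-partition} partitions $[k_1]$ into internally well-conditioned pieces that are pairwise nearly-TV-disjoint, Lemma~\ref{lem:distance-between-gaussians} is used to uniquely assign each $G'_j$ to one such piece and to show the corresponding normalized submixtures are $\poly(\eps)$-close in TV, and only then is the well-conditioned identifiability lemma (Lemma~\ref{lem:close-identifiability}) applied cluster by cluster.

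The core of your argument (instantiating the symbolic identities of Section~\ref{sec:close-case} with $\widetilde{\mu_i} = \mu_i'$, $\widetilde{\Sigma_i} = \Sigma_i'$ as real numbers rather than SOS indeterminates, together with the equal-or-separated reduction) is essentially a correct way to reprove the well-conditioned case, matching the spirit of Lemma~\ref{lem:close-identifiability}. But as written, it only proves that lemma, not the full theorem. Also, your proposed bounded-test-function argument for $|w_i - w'_{\pi(i)}|$ is too crude when $b$ is small: the components may overlap by $1-b$, and a simple indicator region won't isolate $w_i$ to $\poly(\eps)$ accuracy. The paper handles this by forming the signed difference mixture $\sum_i (w_i - w'_{\pi(i)}) G_i$, splitting it by sign, normalizing each side into a mixture, and rerunning the differential-operator argument; you would need something comparable.
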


While technically, we do not need to prove identifiability in an algorithmic manner, our proof will mirror our main algorithm.  We will first prove identifiability in the case where the two mixtures are $\delta$-well conditioned for $\delta = \poly(\eps)$.

\begin{lemma}\label{lem:close-identifiability}
Let $\mcl{M} = w_1G_1 + \dots + w_{k_1}G_{k_1}$ and $\mcl{M'} = w_1'G_1' + \dots + w_{k_2}'G_{k_2}'$ be two $\delta$-well conditioned mixtures of Gaussians such that $d_{\TV}( \mcl{M}, \mcl{M'}) \leq \eps$ and $ \delta \geq \eps^{f(k)}$ where $ k = \max(k_1,k_2)$ and $f(k)> 0$ is sufficiently small function depending only on $k$.  Then $k_1 = k_2$ and there exists a permutation $\pi$ such that 
\[
|w_i - w_{\pi(i)}'| + d_{\TV}(G_i, G_{\pi(i)}') \leq \poly(\eps)
\]
\end{lemma}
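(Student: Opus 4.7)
The plan is to mirror the algorithmic argument of Theorem \ref{thm:main-close-case}, used now as an identifiability statement rather than an algorithm. First, by an invertible linear transformation based on $\mcl{M}$'s mean and covariance, I may assume $\mcl{M}$ is isotropic. Since $d_{\TV}(\mcl{M}, \mcl{M}') \leq \eps$ and both mixtures are $\delta$-well-conditioned with $\delta \geq \eps^{f(k)}$, $\mcl{M}'$ is also nearly isotropic --- TV-closeness plus $\poly(1/\delta)$-boundedness of higher moments (from Lemma \ref{lem:parameter-closeness} and Corollary \ref{corollary-indiv-closeness}) forces the first two moments to match to within $\poly(\eps)$. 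In particular, all individual component parameters of both mixtures are bounded by $\poly(1/\delta)$. Next I would establish that the low-degree Hermite polynomials of the two mixtures are close in coefficient norm:
\[
\norm{v(h_{p, \mcl{M}}(X) - h_{p, \mcl{M}'}(X))}^2 \leq \poly(\eps)
\]
for all $p \leq F(k)$. This follows from a standard tail-truncation argument: each coefficient is the expectation of a bounded-degree polynomial of $z \sim \mcl{M}$ (resp.~$\mcl{M}'$); beyond a ball of radius $R = \poly(1/\delta, \log(1/\eps))$ both mixtures assign mass at most $\poly(\eps)$, and on the ball, TV-closeness translates directly into $\poly(\eps)$-closeness of these bounded expectations.

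With Hermite-polynomial closeness in hand, I would set up the SOS program $\mcl{S}$ of Definition \ref{def:parameter-SOS} with ``estimates'' $\overline{h_p}(X) := h_{p, \mcl{M}}(X)$ (the exact Hermite polynomials of $\mcl{M}$) and with guesses $\widetilde{w}_i, A_i, B_i$ chosen to be the coefficients that express $\mcl{M}'$'s parameters in some fixed orthonormal basis for $\spn(\mu'_i)$ and $\spn(\Sigma'_i)$. By the pairwise $b$-separation of $\mcl{M}'$'s components (translated through Lemma \ref{lem:parameter-closeness}), these guesses satisfy the separation conditions of $\mcl{S}$; by the Hermite-closeness above, the Hermite constraint is satisfied with slack. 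The delta-function pseudoexpectation $\wt{\E}$ supported on this assignment then certifies feasibility of $\mcl{S}$. Applying Lemmas \ref{lemma:trace-bound}, \ref{lemma:span}, and \ref{lemma:solve-covariance} to $\wt{\E}$: under $\wt{\E}$, the matrix $M_i = \wt{\E}[\widetilde{\Sigma_i}\widetilde{\Sigma_i}^T]$ equals the rank-one $\Sigma'_i (\Sigma'_i)^T$, so the subspace $\spn(V_1, \dots, V_k)$ appearing in Lemma \ref{lemma:solve-covariance} is exactly $\spn(\Sigma'_1, \dots, \Sigma'_{k_2})$. The conclusion of that lemma becomes: every $\Sigma_i$ lies within $\poly(\eps)$ of $\spn(\Sigma'_j)$. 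Swapping the roles of $\mcl{M}$ and $\mcl{M}'$ yields the reverse inclusion, and the analogous argument applied to the mean-only program of Definition \ref{def:sos-means-only} yields the corresponding two-sided span equality for the means.

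The final step converts span-equivalence into a component-wise matching. Writing $\Sigma_i = \sum_j \alpha_{ij} \Sigma'_j + e_i$ with $\norm{e_i} \leq \poly(\eps)$ (and similarly for the means), I would use the pairwise TV-separation $d_{\TV}(G_i, G_j) \geq b$, which via Claim \ref{claim:parameter-dist} and Lemma \ref{lem:parameter-closeness} gives $\poly(\delta)$-parameter-separation in the isotropized picture. The coefficient matrix $\alpha$ must then be $\poly(\eps)$-close to a permutation matrix: any two close rows would violate separation of $\Sigma_i, \Sigma_{i'}$, while any row whose mass is spread over two columns $j_1 \neq j_2$ would, by the reverse span inclusion, force the corresponding $\Sigma'_{j_1}, \Sigma'_{j_2}$ to be close, contradicting separation on the $\mcl{M}'$ side. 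This simultaneously forces $k_1 = k_2$ and identifies the matching permutation $\pi$. Individual-mass closeness $|w_i - w'_{\pi(i)}| \leq \poly(\eps)$ then follows by matching masses in the identity $d_{\TV}(\mcl{M}, \mcl{M}') \leq \eps$ using the just-established correspondence and the pairwise separation of components, and Claim \ref{claim:parameter-dist} converts parameter closeness back to $d_{\TV}(G_i, G'_{\pi(i)}) \leq \poly(\eps)$. The main technical obstacle will be this final matching step when covariances are partially degenerate --- e.g., when two true components share the same covariance but are separated only in their means --- which requires combining the covariance-span and mean-span analyses into a single consistent matching.
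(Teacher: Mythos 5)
Your high-level setup matches the paper through the isotropic normalization and the Hermite-coefficient closeness $\norm{v(h_{m,\mcl{M}} - h_{m,\mcl{M}'})}_2 \leq \poly(\eps)$, and both you and the paper then leverage the machinery of Section~\ref{sec:close-case}. But the key step where you depart from the paper introduces a genuine gap.

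The paper does \emph{not} route through the SOS relaxation $\mcl{S}$ for this lemma. After the reduction of Section~\ref{sec:reduction} to the setting where every pair of parameters (pooled over both mixtures) is either exactly equal or $\poly(\delta)$-separated, the paper works directly with the generating functions $F, F'$ as formal power series. It applies a single differential operator $\mcl{D}$ that annihilates $F$ (by Corollary~\ref{corollary:null-operator}), computes via Claim~\ref{claim:leading-coeff} the leading coefficient $C_0$ that survives in $\mcl{D}(F')$, and observes that if the component $(\mu'_{k_2}, \Sigma'_{k_2})$ matched no component of $\mcl{M}$ then $C_0 \geq \delta^{O_k(1)}$, contradicting the Hermite-coefficient closeness. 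This pins down \emph{exact equality} of components in the reduced picture, and a mixing-weight-balancing trick (splitting by the sign of $w_i - w'_i$ and rerunning the argument) handles the weights. Your version instead re-invokes Lemmas~\ref{lemma:trace-bound}, \ref{lemma:span}, \ref{lemma:solve-covariance} on a delta pseudoexpectation; that is a legitimate pseudoexpectation, and it does yield that each $\Sigma_i$ is $\poly(\eps)$-close to $\spn(\Sigma'_1, \dots, \Sigma'_{k_2})$ (and conversely). The problem is the final matching step: span equivalence does not imply the coefficient matrix $\alpha$ is close to a permutation. Take $\Sigma'_1 = \Sigma_1$ and $\Sigma'_2 = \Sigma_1 + \Sigma_2$: the two spans coincide exactly, the pairwise separations can all hold, yet no row of $\alpha$ is close to a standard basis vector, and your ``reverse span inclusion forces $\Sigma'_{j_1}, \Sigma'_{j_2}$ close'' argument does not go through. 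Lemmas~\ref{lemma:trace-bound}--\ref{lemma:solve-covariance} are built for list recovery --- they tell you \emph{where to brute-force search}, not that the feasible assignment is unique --- which is exactly why the paper switches to a purely symbolic argument here. You would need to replace the coefficient-matrix argument with something that uses the Hermite constraints themselves (or carry out the full equal-or-separated reduction and argue exact equality via the differential operators), and separately supply the signed-mixture argument for the weight bound, to close the gap.
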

\begin{proof}
Let $\mu, \Sigma,  \mu', \Sigma'$ be the means and covariances of the mixtures $\mcl{M}$ and $\mcl{M'}$.  Let $\mu_i, \Sigma_i, \mu_i', \Sigma_i'$ be the means and covariances of the respective components.  Without loss of generality we may assume $\mu = 0$, $\Sigma = I$.  The results in Section \ref{sec:moment-est}, namely Corollary \ref{corollary:isotropic-position}, imply that 
\begin{align*}
\norm{I - \Sigma'} = \poly(\eps) \\
\norm{\mu'} = \poly(\eps)
\end{align*}
This is because we can simulate an $\eps$-corrupted sample from $\mcl{M'}$ by just sampling from $\mcl{M}$ (since $d_{\TV}(\mcl{M}, \mcl{M'}) \leq \eps$) and then robustly estimate the mean and covariance of this sample.  Thus, by Corollary \ref{corollary-indiv-closeness}, we have for all $i$,
\begin{align*}
    \norm{\mu_i}, \norm{\mu_i'} &\leq \poly(\delta)^{-1} \\
    \norm{\Sigma_i - I}, \norm{\Sigma_i' - I} &\leq \poly(\delta)^{-1}
\end{align*}
Now, we can use Lemma \ref{lemma:estimate-hermite} to estimate the Hermite polynomials of the mixtures $\mcl{M}, \mcl{M'}$.  Since we can robustly estimate the means of bounded-covariance distributions (see Theorem 2.2 in \cite{kane2020robust}, Lemma \ref{lemma:estimate-hermite}), we must have
\[
\norm{v\left(h_{m, \mcl{M}}(X) - h_{m, \mcl{M'}}(X)\right)}_2  \leq \poly(\eps)
\]
Also note that since each of the mixtures is $\delta$-well conditioned, using Claim \ref{claim:parameter-dist} and Lemma \ref{lem:parameter-closeness} implies that
\[
\norm{\mu_i - \mu_j}_2 + \norm{\Sigma_i - \Sigma_j}_2 \geq \poly(\delta)
\]
and similar for the components of the mixture $\mcl{M}'$.  Repeating the argument in Section \ref{sec:reduction}, it suffices to prove the lemma in the case when all pairs of parameters are separated or equal i.e. among the sets $\{ \mu_i \} \cup \{ \mu_i' \}$ and $\{ \Sigma_i \} \cup \{\Sigma_i' \}$, each pair of parameters is either equal or separated by at least $\poly(\delta)$.  If we prove this. we can then deduce the statement of the lemma in the general case with worse, but still polynomial dependencies on $\eps$.
\\\\
Now we consider the generating functions 
\begin{align*}
F &= \sum_{i=1}^{k_1} w_ie^{\mu_i(X)y + \frac{1}{2}\Sigma_i(X)y^2} = \sum_{m=0}^{\infty}\frac{1}{m!}h_{m, \mcl{M}}(X)y^n \\
F'&= \sum_{i=1}^{k_2} w_i'e^{\mu_i'(X)y + \frac{1}{2}\Sigma_i'(X)y^2} = \sum_{m=0}^{\infty}\frac{1}{m!}h_{m, \mcl{M}'}(X)y^n
\end{align*}
where similar to in Section \ref{sec:close-case}, $\mu_i(X) = \mu_i \cdot X, \Sigma_i(X) = X^T\Sigma_iX$.  Consider the pair $(\mu_{k_2}',\Sigma_{k_2}')$.  We claim that there must be some $i$ such that 
\[
(\mu_i, \Sigma_i) = (\mu_{k_2}',\Sigma_{k_2}')
\]
Assume for the sake of contradiction that this is not the case.  Let $S_1$ be the subset of $[k_1]$ such that $\Sigma_i = \Sigma_{k_2'}$ and let $S_2$ be the subset of $[k_2 - 1]$ such that $\Sigma_j' = \Sigma_{k_2}'$.  Define the differential operators 
\begin{align*}
    \mcl{D}_i = \partial - \mu_i(X) - \Sigma_i(X)y\\
    \mcl{D}_i' = \partial - \mu_i'(X) - \Sigma_i'(X)y
\end{align*}
where as before, partial derivatives are taken with respect to $y$.  Now consider the differential operator
\[
\mcl{D} = \left(\mcl{D}'_{k_2 - 1} \right)^{2^{k_1 + k_2 - 2}}\left(\mcl{D}'_1 \right)^{2^{k_1}}\mcl{D}_{k_1}^{2^{k_1 - 1}}\mcl{D}_1^{1}
\]
Note by Claim \ref{claim:degree-reduction}, $\mcl{D}(F) = 0$.  Using Claim \ref{claim:leading-coeff}, 
\[
\mcl{D}(F') = P(y,X)e^{\mu_{k_2}'(X)y + \frac{1}{2}\Sigma_{k_2}'(X)y^2}
\]
where $P$ is a polynomial of degree
\[
\deg(P) = 2^{k_1 + k_2 - 1} - 1 - \sum_{i \in S_1}2^{i - 1} - \sum_{i \in S_2} 2^{k_1 + i - 2}
\]
and has leading coefficient 
\[
C_0 = w_{k_2}' \prod_{i \in [k_1]\backslash S_1}(\Sigma_{k_2}' - \Sigma_i)^{2^{i-1}} \prod_{i \in S_1}(\mu_{k_2}' - \mu_i)^{2^{i-1}}\prod_{i \in [k_2 - 1]\backslash S_2}(\Sigma_{k_2}' - \Sigma_i')^{2^{k_1 + i-2}} \prod_{i \in S_2}(\mu_{k_2}' - \mu_i')^{2^{k_1 + i-2}}.
\]
If there is no $i$ such that $(\mu_i, \Sigma_i) = (\mu_{k_2}',\Sigma_{k_2}')$ then
\[
C_0 \geq \delta^{O_k(1)}
\]
We can now compare 
\begin{align*}
\left(\mcl{D}_{k_2}'\right)^{\deg(P)}\mcl{D}(F) \\
\left(\mcl{D}_{k_2}'\right)^{\deg(P)}\mcl{D}(F')
\end{align*}
evaluated at $y = 0$.  The first quantitiy is $0$ because $\mcl{D}(F)$ is identically $0$ as a formal power series.  The second expression is equal to $\Omega_k(1)C_0$.  However, the coefficients of the formal power series $F,F'$ are the Hermite polynomials $h_{m, \mcl{M}}(X)$ and $h_{m, \mcl{M}'}(X)$.  We assumed that 
\[
\norm{v\left(h_{m, \mcl{M}}(X) - h_{m, \mcl{M'}}(X)\right)}_2  \leq \poly(\eps)
\]
so this is a contradiction as long as $\eps$ is smaller than $\delta^{F(k)}$ for some sufficiently large function $F$ depending only on $k$.  Thus, there must be some component of the mixture $\mcl{M}$ that matches each component of $\mcl{M'}$.  We can then repeat the argument in reverse to conclude that $\mcl{M}$ and $\mcl{M'}$ have the same components.  Finally, assume that we have two mixtures $\mcl{M} = w_1G_1 + \dots + w_kG_k$ and $\mcl{M'} = w_1'G_1 + \dots + w_k'G_k$ on the same set of components.  WLOG 
\[
w_1 - w_1' < \dots < w_l - w_l' < 0 < w_{l+1} - w_{l+1}' < \dots < w_k - w_k'
\]
Then we can consider
\begin{align*}
(w_1' - w_1)G_1 + \dots + (w_l' - w_l)G_l \text{ and }\\
(w_{l+1} - w_{l+1}')G_{l+1} + \dots + (w_k - w_k')G_k
\end{align*}
each treated as a mixture.  If 
\[
\sum_{i=1}^k \abs{w_i - w_i'} > \eps^{\zeta}
\]
for some sufficiently small $\zeta$ depending only on $k$, we can then normalize each of the above into a mixture (i.e. make the mixing weights sum to $1$) and repeat the same argument, using the fact that pairs of components cannot be too close, to obtain a contradiction.  Thus, actually the components and mixing weights of the two mixtures must be $\poly(\eps)$-close and this completes the proof.
\end{proof}

To complete the proof of Theorem \ref{thm:identifiability}, we will prove a sort of cluster identifiability that mirrors our algorithm and then combine with Lemma \ref{lem:close-identifiability}.

\begin{proof}[Proof of Theorem \ref{thm:identifiability}]
Let $c$ be a sufficiently small constant that we will set later.  We apply Claim \ref{claim:exists-good-partition} on the mixture $\mcl{M}$ with parameter $c$  to find a partition $R_1, \dots , R_l$.    Let $\kappa$ be the parameter obtained from the statement of  Claim \ref{claim:exists-good-partition} i.e. $\kappa$ depends on $k$ and $c$.  First, we claim that each of the components $G_1', \dots , G_{k_2}'$ must be essentially contained within one of the clusters.  To see this, for each $j \in [k_2]$ there must be some $i$ such that 
\[
d_{\TV}(G_i, G_j') \leq 1 - \frac{w_{\min}}{2k} \leq 1 - \Omega_k(1)
\]
without loss of generality $i \in R_1$.  Then by Lemma \ref{lem:distance-between-gaussians}, for all $a \notin R_1$, 
\[
d_{\TV}(G_a, G_j') \geq 1 - \poly(\eps^{\kappa})
\]
where the polynomial may depend on $k$ \emph{but does not depend on $c$}.  The above implies that we can match each of the components $G_1', \dots , G_{k_2}'$ uniquely to one of the clusters $R_1, \dots , R_l$ where it has constant overlap with $\cup_{i \in R_j}G_i$.  Let $S_1$ be the subset of $[k_2]$ corresponding to the components among $G_1', \dots , G_{k_2}'$ that are matched to $R_1$.  Consider the mixtures
\begin{align*}
\mcl{M}_1 = \frac{\sum_{i \in R_1} w_iG_i}{\sum_{i \in R_1} w_i }\\
\mcl{M}_1' = \frac{\sum_{i \in S_1} w_i'G_i'}{\sum_{i \in S_1}w_i'}
\end{align*}
The above (combined with our assumed lower bound on the minimum mixing weight) implies that
\[
d_{\TV}(\mcl{M}_1, \mcl{M}_1') \leq  \poly(\eps^{\kappa})
\]
where again the polynomial may depend on $k$ but not $c$.  Now if we choose $c$ sufficiently small, we can apply Lemma \ref{lem:close-identifiability} to deduce that the components and mixing weights of $\mcl{M}_1, \mcl{M}_1' $ must be close.  We can then repeat this argument for all of the clusters $R_1, \dots , R_l$ to complete the proof.
\end{proof}

Combing Theorem \ref{thm:full-alg} and Theorem \ref{thm:identifiability}. we have
\begin{theorem}\label{thm:parameter-learning}
Let $k, A, b > 0$ be constants.  There is a sufficiently large function $G$ and a sufficiently small function $g$ depending only on $k,A,b$ (with $G(k,A,b),g(k,A,b) > 0$) such that given an $\eps$-corrupted sample $X_1, \dots , X_n$ from a mixture of Gaussians $\mcl{M} = w_1G_1 + \dots + w_kG_k \in \R^d$ where the $G_i$ have variance at least $\poly(\eps/d)$ and at most $\poly(d/\eps)$ in all directions and 
\begin{itemize}
    \item The $w_i$ are all rational with denominator at most $A$
    \item $d_{\TV}(G_i, G_j) \geq b$
\end{itemize}
and $n \geq (d/\eps)^{G(k,A,b)}$, then there is an algorithm that runs in time $\poly(n)$ and with $0.99$ probability outputs a set of components $\widetilde{G_1}, \dots , \widetilde{G_k}$ and mixing weights $\widetilde{w_1}, \dots ,\widetilde{w_k}$ such that there exists a permutation $\pi$ on $[k]$ with
\[
|w_i - \widetilde{w}_{\pi(i)}| + d_{\TV}(G_i, \widetilde{G}_{\pi(i)}) \leq  \eps^{g(k,A,b)}
\]
for all $i$.
\end{theorem}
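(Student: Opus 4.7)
The plan is to derive Theorem \ref{thm:parameter-learning} as a direct composition of Theorem \ref{thm:full-alg} with Theorem \ref{thm:identifiability}. Concretely, first I would invoke Theorem \ref{thm:full-alg} on the $\eps$-corrupted sample $X_1, \dots, X_n$. With probability $0.99$ this returns a mixture $\widetilde{\mcl{M}} = \widetilde{w_1}\widetilde{G_1} + \dots + \widetilde{w_k}\widetilde{G_k}$ with $d_{\TV}(\widetilde{\mcl{M}}, \mcl{M}) \leq \eps^{g_0(k,A,b)}$ for some sufficiently small function $g_0$. Theorem \ref{thm:full-alg} also dictates the sample complexity exponent $G(k,A,b)$; no new sample-complexity bookkeeping is required. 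The remaining task is to upgrade this TV-closeness between full mixtures to a per-component closeness between individual Gaussians via the identifiability result, and then to conclude with the desired permutation matching.

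Second, I would prepare $\widetilde{\mcl{M}}$ for Theorem \ref{thm:identifiability}. The latter requires \emph{both} mixtures to satisfy a lower bound $\eps^{f(k)}$ on the minimum mixing weight and on the pairwise TV distance between components. For $\mcl{M}$ itself these hold trivially since $w_i \geq 1/A$ and $d_{\TV}(G_i,G_j) \geq b$ are $\eps$-independent constants. For the output $\widetilde{\mcl{M}}$ there is no such a priori guarantee, so I would post-process it: iteratively discard any $\widetilde{G_i}$ with $\widetilde{w_i} < \eps^{\tau}$ (and redistribute its weight), and merge any pair $\widetilde{G_i}, \widetilde{G_j}$ with $d_{\TV}(\widetilde{G_i}, \widetilde{G_j}) < \eps^{\tau}$ into a single component (say, drop one and transfer its weight to the other), for a small constant $\tau > 0$ chosen below. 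Each such step perturbs $\widetilde{\mcl{M}}$ in TV distance by at most $\eps^{\tau}$, and after at most $k$ rounds we obtain a mixture $\widetilde{\mcl{M}}'$ with at most $k$ components, still satisfying $d_{\TV}(\widetilde{\mcl{M}}', \mcl{M}) \leq \poly(\eps^{\min(g_0, \tau)})$, and now meeting the hypotheses of Theorem \ref{thm:identifiability}. Applying that theorem then forces $\widetilde{\mcl{M}}'$ to have exactly $k$ components (matching $\mcl{M}$) and yields a permutation $\pi$ with $|w_i - \widetilde{w}'_{\pi(i)}| + d_{\TV}(G_i, \widetilde{G}'_{\pi(i)}) \leq \poly(\eps)$. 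Since no merging changed component count, this also supplies the desired matching for the original $\widetilde{\mcl{M}}$, up to a relabeling.

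The only real obstacle is the calibration of $\tau$ and the resulting exponent $g(k,A,b)$. We need $\tau$ strictly smaller than $f(k)$ so that the post-processed weights and pairwise separations exceed the threshold $\eps^{f(k)}$ that Theorem \ref{thm:identifiability} demands; simultaneously, $\tau$ and $g_0$ must both be large enough that the cumulative TV error $\poly(\eps^{\min(g_0, \tau)})$ is still an admissible input for Theorem \ref{thm:identifiability}'s assumption $d_{\TV}(\mcl{M}, \mcl{M}') \leq \eps$. Any choice such as $\tau = f(k)/2$ with $g_0$ chosen accordingly in Theorem \ref{thm:full-alg} balances these constraints, and then setting $g(k,A,b)$ to be a small enough multiple of the exponent produced by Theorem \ref{thm:identifiability} delivers the final bound. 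Everything else is routine composition of polynomial-in-$\eps$ guarantees, since the substantive work — robust parameter recovery and polynomial identifiability — has already been done in the preceding sections.
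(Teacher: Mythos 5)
Your high-level decomposition---compose Theorem \ref{thm:full-alg} with Theorem \ref{thm:identifiability}---is exactly the route the paper takes (the paper's "proof'' is literally the one-line remark preceding the theorem), and you are right to flag that the output of Theorem \ref{thm:full-alg} comes with a guarantee only on $d_{\TV}(\widetilde{\mcl{M}},\mcl{M})$, not on the component weights or pairwise separations, so Theorem \ref{thm:identifiability}'s hypotheses must actually be verified before that theorem can be invoked. However, your post-processing fix has an exponent-calibration problem that you cannot escape when $f(k)<1$. After truncating/merging at threshold $\eps^{\tau}$, the TV error grows to $\eps'' \approx \eps^{\min(g_0,\tau)}$. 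Applying Theorem \ref{thm:identifiability} with its TV parameter set to $\eps''$ requires the post-processed mixture's minimum weight and pairwise separation to be at least $(\eps'')^{f(k)} \approx \eps^{f(k)\min(g_0,\tau)}$, while your post-processing only guarantees lower bounds of $\eps^{\tau}$. Since $\eps<1$, the requirement is $\tau \le f(k)\min(g_0,\tau)$. When $f(k)<1$---which is the regime the paper is in, since the identifiability chain already loses factors like $\eps'^{2^{-k}}$ in Lemma \ref{lemma:trace-bound}---this is impossible: if $\tau\le g_0$ it asks for $\tau\le \tau f(k)<\tau$, and if $\tau>g_0$ it asks for $\tau\le g_0 f(k)<g_0<\tau$. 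The trouble is structural: tightening the truncation threshold worsens $\eps''$, which raises the threshold Theorem \ref{thm:identifiability} demands, and the loop never closes. So the calibration "$\tau=f(k)/2$'' you propose does not satisfy the theorem's hypotheses.

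The correct repair is upstream rather than downstream: do not post-process the output of Theorem \ref{thm:full-alg}, but instead restrict the hypothesis-testing step (Lemma \ref{lem:hypothesis-test}) inside the {\sc Full Algorithm} to candidates with exactly $k$ components, all weights at least $1/(2A)$, and all pairwise TV distances at least $b/2$. These are explicitly checkable. Lemma \ref{lem:list-learning} guarantees a candidate in the list with $|w_i-\widetilde{w_i}| + d_{\TV}(G_i,\widetilde{G_i}) \le \poly(\eps)$ for all $i$, and for $\eps$ small in $k,A,b$ this candidate has weights $\ge 1/A - \poly(\eps) \ge 1/(2A)$ and separations $\ge b - 2\poly(\eps) \ge b/2$, so the restricted list is nonempty and hypothesis testing still returns a mixture within $\poly(\eps)$ of $\mcl{M}$ in TV. The returned $\widetilde{\mcl{M}}$ then satisfies Theorem \ref{thm:identifiability}'s hypotheses with $w_{\min}$ and $b$ equal to fixed constants depending only on $A,b$, which comfortably dominate $(\eps^{g_0})^{f(k)}$ once $\eps$ is small, and the per-component matching follows immediately. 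This avoids the circular threshold dependence and gives the clean composition you were after.
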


\subsection{Improving the Separation Assumption}\label{sec:improved-separation}

With simple modifications to the analysis, we obtain the following improvement of Theorem \ref{thm:full-alg} in \cite{liu2021learning}.  
\begin{theorem}[\cite{liu2021learning}]\label{thm:improved-mixture-learning-cite}
Let $k, A > 0$ be constants.  There is a sufficiently large function $G$ and a sufficiently small function $g$ depending only on $k,A$ such that given an $\eps$-corrupted sample $X_1, \dots , X_n$ from a mixture of Gaussians $\mcl{M} = w_1G_1 + \dots + w_kG_k \in \R^d$ where $\eps < g(k,A)$, the $w_i$ are all rational with denominator at most $A$, and $n \geq (d/\eps)^{G(k,A)}$, there is an algorithm that runs in time $\poly(n)$ and with $0.999$ probability, outputs a set of $(1/\eps)^{O_{k,A}(1)}$ candidate mixtures such that for at least one of these candidates,  $\{\widetilde{w_1}, \wt{G_1}, \dots , \widetilde{w_k}, \wt{G_k} \}$, we have
\[
|w_i - \widetilde{w_i}| + d_{\TV}(G_i, \wt{G_i}) \leq \eps^{g(k,A)}
\]
for all $i \in [k]$.
\end{theorem}
To go from Theorem \ref{thm:full-alg} to Theorem \ref{thm:improved-mixture-learning-cite}, the main idea to remove the constant separation assumption is just that we can find a scale $\delta$ such that all pairs of components either have $d_{\TV}(G_i, G_j) \leq \delta$ or $ d_{\TV}(G_i, G_j) \geq \delta'$ for $\delta' \gg \delta$.  This is possible because the number of components $k$ is a constant.  We can then merge components whose TV distance is less than $\delta$, treating them as the same component and the remaining components will be sufficiently separated.  See \cite{liu2021learning} for more details.

The results of \cite{liu2021learning} were obtained using the previous clustering subroutine of \cite{diakonikolas2020robustly}.  If we instead plug in the updated clustering results of \cite{bakshi2020robustlya} (see Theorem \ref{thm:improved-clustering} vs Theorem \ref{thm:rough-clustering}), we can remove the bounded fractionality assumption on the mixing weights.  Also, we can ensure that the algorithm outputs a unique mixture instead of a list by running the hypothesis testing routine in Lemma \ref{lem:hypothesis-test}.

\begin{theorem}\label{thm:improved-mixture-learning}
Let $k, A > 0$ be constants.  There is a sufficiently large function $G$ and a sufficiently small function $g$ depending only on $k,A$ such that given an $\eps$-corrupted sample $X_1, \dots , X_n$ from a mixture of Gaussians $\mcl{M} = w_1G_1 + \dots + w_kG_k \in \R^d$ where $\eps < g(k,A)$, the $w_i$ are all at least $1/A$, and $n \geq (d/\eps)^{G(k,A)}$, there is an algorithm that runs in time $\poly(n)$ and with $0.999$ probability, outputs a mixture $\wt{M} = \widetilde{w_1}\wt{G_1} + \dots + \widetilde{w_k}, \wt{G_k} $, such that
\[
d_{\TV}(\mcl{M}, \wt{\mcl{M}}) \leq \eps^{g(k,A)} \,.
\]
\end{theorem}

Finally, combining the above with identifiability (Theorem \ref{thm:identifiability}), we immediately get an improved version of our main theorem for parameter learning.
\begin{theorem}\label{thm:main-improved}
Let $k, A > 0$ be constants.  There is a sufficiently large function $G$ and a sufficiently small function $g$ depending only on $k,A$ (with $G(k,A),g(k,A) > 0$) such that given an $\eps$-corrupted sample $X_1, \dots , X_n$ from a mixture of Gaussians $\mcl{M} = w_1G_1 + \dots + w_kG_k \in \R^d$ where the $G_i$ have variance at least $\poly(\eps/d)$ and at most $\poly(d/\eps)$ in all directions and 
\begin{itemize}
    \item The $w_i$ are all at least $1/A$
    \item $d_{\TV}(G_i, G_j) \geq \eps^{g(k,A)}$
\end{itemize}
and $n \geq (d/\eps)^{G(k,A)}$, then there is an algorithm that runs in time $\poly(n)$ and with $0.99$ probability outputs a set of components $\widetilde{G_1}, \dots , \widetilde{G_k}$ and mixing weights $\widetilde{w_1}, \dots ,\widetilde{w_k}$ such that there exists a permutation $\pi$ on $[k]$ with
\[
|w_i - \widetilde{w}_{\pi(i)}| + d_{\TV}(G_i, \widetilde{G}_{\pi(i)}) \leq  \eps^{g(k,A)}
\]
for all $i$.
\end{theorem}

\bibliographystyle{alpha}
\bibliography{bibliography}

\newcommand{\etalchar}[1]{$^{#1}$}
\begin{thebibliography}{DKK{\etalchar{+}}19b}

\bibitem[AB09]{anthony2009neural}
Martin Anthony and Peter~L Bartlett.
\newblock {\em Neural network learning: Theoretical foundations}.
\newblock cambridge university press, 2009.

\bibitem[AK01]{sanjeev2001learning}
Sanjeev Arora and Ravi Kannan.
\newblock Learning mixtures of arbitrary gaussians.
\newblock In {\em Proceedings of the thirty-third annual ACM symposium on
  Theory of computing}, pages 247--257, 2001.

\bibitem[AM05]{AM05}
Dimitris Achlioptas and Frank McSherry.
\newblock On spectral learning of mixtures of distributions.
\newblock In {\em Learning Theory}, pages 458--469. Springer, 2005.

\bibitem[Bar]{barak2016proofs}
Boaz Barak.
\newblock Proofs, beliefs, and algorithms through the lens of sum-of-squares.

\bibitem[BCMV14]{bhaskara2014smoothed}
Aditya Bhaskara, Moses Charikar, Ankur Moitra, and Aravindan Vijayaraghavan.
\newblock Smoothed analysis of tensor decompositions.
\newblock In {\em Proceedings of the forty-sixth annual ACM symposium on Theory
  of computing}, pages 594--603, 2014.

\bibitem[BDJ{\etalchar{+}}20a]{bakshi2020robustlya}
Ainesh Bakshi, Ilias Diakonikolas, He~Jia, Daniel~M Kane, Pravesh~K Kothari,
  and Santosh~S Vempala.
\newblock Robustly learning mixtures of $ k $ arbitrary gaussians.
\newblock {\em arXiv preprint arXiv:2012.02119}, 2020.
\newblock This version may be found at
  \url{https://arxiv.org/abs/2012.02119v2}.

\bibitem[BDJ{\etalchar{+}}20b]{bakshi2020robustlyb}
Ainesh Bakshi, Ilias Diakonikolas, He~Jia, Daniel~M Kane, Pravesh~K Kothari,
  and Santosh~S Vempala.
\newblock Robustly learning mixtures of $ k $ arbitrary gaussians.
\newblock {\em arXiv preprint arXiv:2012.02119}, 2020.
\newblock This version may be found at
  \url{https://arxiv.org/abs/2012.02119v3}.

\bibitem[BDLS17]{balakrishnan2017computationally}
Sivaraman Balakrishnan, Simon~S Du, Jerry Li, and Aarti Singh.
\newblock Computationally efficient robust sparse estimation in high
  dimensions.
\newblock In {\em Conference on Learning Theory}, pages 169--212, 2017.

\bibitem[Ber06]{bernholt2006robust}
Thorsten Bernholt.
\newblock Robust estimators are hard to compute.
\newblock Technical report, Technical Report, 2006.

\bibitem[BK20]{bakshi2020outlier}
Ainesh Bakshi and Pravesh Kothari.
\newblock Outlier-robust clustering of non-spherical mixtures.
\newblock {\em arXiv preprint arXiv:2005.02970}, 2020.

\bibitem[BKS14]{barak2014rounding}
Boaz Barak, Jonathan~A Kelner, and David Steurer.
\newblock Rounding sum-of-squares relaxations.
\newblock In {\em Proceedings of the forty-sixth annual ACM symposium on Theory
  of computing}, pages 31--40, 2014.

\bibitem[BKS15]{barak2015dictionary}
Boaz Barak, Jonathan~A Kelner, and David Steurer.
\newblock Dictionary learning and tensor decomposition via the sum-of-squares
  method.
\newblock In {\em Proceedings of the forty-seventh annual ACM symposium on
  Theory of computing}, pages 143--151, 2015.

\bibitem[BM16]{barak2016noisy}
Boaz Barak and Ankur Moitra.
\newblock Noisy tensor completion via the sum-of-squares hierarchy.
\newblock In {\em Conference on Learning Theory}, pages 417--445, 2016.

\bibitem[BP20]{bakshi2020robust}
Ainesh Bakshi and Adarsh Prasad.
\newblock Robust linear regression: Optimal rates in polynomial time.
\newblock {\em arXiv preprint arXiv:2007.01394}, 2020.

\bibitem[BS10]{BS10}
Mikhail Belkin and Kaushik Sinha.
\newblock Polynomial learning of distribution families.
\newblock In {\em Foundations of Computer Science (FOCS), 2010 51st Annual IEEE
  Symposium on}, pages 103--112. IEEE, 2010.

\bibitem[BV08]{brubaker2008isotropic}
S~Charles Brubaker and Santosh~S Vempala.
\newblock Isotropic pca and affine-invariant clustering.
\newblock In {\em Building Bridges}, pages 241--281. Springer, 2008.

\bibitem[CKMY20]{chen2020online}
Sitan Chen, Frederic Koehler, Ankur Moitra, and Morris Yau.
\newblock Online and distribution-free robustness: Regression and contextual
  bandits with huber contamination.
\newblock {\em arXiv preprint arXiv:2010.04157}, 2020.

\bibitem[CSV17]{charikar2017learning}
Moses Charikar, Jacob Steinhardt, and Gregory Valiant.
\newblock Learning from untrusted data.
\newblock In {\em Proceedings of the 49th Annual ACM SIGACT Symposium on Theory
  of Computing}, pages 47--60. ACM, 2017.

\bibitem[Das99]{Das99}
Sanjoy Dasgupta.
\newblock Learning mixtures of gaussians.
\newblock In {\em Foundations of Computer Science, 1999. 40th Annual Symposium
  on}, pages 634--644. IEEE, 1999.

\bibitem[DHKK20]{diakonikolas2020robustly}
Ilias Diakonikolas, Samuel~B Hopkins, Daniel Kane, and Sushrut Karmalkar.
\newblock Robustly learning any clusterable mixture of gaussians.
\newblock {\em arXiv preprint arXiv:2005.06417}, 2020.

\bibitem[DK19]{diakonikolas2019recent}
Ilias Diakonikolas and Daniel~M Kane.
\newblock Recent advances in algorithmic high-dimensional robust statistics.
\newblock {\em arXiv preprint arXiv:1911.05911}, 2019.

\bibitem[DKK{\etalchar{+}}17]{diakonikolas2017being}
Ilias Diakonikolas, Gautam Kamath, Daniel~M Kane, Jerry Li, Ankur Moitra, and
  Alistair Stewart.
\newblock Being robust (in high dimensions) can be practical.
\newblock In {\em Proceedings of the 34th International Conference on Machine
  Learning-Volume 70}, pages 999--1008. JMLR. org, 2017.

\bibitem[DKK{\etalchar{+}}19a]{diakonikolas2019robust}
Ilias Diakonikolas, Gautam Kamath, Daniel Kane, Jerry Li, Ankur Moitra, and
  Alistair Stewart.
\newblock Robust estimators in high-dimensions without the computational
  intractability.
\newblock {\em SIAM Journal on Computing}, 48(2):742--864, 2019.

\bibitem[DKK{\etalchar{+}}19b]{diakonikolas2019sever}
Ilias Diakonikolas, Gautam Kamath, Daniel Kane, Jerry Li, Jacob Steinhardt, and
  Alistair Stewart.
\newblock Sever: A robust meta-algorithm for stochastic optimization.
\newblock In {\em International Conference on Machine Learning}, pages
  1596--1606, 2019.

\bibitem[DS13]{dasgupta2013two}
Sanjoy Dasgupta and Leonard Schulman.
\newblock A two-round variant of em for gaussian mixtures.
\newblock {\em arXiv preprint arXiv:1301.3850}, 2013.

\bibitem[GHK15]{ge2015learning}
Rong Ge, Qingqing Huang, and Sham~M Kakade.
\newblock Learning mixtures of gaussians in high dimensions.
\newblock In {\em Proceedings of the forty-seventh annual ACM symposium on
  Theory of computing}, pages 761--770, 2015.

\bibitem[HK13]{HK13}
Daniel Hsu and Sham~M Kakade.
\newblock Learning mixtures of spherical gaussians: moment methods and spectral
  decompositions.
\newblock In {\em Proceedings of the 4th conference on Innovations in
  Theoretical Computer Science}, pages 11--20. ACM, 2013.

\bibitem[HKP{\etalchar{+}}17]{hopkins2017power}
Samuel~B Hopkins, Pravesh~K Kothari, Aaron Potechin, Prasad Raghavendra, Tselil
  Schramm, and David Steurer.
\newblock The power of sum-of-squares for detecting hidden structures.
\newblock In {\em 2017 IEEE 58th Annual Symposium on Foundations of Computer
  Science (FOCS)}, pages 720--731. IEEE, 2017.

\bibitem[HL18]{hopkins2018mixture}
Samuel~B Hopkins and Jerry Li.
\newblock Mixture models, robustness, and sum of squares proofs.
\newblock In {\em Proceedings of the 50th Annual ACM SIGACT Symposium on Theory
  of Computing}, pages 1021--1034. ACM, 2018.

\bibitem[HM13]{hardt2013algorithms}
Moritz Hardt and Ankur Moitra.
\newblock Algorithms and hardness for robust subspace recovery.
\newblock In {\em Conference on Learning Theory}, pages 354--375, 2013.

\bibitem[HRRS11]{hampel2011robust}
Frank~R Hampel, Elvezio~M Ronchetti, Peter~J Rousseeuw, and Werner~A Stahel.
\newblock {\em Robust statistics: the approach based on influence functions},
  volume 196.
\newblock John Wiley \& Sons, 2011.

\bibitem[HSS15]{hopkins2015tensor}
Samuel~B Hopkins, Jonathan Shi, and David Steurer.
\newblock Tensor principal component analysis via sum-of-square proofs.
\newblock In {\em Conference on Learning Theory}, pages 956--1006, 2015.

\bibitem[HSSS16]{hopkins2016fast}
Samuel~B Hopkins, Tselil Schramm, Jonathan Shi, and David Steurer.
\newblock Fast spectral algorithms from sum-of-squares proofs: tensor
  decomposition and planted sparse vectors.
\newblock In {\em Proceedings of the forty-eighth annual ACM symposium on
  Theory of Computing}, pages 178--191, 2016.

\bibitem[Hub64]{huber1964robust}
Peter~J Huber.
\newblock Robust estimation of a location parameter.
\newblock {\em The Annals of Mathematical Statistics}, pages 73--101, 1964.

\bibitem[Hub04]{huber2004robust}
Peter~J Huber.
\newblock {\em Robust statistics}, volume 523.
\newblock John Wiley \& Sons, 2004.

\bibitem[JP78]{johnson1978densest}
David~S. Johnson and Franco~P Preparata.
\newblock The densest hemisphere problem.
\newblock {\em Theoretical Computer Science}, 6(1):93--107, 1978.

\bibitem[Kan20]{kane2020robust}
Daniel~M Kane.
\newblock Robust learning of mixtures of gaussians.
\newblock {\em arXiv preprint arXiv:2007.05912}, 2020.

\bibitem[KK10]{KK10}
Amit Kumar and Ravindran Kannan.
\newblock Clustering with spectral norm and the k-means algorithm.
\newblock In {\em Foundations of Computer Science (FOCS), 2010 51st Annual IEEE
  Symposium on}, pages 299--308. IEEE, 2010.

\bibitem[KKM18]{klivans2018efficient}
Adam Klivans, Pravesh~K Kothari, and Raghu Meka.
\newblock Efficient algorithms for outlier-robust regression.
\newblock In {\em Conference On Learning Theory}, pages 1420--1430, 2018.

\bibitem[KMV10]{KMV10}
Adam~Tauman Kalai, Ankur Moitra, and Gregory Valiant.
\newblock Efficiently learning mixtures of two gaussians.
\newblock In {\em Proceedings of the 42nd ACM symposium on Theory of
  computing}, pages 553--562. ACM, 2010.

\bibitem[KSS18]{kothari2018robust}
Pravesh~K Kothari, Jacob Steinhardt, and David Steurer.
\newblock Robust moment estimation and improved clustering via sum of squares.
\newblock In {\em Proceedings of the 50th Annual ACM SIGACT Symposium on Theory
  of Computing}, pages 1035--1046. ACM, 2018.

\bibitem[Li18]{li2018principled}
Jerry~Zheng Li.
\newblock {\em Principled approaches to robust machine learning and beyond}.
\newblock PhD thesis, Massachusetts Institute of Technology, 2018.

\bibitem[LM21]{liu2021learning}
Allen Liu and Ankur Moitra.
\newblock Learning gmms with nearly optimal robustness guarantees.
\newblock {\em arXiv preprint arXiv:2104.09665}, 2021.

\bibitem[LRV16]{lai2016agnostic}
Kevin~A Lai, Anup~B Rao, and Santosh Vempala.
\newblock Agnostic estimation of mean and covariance.
\newblock In {\em 2016 IEEE 57th Annual Symposium on Foundations of Computer
  Science (FOCS)}, pages 665--674. IEEE, 2016.

\bibitem[Moi18]{moitra2018algorithmic}
Ankur Moitra.
\newblock {\em Algorithmic aspects of machine learning}.
\newblock Cambridge University Press, 2018.

\bibitem[MV10]{MV10}
Ankur Moitra and Gregory Valiant.
\newblock Settling the polynomial learnability of mixtures of gaussians.
\newblock In {\em Foundations of Computer Science (FOCS), 2010 51st Annual IEEE
  Symposium on}, pages 93--102. IEEE, 2010.

\bibitem[Par00]{parrilo2000structured}
Pablo~A Parrilo.
\newblock {\em Structured semidefinite programs and semialgebraic geometry
  methods in robustness and optimization}.
\newblock PhD thesis, California Institute of Technology, 2000.

\bibitem[Pea94]{Pea94}
Karl Pearson.
\newblock Contributions to the mathematical theory of evolution.
\newblock {\em Philosophical Transactions of the Royal Society of London. A},
  185:71--110, 1894.

\bibitem[Ste18]{steinhardt2018robust}
Jacob Steinhardt.
\newblock {\em Robust Learning: Information Theory and Algorithms}.
\newblock PhD thesis, Stanford University, 2018.

\bibitem[Tei61]{Tei61}
Henry Teicher.
\newblock Identifiability of mixtures.
\newblock {\em The annals of Mathematical statistics}, 32(1):244--248, 1961.

\bibitem[Tuk60]{tukey1960survey}
John~W Tukey.
\newblock A survey of sampling from contaminated distributions.
\newblock {\em Contributions to probability and statistics}, pages 448--485,
  1960.

\bibitem[Tuk75]{tukey1975mathematics}
John~W Tukey.
\newblock Mathematics and the picturing of data.
\newblock In {\em Proceedings of the International Congress of Mathematicians,
  Vancouver, 1975}, volume~2, pages 523--531, 1975.

\bibitem[VC15]{vapnik2015uniform}
Vladimir~N Vapnik and A~Ya Chervonenkis.
\newblock On the uniform convergence of relative frequencies of events to their
  probabilities.
\newblock In {\em Measures of complexity}, pages 11--30. Springer, 2015.

\bibitem[VW04]{VW04}
Santosh Vempala and Grant Wang.
\newblock A spectral algorithm for learning mixture models.
\newblock {\em Journal of Computer and System Sciences}, 68(4):841--860, 2004.

\end{thebibliography}
\end{document}